
\documentclass[authorcolumns,numberwithinsect]{no-lipics-v2022}

\pdfoutput=1
\usepackage{enumitem}

\makeatletter

\newcommand{\homs}[2]{\mathsf{Hom}(#1 \to #2)}
\newcommand{\embs}[2]{\mathsf{Emb}(#1 \to #2)}
\newcommand{\auts}{\mathsf{Aut}}

\newcommand{\surhoms}[2]{\mathsf{SurHom}(#1 \to #2)}
\newcommand{\lovasz}{Lov{\'{a}}sz}

\newcommand{\subs}[2]{\mathsf{Sub}(#1 \to #2)}

\newcommand{\W}{\mathrm{W}}

\newcommand{\homsupp}{\chi}
\newcommand{\ppart}{\ensuremath{\mathsf{Part}}}


\newcommand{\zsig}[2]{\ensuremath{#1\lvert_{0\mapsto#2}}}

\newcommand{\parthomsX}[3]{\mathsf{Hom}^{\phi}_{#3}(#1 \to #2)}

\newcommand{\cphomsprob}{\text{\sc{cp-Hom}}}
\newcommand{\colhomsprob}{\text{\sc{col-Hom}}}
\newcommand{\listhomsprob}
{\text{\sc{list-Hom}}}

\newcommand{\fptred}{\leq^{\mathsf{FPT}}_{\mathsf{T}}}
\newcommand{\fptinterred}{\equiv^{\mathsf{FPT}}_{\mathsf{T}}}
\newcommand{\fptlinred}{\leq^{\mathsf{FPT-lin}}_{\mathsf{T}}}
\newcommand{\fptinterlinred}{\equiv^{\mathsf{FPT-lin}}_{\mathsf{T}}}

\def\fracture#1#2{\ensuremath{#1\raisebox{.2ex}{\rotatebox[origin=c]{-15}{$\sharp$}}#2}}

\newcommand{\holantprob}{\text{\sc{p-Holant}}}
\newcommand{\holantprobstar}{\text{\sc{p-Holant}}^{\mathsf{Hcol}}}

\newcommand{\homscp}{\mathsf{Hom}_{\mathsf{cp}}}
\newcommand{\embscp}{\mathsf{Emb}_{\mathsf{cp}}}

\newcommand{\cphoms}{\mathsf{cp}\text{-}\mathsf{Hom}}

\newcommand{\holant}{\mathsf{Holant}}

\def\fracture#1#2{\ensuremath{#1\raisebox{.2ex}{\rotatebox[origin=c]{-15}{$\sharp$}}#2}}

\usepackage{tikz, tikz-cd}
\usetikzlibrary{matrix}
\usetikzlibrary{shapes}
\usetikzlibrary{arrows,decorations.markings, tikzmark}
\usetikzlibrary{decorations.pathreplacing,calligraphy,backgrounds}
\usetikzlibrary{arrows.meta,arrows}
\usetikzlibrary{conference}
\usetikzlibrary{cd}
\usepackage{todonotes}

\newcommand*{\boldone}{\text{\usefont{U}{bbold}{m}{n}1}}

\newcommand{\he}{\mathsf{he}}  

 \nolinenumbers

\title{Parameterised Holant Problems}

\author{Panagiotis Aivasiliotis}{Hasso Plattner Institute, University of Potsdam}{panos.aivasiliotis@hpi.de}{}{}

\author{Andreas Göbel}{Hasso Plattner Institute, University of Potsdam}{andreas.goebel@hpi.de}{}{}

\author{Marc Roth}{School of Electronic Engineering and Computer Science, Queen Mary University of London}{m.roth@qmul.ac.uk}{}{}

\author{Johannes Schmitt}{Department of Mathematics, ETH Zürich}{johannes.schmitt@math.ethz.ch}{}{}
\authorrunning{P. Aivasiliotis, A. Göbel, M. Roth, J. Schmitt}

\keywords{holant problems, counting problems, parameterised algorithms, fine-grained complexity theory, homomorphisms}

\acknowledgements{The authors would like to thank Miriam Backens for their insightful feedback on this work. Andreas Göbel was funded by the project PAGES (project No. 467516565) of the German Research Foundation (DFG).
Johannes Schmitt was supported by the Swiss National Science Foundation (project No. 219369) and SwissMAP.}

\begin{document}
\maketitle

\begin{abstract}
    We investigate the complexity of parameterised holant problems $\textsc{p-Holant}(\mathcal{S})$ for families of symmetric signatures~$\mathcal{S}$. The parameterised holant framework has been introduced by Curticapean in 2015 as a counter-part to the classical and well-established theory of holographic reductions and algorithms, and it constitutes an extensive family of coloured and weighted counting constraint satisfaction problems on graph-like structures, encoding as special cases various well-studied counting problems in parameterised and fine-grained complexity theory such as counting edge-colourful $k$-matchings, graph-factors, Eulerian orientations or, more generally, subgraphs with weighted degree constraints. We establish an exhaustive complexity trichotomy along the set of signatures $\mathcal{S}$: Depending on the signatures, $\textsc{p-Holant}(\mathcal{S})$ is either
    \begin{itemize}
        \item[(1)] solvable in FPT-near-linear time, i.e., in time $f(k)\cdot \tilde{\mathcal{O}}(|x|)$, or
        \item[(2)] solvable in ``FPT-matrix-multiplication time'', i.e., in time $f(k)\cdot {\mathcal{O}}(n^{\omega})$, where $n$ is the number of vertices of the underlying graph, but not solvable in FPT-near-linear time, unless the Triangle Conjecture fails, or
        \item[(3)] $\#\W[1]$-complete and no significant improvement over the naive brute force algorithm is possible unless the Exponential Time Hypothesis fails. 
    \end{itemize}
    This classification reveals a significant and surprising gap in the complexity landscape of parameterised Holants: Not only is every instance either fixed-parameter tractable or $\#\W[1]$-complete, but additionally, every FPT instance is solvable in time (at most) $f(k)\cdot {\mathcal{O}}(n^{\omega})$. 
    We show that there are infinitely many instances of each of the types; for example, all constant signatures yield holant problems of type (1), and the problem of counting edge-colourful $k$-matchings modulo $p$ is of type ($p$) for $p\in\{2,3\}$. 

    Finally, we also establish a complete classification for a natural uncoloured version of parameterised holant problem $\textsc{p-UnColHolant}(\mathcal{S})$, which encodes as special cases the non-coloured analogues of the aforementioned examples. We show that the complexity of $\textsc{p-UnColHolant}(\mathcal{S})$ is different: Depending on $\mathcal{S}$ all instances are either solvable in FPT-near-linear time, or $\#\W[1]$-complete, that is, there are no instances of type (2). 
\end{abstract}

\newpage

\section{Introduction}
Inspired by Valiant's work on holographic algorithms~\cite{Valiant08}, the so-called \emph{holant framework}, first introduced in the conference version~\cite{CaiLX09} of~\cite{CaiLX14}, constitutes one of the most powerful and ubiquitous tools for the analysis of computational counting problems. Holants, defined momentarily, strictly generalise counting constraint satisfaction problems (``$\#\textsc{CSP}$s'') and are able to model various (in)famous counting problems such as counting perfect matchings, graph factors, Eulerian orientations, and proper edge-colourings~\cite{CaiG21} (see also~\cite{CaiL11}). Moreover, the holant framework has been used for the analysis of the complexity of computing partition functions from statistical physics (see e.g.\ \cite{CaiFX18,CaiF23}), and it has shown to allow for the application of tools from quantum information theory, particular of entanglement, to the analysis of counting problems~\cite{Backens18}.

Formally, an instance of a holant problem\footnote{We present here the case of signatures with Boolean domain, but we point out that more general versions have been studied (see e.g.\ \cite{CaiG21}).} is a pair of a graph~$G$ and an assignment from vertices~$v$ of~$G$ to \emph{signatures} $s_v$, where each $s_v$ is a function with values in algebraic complex numbers and with domain $\{0,1\}^{\mathsf{deg}(v)}$. The value of the holant on input $(G,\{s_v\}_{v\in V(G)})$ is then defined as
\begin{equation}\label{eq:intro_classical_holant}
    \sum_{\alpha: E(G) \to \{0,1\}}~ \prod_{v\in V(G)} s_v(\alpha|_{E(v)}) \,,
\end{equation}
where $\alpha|_{E(v)}$ is the restriction of $\alpha$ on the edges incident to $v$. For example, let $G$ be a $d$-regular graph, and set
for each $v\in V(G)$ of degree $d$ the signature $s_v$ as the $d$-ary function $\mathsf{hw}^d_{=1}$ that outputs~$1$ if precisely one edge incident to $v$ is mapped by~$\alpha$ to~$1$, and~$0$ otherwise. Then~\eqref{eq:intro_classical_holant} is equal to the number of perfect matchings of $G$, that is, for the signature $\mathsf{hw}^d_{=1}$, the holant problem is equivalent to counting perfect matchings in $d$-regular graphs.

Since their inception in 2009, the holant framework has seen immense success in the quest of charting the complexity landscape of computational counting problems~\cite{CaiL11,CaiLX14,CaiGW16,LinW17,Backens18}. In a majority of the previous works, the central question was to determine the complexity of evaluating the Holant~\eqref{eq:intro_classical_holant} depending on the allowed signatures. For example, if each $s_v$ is the signature of having an even number of $1$s, then \eqref{eq:intro_classical_holant} can be computed by counting the number of solutions to a system of linear equations over $\mathbb{Z}/2\mathbb{Z}$, which can be done in polynomial time; this approach generalises to families of \emph{affine signatures}~\cite{ParityHolant13}. On the other hand, if we allow the signatures $\mathsf{hw}^i_{=1}$ for $i\in \mathbb{N}$, then the holant problem becomes $\#\mathrm{P}$-hard\footnote{$\#\mathrm{P}$ is the class of all (counting) problems polynomial-time reducible to $\#\textsc{SAT}$, the problem of counting satisfying assignments of a Boolean formula. By a result of Toda~\cite{Toda91} $\#\mathrm{P}$-hard problems are at least as hard as all problems in the polynomial-time hierarchy $\mathrm{PH}$.} since it is at least as hard as the $\#\mathrm{P}$-complete problem of counting perfect matchings~\cite{Valiant79,Valiant79b}.

It has turned out that, in numerous settings, the complexity of evaluating holants is either solvable in polynomial time, or $\#\mathrm{P}$-hard, and the dichotomy criterion only depends on the set of allowed signatures, that is, there are no instances of intermediate complexity.\footnote{In contrast, by (the counting version of) a result of Ladner~\cite{Ladner75}, assuming that $\mathrm{FP}\neq \#\mathrm{P}$, there are counting problems in $\#\mathrm{P}$ that are neither solvable in polynomial time, nor $\#\mathrm{P}$-hard.} Among others, key results include, but are not limited to, the complete complexity dichotomy for Boolean symmetric holants~\cite{CaiGW16}, for non-negative real-valued holants~\cite{LinW17}, and for holants with constant unary signatures~\cite{Backens18}.

\subsection{Parameterised Complexity Meets Holant Problems}
Introduced by Downey and Fellows in the early 90s~\cite{DowneyF92}, the field of parameterised complexity theory, also called multivariate algorithmics, investigates the complexity of computational problems not only along the size $|x|$ of an input $x$, but also along a parameter $k=\kappa(x)$ taking into account one or more (structural) properties of $x$. The notion of efficient algorithms is then relaxed from polynomial-time algorithms to fixed-parameter tractable (FPT) algorithms, which are required to run in time $f(k)\cdot |x|^{O(1)}$, where $f$ can be any computable function. The notion of fixed-parameter tractability formalises the existence of efficient algorithms if the parameter of the problem input is significantly smaller than the input size. For example, in the database query evaluation problem, the input is a pair $x=(\varphi,D)$ of a query $\varphi$ and a database $D$. Choosing the size of the query as a parameter, i.e., setting $k=|\varphi|$, an FPT algorithm for this problem can then be thought as an efficient algorithm for instances in which the size of the query is significantly smaller than the size of the database, which is true for realistic instances. 

Independently introduced by McCartin~\cite{McCartin06} and by Flum and Grohe~\cite{FlumG04}, the field of parameterised \emph{counting} complexity theory aims to apply the tools and methods from parameterised algorithmics to computational counting problems. In the context of counting problems, the notion of tractability is naturally
still given by FPT algorithms, and the notion of intractability is given by $\#\W[1]$-hardness. In a nutshell, the class $\#\W[1]$ can be thought as a parameterised equivalent of $\#\mathrm{P}$ and its canonical complete problem is the problem of, given a positive integer $k$ and a graph $G$, counting the number of $k$-cliques in $G$, parameterised by $k$. Under standard assumptions such as the Exponential Time Hypothesis, $\#\W[1]$-hard problems are not fixed-parameter tractable (see e.g.\ \cite{Chenetal05,Chenetal06,CyganFKLMPPS15}).

In addition to early key results such as the classification of the parameterised homomorphism counting problem by Dalmau and Jonson~\cite{DalmauJ04}, and the resolution of the complexity of the problem of counting $k$-matchings by Curticapean~\cite{Curticapean13}, the field of parameterised counting recently witnessed significant advancements with the introduction of the framework of graph motif parameters~\cite{CurticapeanDM17} which has subsequently been used to fully resolve the parameterised and fine-grained complexity of numerous network pattern counting problems~\cite{Roth17,DellRW19,RothSW20,BressanR21,BeraGLSS22,PenaS22,BLR2023stoc,GishbolinerLSY23,DoringMW24,Curticapean24,CurticapeanN24,DoringMW25}.

\paragraph*{Parameterised Holant Problems}
In the present work we investigate holant problems under the lens of parameterised (counting) complexity theory. In fact, parameterised holant problems have already been introduced and used almost a decade ago~\cite{Curticapean15}, but no attempt has been made to establish exhaustive classification results comparable to the classical holant dichotomies. 

Let us now introduce the parameterised holant framework following the approach of~\cite{Curticapean15}. To this end, let $\mathcal{S}$ be a finite set of symmetric signatures --- think for now of a signature in $\mathcal{S}$ as a function from $\{0,1\}^\ast$ to algebraic complex numbers; we will see later that we have to be very careful about which functions we can allow as signatures. A $k$-\emph{edge-coloured signature grid} over $\mathcal{S}$ then consists of a graph $G$, a (not necessarily proper) edge-colouring $\xi:E(G)\to [k]$, and an assignment from vertices $v\in V(G)$ to signatures in $\mathcal{S}$. We write $s_v\in \mathcal{S}$ for the signature assigned to vertex $v$. We say that an assignment $\alpha:E(G) \to \{0,1\}$ is \emph{edge-colourful} if $\alpha$ maps exactly $k$ edges to $1$, and each edge colour (w.r.t.\ $\xi$) is hit exactly once. The holant value of the $k$-edge-colourful signature grid $\Omega=(G,\xi,\{s_v\}_{v\in V(G)})$ is then defined as
\[ \holant(\Omega)= \sum_{\substack{\alpha: E(G)\to \{0,1\}\\\text{edge-colourful}}} ~\prod_{v\in V(G)} s_v(\alpha|_{E(v)}) \,.\]

For a set of signatures $\mathcal{S}$, the problem $\holantprob(\mathcal{S})$ gets as input a $k$-edge-coloured signature grid $\Omega$ over $\mathcal{S}$ and outputs $\holant(\Omega)$. The problem parameter is $k$. Similarly as in the case of classical Holants, our goal is to identify precisely the signature sets $\mathcal{S}$ for which $\holantprob(\mathcal{S})$ becomes tractable. However, we ask for fixed-parameter tractability, rather than for polynomial-time tractability, that is, our goal is the construction of algorithms running in time $f(k)\cdot |\Omega|^{O(1)}$. Analogously, we define the (arguably more natural)\footnote{While the uncoloured parameterised Holant problem is most likely more natural to readers outside of the field of parameterised counting, we decided to follow the notation of Curticapean~\cite{Curticapean15} and denote the coloured Holant problem by $\holantprob(\mathcal{S})$. To avoid any confusion we thus denote the uncoloured Holant problem by $\text{\sc{p-UnColHolant}}(\mathcal{S})$.} \emph{uncoloured} version $\text{\sc{p-UnColHolant}}(\mathcal{S})$, in which the signature grid does not come with a $k$-edge-colouring, and we instead sum over all $\alpha:E(G)\to\{0,1\}$ that map exactly $k$ edges to $1$. We will see in the applications of our main results that both $\holantprob(\mathcal{S})$ and $\text{\sc{p-UnColHolant}}(\mathcal{S})$ encode various well-studied parameterised counting problems, such as counting (coloured or uncoloured) $k$-matchings, counting (coloured or uncoloured) $k$-factors, and, to some extent, counting the number of weight-$k$ solutions of systems of linear equations.

As indicated previously, before stating our main results, we have to discuss some subtleties about the definition of signatures. In classical holant theory, each signature has a fixed arity $d$ and only allows for inputs in $\{0,1\}^d$. This implies that only vertices of degree $d$ can be equipped with such a signature. As a consequence, if we would also enforce a fixed arity for each signature, then for any \emph{finite} set of signatures $\mathcal{S}$, the only valid inputs to $\holantprob(\mathcal{S})$ and $\text{\sc{p-UnColHolant}}(\mathcal{S})$ would be signature grids, the underlying graphs of which have maximum degree $d$, where $d$ equals the maximum arity of the signatures in $\mathcal{S}$. Using standard tools from parameterised algorithmics, such as the bounded search-tree paradigm, the problems $\holantprob(\mathcal{S})$ and $\text{\sc{p-UnColHolant}}(\mathcal{S})$ would then always be fixed-parameter tractable. At the same time, modelling key parameterised counting problems such as counting $k$-matchings would require an infinite set of signatures. Therefore, we leverage the setting by allowing signatures to be functions with domain $\{0,1\}^\ast$ --- for example, setting $\mathsf{hw}_{\leq 1}$ to be the signature that maps an input tuple to $1$ if and only if at most one of its elements is $1$, the problems $\holantprob(\{\mathsf{hw}_{\leq 1}\})$ and $\text{\sc{p-UnColHolant}}(\{\mathsf{hw}_{\leq 1}\})$ become the problems of counting edge-colourful and uncoloured $k$-matchings, respectively. Moreover, in this work, we will restrict ourselves to \emph{symmetric} signatures, that is, signatures $s$ satisfying $s(x)=s(y)$ whenever $x$ can be obtained from $y$ by permuting its entries. Since the domain of signatures is $\{0,1\}^\ast$, the value of $s(x)$ only depends on the Hamming weight, i.e., the number of $1$s, in $x$. For notational convenience, we thus define signatures as functions from $\mathbb{N}$ to algebraic complex numbers.

\begin{definition}[Signatures]\label{def:signatures_intro}
    A \emph{signature} is a computable function $s$ from non-negative integers to algebraic complex numbers. Moreover, we require $s(0)\neq 0$.
\end{definition}
We note that the requirement $s(0)\neq 0$ makes sure that a vertex $v$ equipped with $s$ is allowed to not ``participate'' in the $k$-edge-subset chosen by $\alpha$; 
if more than $2k$ vertices were equipped with signatures that allow for $s(0)=0$, then the holant value would be trivially $0$.
As the core of our technical work applies to signatures that fulfil the $s(0)\neq0$ requirement, we focus our discussion on such signatures.
However, for completeness, we show in Section~\ref{sec:sig0} how all our results can be extended to the case in which signatures $s$ with $s(0)=0$ are allowed.

Now, with this definition of signatures, we can simplify the notation in the definition of parameterised holant problems as follows; for the remainder of the paper, we will use the subsequent notation:

\begin{definition}[The Parameterised Holant Problem]\label{def:intro_param_holant}
    Let $\mathcal{S}$ be a finite set of signatures. The problem $\holantprob(\mathcal{S})$ gets as input a
    $k$-edge-coloured signature grid $\Omega=(G,\xi,\{s_v\}_{v\in V(G)})$ with $s_v\in \mathcal{S}$ for all $v\in V(G)$. The output is
    \[ \holant(\Omega):= \sum_{\substack{A \subseteq E(G)\\|A|=k,~\xi(A)=[k]}}~\prod_{v\in V(G)} s_v(|A \cap E(v)|)\,, \]
    where $E(v)$ denotes the set of edges incident to $v$. The problem parameter is $k$.
\end{definition}
\begin{definition}[The Parameterised Uncoloured Holant Problem]\label{def:intro_param_uncol_holant}
    Let $\mathcal{S}$ be a finite set of signatures. The problem $\text{\sc{p-UnColHolant}}(\mathcal{S})$ gets as input a positive integer $k$, and a signature grid $\Omega=(G,\{s_v\}_{v\in V(G)})$ with $s_v\in \mathcal{S}$ for all $v\in V(G)$. The output is
    \[ \holant(\Omega,k):= \sum_{\substack{A \subseteq E(G)\\|A|=k}}~\prod_{v\in V(G)} s_v(|A \cap E(v)|)\,. \]
    The problem parameter is $k$.
\end{definition}

\subsection{Our Contributions}
We provide a complexity ``trichotomy'' for $\holantprob(\mathcal{S})$, and a complexity dichotomy for $\text{\sc{p-UnColHolant}}(\mathcal{S})$, both along the permitted signatures $\mathcal{S}$. 

For the statement of our results, we first introduce signature fingerprints and types of signature sets.

\begin{definition}[Signature Fingerprints]\label{def:fingerprint_intro}
Let $d$ be a positive integer and let $s$ be a signature. The \emph{fingerprint} of $d$ and $s$ is defined as
    \[\chi(d,s) := \sum_{\sigma} (-1)^{|\sigma|-1} (|\sigma|-1)! \cdot \prod_{B \in \sigma} \frac{s(|B|)}{s(0)} \,,\]
    where the sum is over all partitions of $[d]$.
\end{definition}

We point out that $\chi(d,s)$ is equal to a weighted sum over all evaluations of the M\"obius function of the lattice of partitions of the set $[d]$; the details necessary for this work are provided in Section~\ref{sec:parts_quots_mobius} and we refer the reader to e.g.\ \cite[Chapter 3]{Stanley11} for further reading.

\begin{definition}[Types of Signature Sets]\label{def:sigtype_intro}
    Let $\mathcal{S}$ be a finite set of signatures. We say that $\mathcal{S}$ is of type
    \begin{itemize}
        \item[(I)] $\mathbb{T}[\mathsf{Lin}]$ if $\chi(d,s)=0$ for all $s\in \mathcal{S}, d\geq 2$, 
        \item[(II)] $\mathbb{T}[\omega]$ if $\chi(d,s)=0$ for all $s\in \mathcal{S}, d\geq 3$, but there exists $s\in \mathcal{S}$ with $\chi(2,s)\neq 0$, and
        \item[(III)] $\mathbb{T}[\infty]$ otherwise, i.e., there exists $s\in \mathcal{S}$ and $d\geq 3$ such that $\chi(d,s)\neq 0$.
    \end{itemize}
\end{definition}

We point out that there are infinitely many signature sets of each type --- we provide an easy construction in the appendix. For now, let us discuss some natural examples which also help us gain some initial understanding of the properties of the signature fingerprints.
\begin{itemize}
    \item[(I)] Unsurprisingly, any signature set containing only a constant function $s(x)=c\neq 0$ for all $x\in \mathbb{N}$ makes the problem easy.  In that case, we have, for each $d$, that
    $\chi(d,s) := \sum_{\sigma} (-1)^{|\sigma|-1} (|\sigma|-1)!$, since $s(|B|)/s(0)$ will always be $1$. However, as indicated previously, this alternating sum is equal to $\sum_{\rho}\mu(\rho,\top)$, where $\mu$ is the M\"obius function of the partition lattice, and $\top$ is the coarsest partition of the $d$-element set. This sum is well-known to be $0$ for every $d\geq 2$ (see e.g.\ \cite[Section~3.7]{Stanley11}), implying that $\mathcal{S}=\{s_c\}$ is indeed of type $\mathbb{T}[\mathsf{Lin}]$. Analogously, any finite signature set $\mathcal{S}$ containing only constant functions is of type $\mathbb{T}[\mathsf{Lin}]$ as well.

    As a slightly more interesting example, a similar argument applies to the function $s(x):= 2^{ax+b}$ for any pair of rational numbers $a$ and $b$, since $s(|B|)/s(0)= 2^{ax}$, implying that $\prod_{B\in \sigma} s(|B|)/s(0) = 2^{a\sum_{B\in \sigma}|B|} = 2^{ad}$, and thus $\chi(d,s) := 2^{ad} \cdot \sum_{\sigma} (-1)^{|\sigma|-1} (|\sigma|-1)! =0$, for $d \geq 2$.
    \item[(II)] For type $\mathbb{T}[\mathsf{\omega}]$, we consider the evaluation of $\holantprob(\mathcal{S})$ modulo $2$ (in Section~\ref{sec:modular} we show how to lift our results on the edge-coloured holant problem to modular counting). Then we set $\mathcal{S}=\{\mathsf{hw}_{\leq 1}\}$, and we recall that $\mathsf{hw}_{\leq 1}(x)$ evaluates to $1$ if $x\in \{0,1\}$ and to $0$ otherwise. Then the holant problem is precisely the problem of counting edge-colourful $k$-matchings modulo $2$. Now note that we have $(|\sigma|-1)!=0$ modulo $2$ whenever $|\sigma| \geq 3$, and, for $s=\mathsf{hw}_{\leq 1}$, $s(|B|)=0$ whenever $|B|\geq 2$. Therefore, for any $d\geq 3$, we have that $\chi(d,s)=0$ modulo $2$.

    However, note that $\chi(2,s)=1$ modulo $2$: The set $[2]$ only has two partitions $\bot_2=\{\{1\},\{2\}\}$ and $\top_2=\{\{1,2\}\}$, and observe that $\top_2$ contains a block $B$ of size $2$, hence $s(|B|)$ and the contribution of $\top_2$ to $\chi(2,s)$ vanishes. Therefore 
    \[ \chi(2,s) = (-1)^{|\bot_2|-1}(|\bot_2|-1)! \prod_{B\in \bot_2} \frac{s(|B|)}{s(0)} = 1 \mod 2 \,. \]
    Thus $\mathcal{S}=\{\mathsf{hw}_{\leq 1}\}$ is indeed of type $\mathbb{T}[\mathsf{\omega}]$ when computation is done modulo $2$.
    \item[(III)] A natural example for type $\mathbb{T}[\mathsf{\infty}]$ is the signature $\mathsf{even}(x)$ which maps $x$ to $1$ if $x$ is even, and to $0$ otherwise. Let us fix $d=4$ and set $s=\mathsf{even}$. Observe that \[\prod_{B\in \sigma}\frac{s(|B|)}{s(0)} = \begin{cases}
        1 & \forall B\in \sigma: |B|=0 \mod 2\\
        0 & \text{otherwise}
    \end{cases}\]
    There are precisely four partitions of $[4]$ that only contain even sized blocks: $\{\{1,2\},\{3,4\}\}$, $\{\{1,3\},\{2,4\}\}$, $\{\{1,4\},\{2,3\}\}$, and $\{\{1,2,3,4\}\}$. The former three each contribute $(-1)^{2-1} (2-1)! = -1$ to $\chi(4,s)$, and the latter contributes $(-1)^{1-1}(1-1)!=1$ to $\chi(4,s)$. Thus $\chi(4,s)=-3+1=-2\neq 0$ and $\mathcal{S}=\{\mathsf{even}\}$ is, as promised, of type $\mathbb{T}[\mathsf{\infty}]$. 
\end{itemize}

In some cases, e.g.\ for signatures with co-domain $\{0,1\}$, it will be possible to simplify the definition of the types. Moreover, we will show that $\mathbb{T}[\mathsf{Lin}]$ can always be simplified if $s(0)=1$ for each signature (and we will later see that we can always make this assumption without loss of generality).
\begin{lemma}\label{lem:simplify_type_1_intro}
Let $\mathcal{S}$ be a finite set of signatures such that $s(0)=1$ for all $s\in \mathcal{S}$. Then
$\mathcal{S}$ is of type $\mathbb{T}[\mathsf{Lin}]$ if and only if, for each $s\in \mathcal{S}$ and $n\in \mathbb{N}$, we have $s(n)=s(1)^n$.\qed
\end{lemma}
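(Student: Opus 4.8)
The plan is to reduce both directions to a single identity about set partitions that the paper already invokes in Example~(I): for every integer $n\ge 2$ one has $\sum_{\sigma\vdash[n]}(-1)^{|\sigma|-1}(|\sigma|-1)! = 0$; call this identity $(\star)$. It holds because $(-1)^{|\sigma|-1}(|\sigma|-1)! = \mu(\sigma,\top)$ in the partition lattice $\Pi_n$, so the sum equals $\sum_\sigma\mu(\sigma,\top)$, which is $0$ whenever $\bot\neq\top$, i.e.\ whenever $n\ge 2$ (see~\cite[Section~3.7]{Stanley11} and Section~\ref{sec:parts_quots_mobius}). Since the single-block partition $\top\vdash[n]$ contributes $(-1)^0\,0!=1$, identity $(\star)$ is equivalent to $\sum_{\sigma\neq\top}(-1)^{|\sigma|-1}(|\sigma|-1)! = -1$. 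Throughout I would use the hypothesis $s(0)=1$, which lets me replace $\prod_{B\in\sigma}\frac{s(|B|)}{s(0)}$ by $\prod_{B\in\sigma}s(|B|)$ in the definition of $\chi(d,s)$.

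For the ``if'' direction, suppose $s(n)=s(1)^n$ for all $n\in\N$ (consistent at $n=0$ since $s(0)=1=s(1)^0$). Then for any $d\ge 2$ and any partition $\sigma\vdash[d]$ we have $\prod_{B\in\sigma}s(|B|) = \prod_{B\in\sigma}s(1)^{|B|} = s(1)^{\sum_{B\in\sigma}|B|} = s(1)^d$, and therefore
\[
\chi(d,s) \;=\; s(1)^d\cdot\sum_{\sigma\vdash[d]}(-1)^{|\sigma|-1}(|\sigma|-1)! \;=\; 0
\]
by $(\star)$. As this holds for every $s\in\mathcal S$ and every $d\ge 2$, the set $\mathcal S$ is of type $\mathbb{T}[\mathsf{Lin}]$.

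For the ``only if'' direction, fix $s\in\mathcal S$ with $\chi(d,s)=0$ for all $d\ge 2$, and prove $s(n)=s(1)^n$ by strong induction on $n$. The cases $n=0$ (where $s(0)=1=s(1)^0$) and $n=1$ are immediate. Let $n\ge 2$ and assume $s(m)=s(1)^m$ for all $m<n$. In the expansion $0 = \chi(n,s) = \sum_{\sigma\vdash[n]}(-1)^{|\sigma|-1}(|\sigma|-1)!\,\prod_{B\in\sigma}s(|B|)$, the unique single-block partition $\sigma=\top$ contributes exactly $s(n)$, while every other partition $\sigma$ has all of its blocks of size strictly less than $n$, so the induction hypothesis gives $\prod_{B\in\sigma}s(|B|) = \prod_{B\in\sigma}s(1)^{|B|} = s(1)^n$ for each such $\sigma$. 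Hence
\[
0 \;=\; s(n) + s(1)^n\cdot\sum_{\sigma\neq\top}(-1)^{|\sigma|-1}(|\sigma|-1)! \;=\; s(n) - s(1)^n
\]
by $(\star)$, which closes the induction. Carrying this out for every $s\in\mathcal S$ completes the proof.

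I do not expect a genuine obstacle: the only care needed is the bookkeeping that isolates the top partition and the verification that $(\star)$ is exactly the identity already used in Example~(I). As an alternative (slicker but heavier) route, one can pass to the exponential generating function $A_s(x) = \sum_{n\ge 0}s(n)\frac{x^n}{n!}$: since $s(0)=1$, the series $\log A_s(x)$ is well defined, and M\"obius inversion over $\Pi_n$ (the moment--cumulant relation, developed in Section~\ref{sec:parts_quots_mobius}) identifies $\chi(d,s)$ with the coefficient of $\frac{x^d}{d!}$ in $\log A_s(x)$; then $\mathcal S$ is of type $\mathbb{T}[\mathsf{Lin}]$ iff $\log A_s(x) = \chi(1,s)\,x = s(1)\,x$ for all $s\in\mathcal S$, iff $A_s(x) = e^{s(1)x}$, iff $s(n)=s(1)^n$ for all $n$. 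I would keep the elementary inductive argument in the main text and mention this generating-function interpretation as a remark.
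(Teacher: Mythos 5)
Your proof is correct and follows essentially the same route as the paper's own argument (given in the proof of Lemma~\ref{lem:linear_type_sigs_equivalence}): the ``if'' direction is the substitution $\prod_{B\in\sigma}s(|B|)=s(1)^d$ combined with the vanishing of $\sum_\sigma\mu(\sigma,\top)$ over the partition lattice, and the ``only if'' direction is the same strong induction that isolates the top partition and uses that identity to conclude $s(n)=s(1)^n$. The generating-function remark is a nice aside but not needed.
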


We are now able to state our main results. Some of the lower bounds rely on two well-known hardness assumptions from fine-grained complexity theory: The Exponential Time Hypothesis (ETH), and the Triangle Conjecture. We introduce both in Section~\ref{sec:prelims_fgct}; in a nutshell, ETH asserts that $3$-$\textsc{SAT}$ cannot be solved in sub-exponential time in the number of variables, and the Triangle conjecture asserts that there is no linear-time algorithm for finding a triangle in a graph.  
\begin{mtheorem}[Complexity Trichotomy for $\textsc{p-Holant}(\mathcal{S})$]\label{main_thm}
    Let $\mathcal{S}$ be a finite set of signatures.
    \begin{itemize}
        \item[(I)] If $\mathcal{S}$ is of type $\mathbb{T}[\mathsf{Lin}]$, then $\textsc{p-Holant}(\mathcal{S})$ can be solved in FPT-near-linear time, that is, there is a computable function $f$ such that $\textsc{p-Holant}(\mathcal{S})$ can be solved in time $f(k)\cdot \tilde{\mathcal{O}}(|V(\Omega)|+|E(\Omega)|)$.
        \item[(II)] If $\mathcal{S}$ is of type $\mathbb{T}[\omega]$, then $\textsc{p-Holant}(\mathcal{S})$ can be solved in FPT-matrix-multiplication time, that is, there is a computable function $f$ such that $\textsc{p-Holant}(\mathcal{S})$ can be solved in time $f(k)\cdot \mathcal{O}(|V(\Omega)|^{\omega})$. Moreover, $\textsc{p-Holant}(\mathcal{S})$ cannot be solved in time $f(k)\cdot \tilde{\mathcal{O}}(|V(\Omega)|+|E(\Omega)|)$ for any function $f$, unless the Triangle Conjecture fails.
        \item[(III)] Otherwise, that is, if $\mathcal{S}$ is of type $\mathbb{T}[\infty]$, $\textsc{p-Holant}(\mathcal{S})$ is $\#\W[1]$-complete. Moreover, $\textsc{p-Holant}(\mathcal{S})$ cannot be solved in time $f(k)\cdot |V(\Omega)|^{o(k/\log k)}$ for any function $f$, unless the Exponential Time Hypothesis fails.  \qed
    \end{itemize}
\end{mtheorem}
At the time of writing this paper the matrix multiplication exponent $\omega$ is known to be bounded by $2\leq \omega\leq 2.371552$~\cite{WilliamsXXZ24}.
Note that the lower bound in (III) matches, up to a factor of $1/\log k$ in the exponent, the running time of the brute force algorithm --- which runs in time $|\Omega|^{\mathcal{O}(k)}$ ---- making this bound (almost) tight. Moreover, the factor $1/\log k$ is not an artifact of our proofs, but a consequence of the notoriously open problem of whether ``you can beat treewidth''~\cite{Marx10}.

We emphasise that our complexity trichotomy above is \emph{much} stronger than an FPT vs $\#\W[1]$ classification result: Under ETH and the Triangle Conjecture, the best possible exponent of $|\Omega|$ in the running time is either $1$, or between $1$ and $\omega$, or lower bounded by $o(k/\log k)$. In particular, there are no FPT instances requiring an exponent larger than $\omega$.

Perhaps surprisingly, we discover that the complexity changes for the uncoloured holant problem; we will see in Section~\ref{sec:uncoloured} that $\text{\sc{p-UnColHolant}}(\mathcal{S})$ appears to be \emph{strictly} harder than $\textsc{p-Holant}(\mathcal{S})$.

\begin{mtheorem}[Complexity Dichotomy for $\text{\sc{p-UnColHolant}}(\mathcal{S})$]\label{thm:main_uncol}
    Let $\mathcal{S}$ be a finite set of signatures. 
    \begin{itemize}
        \item[(I)] If $\mathcal{S}$ is of type $\mathbb{T}[\mathsf{Lin}]$, then $\text{\sc{p-UnColHolant}}(\mathcal{S})$ can be solved in FPT-near-linear time.
        \item[(II)] Otherwise $\text{\sc{p-UnColHolant}}(\mathcal{S})$ is $\#\W[1]$-complete. If, additionally, $\mathcal{S}$ is of type $\mathbb{T}[\infty]$, then $\text{\sc{p-UnColHolant}}(\mathcal{S})$ cannot be solved in time $f(k)\cdot |V(\Omega)|^{o(k/\log k)}$, unless ETH fails. \qed
    \end{itemize}
\end{mtheorem}

\paragraph*{Explicit Tractability Criteria}
For both of our main classifications, the tractability criteria appear to be implicit and hard to verify in the sense that it might be non-trivial to decide for a concrete set of signatures whether it yields a tractable instance of a parameterised Holant problem.

We address this issue by using Lemma~\ref{lem:simplify_type_1_intro}, and we obtain an explicit classification for signatures satisfying $s(0)=1$; intuitively, in those instances the vertices not incident to any of the chosen edges do not contribute to the holant value. We provide below the corresponding dichotomy for the uncoloured Holant problem; in combination with the assumption $s(0)=1$ this variant accounts for the arguably most natural instantiation of parameterised Holant problems, and in this case, the tractability criterion is directly and easily verifyable.
\begin{mtheorem}
    Let $\mathcal{S}$ be a finite set of signatures such that $s(0)=1$ for all $s\in \mathcal{S}$. 
    \item[(I)] If  $s(n)=s(1)^n$ for all $s\in \mathcal{S}$ and $n\geq 1$, then $\text{\sc{p-UnColHolant}}(\mathcal{S})$ can be solved in FPT-near-linear time.
        \item[(II)] Otherwise $\text{\sc{p-UnColHolant}}(\mathcal{S})$ is $\#\W[1]$-complete. \qed
\end{mtheorem}

\subsubsection{Applications of our Main Results}

We now discuss new complexity classifications of specific problems that transpire from our main theorems. The complexities of these problems where previously known only for special cases.

\paragraph*{Counting Factors of size $k$}
Given a graph $G$ and a function $f:V(G)\to \mathcal{P}(\mathbb{N})$, an $f$-\emph{factor} of $G$ is a subset of edges $A$ such that $|E(v)\cap A| \in f(v)$ for all $v\in V(G)$. Moreover, given a $k$-edge-colouring $\xi$ of $G$ a factor is called colourful (w.r.t.\ $\xi$) if it contains precisely one edge per colour.

Graph factors have been studied since at least the 70s (c.f.\ \cite{Plummer07} for a survey) --- see also~\cite{MarxSS21,MarxSS22} for more recent results under the lens of parameterised complexity. Let us consider the following two problems:
\begin{definition}
    Let $\mathcal{B}$ be a finite non-empty subset of $\mathcal{P}(\mathbb{N})$. 
    \begin{itemize}
    \item $\textsc{ColFactor}(\mathcal{B})$ expects as input a graph $G$, a $k$-edge-colouring $\xi$ of $G$ and a mapping $f:V(G) \to \mathcal{B}$, and outputs the number of colourful $f$-factors of $G$. The parameter is $k$.
    \item $\textsc{Factor}(\mathcal{B})$ expects as input a graph $G$, a positive integer $k$, and a mapping $f:V(G) \to \mathcal{B}$, and outputs the number of $f$-factors of size $k$ in $G$. The parameter is $k$.
\end{itemize}
\end{definition}

In other words, $\textsc{Factor}(\mathcal{B})$ and $\textsc{ColFactor}(\mathcal{B})$ can be seen as coloured and uncoloured parameterised holant problems on signatures with co-domain $\{0,1\}$. Those allow us to express counting (colourful) $k$-edge-subgraphs with pre-specified degree constraints, subsuming among others, the problems of counting (colourful) $k$-matchings, $k$-partial cycle covers~\cite{BlaserC12}, or, more generally, $d$-regular $k$-edge subgraphs. We prove the following classification in Corollary~\ref{cor:factor_classification_col} for the coloured setting, and in Corollary~\ref{cor:factor_classification_uncol} for the uncoloured setting. 

\begin{theorem}\label{thm:factor_classification}
    If $\mathcal{B}$ contains a set $\{0\} \subsetneq S \subsetneq \mathbb{N}$ then the problems $\textsc{ColFactor}(\mathcal{B})$ and $\textsc{Factor}(\mathcal{B})$ are $\#\W[1]$-complete, and cannot be solved in time $f(k)\cdot n^{o(k/\log k)}$ for any function~$f$, unless the Exponential Time Hypothesis fails. Otherwise both problems are solvable in FPT-near-linear time.\qed
\end{theorem}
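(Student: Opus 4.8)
The plan is to derive Theorem~\ref{thm:factor_classification} as a direct corollary of the two main theorems (Theorem~\ref{main_thm} and Theorem~\ref{thm:main_uncol}) by translating the factor-counting problems into (un)coloured parameterised holant problems and then analysing the resulting signature fingerprints. First I would observe that $\textsc{ColFactor}(\mathcal{B})$ is exactly $\holantprob(\mathcal{S}_{\mathcal{B}})$ and $\textsc{Factor}(\mathcal{B})$ is exactly $\text{\sc{p-UnColHolant}}(\mathcal{S}_{\mathcal{B}})$, where for each $S \in \mathcal{B}$ we let $s_S$ be the $\{0,1\}$-valued signature with $s_S(n) = 1$ iff $n \in S$, and set $\mathcal{S}_{\mathcal{B}} = \{s_S : S \in \mathcal{B}\}$. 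Here the hypothesis that every $S \in \mathcal{B}$ contains $0$ is precisely what guarantees $s_S(0) = 1 \neq 0$, so each $s_S$ is a legitimate signature in the sense of Definition~\ref{def:signatures_intro}. Indeed, expanding the holant value for the signature grid $(G, \xi, \{s_{f(v)}\}_v)$, the product $\prod_v s_{f(v)}(|A \cap E(v)|)$ is $1$ if $A$ is a colourful $f$-factor and $0$ otherwise, matching the definition of $\textsc{ColFactor}(\mathcal{B})$; the uncoloured case is identical.

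Given this reduction, the theorem reduces to a purely combinatorial dichotomy on the types of $\mathcal{S}_{\mathcal{B}}$: I must show that $\mathcal{S}_{\mathcal{B}}$ is of type $\mathbb{T}[\mathsf{Lin}]$ if $\mathcal{B} \subseteq \{\emptyset, \{0\}, \mathbb{N}\}$ (equivalently, no member of $\mathcal{B}$ is a proper nonempty-beyond-$\{0\}$ subset of $\mathbb{N}$ — but since $\emptyset$ and sets not containing $0$ are excluded, the "easy" case is exactly $\mathcal{B} \subseteq \{\{0\}, \mathbb{N}\}$), and that $\mathcal{S}_{\mathcal{B}}$ is of type $\mathbb{T}[\infty]$ whenever some $S \in \mathcal{B}$ satisfies $\{0\} \subsetneq S \subsetneq \mathbb{N}$. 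Since all signatures here satisfy $s(0)=1$, I can invoke Lemma~\ref{lem:simplify_type_1_intro}: $\mathcal{S}_{\mathcal{B}}$ is of type $\mathbb{T}[\mathsf{Lin}]$ iff $s_S(n) = s_S(1)^n$ for all $S$ and all $n$. If $S = \{0\}$ then $s_S(1) = 0$ and $s_S(n) = 0 = 0^n$ for $n \geq 1$, so the identity holds; if $S = \mathbb{N}$ then $s_S \equiv 1$ and the identity holds; and if $\{0\} \subsetneq S \subsetneq \mathbb{N}$, pick $a \geq 1$ with $a \in S$ and $b \geq 1$ with $b \notin S$ — then $s_S(a) = 1$ forces $s_S(1) \neq 0$ hence $s_S(1) = 1$ (as the codomain is $\{0,1\}$), but then $s_S(b) = 1 \neq 0 = s_S(1)^b$, a contradiction. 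Hence $\mathcal{S}_{\mathcal{B}}$ is of type $\mathbb{T}[\mathsf{Lin}]$ exactly when no proper-sandwiched $S$ occurs, which by the theorem's phrasing gives the FPT-near-linear case via part~(I)/(1) of the main theorems.

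It remains to show the "hard" case lands in type $\mathbb{T}[\infty]$ (not merely type $\mathbb{T}[\omega]$), so that the stronger ETH lower bound $f(k)\cdot n^{o(k/\log k)}$ applies via Theorem~\ref{main_thm}(III) and Theorem~\ref{thm:main_uncol}(2). For this I would fix $S \in \mathcal{B}$ with $\{0\} \subsetneq S \subsetneq \mathbb{N}$, let $a \geq 1$ be the smallest positive element of $S$ and $b \geq 1$ the smallest positive element not in $S$, and exhibit a dimension $d \geq 3$ with $\chi(d, s_S) \neq 0$. The natural candidate is to take $d$ large enough that a partition of $[d]$ into blocks of allowed sizes (sizes in $S$) is "flexible" — e.g.\ $d = a + $ something, or iterating the argument used for $\mathsf{hw}_{\leq 1}$ and $\mathsf{even}$ in the examples. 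The key point is that $\chi(d, s_S) = \sum_{\sigma} (-1)^{|\sigma|-1}(|\sigma|-1)! \prod_{B \in \sigma}\mathbf{1}[|B| \in S]$ only sums over partitions all of whose blocks have size in $S$, and one needs to show this weighted alternating count is nonzero for some $d \geq 3$; I expect the cleanest route is a generating-function / exponential-formula computation: $\sum_d \chi(d,s) \frac{x^d}{d!} = \log\!\bigl(\sum_{n\ge 0} \frac{s(n)}{s(0)}\frac{x^n}{n!}\bigr) = \log\!\bigl(\sum_{n \in S}\frac{x^n}{n!}\bigr)$, and then argue this power series has a nonzero coefficient in some degree $\geq 3$ — which must happen unless $\sum_{n \in S} x^n/n! = e^{cx}$ for a constant $c$, forcing $S = \{0\}$ or $S = \mathbb{N}$. \textbf{The main obstacle} I anticipate is precisely this last step: rigorously proving that $\log\bigl(\sum_{n\in S} x^n/n!\bigr)$ cannot be a linear polynomial (equivalently, that $\sum_{n\in S} x^n/n!$ is an exponential $e^{cx}$ only in the two trivial cases), handling the subtlety that its coefficient could in principle vanish in degree $2$ but must then be nonzero in some degree $\geq 3$. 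This amounts to classifying which "$0/1$-masked" truncations of $e^x$ are themselves exponentials, which I would settle by comparing Taylor coefficients: if $\sum_{n\in S} x^n/n! = e^{cx}$ then $1/n! = c^n/n!$ or $0$ for each $n$, i.e.\ $c^n \in \{0,1\}$ for $n \in S$ and — taking $n=1 \notin S$ is impossible since... (one must first note whether $1 \in S$) — a short case analysis on membership of $1$ in $S$ forces $c \in \{0,1\}$ and hence $S \in \{\{0\},\mathbb{N}\}$, the excluded cases. Once type $\mathbb{T}[\infty]$ is confirmed, the hardness and the finer ETH bound follow immediately from the main theorems, completing the proof.
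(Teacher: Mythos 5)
Your overall strategy (encode the factor problems as holant problems with $\{0,1\}$-valued signatures and classify via the fingerprints) is the same as the paper's, but there are two genuine gaps. First, you assume ``every $S \in \mathcal{B}$ contains $0$'' and conclude that the tractable case is exactly $\mathcal{B} \subseteq \{\{0\},\mathbb{N}\}$ — but no such hypothesis exists. $\mathcal{B}$ is an arbitrary finite non-empty subset of $\mathcal{P}(\mathbb{N})$, so the ``otherwise'' case also covers sets $S$ with $0 \notin S$ (e.g.\ $S=\{1\}$ or $S=\{d\}$), whose signatures satisfy $s_S(0)=0$ and therefore are \emph{not} signatures in the sense of Definition~\ref{def:signatures_intro}; neither Main Theorem~\ref{main_thm} nor Main Theorem~\ref{thm:main_uncol}, nor Lemma~\ref{lem:simplify_type_1_intro}, applies to them. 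The paper's upper bound for precisely this reason routes through the extension to signatures allowing $s(0)=0$ (Theorem~\ref{thm:holant_trichotomy_zero} for the coloured case and its uncoloured analogue in Section~\ref{sec:uncoloured}), which is what Corollaries~\ref{cor:factor_classification_col} and~\ref{cor:factor_classification_uncol} invoke; your proposal never addresses these signatures, so the tractability direction is incomplete.

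Second, in the hardness direction your generating-function plan only rules out the wrong thing. You argue that $\log\bigl(\sum_{n\in S} x^n/n!\bigr)$ cannot be a \emph{linear} polynomial, i.e.\ that $\{s_S\}$ is not of type $\mathbb{T}[\mathsf{Lin}]$. But if all coefficients in degree $\geq 3$ vanished, the series would equal $e^{c_1x+c_2x^2}$, i.e.\ the signature could still be of type $\mathbb{T}[\omega]$ — and then $\textsc{ColFactor}(\mathcal{B})$ would be solvable in FPT-matrix-multiplication time by Main Theorem~\ref{main_thm}(II), contradicting the claimed $\#\W[1]$-completeness, and the $f(k)\cdot n^{o(k/\log k)}$ lower bound would not follow for either problem (Main Theorem~\ref{thm:main_uncol} only gives the ETH bound under type $\mathbb{T}[\infty]$). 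So you must exhibit $d\geq 3$ with $\chi(d,s_S)\neq 0$, which your sketch does not do. This is where the paper does actual work: it computes $\chi(3,s)=2s(1)^3-3s(1)s(2)+s(3)$, notes this vanishes for a $\{0,1\}$-valued $s$ only when $(s(1),s(2),s(3))\in\{(0,0,0),(0,1,0),(1,1,1)\}$, and handles these via $\chi(4,s)=-3+s(4)\neq 0$ in the middle case and via $\chi(c,s)\neq 0$ for the minimal $c\geq 4$ with $s(c)=1$ (resp.\ $s(c)=0$) in the other two. Your route could be repaired by comparing Taylor coefficients of $e^{c_1x+c_2x^2}$ against $\{0,1\}$ values (e.g.\ the degree-$4$ coefficient forces $s(4)=3$ in the problematic subcase), but as written the argument does not establish type $\mathbb{T}[\infty]$. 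A minor further point: since $\mathcal{S}_{\mathcal{B}}$ may contain signatures with $s(0)=0$, the hardness reduction should be stated as $\holantprob(\{s_S\})\fptlinred\textsc{ColFactor}(\mathcal{B})$ (taking $f\equiv S$), as the paper does, rather than as an identity with $\holantprob(\mathcal{S}_{\mathcal{B}})$.
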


As a consequence of Theorem~\ref{thm:factor_classification} we do not only immediately infer the known parameterised hardness results for counting $k$-partial cycle covers~\cite{BlaserC12} and $k$-matchings~\cite{Curticapean13}, but we also obtain the following, significantly more general, lower bounds.

\begin{corollary}
    Let $d\geq 1$ be any fixed integer. The problem of counting $d$-regular $k$-edge subgraphs in an $n$-vertex graph $G$ is $\#\W[1]$-complete when parameterised by $k$, and cannot be solved in time $f(k)\cdot n^{o(k/\log k)}$ for any function~$f$, unless the Exponential Time Hypothesis fails. The same holds true if $G$ comes with a $k$-edge-colouring and the goal is to count edge-colourful $d$-regular subgraphs.\qed
\end{corollary}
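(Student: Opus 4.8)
The plan is to deduce the corollary as an immediate instantiation of Theorem~\ref{thm:factor_classification} with the singleton family $\mathcal{B} = \{S\}$ for $S = \{0,d\}$. First I would pin down the semantics: for a graph $G$ and a subset $A \subseteq E(G)$ with $|A| = k$, the pair $(V(G), A)$ (equivalently, the subgraph spanned by $A$ together with possibly some isolated vertices) is $d$-regular precisely when every vertex is incident to either $0$ or exactly $d$ edges of $A$, i.e.\ $|E(v) \cap A| \in \{0,d\}$ for all $v \in V(G)$. It is crucial here that isolated vertices are permitted; this is exactly why the block $\{0,d\}$ captures $d$-regularity, rather than the (potentially different) ``exact degree $d$ everywhere'' constraint. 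Consequently, taking $f : V(G) \to \mathcal{B}$ to be the only available map, namely the constant map $v \mapsto \{0,d\}$, the problem of counting $d$-regular $k$-edge subgraphs of $G$ is literally the problem $\textsc{Factor}(\mathcal{B})$ on input $(G, k, f)$, and its edge-colourful variant with respect to a $k$-edge-colouring $\xi$ is literally $\textsc{ColFactor}(\mathcal{B})$ on input $(G, \xi, f)$; the parameter $k$ matches on both sides.

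Second, I would check that $\mathcal{B} = \{\{0,d\}\}$ satisfies the hypothesis of Theorem~\ref{thm:factor_classification}: we have $\{0\} \subsetneq S$ because $d \geq 1$ ensures $d \in S \setminus \{0\}$, and $S \subsetneq \mathbb{N}$ because, for instance, $d+1 \in \mathbb{N} \setminus S$. Hence Theorem~\ref{thm:factor_classification} yields that both $\textsc{Factor}(\{\{0,d\}\})$ and $\textsc{ColFactor}(\{\{0,d\}\})$ are $\#\W[1]$-complete and admit no algorithm running in time $f(k)\cdot n^{o(k/\log k)}$ unless ETH fails. By the identifications of the previous paragraph, these conclusions transfer verbatim to the two problems in the corollary.

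There is essentially no obstacle here beyond bookkeeping; all the real work is already carried by Theorem~\ref{thm:factor_classification} (itself a consequence of Main Theorem~\ref{main_thm} and Main Theorem~\ref{thm:main_uncol} applied to the type-$\mathbb{T}[\infty]$ signature with co-domain $\{0,1\}$ induced by the block $\{0,d\}$). The only points requiring mild care are the precise definition of ``$d$-regular $k$-edge subgraph'' — in particular the treatment of vertices not covered by $A$, which forces the choice $S=\{0,d\}$ rather than $S=\{d\}$ — and the exclusion of the degenerate case $d = 0$, for which $S = \{0\}$ falls outside the hypothesis of Theorem~\ref{thm:factor_classification} (and indeed the corresponding counting problem is trivial).
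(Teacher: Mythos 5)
Your proposal is correct and matches the paper's intent: the corollary is stated as an immediate instantiation of Theorem~\ref{thm:factor_classification} with the degree-constraint set $S=\{0,d\}$ (so that uncovered vertices are allowed), and your verification that $\{0\}\subsetneq\{0,d\}\subsetneq\mathbb{N}$ for $d\geq 1$ is exactly the bookkeeping needed. The paper gives no separate argument beyond this, so your route coincides with it.
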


\paragraph*{Modular Counting of (Colourful) Matchings}
The parameterised counting problems $\oplus\textsc{ColMatch}$ and $\oplus\textsc{Match}$ ask, respectively, to compute the number of colourful $k$-matchings in a $k$-edge-coloured graph and the number of $k$-matchings in an (uncoloured) graph; here a $k$-matching is colourful if it contains precisely one edge per colour. Both problems are parameterised by $k$.

Recently, Curticapean, Dell, and Husfeldt analysed the parameterised complexity of counting small subgraphs, modulo a fixed prime $p$~\cite{CurticapeanDH21}. One of their results is an FPT algorithm for the problem $\oplus\textsc{Match}$. Using a standard trick based on inclusion-exclusion, it can be shown that the edge-colourful variant $\oplus\textsc{ColMatch}$ reduces to $\oplus\textsc{Match}$ via parameterised reductions. Therefore, it is not surprising that $\oplus\textsc{ColMatch}$ is fixed-parameter tractable as well. However, we can prove the stronger fact that $\oplus\textsc{ColMatch}$ can be solved in FPT-matrix-multiplication time. Moreover, we also show that neither $\oplus\textsc{ColMatch}$ nor $\oplus\textsc{Match}$ can be solved in FPT-near-linear time; the proof of the subsequent result can be found in Section~\ref{sec:modular}:
\begin{theorem}\label{thm:colmatch_mod_2_intro}
    $\oplus\textsc{ColMatch}$ can be solved in FPT-matrix-multiplication time. Moreover, neither $\oplus\textsc{ColMatch}$, nor $\oplus\textsc{Match}$ can be solved in FPT-near-linear time, unless the Triangle Conjecture fails.\qed
\end{theorem}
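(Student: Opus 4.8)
The plan is to deduce Theorem~\ref{thm:colmatch_mod_2_intro} from our main classification together with a standard inclusion--exclusion reduction. The starting point, already isolated in example~(II) following Definition~\ref{def:sigtype_intro}, is that $\oplus\textsc{ColMatch}$ \emph{is} exactly $\holantprob(\{\mathsf{hw}_{\leq 1}\})$ evaluated modulo $2$: an instance of $\oplus\textsc{ColMatch}$ is a $k$-edge-coloured graph $(G,\xi)$, and equipping every vertex of $G$ with $\mathsf{hw}_{\leq 1}$ produces a $k$-edge-coloured signature grid $\Omega$ over $\{\mathsf{hw}_{\leq 1}\}$ whose value $\holant(\Omega)=\sum_{A\subseteq E(G),\ |A|=k,\ \xi(A)=[k]}\prod_{v}\mathsf{hw}_{\leq 1}(|A\cap E(v)|)$ counts precisely the colourful $k$-matchings of $G$. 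As computed in that example, $\chi(2,\mathsf{hw}_{\leq 1})\equiv 1$ and $\chi(d,\mathsf{hw}_{\leq 1})\equiv 0 \mod 2$ for every $d\geq 3$, so $\{\mathsf{hw}_{\leq 1}\}$ is of type $\mathbb{T}[\omega]$ when arithmetic is carried out modulo $2$.

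Given this, both statements about $\oplus\textsc{ColMatch}$ follow directly from the modular refinement of Theorem~\ref{main_thm}(II) established in Section~\ref{sec:modular}: it yields an algorithm for $\oplus\textsc{ColMatch}$ running in time $f(k)\cdot\mathcal{O}(|V(G)|^{\omega})$, and it rules out any algorithm running in time $f(k)\cdot\tilde{\mathcal{O}}(|V(G)|+|E(G)|)$ unless the Triangle Conjecture fails. For the lower bound on $\oplus\textsc{Match}$, I would reduce $\oplus\textsc{ColMatch}$ to $\oplus\textsc{Match}$ by the usual inclusion--exclusion over colour subsets and check that the reduction is \emph{near-linear}. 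For a $k$-edge-coloured $(G,\xi)$ with $\xi:E(G)\to[k]$ and $J\subseteq[k]$, write $G_J:=(V(G),\{e\in E(G):\xi(e)\in J\})$, and let $M_k(H)$ denote the number of $k$-matchings of a graph $H$. A $k$-matching $A$ of $G$ is counted by $M_k(G_J)$ iff $\xi(A)\subseteq J$, so Möbius inversion over the subset lattice of $[k]$ gives
\[
\#\{\text{colourful }k\text{-matchings of }G\}\;=\;\sum_{J\subseteq[k]}(-1)^{k-|J|}\,M_k(G_J)\;\equiv\;\sum_{J\subseteq[k]}M_k(G_J)\mod 2 .
\]
Each $G_J$ has at most $|V(G)|$ vertices and $|E(G)|$ edges and is obtained from $G$ in time $\mathcal{O}(|E(G)|)$, and there are $2^k$ of them; hence an algorithm for $\oplus\textsc{Match}$ running in time $g(k)\cdot\tilde{\mathcal{O}}(|V(H)|+|E(H)|)$ would solve $\oplus\textsc{ColMatch}$ in time $2^k\cdot g(k)\cdot\tilde{\mathcal{O}}(|V(G)|+|E(G)|)$, contradicting the lower bound just obtained unless the Triangle Conjecture fails.

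The only nontrivial ingredient, and the expected main obstacle, is the modular refinement of Theorem~\ref{main_thm}(II): one must verify that both the $f(k)\cdot\mathcal{O}(n^{\omega})$ algorithm of case~(II) and --- more delicately --- its Triangle-Conjecture-based hardness of beating near-linear time survive the passage to computation modulo $2$, i.e., that the hardness reduction witnessing case~(II) can be taken to be parity-preserving. This is exactly the content developed in Section~\ref{sec:modular}; granting it, the two elementary reductions above finish the proof. (We also note that $\oplus\textsc{Match}$ is already known to be fixed-parameter tractable~\cite{CurticapeanDH21}, which together with the inclusion--exclusion identity re-proves fixed-parameter tractability of $\oplus\textsc{ColMatch}$; the added value of the first assertion is the stronger $\mathcal{O}(n^{\omega})$ bound obtained through the Holant framework.)
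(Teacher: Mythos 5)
Your proposal is correct and follows essentially the same route as the paper: it identifies $\oplus\textsc{ColMatch}$ with $\holantprob_p(\{\mathsf{hw}_{\leq 1}\})$ for $p=2$, checks via the fingerprints that the type is $\mathbb{T}_2[\omega]$, invokes the modular refinement of the trichotomy from Section~\ref{sec:modular}, and transfers the lower bound to $\oplus\textsc{Match}$ via the $2^k$-query inclusion--exclusion reduction, which is exactly the paper's Lemma~\ref{lem:col_to_uncol} specialised to matchings. Your explicit check that this reduction is a \emph{linear} FPT Turing-reduction (so that near-linear solvability would transfer) is precisely the point the paper relies on as well.
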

Interestingly, to the best of our knowledge, it is (and remains) unknown whether $\oplus\textsc{Match}$ can also be solved in FPT-matrix-multiplication time, since our main result for the uncoloured holant problem does \emph{not} extend to modular counting.

\paragraph*{Counting Weight-$k$ Solutions to Systems of Linear Equations}
We provide an intractability result for the following coding problem (see e.g.\ \cite{BerlekampMT78,DowneyFVW99}), also known as $\#\textsc{Weighted-XOR-SAT}$, or the counting version of $\textsc{Exact-Even-Set}$.\footnote{Note that the problem of \textbf{counting} solutions of Hamming weight $k$ is equivalent to counting solutions of weight \emph{at most} $k$: For one direction, the number of solutions of Hamming weight at most $k$ can be obtained by adding the solutions of Hamming weight $\ell$ for $\ell=0,\dots,k$. For the other direction, the number of solutions of Hamming weight $k$ is equal to the difference of number of solutions of Hamming weight at most $k$ and at most $k-1$.} 

\begin{corollary}\label{cor:coding_hard}
    The problem of counting the Hamming weight $k$ solutions of a system of linear equations $A\vec{x}=0$ over $\mathbb{Z}/2\mathbb{Z}$ is $\#\W[1]$-hard when parameterised by $k$, and cannot be solved in time $f(k)\cdot n^{o(k/\log k)}$ for any function~$f$, unless the Exponential Time Hypothesis fails. This holds true even if the matrix $A$ is promised to contain at most two $1$s per column.
\end{corollary}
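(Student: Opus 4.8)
The plan is to realise $\#\textsc{Weighted-XOR-SAT}$ with the column-sparsity promise as an instance of $\holantprob(\mathcal{S})$ (or, more naturally, of $\text{\sc{p-UnColHolant}}(\mathcal{S})$) for a concrete signature set $\mathcal{S}$ of type $\mathbb{T}[\infty]$, and then invoke part~(III) of the Complexity Trichotomy (Theorem~\ref{main_thm}), respectively part~(2) of the Dichotomy (Theorem~\ref{thm:main_uncol}). Given a system $A\vec x = 0$ over $\mathbb{Z}/2\mathbb{Z}$ with at most two $1$s per column, the natural encoding is a (multi)graph $G$ whose vertices are the equations (rows of $A$) and whose edges are the variables (columns): a column with two $1$s, in rows $i$ and $j$, becomes an edge $\{i,j\}$; a column with a single $1$ in row $i$ becomes a pendant edge at $i$ (add a fresh degree-$1$ vertex); an all-zero column is a variable that does not occur, which can be discarded since it only doubles the solution count per such column, a factor we record separately. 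Setting $\alpha(e)=1$ precisely when the corresponding variable is set to $1$, the constraint that equation $i$ is satisfied is exactly that an even number of edges incident to the equation-vertex $i$ are selected — i.e.\ the signature $\mathsf{even}$ at every equation-vertex — while the fresh degree-$1$ vertices should impose no constraint and get the constant signature $s\equiv 1$. Counting Hamming-weight-$k$ solutions is then exactly $\holant(\Omega,k)$ for $\mathcal{S}=\{\mathsf{even}, s_1\}$, up to the recorded multiplicative factor $2^{(\#\text{zero columns})}$.

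Next I would check the type of $\mathcal{S}=\{\mathsf{even},s_1\}$. The constant signature contributes nothing obstructive, and the excerpt already computes $\chi(4,\mathsf{even})=-2\neq 0$ in example~(III), so $\mathcal{S}$ is of type $\mathbb{T}[\infty]$. Hence $\text{\sc{p-UnColHolant}}(\mathcal{S})$ is $\#\W[1]$-hard and admits no $f(k)\cdot n^{o(k/\log k)}$ algorithm unless ETH fails, by Theorem~\ref{thm:main_uncol}(2). The reduction sketched above is a polynomial-time (indeed near-linear) many-one reduction that preserves the parameter $k$ exactly and blows up $n$ only polynomially (the number of vertices of $G$ is at most the number of equations plus the number of weight-$\leq 1$ columns), so the conditional lower bound transfers verbatim to $\#\textsc{Weighted-XOR-SAT}$ with the promised sparsity. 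The hardness claim in the corollary follows.

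There is one genuine subtlety to address, which I expect to be the main obstacle: the hardness direction of Theorem~\ref{thm:main_uncol} produces hard instances of $\text{\sc{p-UnColHolant}}(\{\mathsf{even},s_1\})$ that are arbitrary signature grids over this signature set, and one must confirm that every such grid itself corresponds to a system $A\vec x=0$ with at most two $1$s per column. This is clear in the reverse direction: a signature grid with signatures only in $\{\mathsf{even}, s_1\}$ has a natural reading as a linear system — each vertex labelled $\mathsf{even}$ is an equation ``$\sum_{e\in E(v)} x_e \equiv 0$'', each vertex labelled $s_1$ imposes no equation, and each edge is a variable occurring in at most $2$ equations since it has $2$ endpoints — so the promise ``at most two $1$s per column'' is automatically met. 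Thus the reduction runs in \emph{both} directions and the problem is equivalent to $\text{\sc{p-UnColHolant}}(\{\mathsf{even},s_1\})$, not merely reducible from it; care is only needed with the bookkeeping for degenerate columns (the all-zero columns and, if desired, repeated columns, which correspond to parallel edges and are harmless since the framework allows multigraphs). One should also double-check that casting the result in the \emph{uncoloured} setting is legitimate here — it is, since the statement of the corollary is about counting solutions of a \emph{fixed} Hamming weight $k$ in a plain system, with no colouring in sight; the coloured trichotomy of Theorem~\ref{main_thm}(III) could be used as an alternative route after adding an auxiliary edge-colouring gadget, but the uncoloured dichotomy gives the cleaner argument.
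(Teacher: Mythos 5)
Your argument is correct and, in its decisive step, the same as the paper's: the paper's entire proof is the observation you make in your final paragraph, namely that every signature grid over $\{\mathsf{even}\}$ (adding the constant-$1$ signature is an inessential variation) is itself a system $A\vec{x}=0$ with each variable occurring in at most two equations, so the $\#\W[1]$- and ETH-hardness of $\textsc{p-UnColHolant}(\{\mathsf{even}\})$ from Main Theorem~\ref{thm:main_uncol} transfers directly. The forward encoding in your first paragraph (system $\to$ holant) is not needed for the lower bound and is the only place with small slips (discarding an all-zero column is not merely a factor of $2$ once the Hamming weight $k$ is fixed, and the paper's signature grids are simple graphs rather than multigraphs), but these do not affect the hardness claim.
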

\begin{proof}
    Follows immediately by observing that $\textsc{p-UnColHolant}(\{\mathsf{even}\})$ is a special case of this problem: Each edge $e$ of the signature grid becomes a variable $x_e$, and each vertex $v$ of the signature grid yields the equation $\sum_{e\in N(v)}x_e = 0 \mod 2$.  We have already seen that $\mathsf{even}$ is of type $\mathbb{T}[\infty]$; thus the claim holds by Main Theorem~\ref{thm:main_uncol}.
\end{proof}
We point out that the absence of an FPT algorithm for the problem in the previous result is not surprising, given hardness results of related parameterised decision problems~\cite{DowneyFVW99}, as well as the breakthrough (hardness) result on the $\textsc{EvenSet}$ problem due to Bhattacharyya et al.\ \cite{BhattacharyyaGS18}. However, the latter results do not come with tight lower bounds under ETH, and~\cite{BhattacharyyaGS18} even requires as lower bound assumption a stronger, approximate version of ETH, called $\mathsf{GAP}$-ETH. Moreover, neither result applies to the restriction to input matrices with at most two $1s$ per column.

We further remark that Corollary~\ref{cor:coding_hard} does not contradict the tractability results of Creignou and Vollmer~\cite{CreignouV15}, and of Marx~\cite{Marx05} on a seemingly similar weighted satisfiability problem, since both~\cite{CreignouV15} and~\cite{Marx05} enforce a constant upper bound on the number of variables in each equation.

Finally, note that, Corollary~\ref{cor:coding_hard} also shows that the tractability criteria for parameterised Holants significantly differ from classical holant problems, as $\mathsf{even}$ is an affine signature, and affine signatures yield polynomial time tractability if given an instance of a classical Holant~\cite{ParityHolant13}.

\subsection{Technical Contributions}
It turns out that we do not need to rely on any of the well-established tools for analysing holant problems, such as matchgates~\cite{CaiC07}, combined signatures~\cite{CurticapeanX15}, or holographic reductions~\cite{Valiant08,CaiL11}, despite the fact that some of those tools have already been adapted to the realm of parameterised counting by Curticapean~\cite{Curticapean15}.
Instead, similarly to most works on exact parameterised counting throughout the last seven years~\cite{Roth17,DellRW19,RothSW20,BressanR21,BLR2023stoc,DoringMW24,CurticapeanN24,DoringMW25}, we crucially rely on the framework of complexity monotonicity of graph motif parameters as introduced by Curticapean, Dell, and Marx~\cite{CurticapeanDM17}. In a nutshell, we show that both the coloured and the uncoloured holant problems can be cast as a finite linear combination of homomorphism counts. Let us make this explicit for the uncoloured version --- in fact, the coloured version is easier to analyse, but requires a more extensive set-up on edge-coloured graphs, which we omit from the introduction for the sake of conceptual clarity.

Fix a finite set $\mathcal{S}=\{s_1,\dots,s_\ell\}$ of signatures. We write $\mathcal{G}(\mathcal{S})$ for the set of all signature grids over $\mathcal{S}$. Given two signature grids $\Omega_H=(H,\{s^H_v\}_{v\in V(H)})$ and $\Omega_G=(G,\{s^G_v\}_{v\in V(G)})$ in $\mathcal{G}(\mathcal{S})$, a \emph{homomorphism} from $\Omega_H$ to $\Omega_G$ is a graph homomorphism $\varphi$ from $H$ to $G$ such that $s^G_{\varphi(v)}=s^H_v$ for all $v\in V(H)$, that is $\varphi$ must preserve signatures. We write $\#\homs{\Omega_H}{\Omega_G}$ for the number of homomorpisms from $\Omega_H$ to $\Omega_G$.

Using M\"obius inversion, it is not hard to show that, for each $k$, there is a finitely supported function $\zeta_{\mathcal{S},k}$ from $\mathcal{G}(\mathcal{S})$ to algebraic complex numbers such that, for each signature grid $\Omega$ over $\mathcal{S}$, we have
\begin{equation}\label{eq:to_hombasis_intro}
    \mathsf{UnColHolant}(\Omega,k) = \prod_{i\in \ell}s_i(0)^{n_i} \cdot \sum_{\Omega_H\in \mathcal{G}(\mathcal{S})} \zeta_{\mathcal{S},k}(\Omega_H) \cdot \#\homs{\Omega_H}{\Omega}\,.
\end{equation}
As mentioned before, a similar transformation exists for the colourful holant problem.
Now, an adaptation of the principle of ``complexity monotonicity''~\cite{CurticapeanDM17} will imply that evaluating the linear combination~\eqref{eq:to_hombasis_intro} is \emph{precisely as hard as} evaluating its hardest term $\#\homs{\Omega_H}{\Omega}$ with a non-zero coefficient $\zeta_{\mathcal{S},k}(\Omega_H)\neq 0$. Fortunately, the complexity of evaluating the individual terms $\#\homs{\Omega_H}{\Omega}$ is very well understood: under standard assumptions from fine-grained and parameterised complexity theory, the evaluation is hard if the treewidth (see Section~\ref{sec:prelims}) of $\Omega_H$ is large, and the evaluation is easy if the treewidth of $\Omega_H$ is small. For this reason, the goal of understanding the complexity of parameterised holant problems can be reduced to the purely combinatorial problem of understanding the coefficient function $\zeta_{\mathcal{S},k}$, and its analogue in the coloured setting.

In previous results, this approach has mostly been used for establishing \emph{lower bounds} on pattern counting problems on graphs~\cite{FockeR22,DoringMW24,Curticapean24,CurticapeanN24}, in which case it suffices to find a high-treewidth term that survives with a non-zero coefficient, and it has turned out that even finding \emph{one} such a term can constitute solving highly challenging combinatorial problems~\cite{RothSW20,PeyerimhoffRSSVW23,DoringMW24,DoringMW25}. Using the framework for upper bounds is usually even more difficult since it makes it necessary to find an upper bound on the treewidth of \emph{all} graphs surviving with a non-zero coefficient (see e.g.\ \cite{PeyerimhoffRSV21,BeraGLSS22,BressanR21,BLR23}). In the current work, we solve this task as precisely as possible; intuitively, our main combinatorial result reads as follows:

\begin{center}
    \textit{The maximum treewidth of the graphs surviving in the ``homomorphism basis'' of \emph{any} instance of a parameterised Holant problem (coloured or uncoloured) is either $1$, $2$, or unbounded.}
\end{center}
Concretely, in the coloured setting, we show that 
\begin{enumerate}
    \item[(1)] for $\mathcal{S}$ of type $\mathbb{T}[\mathsf{Lin}]$ all terms of treewidth $\geq 2$ vanish,
    \item[(2)] for $\mathcal{S}$ of type $\mathbb{T}[\omega]$ all terms of treewidth $\geq 3$ vanish, but at least one term of treewidth $2$ survives, and
    \item[(3)] for $\mathcal{S}$ of type $\mathbb{T}[\infty]$, terms of arbitrary high treewidth survive.
\end{enumerate}
In the uncoloured setting (specifically in Equation~\eqref{eq:to_hombasis_intro}), we show that 
\begin{enumerate}
    \item[(1)] for $\mathcal{S}$ of type $\mathbb{T}[\mathsf{Lin}]$ all terms of treewidth $\geq 2$ vanish,
    \item[(2)] for $\mathcal{S}$ of type $\mathbb{T}[\omega]$ or $\mathbb{T}[\infty]$, terms of arbitrary high treewidth survive.
\end{enumerate}

We consider the combinatorial analysis of the coefficients $\zeta$ to be the central technical contribution of this work. Moreover, we emphasize that, by understanding the coefficients in this detail, our work is, to the best of our knowledge, the only application of the framework of complexity monotonicity achieving a classification of a general family of counting problems into FPT and $\#\mathrm{W}[1]$-complete cases in which the exponents of all FPT cases are almost precisely determined under assumptions from fine-grained complexity theory.

\subsection{Conclusion and Future Directions}
In this work, we focused on \emph{symmetric} signatures for parameterised Holant problems. We provided complete classifications for both the coloured and the uncoloured variant of the problem, not only identifying precisely those instances that allow for FPT algorithms, but also proving almost tight bounds for the best possible run-time exponents under assumptions from fine-grained complexity theory.

We identify the following questions as starting points for future research on parameterised Holants:
\begin{itemize}
    \item Asymmetric signatures. In classical Holant theory, after the case of symmetric (boolean) signatures had been solved~\cite{CaiGW16}, a significant amount of effort has been put in understanding asymmetric signatures, which involve more intricate cases than the classifications for symmetric signatures \cite{CaiFX18,CaiF23,meng2025}. We expect that the tools we set up for the study of parameterised Holants can be generalised to apply to the asymmetric setting via encoding the orderings of activated incident edges to a vertex by imposing  constraints on the set of tuples of edge-colours.
    \item Polynomial Time vs.\ $\#\mathrm{P}$-hardness. The FPT cases of our classifications are all ``real'' FPT cases in the sense that the running time of our algorithms yields superpolynomial overhead in the parameter $k$. For a future direction, we propose to study for which cases the superpolynomial overhead is necessary, under standard assumptions from (counting) complexity theory such as $\mathrm{FP}\neq \#\mathrm{P}$.
\end{itemize}

\subsection{Organisation of the Paper}

We start with introducing the required preliminary material in Section~\ref{sec:prelims}. In particular, we revisit the framework of \emph{fractured graphs} as introduced in~\cite{PeyerimhoffRSSVW23}, which will be a central ingredient for the classification of $\holantprob(\mathcal{S})$.
Afterwards, in Section~\ref{sec:col_holants_equivalences}, we consider an intermediate problem $\holantprobstar(\mathcal{S})$ and show it to be interreducible with $\holantprob(\mathcal{S})$ under parameterised linear-time reductions. This allows us then to prove Main Theorem~\ref{main_thm} by classifying first the intermediate problem $\holantprobstar(\mathcal{S})$, which, for technical reasons, is much more convenient to work with than $\holantprob(\mathcal{S})$. In Section~\ref{sec:modular}, we present the extension of Main Theorem~\ref{main_thm} to counting modulo a fixed prime $p$.
Next, in Section~\ref{sec:sig0}, we show how our classification for $\holantprob(\mathcal{S})$ in Main Theorem~\ref{main_thm} can be extended to the case in which we allow signatures $s$ with $s(0)=0$.
Finally, in Section~\ref{sec:uncoloured}, we prove our classification for the uncoloured holant problem (Main Theorem~\ref{thm:main_uncol}), and we encapsulate the analysis of the coefficient function $\zeta$ for the uncoloured problem in Section~\ref{sec:analysis_of_zeta}. For reasons of accessibility, we present our analysis of $\zeta$ in Section~\ref{sec:analysis_of_zeta} in multiple steps, starting with the special cases that encapsulate the central ideas, and then iteratively and carefully generalising the arguments and results to obtain a thorough understanding of $\zeta$ in the full and unrestricted case.
We also include an appendix with some proofs that are omitted in the main paper, and a construction of infinitely many signature families of each of the three types in our classifications.

\newpage
\tableofcontents

\pagebreak

\section{Preliminaries}\label{sec:prelims}

For a finite set $S$, we denote the cardinality of~$S$ by~$|S|$ or $\#S$. Graphs in this work are undirected, simple and without loops, unless stated otherwise. 
Given a vertex $v$ of a graph $G$, we write $E(v)$ for the set of all edges of $G$ that are incident to $v$. Let $\mathcal{N}_{G}(v) = \{u \in V(G) \mid \{u, v\} \in E(G)\}$ denote the neighbourhood of a vertex $v\in G$ and let $d_G(v)=|\mathcal{N}_G(v)|$ denote the degree of $v$. 
For a subset $J\subseteq V(G)$, we denote by $G[J]$ the induced subgraph graph of $G$ with vertices $J$ and edges $\{u,v\}\in E(G)$ with $u,v\in J$. Throughout this paper, given a graph $G$, a set $S$, and a mapping $m: V(G) \to S$, we will always freely allow ourselves to extend $m$ to the edges of $G$ by just setting $m(\{u,v\}):=\{m(u),m(v)\}$. For a function $f:A\to B$ and a subset $C\subseteq A$, we write $f|_A: C \to B$ for the restriction of $f$ to $C$.

A \emph{tree-decomposition} of a graph $G$ is a pair of a tree $T$ and a mapping $\beta: V(T) \to 2^{V(G)}$ such that $\bigcup_{t\in V(T)} \beta(t)=V(G)$, for each edge $e$ of $G$ there is $t\in V(T)$ such that $e \subseteq \beta(t)$, and for each vertex $v$ of $G$ the graph $T[\{t\in V(T)\mid v\in \beta(t)\}]$ is connected.
The width of $(T,\beta)$ is $\max\{|\beta(t)|-1 \mid t\in V(t)\}$, and the \emph{treewidth} of $G$, denoted by $\mathsf{tw}(G)$ is the minimum possible width of a tree-decomposition of $G$. 
It will sometimes be convenient to consider tree-decompositions $(T,\beta)$ where $T$ is a \emph{rooted} tree with some fixed root $r$. Then,
for $t \in V(T)\setminus\{r\}$, let $\sigma(t)$ denote \textit{the separator at $t$}, that is $\sigma(t) = \beta(t)\,\cap\,\beta(t')$, where $t'$ is the parent of $t$ in $T$. For the root $r$, we set $\sigma(r) = \emptyset$. Let also $\gamma(t)\subseteq V(H)$ denote \textit{the cone at $t$}, that is, $\gamma(t)$ is the union of the $\beta(\hat{t})$ for all descendants $\hat{t}$ of $t$.

\subsection{Homomorphisms, Embeddings, and Automorphisms}
A \emph{homomorphism} from a graph $H$ to a graph $G$ is a mapping $\varphi: V(H) \to V(G)$ such that $\varphi(e)\in E(G)$ for each edge $e\in E(H)$. We write $\homs{H}{G}$ for the set of all homomorphisms from $H$ to $G$. An \emph{embedding} from $H$ to $G$ is a homomorphism from $H$ to $G$ that is injective (on the vertices of $H$), and we write $\embs{H}{G}$ for the set containing all embeddings from $H$ to $G$. Finally, an embedding from $H$ to itself is called an automorphism of $H$, and we write $\auts(H)$ for the set of all automorphisms of $H$. Note that, for $\pi \in \auts(H)$, we have $\{u,v\}\in E(H)\Leftrightarrow \{\pi(u),\pi(v)\}\in E(H)$ for each pair of vertices $u,v \in V(H)$.

\subsection{Partitions, Quotient Graphs, and M\"obius Functions}\label{sec:parts_quots_mobius}
A partition $\rho$ of a finite set $S$ is a set of pairwise disjoint and non-empty blocks $\rho=\{B_1,\dots,B_t\}$ such that $\dot\cup_{i=1}^t B_i = S$; we emphasise that the order of the blocks does not matter. We write $|\rho|$ for the number of blocks of $\rho$. Given two partitions $\rho$ and $\sigma$ of $S$, we say that $\sigma$ \emph{refines} $\rho$, denoted by $\sigma \leq \rho$, if for each block $B^\rho\in \rho$ there are blocks $B_1^\sigma,\dots,B_\ell^\sigma$ of $\sigma$ such that $\dot\cup_{i=1}^\ell B^\sigma_i = B^\rho$. Intuitively, this means that $\sigma$ can be decomposed into ``subpartitions'' $\sigma_B$ of $B$ for each block $B$ in $\rho$. We write $\bot_S$ for the \emph{finest partition}, that is, for the partition that contains blocks $\{s\}$ for each $s\in S$, and we write $\top_S$ for the \emph{coarsest partition}, that is, for the partition containing only one block $B=S$. We might drop the subscript and just write $\bot$ and $\top$ if~$S$ is clear from the context.
Given two partitions $\sigma \leq \rho$, the \emph{M\"obius function} of $\sigma$ and $\rho$ is defined as follows.\footnote{In fact, the M\"obius function is \emph{defined} using the incidence algebra of a partially ordered set (see e.g.\ \cite[Chapter 3.6]{Stanley11}), and Definition~\ref{def:mobius} is a lemma on the M\"obius function over the poset of partition refinement. However, since we will not rely on any further properties of the incidence algebra of that poset, we allow ourselves to introduce the M\"obius function via the explicit formula in Definition~\ref{def:mobius} for reasons of self-containment.}

\begin{definition}[The M\"obius function of partitions (cf.\ \cite{Stanley11})]\label{def:mobius}
    Let $S$ be a finite set and let $\rho=\{B_1,\dots,B_t\}$ be a partition of $S$. Let furthermore $\sigma$ be a partition of $S$ with $\sigma \leq \rho$. For each $i\in [t]$, let $\sigma^i$ be the subpartition of $\sigma$ refining $\{B_i\}$, that is, $\sigma = \dot{\cup}_{i=1}^t \sigma^i$ and $\sigma^i$ is a partition of $B_i$ for all $i\in [t]$. Then the \emph{M\"obius function} of $\sigma$ and $\rho$ is defined as follows:
    \[ \mu_S(\sigma,\rho):=  \prod_{i=1}^t (-1)^{|\sigma^i|-1}(|\sigma^i|-1)! \,,\]
    where $0!=1$ as usual. We will drop the subscript $S$ of $\mu$ if it is clear from the context.
\end{definition}

\paragraph*{Quotient Graphs}
Given a graph $H$, we write $\ppart(H)$ for the set of all partitions of $V(H)$. Given a partition $\rho$ of $V(H)$, the \emph{quotient graph} $H/\rho$ has as vertices the blocks of $\rho$ and two blocks $B_1$ and $B_2$ are made adjacent if (and only if) there are vertices $u\in B_1$ and $v\in B_2$ such that $\{u,v\} \in E(H)$. Note that $H/\rho$ \emph{might contain self-loops}. We write $h_\rho: V(H) \to V(H/\rho)$ for the mapping that assigns each vertex $v\in V(H)$ the block $B$ containing $v$, and we observe that $h_\rho$ is a homomorphism. We write $\mu_H$ for the M\"obius function $\mu_{V(H)}$ of partitions of $V(H)$, and we might again drop the subscript if it is clear from the context. 

\subsection{$H$-coloured graphs and fractured graphs}\label{sec:col_graphs_fractures}

Given a graph $H$, an $H$\emph{-coloured graph} is a pair $(G,h)$ of a graph $G$ and a homomorphism, called $H$-\emph{colouring}, $h\in \homs{G}{H}$.
Given a graph $H$ and an $H$-coloured graph $(G,h)$, a homomorphism $\varphi \in \homs{H}{G}$ is called \emph{colour-prescribed}, w.r.t.\ $h$, if $h(\varphi(v))=v$ for all $v\in V(H)$. We write $\cphoms(H \to (G,h))$ for the set of all colour-prescribed homomorphisms (w.r.t.\ $h$) from $H$ to $G$.

\begin{definition}[Fracture]\label{def:fracture}
    Let $H$ be a graph without isolated vertices. A \emph{fracture} of $H$ is a tuple $\vec\rho=(\rho_v)_{v\in V(H)}$ where $\rho_v$ is a partition of $E(v)$ for each $v\in V(H)$.
\end{definition}

Let $H$ be a graph without isolated vertices, we denote by $\mathcal{F}(H)$ the set of all fractures of $H$. Given $\vec{\rho}\in\mathcal{F}(H)$ and a vertex $v$ of $H$, we will write $\vec{\rho}(v)$ for the partition in $\vec{\rho}$ corresponding to vertex $v$. Given two fractures $\vec{\sigma}$ and $\vec{\rho}$ of a graph $H$, we slightly overload notation and denote point-wise partition refinement by $\leq$, that is, we write $\vec{\sigma}\leq \vec{\rho}$ if $\vec{\sigma}(v) \leq \vec{\rho}(v)$ for all vertices $v$ of $H$. The M\"oebius function extends from partitions to fractures as follows:\footnote{Again, the M\"obius function of fractures (of a graph $H$) is normally defined via the incidence algebra of the poset of fractures of $H$ and pointwise partition refinement. As observed in~\cite{PeyerimhoffRSSVW23}, the resulting poset is isomorphic to the product of $|V(H)|$ partition lattices and the M\"obius function factorises along the individual partitions.}

\begin{definition}[The M\"obius function of fractures]
    Let $\vec{\sigma} \leq \vec{\rho}$ be fractures of a graph $H$. The \emph{M\"obius} function of $\vec{\sigma}$ and $\vec{\rho}$ is defined as follows: 
    \[\vec{\mu}_H := \prod_{v\in V(H)} \mu_{E(v)}(\vec{\sigma}(v),\vec{\rho}(v))\,.\]
    We will drop the subscript $H$ if the graph is clear from the context.
\end{definition}

For notational convenience, we will always use $\vec{\mu}$ to denote the M\"obius function of fractures, and $\mu$ to denote the M\"obius function of partitions.

Informally, a fracture $\vec{\rho}$ of a graph $H$ is an instruction on how to split, or ``fracture'' the vertices of~$H$. Concretely, for each $v\in V(H)$, we replace $v$ by vertices $v^B$ for each $B \in \vec{\rho}(v)$. Then we add an edge between two vertices $u^B$ and $v^{B'}$ if and only if $\{u,v\}\in B \cap B'$. Formally, we state the definition as in~\cite{PeyerimhoffRSSVW23}; consult Figure~\ref{fig:fracture} for an illustration.

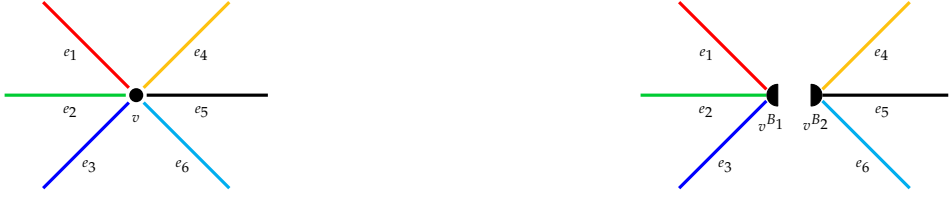
\begin{figure}[t!]
    \centering
    \begin{tikzpicture}[scale=1.75]

        \node[vertex,inner sep=.4ex,label={[label distance=.03]below:\(v\)}] (m) at (0, 0) {};

        \draw[very thick,red] (m) -- ++(135:1);
        \node[label={[label distance=.02]below:\(e_1\)}] at (135:0.7) {};
        \draw[very thick,green!80!blue] (m) -- ++(180:1);
        \node[label={below:\(e_2\)}] at (180:0.5) {};
        \draw[very thick,blue] (m) -- ++(-135:1);
        \node[label={[label distance=.02]below:\(e_3\)}] at (-135:0.5) {};

        \draw[very thick,yellow!50!orange] (m) -- ++(45:1);
        \node[label={[label distance=.03]below:\(e_4\)}] at (45:0.7) {};
        \draw[very thick] (m) -- ++(0:1);
        \node[label={below:\(e_5\)}] at (0:0.5) {};
        \draw[very thick,cyan] (m) -- ++(-45:1);
          \node[label={[label distance=.03]below:\(e_6\)}] at (-45:0.5) {};

        \begin{scope}[shift={(5,0)}]
            \begin{scope}[scale=2.4]
                \kowaen{0,0}{-90/90/white,90/270/white}{1};
            \end{scope}

            \node[label={[label distance=.03]below:\(v^{B_1}\)}]  at (1-2) {};

            \draw[very thick,red] (1-2) -- ++(135:1);
            
            \draw[very thick,green!80!blue] (1-2) -- ++(180:1);
            
            \draw[very thick,blue] (1-2) -- ++(-135:1);
            \begin{scope}[shift=(1-2)]
            \node[label={below:\(e_2\)}] at (180:0.5) {};
            \node[label={[label distance=.02]below:\(e_1\)}] at (135:0.7) {};
            \node[label={[label distance=.02]below:\(e_3\)}] at (-135:0.5) {};
            \end{scope}

            \node[label={[label distance=.03]below:\(v^{B_2}\)}]  at
                (1-1) {};

            \draw[very thick,yellow!50!orange] (1-1) -- ++(45:1);
            \draw[very thick] (1-1) -- ++(0:1);
            \draw[very thick,cyan] (1-1) -- ++(-45:1);
            \begin{scope}[shift=(1-1)]
            \node[label={[label distance=.03]below:\(e_6\)}] at (-45:0.5) {};
             \node[label={below:\(e_5\)}] at (0:0.5) {};
             \node[label={[label distance=.03]below:\(e_4\)}] at (45:0.7) {};
            \end{scope}

            \begin{scope}[scale=1.8]
                \kowaen{0,0}{-90/90/black,90/270/black}{1};
            \end{scope}

        \end{scope}
    \end{tikzpicture}
    \caption{\label{fig:fracture} Illustration of the construction of a
        fractured graph taken from~\cite{PeyerimhoffRSSVW23}. The left picture shows a vertex $v$ of a graph~$H$ with incident
        edges $E_H(v)=\{e_1,\dots,e_6\}$.
The right
        picture shows the splitting of $v$ in the construction of the fractured
        graph~$\fracture{H}{\vec{\rho}}$ for a fracture $\vec{\rho}$ satisfying that the partition
        $\vec{\rho}(v)$ contains two blocks $B_1 =\{e_1,e_2,e_3\}$ and $B_2=\{e_4,e_5,e_6\}$.
    }
\end{figure}

\begin{definition}[Fractured graphs $\fracture{H}{\vec{\rho}}$]\label{def:fract_graph}
 Let $H$ be a graph without isolated vertices, and let $\vec{\rho}\in\mathcal{F}(H)$. Write $M_H$ for the matching containing one copy of each edge of $H$, that is, $V(M_H)=\bigcup_{e \in E(H)}\{u_e,v_e\}$, and $E(M_H)= \{\{u_e,v_e\} \mid e \in E(H)\}$.

 Let $\tau$ be the partition on $V(M_H)$ that places two vertices $u_e,v_f$ into the same block if and only if $u=v$ and there exists $B\in \vec{\rho}(u)$ with $e,f \in B$. Then the \emph{fractured graph} $\fracture{H}{\vec{\rho}}$ is defined to be the quotient $M_H/\tau$. We write $v^B$ for the vertex of $\fracture{H}{\vec{\rho}}$ corresponding, in this construction, to vertex $v\in V(H)$ and block $B\in \vec{\rho}(v)$. 
\end{definition}

Given a graph $H$ and $\vec{\rho}\in\mathcal{F}(H)$, the fractured graph $\fracture{H}{\vec{\rho}}$ admits an $H$-colouring $h_{\vec{\rho}}$ which maps $v_B$ to $v$ for each $v\in V(H)$ and $B \in \vec{\rho}(v)$. We refer to $h_{\vec{\rho}}$ as the \emph{canonical $H$-colouring} of $\fracture{H}{\vec{\rho}}$. We next introduce a version of colour-preserving homomorphisms from fractured graphs.

\begin{definition}[$\homscp,\embscp$]
    Let $H$ be a graph and let $\vec{\rho}\in\mathcal{F}(H)$. Furthermore, let $(G,h)$ be an $H$-coloured graph. We set
    \[ \homscp(\fracture{H}{\vec{\rho}} \to (G,h))= \{\varphi \in \homs{\fracture{H}{\vec{\rho}}}{G} \mid \forall x \in V(\fracture{H}{\vec{\rho}}): h(\varphi(x)) = h_{\vec{\rho}}(x) \} \,, \]
    where $h_{\vec{\rho}}$ is the canonical $H$-colouring of $\fracture{H}{\vec{\rho}}$. The set $\embscp(\fracture{H}{\vec{\rho}} \to (G,h))$ is defined similarly for embeddings.
\end{definition}
In other words, $\homscp(\fracture{H}{\vec{\rho}} \to (G,h))$ and $\embscp(\fracture{H}{\vec{\rho}} \to (G,h))$ contain, respectively, the homomorphisms and embeddings from $\fracture{H}{\vec{\rho}} $ to $G$ that maps vertices $v_B$ of $\fracture{H}{\vec{\rho}}$ (i.e., $v\in V(H)$ and $B$ is a block of $\vec{\rho}(v)$) to vertices in $G$ coloured by $h$ with $v$.

\subsection{Parameterised and Fine-Grained Complexity Theory}
We provide a concise introduction to parameterised counting complexity in what follows, and we refer the reader to the standard textbook~\cite{CyganFKLMPPS15} for a comprehensive treatment of parameterised algorithms and to~\cite[Chapter 14]{FlumG06} and~\cite{Curticapean15} for a detailed overview over parameterised counting problems.

A \emph{parameterised counting problem} is a pair of a function $P:\{0,1\}^\ast \to \mathbb{Q}$ and a computable parameterisation\footnote{We note that some authors require the parameterisation to be computable in polynomial time, which is the case for all parameterisations encountered in this work.} $\kappa: \{0,1\}^\ast \to \mathbb{N}$. For example, $\#\textsc{Clique}$ denotes the parameterised counting problem of computing the number of $k$-cliques in a graph $G$, and it is parameterised by $k$, that is, $\kappa(G,k)=k$.
A parameterised counting problem $(P,\kappa)$ is called \emph{fixed-parameter tractable} (FPT) if there is a computable function $f$ and an algorithm $\mathbb{A}$ such that, on input $x$, computes $P(x)$ in time $f(\kappa(x))\cdot |x|^{O(1)}$. We call $\mathbb{A}$ an \emph{FPT} algorithm w.r.t.\ parameterisation $\kappa$.

For the purpose of this work, we will be interested in the exact constant in the exponent of $|x|$ in the running time of an FPT algorithm; in the definition below we use $\tilde{O}(n)$ to hide poly-logarithmic factors, i.e., $t\in \tilde{O}(n)$ if there is a constant $d$ such that $t\in O(n\log^dn)$.

\begin{definition}[FPT-near-linear time and FPT-matrix-multiplication time]
    An algorithm $\mathbb{A}$ is called an \emph{FPT-near-linear time algorithm} w.r.t.\ parameterisation $\kappa$ if there is a computable function $f$ such that $\mathbb{A}$ runs in time $f(\kappa(x))\cdot \tilde{O}(|x|)$. Moreover, we call $\mathbb{A}$ an \emph{FPT-matrix-multiplication time algorithm} if there is a computable function $f$ such that $\mathbb{A}$ runs in time $f(\kappa(x))\cdot O(|x|^{\omega})$, where $\omega$ is the matrix multiplication exponent.\footnote{At the time of writing of this work, the best known bound for $\omega$ is $2\leq \omega\leq 2.371552$~\cite{WilliamsXXZ24}.}

    We say that a parameterised counting problem is solvable in FPT-near-linear time (resp.\ FPT-matrix-multiplication time) if it can be solved by an FPT-near-linear time (resp.\ FPT-matrix-multiplication time) algorithm.
\end{definition}

We proceed by introducing two notions for reductions between parameterised counting problems. We note that FPT Turing-reductions are standard~\cite[Chapter 14]{FlumG06}, but \emph{linear} FPT Turing-reductions are less so, since we have to be careful with the number of oracle queries.

\begin{definition}[Parameterised Reductions]
    Let $(P,\kappa)$ and $(P',\kappa')$ be parameterised counting problems. An \emph{FPT Turing-reduction} from $(P,\kappa)$ to $(P',\kappa')$ is an algorithm $\mathbb{A}$ with the following properties: There exists a computable function $f$ such that
    \begin{enumerate}
        \item on input $x$, $\mathbb{A}$ computes $P(x)$ in time $f(\kappa(x))\cdot |x|^{O(1)}$, and
        \item $\mathbb{A}$ has oracle access to $P'$, and, on input $x$, each oracle query $y$ posed by $\mathbb{A}$ satisfies $\kappa'(y)\leq f(\kappa(x))$.
    \end{enumerate}
    We write $(P,\kappa)\fptred (P',\kappa')$ if an FPT Turing-reduction exists. Moreover, we write $(P,\kappa)\fptinterred (P',\kappa')$ if $(P,\kappa)\fptred (P',\kappa')$ and $(P',\kappa')\fptred (P,\kappa)$.

    A \emph{linear FPT Turing-reduction} from $(P,\kappa)$ to $(P',\kappa')$ is an algorithm $\mathbb{A}$ with the following properties: There exists a computable function $f$ such that
    \begin{enumerate}
        \item on input $x$, $\mathbb{A}$ computes $P(x)$ in time $f(\kappa(x))\cdot O(|x|)$.
        \item $\mathbb{A}$ has oracle access to $P'$, and, on input $x$, each oracle query $y$ posed by $\mathbb{A}$ satisfies $\kappa'(y)\leq f(\kappa(x))$, and the number of oracle queries must be bounded by $f(\kappa(x))$.
    \end{enumerate}
    We write $(P,\kappa)\fptlinred (P',\kappa')$ if a linear FPT Turing-reduction exists. Moreover, we write $(P,\kappa)\fptinterlinred (P',\kappa')$ if $(P,\kappa)\fptlinred (P',\kappa')$ and $(P',\kappa')\fptlinred (P,\kappa)$.
\end{definition}

It is well-known that $(P,\kappa)$ is FPT if $(P,\kappa)\fptred (P',\kappa')$ and $(P',\kappa')$ is FPT. For the purpose of this work, we establish a more fine-grained version of this fact via linear FPT Turing-reductions.

\begin{lemma}
    Let $d\geq 0$ and $c\geq 1$ be reals, and let $(P,\kappa)$ and $(P',\kappa')$ be parameterised counting problems such that $(P,\kappa)\fptlinred (P',\kappa')$. Assume that $(P',\kappa')$ can be solved in time $f(\kappa'(x))\cdot O(\log^d(|x|)\cdot |x|^c)$ for some computable function $f$. Then there is a computable function $g$ such that $(P,\kappa)$ can be solved in time $g(\kappa(x))\cdot O(\log^d(|x|)\cdot |x|^c)$. 
\end{lemma}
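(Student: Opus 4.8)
The statement is essentially the observation that a linear FPT Turing-reduction composes with a running-time bound of the form $f(\kappa'(x))\cdot O(\log^d|x|\cdot|x|^c)$ without blowing up the exponent $c$ or the logarithmic factor $d$. First I would fix the linear FPT Turing-reduction $\mathbb{A}$ from $(P,\kappa)$ to $(P',\kappa')$, together with its associated computable function $f_0$ bounding (i) the running time of $\mathbb{A}$ by $f_0(\kappa(x))\cdot O(|x|)$, (ii) the parameter $\kappa'(y)\leq f_0(\kappa(x))$ of every oracle query $y$, and (iii) the number of oracle queries by $f_0(\kappa(x))$. I would also fix the assumed algorithm $\mathbb{B}$ for $(P',\kappa')$ running in time $f(\kappa'(y))\cdot O(\log^d(|y|)\cdot|y|^c)$. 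The idea is to simulate $\mathbb{A}$, answering each oracle call by running $\mathbb{B}$.

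**Key steps.** The one genuine point to check is that each oracle query $y$ produced by $\mathbb{A}$ on input $x$ has size $|y|$ bounded in terms of $|x|$ and $\kappa(x)$ only — specifically $|y|\leq f_0(\kappa(x))\cdot O(|x|)$ — because a query must be written down, and $\mathbb{A}$ runs in time $f_0(\kappa(x))\cdot O(|x|)$, so it cannot output a string longer than that. Then running $\mathbb{B}$ on $y$ costs at most
\[
f(\kappa'(y))\cdot O\!\left(\log^d(|y|)\cdot|y|^c\right)\;\leq\; f(f_0(\kappa(x)))\cdot O\!\left(\log^d\!\big(f_0(\kappa(x))\cdot|x|\big)\cdot\big(f_0(\kappa(x))\cdot|x|\big)^c\right),
\]
using monotonicity of $f$ (which we may assume WLOG, replacing $f$ by $n\mapsto\max_{m\le n}f(m)$) together with $\kappa'(y)\leq f_0(\kappa(x))$. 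Now $\log^d(f_0(\kappa(x))\cdot|x|)=O\big((\log f_0(\kappa(x)) + \log|x|)^d\big)=O\big(\log^d f_0(\kappa(x))\cdot\log^d|x|\big)$ once $|x|\ge 2$ (absorbing the cross terms, or more crudely bounding $(a+b)^d\le 2^d(a^d+b^d)$ and folding the $a^d$ part into the parameter-dependent factor), and $(f_0(\kappa(x))\cdot|x|)^c=f_0(\kappa(x))^c\cdot|x|^c$. Collecting every factor that depends only on $\kappa(x)$ — namely $f(f_0(\kappa(x)))$, $f_0(\kappa(x))^c$, the extra $\log^d f_0(\kappa(x))$ term, and the factor $f_0(\kappa(x))$ counting the number of queries plus the $f_0(\kappa(x))\cdot O(|x|)$ running time of $\mathbb{A}$ itself (which is dominated by a single oracle call since $c\geq 1$) — into a single computable function $g(\kappa(x))$, the total running time becomes $g(\kappa(x))\cdot O(\log^d|x|\cdot|x|^c)$, as claimed. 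For small $|x|$ (say $|x|\le 1$) the bound holds trivially after adjusting constants.

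**Main obstacle.** There is no deep obstacle; this is a careful bookkeeping argument. The only place where one must be slightly attentive is the treatment of the logarithmic factor: $\log^d(|y|)$ with $|y|$ as large as $f_0(\kappa(x))\cdot|x|$ is \emph{not} simply $O(\log^d|x|)$ on the nose, but the excess $\log^d f_0(\kappa(x))$ is parameter-dependent and hence harmless — so one just needs to split $\log(f_0(\kappa(x))|x|)$ into its two summands and push one into $g$. A second, even more minor point is ensuring $c\ge 1$ is used exactly where needed, namely to absorb the $f_0(\kappa(x))\cdot O(|x|)$ simulation overhead of $\mathbb{A}$ into the $|x|^c$ term; this is where the hypothesis $c\ge 1$ (rather than $c\ge 0$) is essential, while $d\ge 0$ is only needed to make $\log^d$ well defined and monotone. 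I would state the proof in two short paragraphs: one isolating the size bound $|y|\le f_0(\kappa(x))\cdot O(|x|)$ on queries, and one doing the running-time arithmetic and defining $g$.
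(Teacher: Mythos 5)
Your proposal is correct and follows essentially the same route as the paper's proof: simulate the reduction, answer each oracle call with the assumed algorithm, bound the query parameter by $\kappa'(y)\leq f_0(\kappa(x))$ and the query size by $O(f_0(\kappa(x))\cdot|x|)$, and fold all parameter-dependent factors (including the split-off $\log^d f_0(\kappa(x))$ term and the $f_0(\kappa(x))^c$ factor) into $g$. The paper leaves the final arithmetic as ``elementary calculations,'' which you have simply carried out explicitly, including the correct observation that $c\geq 1$ is what absorbs the reduction's own linear running time.
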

\begin{proof}
    Let $\mathbb{A}$ be the linear FPT Turing-reduction, that is, there is a computable function $f'$ such that $\mathbb{A}$, on input $x$, computes $P(x)$ in time $f'(\kappa(x))\cdot O(|x|)$ by querying at most $f'(\kappa(x))$ oracle queries to $P'$, and each oracle query $y$ satisfies $\kappa'(y)\leq f'(\kappa(x))$.

    We obtain an algorithm $\mathbb{A}'$ from $\mathbb{A}$ by simulating each oracle call via the given algorithm for $(P',\kappa')$ that runs, on input $y$, in time  $f(\kappa'(y))\cdot O(\log^d(|y|)\cdot |y|^c)$. 
    
    Since, on input $x$ the number of oracle calls posed by $\mathbb{A}$ is bounded by $f'(\kappa(x))$, the total running time is bounded by
    \begin{equation}\label{eq:lin_time_transitive}
    O(f'(\kappa(x))\cdot |x| + f'(\kappa(x))\cdot f(\kappa'(y_1))\cdot \log^d(|y_2|)\cdot |y_2|^c) \,,
    \end{equation}
    where $y_1$ and $y_2$ are the oracle queries maximizing $f(\kappa'(y))$ and $|y|$, respectively. Since each oracle query $y$ satisfies $\kappa'(y)\leq f'(\kappa(x))$, we have $f(\kappa'(y_1)) \leq f(f'(\kappa(x)))$.\footnote{Note that we assume wlog that $f$ is monotonically increasing.} Moreover, $|y_2|\in O(f'(\kappa(x))\cdot |x|)$. Inserting into Equation~\ref{eq:lin_time_transitive} and by elementary calculations, it is easy to conclude that there is a computable function $g$ such that the total running time is bounded by $g(\kappa(x))\cdot O(\log^d(|x|)\cdot |x|^c)$. 
\end{proof}

\subsubsection{Lower Bounds and Fine-Grained Complexity Theory}\label{sec:prelims_fgct}
For this work, we rely on the following two hardness assumptions from fine-grained complexity theory.

\begin{hypothesis}[ETH~\cite{ImpagliazzoP01,ImpagliazzoPZ01}]
    The \emph{Exponential Time Hypothesis} (ETH) asserts that $3\textsc{-SAT}$ cannot be solved in $\exp(o(n))$, where $n$ is the number of variables of the input formula. 
\end{hypothesis}

\begin{hypothesis}[Triangle Detection Conjecture~\cite{Abboud14-TriangleConj}]
    The \emph{Triangle Detection Conjecture} asserts that there is a positive real $\gamma >0$ such that, in the word RAM model of $O(\log n)$ bits, there is no (deterministic or randomised) algorithm that decides whether a graph with $m$ edges contains a triangle in (expected) time $O(m^{1+\gamma})$.
\end{hypothesis}

In addition to running time lower bounds based on the previous assumptions, we will also establish hardness results for the parameterised complexity class $\#\W[1]$, which can be thought of as the parameterised counting equivalent of $\mathrm{NP}$~\cite[Chapter 14]{FlumG06}. A parameterised counting problem $(P,\kappa)$ is $\#\W[1]$-\emph{hard} if $\#\textsc{Clique}\fptred (P,\kappa)$, and it is $\#\W[1]$-\emph{complete} if $\#\textsc{Clique}\fptinterred (P,\kappa)$.\footnote{For readers familiar with structural parameterised complexity theory, we note that, originally, containment in $\#\W[1]$ is defined via parameterised parsimonious reductions~\cite[Chapter 14]{FlumG06}. However, it has since become standard to use parameterised Turing-reductions for the definition of $\#\W[1]$-completeness instead; see \cite[Chapter 2.3.1]{Roth19} for a more comprehensive discussion.} It is well-known that $\#\W[1]$-hard problems are not FPT unless ETH fails~\cite{Chenetal05,Chenetal06,CyganFKLMPPS15}. 

For our analysis of the complexity of $\holantprob$, the following coloured homomorphism counting problem will be a key ingredient.

\begin{definition}[$\#\cphomsprob(\mathcal{H})$]
    Let $\mathcal{H}$ be a class of graphs. The problem $\#\cphomsprob(\mathcal{H})$ expects as input a graph $H\in \mathcal{H}$ and an $H$-coloured graph $(G,h)$, and outputs $\#\cphoms(H \to (G,h))$. The parameter is $|H|$
\end{definition}

We rely on the following (conditional) lower bounds on $\#\cphomsprob(\mathcal{H})$.

\begin{lemma}\label{lem:cphom_lower_bounds}
    Let $\mathcal{H}$ be a recursively enumerable\footnote{This is a standard technical condition, required to avoid discussing non-uniform FPT algorithms and parameterised reductions. All classes considered in this work will be recursively enumerable.} class of graphs.
    \begin{enumerate}
        \item If $\mathcal{H}$ contains a triangle then $\#\cphomsprob(\mathcal{H})$ cannot be solved in FPT-near-linear time, unless the Triangle Conjecture Fails.
        \item If $\mathcal{H}$ has unbounded treewidth, then $\#\cphomsprob(\mathcal{H})$ is $\#\W[1]$-hard and, assuming ETH, cannot be solved in time \[f(|H|)\cdot |V(G)|^{o(\mathsf{tw}(H)/\log(\mathsf{tw}(H)))}\]
        for any function $f$.
    \end{enumerate}
\end{lemma}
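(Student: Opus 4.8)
The statement has two parts: (1) if $\mathcal{H}$ contains a triangle, then $\#\cphomsprob(\mathcal{H})$ is not FPT-near-linear unless the Triangle Conjecture fails, and (2) if $\mathcal{H}$ has unbounded treewidth, then $\#\cphomsprob(\mathcal{H})$ is $\#\W[1]$-hard with an ETH lower bound of $f(|H|) \cdot |V(G)|^{o(\tw(H)/\log \tw(H))}$. Both should follow by reducing from known problems to $\#\cphomsprob$. Let me think about each.

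For part (1): if $\mathcal{H}$ contains a triangle $K_3$, I want to reduce triangle *detection* (even better, triangle counting) in an arbitrary graph $G'$ with $m$ edges to $\#\cphomsprob(\mathcal{H})$. Given $G'$, I'd build the tensor/color-product: take $G = G' \times K_3$ (categorical product), with the natural projection $h: G \to K_3$ as the $K_3$-colouring. Then colour-prescribed homomorphisms from $K_3$ to $(G,h)$ correspond bijectively to homomorphisms from $K_3$ to $G'$, i.e., to (ordered) triangles or degenerate closed walks of length 3 — and since $G'$ is simple and loopless, homomorphisms $K_3 \to G'$ are exactly the $6 \times (\#\text{triangles})$. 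So $\#\cphoms(K_3 \to (G,h)) = 6 \cdot \#\{\text{triangles in } G'\}$, which is nonzero iff $G'$ has a triangle. The graph $G$ has $O(m)$ edges and $O(m)$ vertices, and $h$ is computable in linear time. Since $K_3 \in \mathcal{H}$ (and $\mathcal{H}$ is recursively enumerable so we can *find* it), an FPT-near-linear algorithm for $\#\cphomsprob(\mathcal{H})$ — with $|H| = |K_3| = 3$ a constant — would decide triangle existence in time $\tilde{O}(m) = O(m^{1+\gamma})$ for every $\gamma > 0$, contradicting the Triangle Detection Conjecture. This is clean; the only mild care is that the parameter $|H|$ is a fixed constant here, so "FPT-near-linear" collapses to genuine near-linear time.

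For part (2): this is the more substantial direction, but it should follow essentially from the known hardness of colour-prescribed (a.k.a.\ "partitioned" or "colourful") subgraph/homomorphism counting. The canonical route: $\#\cphomsprob(\mathcal{H})$ generalises $\#\cphomsprob$ restricted to cliques when $\mathcal{H}$ has large cliques, but $\mathcal{H}$ need only have unbounded *treewidth*. One proves $\#\W[1]$-hardness by reducing from $\#\clique$ (equivalently from $\#\cphomsprob$ on the class of all cliques) using the standard excluded-grid / treewidth machinery: by the Excluded Grid Theorem, graphs of treewidth $t$ contain a $\Omega(t^{1/\text{const}}) \times \Omega(t^{1/\text{const}})$ grid minor — or more directly, one uses that counting colour-prescribed homomorphisms from $H$ is at least as hard as counting colour-prescribed homomorphisms from any minor/topological-minor of $H$ (contracting/suppressing in the host graph), and grids (or walls) of unbounded size then give the $\#\W[1]$-hardness and, via Marx's lower bound for the (colourful) subgraph isomorphism / grid-tiling problem, the $|V(G)|^{o(\tw(H)/\log \tw(H))}$ bound. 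The most economical self-contained option is simply to cite that this is exactly Theorem/Lemma from Marx~\cite{Marx10} combined with the colour-prescribed homomorphism basis results of Curticapean–Dell–Marx~\cite{CurticapeanDM17}, or the explicit statement in~\cite{PeyerimhoffRSSVW23} or~\cite{Roth19}: $\#\cphomsprob(\mathcal{H})$ is $\#\W[1]$-hard iff $\mathcal{H}$ has unbounded treewidth, with the stated refined ETH lower bound. Given the excerpt's style of freely invoking such results, I would present part (2) as a short derivation from $\#\cphomsprob$ being "self-reducible under taking topological minors" plus Marx's grid lower bound, and explicitly attribute it.

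**Key steps, in order, and the main obstacle.** First, handle (1) via the $G' \times K_3$ construction and the observation that $|H|=3$ is constant, deriving the contradiction with the Triangle Conjecture. Second, for (2), reduce from counting $k$-cliques (the $\#\W[1]$-complete problem fixed in the excerpt) to $\#\cphomsprob(\mathcal{H})$: given unbounded treewidth, extract from $\mathcal{H}$ a subfamily whose members have large grid/wall minors (Excluded Grid Theorem), then show counting colour-prescribed homomorphisms from such $H$ into a suitably-built host graph encodes grid-tiling / $k$-clique, giving both $\#\W[1]$-hardness and, through Marx's $n^{o(k/\log k)}$ lower bound transported along the $\tw(H) \mapsto k$ correspondence, the claimed fine-grained bound. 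The main obstacle is bookkeeping in step two: making the "counting colour-prescribed homomorphisms from a minor of $H$ reduces to counting them from $H$" precise (one must subdivide/route paths in the host graph while preserving the colour-prescription constraint and not introducing spurious homomorphisms), and tracking how the parameter and host size blow up so that the $o(\tw(H)/\log \tw(H))$ exponent survives rather than degrading to $o(\sqrt{\tw(H)})$ or worse. I expect the cleanest writeup cites the relevant lemma from~\cite{CurticapeanDM17,Marx10,PeyerimhoffRSSVW23} almost verbatim rather than reproving it, since this lemma is explicitly labelled as standard input for the paper.
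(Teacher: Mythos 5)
Your proposal matches the paper's proof: part (1) uses exactly the same tensor construction $G\otimes K_3$ with the projection colouring (the paper only needs positivity of the count, your $6\times$-triangles refinement is fine), and part (2) is, as you ultimately suggest, dispatched by citing known results — the paper invokes Dalmau--Jonsson for $\#\W[1]$-hardness and Marx~\cite{Marx10} for the $|V(G)|^{o(\tw(H)/\log \tw(H))}$ ETH bound rather than re-deriving anything via grid minors. Your caveat about the excluded-grid route degrading the exponent is exactly why one cites Marx directly, so the final form of your argument coincides with the paper's.
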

\begin{proof}
    We prove both parts separately.
\begin{enumerate}
    \item We provide an easy reduction from detecting a triangle in a graph. Given $G$, construct the (uncoloured) graph Tensor product $G' = G \otimes K_3$, where $K_3$ is the triangle; recall that the vertices of $G'$ are $V(G) \times V(K_3)$ and two pairs $(u,v)$ and $(u',v')$ are adjacent in $G'$ if and only if $u$ and $u'$ are adjacent in $G$ and $v$ and $v'$ are adjacent in $K_3$ (which is equivalent to just $v\neq v'$). Observe that $|V(G')|+|E(G')|\in O(|V(G)|+|E(G)|)$ and that the Tensor product can be computed in linear time.

    Consider the $K_3$-colouring $h$ of $G'$ that just maps $(u,v)$ to $v$. Then, clearly, $G$ contains a triangle if and only if $\#\cphoms(K_3 \to (G',h))>0$. Since $K_3 \in \mathcal{H}$, the proof is concluded.
    \item $\#\W[1]$-hardness follows by the classification of parameterised counting constraint satisfaction problems due to Dalmau and Jonson~\cite{DalmauJ04} (see also~\cite[Lemma 2.45]{Roth19} for an explicit reduction from counting $k$-cliques). The conditional lower bound under ETH holds by a result of Marx~\cite{Marx10} --- note that the latter refers to $\#\cphomsprob(\mathcal{H})$ as the partitioned subgraph problem.\qedhere
\end{enumerate}
\end{proof}

\subsubsection{Algorithms for counting coloured homomorphisms}\label{sec:algo_count_col_homs}
For what follows, given two graphs $H$ and $G$ with (not necessarily proper) vertex colourings $\nu_H$ and $\nu_G$, respectively, we write $\homs{(H,\nu_H)}{(G,\nu_G)}$ for the set of homomorphisms $\varphi$ from $H$ to $G$ that agree on the vertex colourings, i.e., $\nu_H(v)=\nu_G(\varphi(v))$ for all $v\in V(H)$. 

\begin{definition}[$\#\colhomsprob(\mathcal{H})$]
    Let $\mathcal{H}$ be a class of graphs. The problem $\#\colhomsprob(\mathcal{H})$ expects as input a graph $H\in \mathcal{H}$, a graph $G$, and (not necessarily proper) vertex colourings $\nu_H$ and $\nu_G$ of $H$ and $G$. and outputs $\#\homs{(H,\nu_H)}{(G,\nu_G)}$. The parameter is $|H|$.
\end{definition}

It is well-known, in fact, folklore, that counting homomorphisms from a graph $H$ to a graph $G$ can be counted in near-linear time (in $|V(G)|+ |E(G)|$) if $H$ is acyclic (i.e., if it has treewidth $1$) (see e.g.\ \cite[Theorem 7]{BeraGLSS22} for a formal statement and proof). Moreover, the same holds true for the more general problem of counting answers to acyclic conjunctive without quantified variables (see e.g.\ \cite[Theorem 12]{BraultBaron13}). Interpreting vertex-colours as unary predicates, we obtain as an immediate corollary:

\begin{fact}\label{fact:colhoms_lintime}
    $\#\colhomsprob(\mathcal{H})$ can be solved in FPT-near-linear time if $\mathcal{H}$ only contains acyclic graphs.
\end{fact}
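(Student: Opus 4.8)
The plan is to reduce $\#\colhomsprob(\mathcal{H})$ to the uncoloured problem of counting homomorphisms from acyclic graphs, which, as cited, can be solved in near-linear time (e.g.\ \cite[Theorem 7]{BeraGLSS22}), and then observe that this reduction is parameter-preserving and runs in linear time. First I would take an input $(H,\nu_H,G,\nu_G)$ with $H\in\mathcal{H}$ acyclic, and construct a subgraph $G'$ of $G$ as follows: delete every edge $\{x,y\}\in E(G)$ for which there is no edge $\{u,v\}\in E(H)$ with $\nu_H(u)=\nu_G(x)$ and $\nu_H(v)=\nu_G(y)$; equivalently, only keep the edges of $G$ whose endpoint-colour pair is "realised" by some edge of $H$. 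Then remove from $G'$ every vertex $x$ whose colour $\nu_G(x)$ does not appear as $\nu_H(v)$ for any $v\in V(H)$. This takes time $O(|V(H)|^2 + |V(G)| + |E(G)|)$, which is $f(|H|)\cdot\tilde{O}(|V(G)|+|E(G)|)$.

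The key claim is then that $\#\homs{(H,\nu_H)}{(G,\nu_G)} = \#\homs{H}{G'}$ computed \emph{without} colour constraints --- wait, that is false in general, since an uncoloured homomorphism into $G'$ need not respect colours. So instead I would use a slightly finer construction: build $G''$ by taking, for each colour $c$ appearing in $H$, the induced subgraph on the $c$-coloured vertices of $G$, and wiring edges exactly as in $G'$; the point is that the colour classes of $G''$ and the "colour graph" structure force any homomorphism $H\to G''$ to be colour-respecting provided $H$ is \emph{connected} on each colour-preimage in a suitable sense. The clean way around this subtlety: perform a standard reduction to the acyclic conjunctive query / constraint-satisfaction formulation. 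Interpret $H$ together with $\nu_H$ as an acyclic conjunctive query $\psi_H$ over a relational structure with one binary relation $E$ and one unary relation $P_c$ for each colour $c$ used by $\nu_H$: the query has one existentially-quantified-free variable $z_u$ per vertex $u\in V(H)$, an atom $E(z_u,z_v)$ for each edge, and an atom $P_{\nu_H(u)}(z_u)$ for each vertex. Build from $(G,\nu_G)$ the structure $\mathcal{D}_G$ with $E^{\mathcal{D}_G}=E(G)$ (both orientations) and $P_c^{\mathcal{D}_G}=\{x\in V(G):\nu_G(x)=c\}$. Then the answers to $\psi_H$ over $\mathcal{D}_G$ are exactly the colour-respecting homomorphisms, and the query is acyclic because $H$ is acyclic (the unary atoms do not affect the join hypergraph's acyclicity). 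By \cite[Theorem 12]{BraultBaron13} (answer counting for acyclic quantifier-free conjunctive queries), this count is computable in time $f(|\psi_H|)\cdot\tilde{O}(\|\mathcal{D}_G\|)$, and $|\psi_H|$ is bounded by a function of $|H|$ while $\|\mathcal{D}_G\|\in O(|V(G)|+|E(G)|)$.

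Putting this together establishes the fact: given acyclic $H\in\mathcal{H}$, the number $\#\homs{(H,\nu_H)}{(G,\nu_G)}$ equals the answer count of an acyclic quantifier-free conjunctive query of size $O(|H|+\#\text{colours})\le O(|H|)$ over a structure of size $O(|V(G)|+|E(G)|)$, hence is computable in FPT-near-linear time with parameter $|H|$. The main obstacle, and the only place requiring care, is getting the colour constraints to be handled natively rather than by an a-posteriori filter: naively intersecting colour classes after the fact breaks near-linearity (there can be many uncoloured homomorphisms), so one genuinely wants the colours baked into the query/structure as unary predicates and to invoke a counting result that supports them --- which \cite{BraultBaron13} does via the "unary predicates as one-column relations" interpretation already mentioned in the paragraph preceding the statement. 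Everything else is routine: verifying acyclicity of $\psi_H$ is immediate since adding unary atoms to an $\alpha$-acyclic hypergraph keeps it $\alpha$-acyclic, and the bookkeeping on running time is straightforward.
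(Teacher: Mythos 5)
Your final argument is exactly the paper's proof: interpret the vertex colours as unary predicates, view $(H,\nu_H)$ as an acyclic quantifier-free conjunctive query over a structure built from $(G,\nu_G)$, and invoke the near-linear-time counting result of \cite[Theorem 12]{BraultBaron13}, noting that unary atoms do not affect acyclicity. The abandoned edge/vertex-filtering detour at the start is harmless since you correctly identify its flaw and discard it, so the proposal is correct and matches the paper's approach.
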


Next, we adapt a result due to Curticapean, Dell and Marx~\cite[Theorem 1.7]{CurticapeanDM17} from uncoloured homomorphisms to coloured homomorphisms to obtain, for $\mathcal{H}$ containing graphs of treewidth at most~$2$, an algorithm for $\#\colhomsprob(\mathcal{H})$ via fast matrix multiplication; the proof of (a generalisation of) the following lemma can be found in the Appendix~\ref{sec:appendix_fastMM}.

\begin{lemma}\label{lem:colhom_matrix_multi}
    Let $\mathcal{H}$ be a class of graphs of treewidth at most $2$. Then $\#\colhomsprob(\mathcal{H})$ can be solved in time $f(|H|)\cdot \mathcal{O}(|V(G)|^{\omega})$ for some computable function $f$. \qed
\end{lemma}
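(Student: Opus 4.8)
The plan is to reduce $\#\colhomsprob(\mathcal{H})$ for $\mathcal{H}$ of treewidth at most $2$ to a bounded number of matrix multiplications of $O(|V(G)|)\times O(|V(G)|)$ matrices over the appropriate semiring. First I would recall the standard dynamic-programming algorithm for counting homomorphisms along a tree decomposition: given a nice tree decomposition $(T,\beta)$ of $H$ of width at most $2$, one processes $T$ bottom-up, maintaining for each node $t$ a table indexed by the (at most $3$) possible images in $V(G)$ of the vertices in the bag $\beta(t)$, whose entries count the partial homomorphisms of the cone $\gamma(t)$ that are consistent with that choice of images. Incorporating the vertex colourings $\nu_H,\nu_G$ is routine: a partial homomorphism $\varphi$ is only counted if $\nu_G(\varphi(v))=\nu_H(v)$ for every $v$ in the cone, which simply restricts the allowed images at introduce nodes. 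The running time of the naive version of this DP is $f(|H|)\cdot O(|V(G)|^{|\beta|})$, i.e. $O(|V(G)|^{3})$ when bags have size $3$, so the point is to shave the cubic factor down to $\omega$.

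The key step is therefore to observe that the only place where a $|V(G)|^{3}$ cost arises is at \emph{join} nodes, where a bag $\{a,b,c\}$ has two children each carrying a table indexed by $(\varphi(a),\varphi(b),\varphi(c))\in V(G)^{3}$, and one must form the entrywise product of the two tables; and at \emph{introduce/forget} nodes involving three bag vertices. I would handle this exactly as Curticapean–Dell–Marx do for uncoloured homomorphisms: reorganise the DP so that the expensive operations become genuine matrix products. Concretely, when a degree-$2$-treewidth graph is built up, every bag of size $3$ can be ``resolved'' by writing the count as $\sum_{w\in V(G)} M_1[u,w]\cdot M_2[w,v]$ for suitable matrices $M_1,M_2$ whose entries are themselves counts produced by sub-DPs indexed by pairs of bag vertices; this is precisely the $O(|V(G)|^{\omega})$ bottleneck. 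The reductions (introduce of a third vertex, forget of a vertex) that do not involve such a sum can be done in $O(|V(G)|^{2})$ time, and there are only $O(|V(H)|)$ nodes in the decomposition, so the total is $f(|H|)\cdot O(|V(G)|^{\omega})$.

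The main obstacle, and the reason the full proof is deferred to Appendix~\ref{sec:appendix_fastMM}, is bookkeeping rather than conceptual: one must verify that treewidth at most $2$ really does guarantee that \emph{every} operation in the DP can be cast as either an $O(|V(G)|^{2})$ step or an honest matrix multiplication, with no residual genuinely-cubic step; this requires working with a sufficiently structured (e.g. nice, or ``very nice'') tree decomposition and carefully tracking which of the at most three bag vertices is being introduced, forgotten, or summed over at each node. A clean way to organise this is to invoke the characterisation of treewidth-$\le 2$ graphs as (subgraphs of) series-parallel graphs: series composition corresponds to a matrix product $\sum_w M_1[u,w]M_2[w,v]$, parallel composition corresponds to an entrywise (Hadamard) product of two $V(G)\times V(G)$ matrices (cost $O(|V(G)|^2)$), and a single-edge base case is the adjacency matrix of $G$ restricted by the colour classes $\nu_H,\nu_G$. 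Summing the final table over all images of the remaining bag vertices gives $\#\homs{(H,\nu_H)}{(G,\nu_G)}$. I would also remark that the colours only ever shrink the set of admissible images at each step, so they never increase the asymptotic cost, and that the decomposition of $H$ into series-parallel operations can be computed in time $f(|H|)$ since $|H|$ is the parameter; thus the total running time is $f(|H|)\cdot O(|V(G)|^{\omega})$ as claimed. The generalisation alluded to in the statement (which the appendix proves) presumably allows $\mathcal{H}$ to consist of graphs of treewidth at most $2$ together with some extra flexibility, e.g. additional isolated-ish structure or a fixed number of apex-like vertices, and the same argument goes through with the exponent controlled by the treewidth.
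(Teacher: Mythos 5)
Your proposal is correct and takes essentially the same route as the paper: the appendix adapts the Curticapean--Dell--Marx tree decomposition (root bag of size $2$, all other bags of size $3$ with separators of size $2$), keeps tables indexed by the images of the two separator vertices, and evaluates the sum over the third bag vertex as a matrix product $A_{13}\cdot A_{23}^{T}$, with the colourings entering only as $0$/$1$ indicators that restrict admissible images --- exactly your key step. One small aside: the generalisation actually proved in the appendix is the list-homomorphism version (Lemma~\ref{lem:listHomsMatrix}), from which the coloured statement follows by viewing colours as lists, rather than any apex-type extension, but this does not affect your argument.
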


\subsection{Parameterised Holant Problems}
For a smoother presentation we will first consider the following types of signatures.
\begin{definition}
    A \emph{signature} is a computable function $s:\mathbb{N} \to \mathbb{Q}$ with $s(0)\neq 0$.
\end{definition}
In Section~\ref{sec:sig0} we show how to deal with signatures $s$ allowing $s(0)=0$. We further point out that, in classical holant literature, the symbol $f$ is used for signature functions. However, since $f$ is co-notated with another role in the world of parameterised algorithms, we decided to use the symbol $s$ instead. 

\begin{definition}[Edge-Coloured Signature Grids]
    Let $\mathcal{S}$ be a finite set of signatures and let $k$ be a positive integer.
    A $k$-\emph{edge-coloured signature grid} over $\mathcal{S}$ is a triple $\Omega=(G,\xi,\{s_v\}_{v\in V(G)})$ of a graph $G$, a mapping $\xi:E(G) \to [k]$ called the $k$-\emph{edge-colouring}, and a collection of signatures $\{s_v\}_{v\in V(G)}$ with $s_v \in \mathcal{S}$ for all $v\in V(G)$.

    A subset of edges $A\subseteq E(G)$ of $G$ is called \emph{colourful} if $|A|=k$ and $\xi(A)=[k]$, that is, $A$ contains precisely one edge per colour.
\end{definition}

\begin{definition}[Edge-Colourful Holants]
    Let $\Omega=(G,\xi,\{s_v\}_{v\in V(G)})$ be a $k$-edge-colourful signature grid. Define
    \[ \holant(\Omega) = \sum_{\substack{A \subseteq E(G)\\ A \text{ colourful}}} \prod_{v\in V(G)} s_v(|A \cap E(v)|) \]
\end{definition}

We are now able to define the parameterised holant problem.

\begin{definition}[$\textsc{p-Holant}(\mathcal{S})$]
    Let $\mathcal{S}$ be a finite set of signatures. The problem $\textsc{p-Holant}(\mathcal{S})$ expects as input a positive integer $k$ and a $k$-edge-coloured signature grid $\Omega=(G,\xi,\{s_v\}_{v\in V(G)})$ over $\mathcal{S}$. The output is $\holant(\Omega)$, and the problem is parameterised by $k$. 
\end{definition}

For technical reasons, we will also consider the following restricted version of $\textsc{p-Holant}(\mathcal{S})$:
\begin{definition}[$\holantprobstar(\mathcal{S})$]
    Let $\mathcal{S}$ be a finite set of signatures. The problem $\holantprobstar(\mathcal{S})$ expects as input a graph $H$, an $H$-coloured graph $(G,h)$, and a collection $\{s_v\}_{v\in V(G)}$ of signatures in $\mathcal{S}$, such that for any pair of vertices $u,v$ of $G$ we have $h(u)=h(v)$ implies $s_u=s_v$. The output is $\holant(G,h,\{s_v\}_{v\in V(G)})$, and the problem is parameterised by $|H|$.
\end{definition}

Note that we slightly abuse notation in the previous definition by using the $H$-colouring $h$ of $G$ as the edge-colouring of the signature grid, that is, we assign an edge $e=\{u,v\}$ of $G$ the colour $h(e):=\{h(u),h(v)\} \in E(H)$. Formally, we can fix any bijection $b:E(H) \to [|E(H)|]$ and define the edge-colouring of the signature grid by setting $\xi(e):=b(h(e))$. For the sake of avoiding notational clutter, we will omit making $b$ explicit and just refer to $h$ as the edge-colouring of the signature grid in the remainder of the paper. 

In this way, observe that $\holantprobstar(\mathcal{S})$ is a restriction of $\holantprob(\mathcal{S})$ since we allow as edge-colourings only those that correspond to an underlying $H$-colouring. Thus, clearly:

\begin{fact}\label{fact:easy_direction_equivalence}
    Let $\mathcal{S}$ be a finite set of signatures. We have
    \[ \holantprobstar(\mathcal{S}) \fptlinred \holantprob(\mathcal{S}) \,.\]
\end{fact}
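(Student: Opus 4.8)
The plan is to verify directly that an instance of $\holantprobstar(\mathcal{S})$ is, up to a trivial reformatting, already an instance of $\holantprob(\mathcal{S})$, so that the reduction consists of a single oracle call preceded by a linear-time rewriting of the input.

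First I would take an input $(H,(G,h),\{s_v\}_{v\in V(G)})$ of $\holantprobstar(\mathcal{S})$. Set $k:=|E(H)|$, fix any bijection $b:E(H)\to[k]$, and define $\xi:E(G)\to[k]$ by $\xi(\{u,v\}):=b(\{h(u),h(v)\})$; this is well defined because $h\in\homs{G}{H}$ forces $\{h(u),h(v)\}\in E(H)$ for every edge $\{u,v\}$ of $G$ (and $h(u)\neq h(v)$ since $H$ is loopless). Then $\Omega:=(G,\xi,\{s_v\}_{v\in V(G)})$ is a legitimate $k$-edge-coloured signature grid over $\mathcal{S}$, and by construction the edge subsets $A\subseteq E(G)$ that are colourful with respect to $\xi$ are exactly those with $|A|=k$ and $|A\cap h^{-1}(e)|=1$ for every $e\in E(H)$ — precisely the edge subsets over which $\holant(G,h,\{s_v\}_{v\in V(G)})$ sums. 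Hence $\holant(\Omega)=\holant(G,h,\{s_v\}_{v\in V(G)})$, and the algorithm queries the $\holantprob(\mathcal{S})$-oracle on $(k,\Omega)$ and returns the answer.

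It then remains to check the resource bounds. Computing $\xi$ and assembling $\Omega$ takes $O(|E(G)|+|E(H)|)$ time, which is $O(|x|)$; the reduction poses exactly one oracle query; and the parameter of that query is $k=|E(H)|\le\binom{|V(H)|}{2}$, which is bounded by a computable function of the parameter $|H|$ of the original instance. Thus the algorithm is a linear FPT Turing-reduction, establishing $\holantprobstar(\mathcal{S})\fptlinred\holantprob(\mathcal{S})$.

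There is essentially no obstacle here: the only points requiring (minimal) care are that the $H$-colouring $h$ genuinely induces a valid $[k]$-valued edge-colouring of $G$ — which is exactly the bookkeeping already flagged in the paragraph preceding the statement — and that the number of colours $k=|E(H)|$ remains bounded in terms of $|H|$, so that the single oracle query is parameter-preserving as required by the definition of $\fptlinred$.
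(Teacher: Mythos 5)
Your proof is correct and follows exactly the route the paper takes: the paper's remark before the statement already fixes a bijection $b:E(H)\to[|E(H)|]$ and uses $\xi(e):=b(h(e))$ as the induced $k$-edge-colouring, so that a $\holantprobstar(\mathcal{S})$ instance is literally a restricted $\holantprob(\mathcal{S})$ instance, and the fact is then asserted without further proof. Your additional checks (well-definedness of $\xi$ via $h$ being a homomorphism, the match between colourful subsets and the sets summed over, and the bounds on time, query count, and parameter) are precisely the bookkeeping the paper leaves implicit.
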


Finally, we define the \emph{uncoloured} parameterised Holant problem; in what follows an \emph{uncoloured signature grid} over $\mathcal{S}$ is just a pair $\Omega=(G,\{s_v\}_{v\in V(G)})$ with $s_v\in \mathcal{S}$ for all $v\in V(G)$.
\begin{definition}[$\text{\sc{p-UnColHolant}}(\mathcal{S})$]
    Let $\mathcal{S}$ be a finite set of signatures. The problem $\text{\sc{p-UnColHolant}}(\mathcal{S})$ gets as input a positive integer $k$, and a signature grid $\Omega=(G,\{s_v\}_{v\in V(G)})$ over $\mathcal{S}$. The output is
    \[ \holant(\Omega,k):= \sum_{\substack{A \subseteq E(G)\\|A|=k}}~\prod_{v\in V(G)} s_v(|A \cap E(v)|)\,. \]
    The problem parameter is $k$.
\end{definition}

\begin{remark}
    For finite sets of signatures $\mathcal{S}$, the problems $\textsc{p-Holant}(\mathcal{S})$ and $\textsc{p-UnColHolant}(\mathcal{S})$ can always be reduced to $\#\textsc{Clique}$ w.r.t.\ parameterised Turing-reductions, since we will see that both problems can easily be cast as a linear combination of homomorphism counts, which always reduces to $\#\textsc{Clique}$~\cite{DalmauJ04}. For this reason, we will only prove $\#\W[1]$-hardness when establishing $\#\W[1]$-completeness of our Holant problems.
\end{remark}

\section{Equivalence of $\holantprob$ and $\holantprobstar$}\label{sec:col_holants_equivalences}

In this section, we will prove that, for each finite set of signatures $\mathcal{S}$, the problems $\holantprob(\mathcal{S})$ and $\holantprobstar(\mathcal{S})$ are equivalent w.r.t.\ FPT near-linear-time reductions. 

For the proof, we will first introduce a class of coloured graphs that will be extremely useful in the proof of the aforementioned equivalence. We will only rely on this particular family of coloured graphs in the current section.

\subsection{$(\ell_1,\ell_2)$-Coloured Graphs}

An $(\ell_1,\ell_2)$\emph{-coloured graph} is a triple $(G,\nu,\xi)$ of a graph $G$, a mapping $\nu: V(G) \to S_1$ for a set $S_1$ of size $\ell_1$ (called the $\ell_1$-vertex-colouring), and a mapping $\xi: E(G) \to S_2$ for a set $S_2$ of size $\ell_2$ (called the $\ell_2$-edge-colouring). 

\begin{remark}
   An $H$-coloured graph $(G,h)$ naturally induces an $(\ell_1,\ell_2)$-coloured graph $(G,\nu_h,\xi_h)$ where $\nu_h=h$ and $\xi_h(e)=h(e)$ for all $e\in E(H)$.
\end{remark}

A homomorphism from $(G_1,\nu_1,\xi_1)$ to $(G_2,\nu_2,\xi_2)$ is a homomorphism $h\in \homs{G_1}{G_2}$ such that $\nu_2(h(v))=\nu_1(v)$ for all $v\in V(G_1)$ and $\xi_2(h(e))= \xi_1(e)$ for all $e\in E(G)$. We write $\homs{(G_1,\nu_1,\xi_1)}{(G_2,\nu_2,\xi_2)}$ for the set of all such homomorphisms. Embeddings and isomorphisms between $(\ell_1,\ell_2)$-coloured graphs are defined likewise. We write $(G_1,\nu_1,\xi_1)\cong (G_2,\nu_2,\xi_2)$ if $(G_1,\nu_1,\xi_1)$ and $(G_2,\nu_2,\xi_2)$ are isomorphic.

A subgraph of an an $(\ell_1,\ell_2)$-coloured graph $(G,\nu,\xi)$ is an $(\ell_1,\ell_2)$-coloured graph $(H,\nu|_{V(H)},\xi|_{E(H)})$ where $H$ is a subgraph of~$G$. 
We write $\subs{(H,\nu_H,\xi_H)}{(G,\nu_G,\xi_G)}$ for the set of all subgraphs of $(G,\nu_G,\xi_G)$ that are isomorphic to $(H,\nu_H,\xi_H)$.

We write $\auts(G,\nu,\xi)$ for the set of all automorphisms~$a$ of~$G$ such that $\nu(a(v))=\nu(v)$ and $\xi(a(e))=\xi(e)$ for all $v \in V(G)$ and $e\in E(G)$. The following identity is well-known; we include a proof only for reasons of self-containment:

\begin{proposition}\label{prop:col_subs_to_embs}
    $\#\embs{(H,\nu_H,\xi_H)}{(G,\nu_G,\xi_G))} = \#\auts(H,\nu_H,\xi_H)\cdot \#\subs{(H,\nu_H,\xi_H)}{(G,\nu_G,\xi_G)}$.
\end{proposition}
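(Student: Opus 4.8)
The plan is to prove this standard orbit-counting identity by exhibiting a surjection from embeddings onto subgraphs whose fibres all have size exactly $\#\auts(H,\nu_H,\xi_H)$.

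First I would set up the map. Given an embedding $\varphi \in \embs{(H,\nu_H,\xi_H)}{(G,\nu_G,\xi_G)}$, let $\Phi(\varphi)$ denote the image of $\varphi$, i.e.\ the $(\ell_1,\ell_2)$-coloured subgraph of $(G,\nu_G,\xi_G)$ with vertex set $\varphi(V(H))$, edge set $\varphi(E(H))$, and colourings inherited from $(G,\nu_G,\xi_G)$. Since $\varphi$ is a colour-preserving embedding, it restricts to an isomorphism from $(H,\nu_H,\xi_H)$ onto $\Phi(\varphi)$, so $\Phi(\varphi) \in \subs{(H,\nu_H,\xi_H)}{(G,\nu_G,\xi_G)}$; thus $\Phi$ is a well-defined map from $\embs{(H,\nu_H,\xi_H)}{(G,\nu_G,\xi_G)}$ to $\subs{(H,\nu_H,\xi_H)}{(G,\nu_G,\xi_G)}$.

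Next I would show $\Phi$ is surjective and compute the fibre sizes. For surjectivity: if $(H',\nu_G|_{V(H')},\xi_G|_{E(H')})$ is a subgraph isomorphic to $(H,\nu_H,\xi_H)$, pick a colour-preserving isomorphism $\psi$ witnessing this; then $\psi$, viewed as a map $V(H)\to V(G)$, is a colour-preserving embedding with image $H'$, so it lies in the fibre over $H'$. For the fibre size: fix a subgraph $H'$ in the image and fix one embedding $\varphi_0$ with $\Phi(\varphi_0)=H'$. I claim the fibre over $H'$ is exactly $\{\varphi_0 \circ a : a \in \auts(H,\nu_H,\xi_H)\}$. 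Indeed, each $\varphi_0\circ a$ is still a colour-preserving embedding (composition of a colour-preserving automorphism of $(H,\nu_H,\xi_H)$ with a colour-preserving embedding) with the same image $H'$. Conversely, if $\varphi$ is any embedding with image $H'$, then $a := \varphi_0^{-1}\circ \varphi$ (using that $\varphi_0$ is a bijection onto $V(H')$) is a bijection $V(H)\to V(H)$; one checks it preserves edges in both directions (since $\varphi$ and $\varphi_0$ are both isomorphisms onto $H'$) and preserves $\nu_H$ and $\xi_H$ (since both $\varphi,\varphi_0$ are colour-preserving), hence $a\in\auts(H,\nu_H,\xi_H)$ and $\varphi=\varphi_0\circ a$. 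Finally, the map $a\mapsto \varphi_0\circ a$ is injective because $\varphi_0$ is injective, so each fibre has size exactly $\#\auts(H,\nu_H,\xi_H)$. Summing over all subgraphs in the image then gives $\#\embs{(H,\nu_H,\xi_H)}{(G,\nu_G,\xi_G)} = \#\auts(H,\nu_H,\xi_H)\cdot \#\subs{(H,\nu_H,\xi_H)}{(G,\nu_G,\xi_G)}$, as claimed.

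I do not anticipate a serious obstacle here — this is the folklore $|\text{orbit}|\cdot|\text{stabiliser}| = |\text{group}|$ bookkeeping in disguise. The one point that requires a little care is bookkeeping the colour constraints consistently: one must check at each composition step that colour-preservation on vertices ($\nu$) \emph{and} on edges ($\xi$) is maintained, and that $\auts(H,\nu_H,\xi_H)$ is precisely the group of bijections preserving adjacency in both directions \emph{and} both colourings, matching the definition given before the statement. Another minor subtlety is that $\varphi_0^{-1}$ only makes sense as the inverse of the corestriction of $\varphi_0$ to its image $V(H')$; I would state this explicitly to keep the argument clean. Everything else is routine.
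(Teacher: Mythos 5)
Your proof is correct and is essentially the paper's argument: the paper also observes that $\auts(H,\nu_H,\xi_H)$ acts freely on the set of colour-preserving embeddings by precomposition and that the orbits correspond bijectively to the subgraphs, which is exactly your fibre computation spelled out. No substantive difference in approach.
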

\begin{proof}
    We say that two embeddings $\varphi_1,\varphi_2\in\embs{(H,\nu_H,\xi_H)}{(G,\nu_G,\xi_G))}$ are equivalent if there is an automorphism $a\in \auts(H,\nu_H,\xi_H)$ such that $\varphi_1(v)=\varphi_2(a(v))$ for all $v\in V(H)$. The size of an equivalence class is then $\#\auts(H,\nu_H,\xi_H)$, and each equivalence class corresponds (uniquely) to an element in $\subs{(H,\nu_H,\xi_H)}{(G,\nu_G,\xi_G)}$.

    More formally, $\auts(H,\nu_H,\xi_H)$ acts as a group (with composition) on $\embs{(H,\nu_H,\xi_H)}{(G,\nu_G,\xi_G))}$. This action is a free group action.\footnote{If $\varphi = a \circ \varphi$ then $a$ must be the identity.} Moreover, the orbits clearly correspond one-to-one to the elements of $\subs{(H,\nu_H,\xi_H)}{(G,\nu_G,\xi_G)}$.
\end{proof}

Next we extend the notion of quotient graphs to $(\ell_1,\ell_2)$-coloured graphs; this requires us to restrict to partitions that do not identify vertices or edges with distinct colours.

\begin{definition}[Colour-Consistent Partitions]
    Let $(H,\nu,\xi)$ be an $(\ell_1,\ell_2)$-coloured graph. A partition $\rho \in \ppart(H)$ is called \emph{colour-consistent} (w.r.t.\ $\nu$ and $\xi$) if the following two constraints are satisfied:
    \begin{itemize}
        \item[(I)] If two vertices $u$ and $v$ are in the same block of $\rho$, then $\nu(u)=\nu(v)$.
        \item[(II)] If two edges $e_1$ and $e_2$ of $H$ are mapped to the same edge of $H/\rho$ by $h_\rho$, then $\xi(e_1)=\xi(e_2)$. 
    \end{itemize}
    We write $\ppart(H,\nu,\xi)$ for the set of all colour-consistent partitions of $(H,\nu,\xi)$.
\end{definition}

\begin{definition}[Quotient Graphs of $(\ell_1,\ell_2)$-Coloured Graphs]
    Given an $(\ell_1,\ell_2)$-coloured graph $(H,\nu,\xi)$ and a colour-consistent partition $\rho \in \ppart(H,\nu,\xi)$ we define $(H,\nu,\xi)/\rho := (H/\rho,\nu/\rho,\xi/\rho)$, where $\nu/\rho$ assigns a block the colour of its members, and $\xi/\rho$ assigns an edge $e\in E(H/\rho)$ the colour of the edges of $H$ that are mapped to $e$ by $h_\rho$.
\end{definition}
Observe that the previous construction of quotients for $(\ell_1,\ell_2)$-coloured graphs is well-defined as colour-consistent partitions can only lead to identifications of vertices and edges with the same colours.

Given an $(\ell_1,\ell_2)$-coloured graph $(H,\nu,\xi)$, we consider the poset of colour-consistent partitions $\ppart(H,\nu,\xi)$ with partition refinement. We write $\mu_{(H,\nu,\xi)}$ for the M\"obius function of this poset,\footnote{See e.g.\ \cite[Chapter 3.7]{Stanley11} for the definition of the M\"obius function of a poset. Since we do not need any additional properties of $\mu_{(H,\nu,\xi)}$, we avoid stating the definition in this paper.} and we simplify notation by setting $\mu_{(H,\nu,\xi)}(\rho) := \mu_{(H,\nu,\xi)}(\bot,\rho)$. The proof of the subsequent transformation follows by M\"obius inversion over the poset of colour-consistent partitions and reads almost verbatim as the proof for the uncoloured setting~\cite[Chapter 5.2.3]{Lovasz12}.

\begin{lemma}\label{lem:col_embs_to_homs}
    Let $(H,\nu_H,\xi_H)$ and $(G,\nu_G,\xi_G)$ be  $(\ell_1,\ell_2)$-coloured graphs. We have
    \[ \#\embs{(H,\nu_H,\xi_H)}{(G,\nu_G,\xi_G)} = \!\!\!\!\sum_{\rho \in \ppart(H,\nu_H,\xi_H)}\!\!\!\! \mu(\rho) \cdot \#\homs{(H,\nu_H,\xi_H)/\rho}{(G,\nu_G,\xi_G)} \]
    where $\mu=\mu_{(H,\nu_H,\xi_H)}$. \qed
\end{lemma}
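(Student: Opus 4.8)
The plan is to prove Lemma~\ref{lem:col_embs_to_homs} by M\"obius inversion over the poset of colour-consistent partitions $\ppart(H,\nu_H,\xi_H)$, mimicking the classical argument for uncoloured graphs (cf.\ \cite[Chapter 5.2.3]{Lovasz12}). The key structural observation is the following: every homomorphism $\varphi \in \homs{(H,\nu_H,\xi_H)}{(G,\nu_G,\xi_G)}$ induces a partition $\rho_\varphi \in \ppart(H)$ of $V(H)$ by declaring $u$ and $v$ to be in the same block iff $\varphi(u) = \varphi(v)$. First I would check that $\rho_\varphi$ is in fact \emph{colour-consistent}: condition (I) holds because $\varphi$ preserves vertex colours, so $\varphi(u)=\varphi(v)$ forces $\nu_H(u)=\nu_G(\varphi(u))=\nu_G(\varphi(v))=\nu_H(v)$; condition (II) holds because if two edges $e_1,e_2$ of $H$ are identified by $h_{\rho_\varphi}$ then $\varphi(e_1)=\varphi(e_2)$ as edges of $G$, and $\varphi$ preserves edge colours, so $\xi_H(e_1)=\xi_G(\varphi(e_1))=\xi_G(\varphi(e_2))=\xi_H(e_2)$. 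Conversely, $\varphi$ factors through the quotient: there is a unique homomorphism $\bar\varphi \in \homs{(H,\nu_H,\xi_H)/\rho_\varphi}{(G,\nu_G,\xi_G)}$ with $\varphi = \bar\varphi \circ h_{\rho_\varphi}$, and $\bar\varphi$ is injective (an embedding of the quotient). This yields a bijection, for each fixed $\sigma \in \ppart(H,\nu_H,\xi_H)$, between $\{\varphi \in \homs{(H,\nu_H,\xi_H)}{(G,\nu_G,\xi_G)} : \rho_\varphi = \sigma\}$ and $\embscp$-style embeddings $\embs{(H,\nu_H,\xi_H)/\sigma}{(G,\nu_G,\xi_G)}$.

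Next I would assemble the counting identity. Classifying all homomorphisms by the partition they induce gives, for every colour-consistent $\sigma$,
\[
\#\homs{(H,\nu_H,\xi_H)/\sigma}{(G,\nu_G,\xi_G)} = \sum_{\rho \geq \sigma,\ \rho \in \ppart(H,\nu_H,\xi_H)} \#\embs{(H,\nu_H,\xi_H)/\rho}{(G,\nu_G,\xi_G)}\,,
\]
where the sum ranges over colour-consistent $\rho$ coarser than $\sigma$ --- here one uses that the quotient map $(H,\nu_H,\xi_H)/\sigma \to (H,\nu_H,\xi_H)/\rho$ and the interval $[\sigma, \top]$ in $\ppart(H,\nu_H,\xi_H)$ match up, so that partitions of $V((H,\nu_H,\xi_H)/\sigma)$ that are colour-consistent correspond exactly to colour-consistent $\rho \geq \sigma$. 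Taking $\sigma = \bot$ (the finest partition, which is trivially colour-consistent) gives $\#\homs{(H,\nu_H,\xi_H)}{(G,\nu_G,\xi_G)} = \sum_{\rho} \#\embs{(H,\nu_H,\xi_H)/\rho}{(G,\nu_G,\xi_G)}$ over all colour-consistent $\rho$. Applying M\"obius inversion over the locally finite poset $\ppart(H,\nu_H,\xi_H)$ to the pair of functions $\rho \mapsto \#\homs{(H,\nu_H,\xi_H)/\rho}{(G,\nu_G,\xi_G)}$ and $\rho \mapsto \#\embs{(H,\nu_H,\xi_H)/\rho}{(G,\nu_G,\xi_G)}$, and then evaluating at $\bot$, yields exactly the claimed formula with $\mu(\rho) = \mu_{(H,\nu_H,\xi_H)}(\bot,\rho)$.

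The step I expect to be the main (if mild) obstacle is verifying that the interval structure behaves correctly: one must confirm that the colour-consistency constraint is compatible with the poset operations so that the restriction of $\ppart(H,\nu_H,\xi_H)$ to an interval $[\sigma,\top]$ is canonically the poset of colour-consistent partitions of the quotient $(H,\nu_H,\xi_H)/\sigma$, and that the M\"obius function of the sub-poset $\ppart(H,\nu_H,\xi_H)$ (rather than of the full partition lattice $\ppart(H)$) is the right object to invert with. Since colour-consistency is a ``downward-and-sideways closed'' condition defined purely by equalities of colours, any partition refining a colour-consistent one is again colour-consistent, and joins of colour-consistent partitions along a chain remain colour-consistent; this makes $\ppart(H,\nu_H,\xi_H)$ a well-behaved sub-poset (indeed a meet-sublattice containing $\bot$) and the inversion goes through verbatim. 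Everything else is the standard transfer of the Lov\'asz argument, so I would keep the write-up short and simply cite \cite[Chapter 5.2.3]{Lovasz12} for the routine parts, as the lemma statement already anticipates.
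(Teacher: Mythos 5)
Your proposal is correct and follows essentially the same route as the paper, which proves this lemma exactly by M\"obius inversion over the poset $\ppart(H,\nu_H,\xi_H)$ of colour-consistent partitions, transferring the classical Lov\'asz argument verbatim. Your verification that induced kernel partitions are colour-consistent and that refinements of colour-consistent partitions remain colour-consistent fills in precisely the routine details the paper delegates to \cite[Chapter 5.2.3]{Lovasz12}.
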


\begin{corollary}\label{cor:col_sub_to_hom}
     Let $(H,\nu_H,\xi_H)$ and $(G,\nu_G,\xi_G)$ be  $(\ell_1,\ell_2)$-coloured graphs. We have
    \[ \#\subs{(H,\nu_H,\xi_H)}{(G,\nu_G,\xi_G)} = \#\auts(H,\nu_H,\xi_H)^{-1}\!\!\!\!\!\!\sum_{\rho \in \ppart(H,\nu_H,\xi_H)}\!\!\!\!\!\! \mu(\rho) \cdot \#\homs{(H,\nu_H,\xi_H)/\rho}{(G,\nu_G,\xi_G)} \]
    where $\mu=\mu_{(H,\nu_H,\xi_H)}$.
\end{corollary}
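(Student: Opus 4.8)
The plan is to derive Corollary~\ref{cor:col_sub_to_hom} directly from Lemma~\ref{lem:col_embs_to_homs} together with Proposition~\ref{prop:col_subs_to_embs}, which is the standard two-step passage subgraphs $\to$ embeddings $\to$ homomorphisms, now carried out in the $(\ell_1,\ell_2)$-coloured setting. First I would invoke Proposition~\ref{prop:col_subs_to_embs}, which gives
\[ \#\embs{(H,\nu_H,\xi_H)}{(G,\nu_G,\xi_G)} = \#\auts(H,\nu_H,\xi_H)\cdot \#\subs{(H,\nu_H,\xi_H)}{(G,\nu_G,\xi_G)}\,, \]
and since $H$ has no isolated structure preventing it, $\#\auts(H,\nu_H,\xi_H)\geq 1$, so this identity can be rearranged to
\[ \#\subs{(H,\nu_H,\xi_H)}{(G,\nu_G,\xi_G)} = \#\auts(H,\nu_H,\xi_H)^{-1}\cdot \#\embs{(H,\nu_H,\xi_H)}{(G,\nu_G,\xi_G)}\,. \]

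Next I would substitute the expression for $\#\embs{(H,\nu_H,\xi_H)}{(G,\nu_G,\xi_G)}$ supplied by Lemma~\ref{lem:col_embs_to_homs}, namely the sum $\sum_{\rho \in \ppart(H,\nu_H,\xi_H)} \mu(\rho)\cdot \#\homs{(H,\nu_H,\xi_H)/\rho}{(G,\nu_G,\xi_G)}$ with $\mu = \mu_{(H,\nu_H,\xi_H)}$, directly into the rearranged identity. Pulling the constant $\#\auts(H,\nu_H,\xi_H)^{-1}$ outside the sum (it does not depend on $\rho$) yields exactly the claimed formula. That is the whole argument; the statement is a one-line corollary once the two cited results are in hand.

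There is essentially no obstacle here — Lemma~\ref{lem:col_embs_to_homs} and Proposition~\ref{prop:col_subs_to_embs} are precisely the two ingredients needed, and the only thing worth a remark is that the quotients $(H,\nu_H,\xi_H)/\rho$ appearing in the sum are well-defined exactly because the sum ranges over colour-consistent partitions $\rho \in \ppart(H,\nu_H,\xi_H)$, which is the index set in Lemma~\ref{lem:col_embs_to_homs}; so no extra care about ill-defined quotient colourings is needed. If anything merits a sentence of justification, it is simply that $\#\auts(H,\nu_H,\xi_H)$ is a positive integer and hence invertible over $\Q$, so the division is legitimate.

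\begin{proof}
    By Proposition~\ref{prop:col_subs_to_embs}, we have
    \[ \#\embs{(H,\nu_H,\xi_H)}{(G,\nu_G,\xi_G)} = \#\auts(H,\nu_H,\xi_H)\cdot \#\subs{(H,\nu_H,\xi_H)}{(G,\nu_G,\xi_G)}\,. \]
    Since $\#\auts(H,\nu_H,\xi_H)$ is a positive integer, we may divide by it, obtaining
    \[ \#\subs{(H,\nu_H,\xi_H)}{(G,\nu_G,\xi_G)} = \#\auts(H,\nu_H,\xi_H)^{-1}\cdot \#\embs{(H,\nu_H,\xi_H)}{(G,\nu_G,\xi_G)}\,. \]
    Now substituting the expression for $\#\embs{(H,\nu_H,\xi_H)}{(G,\nu_G,\xi_G)}$ from Lemma~\ref{lem:col_embs_to_homs} and pulling the factor $\#\auts(H,\nu_H,\xi_H)^{-1}$ (which is independent of $\rho$) inside the leading coefficient of the sum gives
    \[ \#\subs{(H,\nu_H,\xi_H)}{(G,\nu_G,\xi_G)} = \#\auts(H,\nu_H,\xi_H)^{-1}\!\!\!\!\!\!\sum_{\rho \in \ppart(H,\nu_H,\xi_H)}\!\!\!\!\!\! \mu(\rho) \cdot \#\homs{(H,\nu_H,\xi_H)/\rho}{(G,\nu_G,\xi_G)}\,, \]
    where $\mu=\mu_{(H,\nu_H,\xi_H)}$, as claimed.
\end{proof}
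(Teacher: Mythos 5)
Your proof is correct and follows exactly the paper's own route: the corollary is obtained by rearranging Proposition~\ref{prop:col_subs_to_embs} and substituting the embedding-to-homomorphism expansion from Lemma~\ref{lem:col_embs_to_homs}, which is precisely what the paper's one-line proof does. The extra remarks about invertibility of $\#\auts(H,\nu_H,\xi_H)$ and well-definedness of the quotients are harmless and accurate.
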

\begin{proof}
    Follows by Proposition~\ref{prop:col_subs_to_embs} and Lemma~\ref{lem:col_embs_to_homs}.
\end{proof}

Next, we establish some required algebraic properties of $(\ell_1,\ell_2)$-coloured graphs.
\begin{definition}[$\Gamma(\ell_1,\ell_2,S)$]
    Let $\ell_1,\ell_2$ be positive integers, and let $S$ be a set of size $\ell_1$. We define $\Gamma(\ell_1,\ell_2,S)$ as the set of all isomorphism types of $(\ell_1,\ell_2)$-coloured graphs $(H,\nu,\xi)$ with $\nu: V(H)\to S$ and $\xi: E(H) \to [\ell_2]$. Let also $\Gamma_{\text{\sf{inj}}}(\ell_1, \ell_2, S)$ denote the subset of $\Gamma(\ell_1,\ell_2,S)$ containing those graphs with \emph{injective} edge-colouring $\xi$. 
\end{definition}

\noindent Given $(H,\nu_H,\xi_H),(F,\nu_F,\xi_F) \in \Gamma(\ell_1,\ell_2,S)$, we define their \emph{Tensor product} $(H,\nu_H,\xi_H)\otimes(F,\nu_F,\xi_F)$ as follows:
\begin{itemize}
    \item[(1)] The vertex set is $\{ (u,v) \mid \nu_H(v)=\nu_F(v) \}$,
    and the vertex colouring $\nu_{H\otimes F}$ assigns a vertex $(u,v)$ the colour $\nu_H(u)(=\nu_F(v))$.
    \item[(2)] Two vertices $(u,v)$ and $(u',v')$ are made adjacent if $\{u,u'\}\in E(H)$ and $\{(v,v')\}\in E(F)$, and $\xi_H(\{u,u'\})=\xi_F(\{v,v'\})$. The edge colouring $\xi_{H\otimes F}$ assigns an edge $\{(u,v),(u',v')\}$ the colour $\xi_H(\{u,u'\})(=\xi_F(\{v,v'\}))$.
\end{itemize}

\begin{proposition}\label{prop:semigroup}
     Let $\ell_1,\ell_2$ be positive integers, and let $S$ be a set of size $\ell_1$. Then $(\Gamma(\ell_1,\ell_2,S),\otimes)$ is a semigroup.
\end{proposition}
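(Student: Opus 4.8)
The plan is to verify the two semigroup axioms for $(\Gamma(\ell_1,\ell_2,S),\otimes)$: closure and associativity. Closure is essentially built into the definition of the Tensor product, but one should check it is stated carefully. Given $(H,\nu_H,\xi_H),(F,\nu_F,\xi_F)\in\Gamma(\ell_1,\ell_2,S)$, the product $(H,\nu_H,\xi_H)\otimes(F,\nu_F,\xi_F)$ has vertex colouring into $S$ by construction, so $\ell_1$-colouring is fine. The key point for closure is that the resulting edge colouring $\xi_{H\otimes F}$ is \emph{injective}: if two edges $\{(u,v),(u',v')\}$ and $\{(a,b),(a',b')\}$ receive the same colour $c\in[\ell_2]$, then $\xi_H(\{u,u'\})=c=\xi_H(\{a,a'\})$ and, since $\xi_H$ is injective, $\{u,u'\}=\{a,a'\}$; similarly $\{v,v'\}=\{b,b'\}$ using injectivity of $\xi_F$. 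Hence the two edges of $H\otimes F$ coincide, so $\xi_{H\otimes F}$ is injective and the product again lies in $\Gamma(\ell_1,\ell_2,S)$. I would also note well-definedness on isomorphism types: an isomorphism $(H,\nu_H,\xi_H)\cong(H',\nu_{H'},\xi_{H'})$ induces an isomorphism of the respective Tensor products, which is routine and can be stated in one line.

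Next I would prove associativity, i.e.\ $((H_1,\nu_1,\xi_1)\otimes(H_2,\nu_2,\xi_2))\otimes(H_3,\nu_3,\xi_3)\cong(H_1,\nu_1,\xi_1)\otimes((H_2,\nu_2,\xi_2)\otimes(H_3,\nu_3,\xi_3))$. The obvious candidate isomorphism is the canonical bijection $((u_1,u_2),u_3)\mapsto(u_1,(u_2,u_3))$ on the underlying sets. First I would check this is a well-defined bijection between the vertex sets: a triple $(u_1,u_2,u_3)$ appears on the left iff $\nu_1(u_1)=\nu_2(u_2)$ and $\nu_{12}((u_1,u_2))=\nu_3(u_3)$, and since $\nu_{12}((u_1,u_2))=\nu_1(u_1)=\nu_2(u_2)$, this is equivalent to $\nu_1(u_1)=\nu_2(u_2)=\nu_3(u_3)$, which is symmetric in the grouping; so both sides have the same vertex set, and the vertex colourings agree (both assign $\nu_1(u_1)$). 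Then I would check edges: $\{((u_1,u_2),u_3),((u_1',u_2'),u_3')\}$ is an edge on the left iff $\{(u_1,u_2),(u_1',u_2')\}\in E(H_1\otimes H_2)$, $\{u_3,u_3'\}\in E(H_3)$, and $\xi_{12}(\{(u_1,u_2),(u_1',u_2')\})=\xi_3(\{u_3,u_3'\})$; unfolding the definition of $E(H_1\otimes H_2)$ and $\xi_{12}$, this becomes: $\{u_i,u_i'\}\in E(H_i)$ for $i=1,2,3$ and $\xi_1(\{u_1,u_1'\})=\xi_2(\{u_2,u_2'\})=\xi_3(\{u_3,u_3'\})$. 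This condition is manifestly symmetric in the grouping of the three factors, so it matches the edge condition on the right, and the edge colours agree as well. Hence the canonical bijection is a colour-preserving isomorphism.

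The main (mild) obstacle is purely bookkeeping: making sure the \emph{colour-consistency} conditions are threaded through correctly, in particular that the common edge colour $\xi_1(\{u_1,u_1'\})=\xi_2(\{u_2,u_2'\})=\xi_3(\{u_3,u_3'\})$ is exactly what both bracketings impose, and that injectivity of the $\xi_i$ is what guarantees there is no ambiguity in which edge of a factor a given edge of the product projects to (so that $\xi_{12}$ and the nested products are unambiguously defined in the first place). There is no associativity/identity subtlety beyond this; in particular we do not claim an identity element, so we only need a semigroup, not a monoid. I would therefore present the proof as: (i) closure via injectivity of the induced edge colouring; (ii) a remark that $\otimes$ descends to isomorphism types; (iii) associativity via the canonical ``reassociation'' bijection, checking vertex sets, vertex colours, edge sets, and edge colours all coincide because the defining condition is symmetric in the three factors.

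\begin{proof}
We verify that $\otimes$ is well-defined on $\Gamma(\ell_1,\ell_2,S)$ and associative.

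\emph{Closure.} Let $(H,\nu_H,\xi_H),(F,\nu_F,\xi_F)\in\Gamma(\ell_1,\ell_2,S)$ and set $(K,\nu_K,\xi_K):=(H,\nu_H,\xi_H)\otimes(F,\nu_F,\xi_F)$. By construction $\nu_K$ maps into $S$, so $K$ is $\ell_1$-vertex-coloured over $S$. It remains to check that $\xi_K:E(K)\to[\ell_2]$ is injective. Suppose $e=\{(u,v),(u',v')\}$ and $\tilde e=\{(a,b),(a',b')\}$ are edges of $K$ with $\xi_K(e)=\xi_K(\tilde e)=:c$. Then, by definition of $\xi_K$, we have $\xi_H(\{u,u'\})=c=\xi_H(\{a,a'\})$, and since $\xi_H$ is injective this forces $\{u,u'\}=\{a,a'\}$; symmetrically $\xi_F(\{v,v'\})=c=\xi_F(\{b,b'\})$ forces $\{v,v'\}=\{b,b'\}$. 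Hence $e=\tilde e$, so $\xi_K$ is injective and $(K,\nu_K,\xi_K)\in\Gamma(\ell_1,\ell_2,S)$.

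\emph{Well-definedness on isomorphism types.} If $\varphi:(H,\nu_H,\xi_H)\to(H',\nu_{H'},\xi_{H'})$ and $\psi:(F,\nu_F,\xi_F)\to(F',\nu_{F'},\xi_{F'})$ are isomorphisms of $(\ell_1,\ell_2)$-coloured graphs, then $(u,v)\mapsto(\varphi(u),\psi(v))$ is a bijection between the vertex sets of the respective Tensor products (it preserves the constraint $\nu_H(u)=\nu_F(v)$ since $\varphi,\psi$ preserve vertex colours), and it preserves edges and edge colours by the analogous property of $\varphi$ and $\psi$. Thus $\otimes$ descends to a binary operation on $\Gamma(\ell_1,\ell_2,S)$.

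\emph{Associativity.} Let $(H_i,\nu_i,\xi_i)\in\Gamma(\ell_1,\ell_2,S)$ for $i\in\{1,2,3\}$. Consider the map
\[
\Phi\colon\ ((u_1,u_2),u_3)\ \longmapsto\ (u_1,(u_2,u_3))\,.
\]
A triple $(u_1,u_2,u_3)$ gives a vertex of $((H_1,\nu_1,\xi_1)\otimes(H_2,\nu_2,\xi_2))\otimes(H_3,\nu_3,\xi_3)$ if and only if $\nu_1(u_1)=\nu_2(u_2)$ (so that $(u_1,u_2)$ is a vertex of $H_1\otimes H_2$, with colour $\nu_1(u_1)$) and $\nu_1(u_1)=\nu_3(u_3)$; that is, if and only if $\nu_1(u_1)=\nu_2(u_2)=\nu_3(u_3)$. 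The same condition characterises the vertices of $(H_1,\nu_1,\xi_1)\otimes((H_2,\nu_2,\xi_2)\otimes(H_3,\nu_3,\xi_3))$. Hence $\Phi$ is a bijection between the two vertex sets, and it preserves vertex colours since both sides assign $(u_1,u_2,u_3)$ the colour $\nu_1(u_1)$.

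For the edges, $\{((u_1,u_2),u_3),((u_1',u_2'),u_3')\}$ is an edge of the left-hand Tensor product if and only if $\{(u_1,u_2),(u_1',u_2')\}\in E(H_1\otimes H_2)$, $\{u_3,u_3'\}\in E(H_3)$, and $\xi_{H_1\otimes H_2}(\{(u_1,u_2),(u_1',u_2')\})=\xi_3(\{u_3,u_3'\})$. Unfolding the definition of $H_1\otimes H_2$ and its edge colouring, this is equivalent to:
\[
\{u_i,u_i'\}\in E(H_i)\ \text{ for } i=1,2,3,\quad\text{and}\quad \xi_1(\{u_1,u_1'\})=\xi_2(\{u_2,u_2'\})=\xi_3(\{u_3,u_3'\})\,,
\]
in which case the common value is the edge colour assigned by the left-hand product. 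This condition is symmetric in the three factors, so it is literally the same condition that characterises edges of the right-hand Tensor product, together with the same edge colour. Therefore $\Phi$ is an isomorphism of $(\ell_1,\ell_2)$-coloured graphs, proving associativity.

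Since $\otimes$ is a closed, associative binary operation on $\Gamma(\ell_1,\ell_2,S)$, the pair $(\Gamma(\ell_1,\ell_2,S),\otimes)$ is a semigroup.
\end{proof}
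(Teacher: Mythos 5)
Your associativity argument is, in substance, exactly the paper's proof: the paper only exhibits the rebracketing map $(a,(b,c))\mapsto((a,b),c)$ and calls the rest immediate, while you spell out why it works (the vertex condition $\nu_1(u_1)=\nu_2(u_2)=\nu_3(u_3)$ and the edge condition ``all three projections are edges carrying a common colour'' are independent of the bracketing). That part, and the remark that $\otimes$ descends to isomorphism types, are correct.

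The closure step, however, has a genuine gap. From $\xi_H(\{u,u'\})=\xi_H(\{a,a'\})$ and $\xi_F(\{v,v'\})=\xi_F(\{b,b'\})$ you correctly get $\{u,u'\}=\{a,a'\}$ and $\{v,v'\}=\{b,b'\}$, but the final inference ``hence $e=\tilde e$'' does not follow: two \emph{distinct} edges of the product can project to the same edge of $H$ and the same edge of $F$, namely the two crossed pairings $\{(u,v),(u',v')\}$ and $\{(u,v'),(u',v)\}$. Concretely, let $H=F=K_2$, each with its single edge coloured $1$ and all vertices carrying the same colour from $S$; both graphs lie in $\Gamma(\ell_1,\ell_2,S)$, yet $H\otimes F$ is a perfect matching on four vertices whose two edges both receive colour $1$, so the induced edge colouring is not injective. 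Hence the intermediate claim you are proving (injectivity of $\xi_{H\otimes F}$) is false in general, and injectivity is precisely what literal membership in $\Gamma(\ell_1,\ell_2,S)$, as defined, demands. The paper's proof sidesteps this by declaring closure ``clear'' and devoting its one substantive line to associativity; a rigorous treatment of closure would need either to relax the ambient set on which $\otimes$ is defined (drop the injectivity requirement) or an argument tailored to the products actually used later — not the projection argument you give. So: associativity correct and identical in spirit to the paper; closure as argued does not go through, and the failure is not merely presentational.
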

\begin{proof}
    $\Gamma(\ell_1,\ell_2,S)$ is clearly closed under the Tensor product.
    Thus we only need to show associativity, which is also immediate: the isomorphism from $(H,\nu_H,\xi_H)\otimes\big((F,\nu_F,\xi_F) \otimes (G,\nu_G,\xi_G)\big)$ to $\big((H,\nu_H,\xi_H)\otimes(F,\nu_F,\xi_F)\big) \otimes (G,\nu_G,\xi_G)$ is given by $(a,(b,c))\mapsto ((a,b),c)$.
\end{proof}

\begin{proposition}\label{prop:linear}
     Let $(F,\nu_F,\xi_F),(G,\nu_G,\xi_G),(H,\nu_H,\xi_H) \in \Gamma(\ell_1,\ell_2,S)$. We have 
     \begin{align*}
         ~&~\#\homs{(F,\nu_F,\xi_F)}{(G,\nu_G,\xi_G)\otimes (H,\nu_H,\xi_H)} \\
         =&~\#\homs{(F,\nu_F,\xi_F)}{(G,\nu_G,\xi_G)} \cdot \#\homs{(F,\nu_F,\xi_F)}{(H,\nu_H,\xi_H)}\,. 
     \end{align*}
\end{proposition}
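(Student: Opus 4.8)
This is a standard "homomorphisms from a fixed graph count multiplicatively over tensor products" identity, adapted to the $(\ell_1,\ell_2)$-coloured setting. The plan is to exhibit an explicit bijection between the homomorphism set on the left and the Cartesian product of the two homomorphism sets on the right, and then verify that this bijection respects all the colour constraints built into the tensor product.

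\textbf{Setting up the bijection.} First I would recall the definitions: a homomorphism $\psi \in \homs{(F,\nu_F,\xi_F)}{(G,\nu_G,\xi_G)\otimes (H,\nu_H,\xi_H)}$ assigns to each $v \in V(F)$ a vertex $\psi(v) = (\psi_1(v), \psi_2(v))$ of the tensor product, where by construction $\nu_G(\psi_1(v)) = \nu_H(\psi_2(v)) = \nu_{G\otimes H}(\psi(v)) = \nu_F(v)$. The map $\Phi \colon \psi \mapsto (\psi_1, \psi_2)$ is the candidate bijection, with inverse $(\varphi_1,\varphi_2) \mapsto (v \mapsto (\varphi_1(v),\varphi_2(v)))$. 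The content is checking that $\Phi$ is well-defined in both directions, i.e.\ that $\psi$ is a colour-respecting homomorphism into the tensor product if and only if $\psi_1$ and $\psi_2$ are each colour-respecting homomorphisms into $G$ and $H$ respectively.

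\textbf{Verifying the equivalence of conditions.} For the forward direction: given an edge $\{u,v\} \in E(F)$, the homomorphism condition for $\psi$ says $\{(\psi_1(u),\psi_2(u)),(\psi_1(v),\psi_2(v))\} \in E(G\otimes H)$; unwinding the tensor-product edge definition, this is exactly $\{\psi_1(u),\psi_1(v)\}\in E(G)$, $\{\psi_2(u),\psi_2(v)\}\in E(H)$, \emph{and} $\xi_G(\{\psi_1(u),\psi_1(v)\}) = \xi_H(\{\psi_2(u),\psi_2(v)\})$. For $\psi_1$ to lie in $\homs{(F,\nu_F,\xi_F)}{(G,\nu_G,\xi_G)}$ we additionally need $\xi_F(\{u,v\}) = \xi_G(\{\psi_1(u),\psi_1(v)\})$, and similarly for $\psi_2$. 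Here is where the injectivity of $\xi_F$ (from the definition of $\Gamma(\ell_1,\ell_2,S)$) does the work: since $\xi_F$ is injective, each edge of $F$ carries a distinct colour, and — I would phrase this carefully — the edge-colour-preservation condition for a homomorphism from $F$ into an injectively-edge-coloured target reduces to the statement that the colour of the image edge equals $\xi_F$ of the source edge. Combined with the common-colour condition $\xi_G = \xi_H$ on the tensor product edge, all three colour conditions ($\xi_F = \xi_G \circ \psi_1$, $\xi_F = \xi_H \circ \psi_2$, $\xi_G\circ\psi_1 = \xi_H\circ\psi_2$) become equivalent as a package. The vertex colour conditions were already observed to hold automatically. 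The reverse direction (assembling $\psi$ from $\psi_1,\psi_2$) is the same computation read backwards.

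\textbf{Conclusion.} Once $\Phi$ is established as a bijection from $\homs{(F,\nu_F,\xi_F)}{(G,\nu_G,\xi_G)\otimes (H,\nu_H,\xi_H)}$ onto $\homs{(F,\nu_F,\xi_F)}{(G,\nu_G,\xi_G)} \times \homs{(F,\nu_F,\xi_F)}{(H,\nu_H,\xi_H)}$, counting both sides gives the claimed product formula. I expect the only subtlety — the "main obstacle", though it is mild — to be stating cleanly why the edge-colour conditions are equivalent, i.e.\ making precise the role of the injectivity of $\xi_F$; everything else is routine unwinding of definitions. It may be cleanest to first prove a one-line lemma: if $\xi_F$ is injective, then a graph homomorphism $\varphi \colon F \to G$ with $\nu_G \circ \varphi = \nu_F$ is edge-colour-preserving (i.e.\ $\xi_G \circ \varphi = \xi_F$) if and only if $\xi_G(\varphi(e))$ does not depend on anything beyond $e$ — but in fact since we are comparing homomorphism \emph{sets} that by definition already require edge-colour preservation, the simplest route is just to note that membership in each of the three homomorphism sets bakes in the respective edge-colour condition, and the tensor-product edge relation forces exactly the conjunction of these, so no separate lemma is strictly needed.
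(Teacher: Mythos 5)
Your proof is correct and is essentially the paper's proof: the paper likewise defines the map $h \mapsto (h_1,h_2)$ via the two projections and observes it is a bijection between $\homs{(F,\nu_F,\xi_F)}{(G,\nu_G,\xi_G)\otimes (H,\nu_H,\xi_H)}$ and the product of the two homomorphism sets, which is exactly your $\Phi$. One small remark: the injectivity of $\xi_F$ plays no role here --- as your own closing sentence correctly observes, the edge-colour conditions for $\psi_1$ and $\psi_2$ follow directly from $\psi$ preserving the tensor product's edge colouring (which is defined as the common value $\xi_G=\xi_H$), so the middle discussion invoking injectivity can simply be dropped.
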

\begin{proof}
   Consider the mapping $b: h \mapsto (h_1,h_2)$ where $h_1(v)=\pi_1(h(v))$ and $h_2(v)=\pi_2(h(v))$ are the projections of $h$ to the first and second component, respectively. It is easy to see that $b$ is a bijection from $\homs{(F,\nu_F,\xi_F)}{(G,\nu_G,\xi_G)\otimes (H,\nu_H,\xi_H)}$ to  $\homs{(F,\nu_F,\xi_F)}{(G,\nu_G,\xi_G)} \times \homs{(F,\nu_F,\xi_F)}{(H,\nu_H,\xi_H)}$.
\end{proof}

The final ingredient is the following proposition. Its  proof follows the same lines as the classical argument of~\lovasz (see~\cite[Chapter 5.4]{Lovasz12}), but for reasons of self-containment we include a proof.

\begin{proposition}\label{prop:distinct}
    Let $(F,\nu_F,\xi_F),(H,\nu_H,\xi_H) \in \Gamma_{\text{\sf{inj}}}(\ell_1,\ell_2,S)$. If $(F,\nu_F,\xi_F)\ncong (H,\nu_H,\xi_H)$ then there exists $(G,\nu_G,\xi_G)\in \Gamma_{\text{\sf{inj}}}(\ell_1,\ell_2,S)$ such that
    \[ \#\homs{(F,\nu_F,\xi_F)}{(G,\nu_G,\xi_G)} \neq \#\homs{(H,\nu_H,\xi_H)}{(G,\nu_G,\xi_G)} \,.\]
\end{proposition}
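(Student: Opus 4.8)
The plan is to adapt the classical argument of \lovasz\ (cf.\ \cite[Chapter 5.4]{Lovasz12}) that homomorphism vectors determine a graph up to isomorphism, carrying the colour data along. The first ingredient is a factorisation of homomorphisms through their image. Given $(G,\nu_G,\xi_G)\in\Gamma(\ell_1,\ell_2,S)$, every $\varphi\in\homs{(F,\nu_F,\xi_F)}{(G,\nu_G,\xi_G)}$ determines the partition $\rho_\varphi\in\ppart(F)$ identifying two vertices iff they have the same image; using that $\varphi$ preserves both colourings and that $G$ is simple, one checks that $\rho_\varphi$ is colour-consistent, that $(F,\nu_F,\xi_F)/\rho_\varphi$ is loopless, and that $\varphi$ decomposes uniquely as $h_{\rho_\varphi}$ followed by a colour-preserving embedding of $(F,\nu_F,\xi_F)/\rho_\varphi$ into $(G,\nu_G,\xi_G)$. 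Conversely, every pair consisting of a colour-consistent partition $\rho$ with $(F,\nu_F,\xi_F)/\rho$ loopless together with a colour-preserving embedding of the quotient into $G$ arises from a unique such $\varphi$. Crucially, colour-consistency together with injectivity of $\xi_F$ guarantees that $(F,\nu_F,\xi_F)/\rho$ again lies in $\Gamma(\ell_1,\ell_2,S)$ (distinct edges still receive distinct colours). Hence
\[
\#\homs{(F,\nu_F,\xi_F)}{(G,\nu_G,\xi_G)} \;=\; \sum_{\rho} \#\embs{(F,\nu_F,\xi_F)/\rho}{(G,\nu_G,\xi_G)},
\]
the sum over colour-consistent $\rho$ with loopless quotient; this may alternatively be derived as the M\"obius inverse of Lemma~\ref{lem:col_embs_to_homs}.

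Next I would pass to a finite index set. Let $\mathcal{Q}$ be the set of isomorphism types of $(\ell_1,\ell_2)$-coloured graphs occurring as such a quotient of $(F,\nu_F,\xi_F)$ or of $(H,\nu_H,\xi_H)$; note $(F,\nu_F,\xi_F),(H,\nu_H,\xi_H)\in\mathcal{Q}$ via $\rho=\bot$. For $K\in\mathcal{Q}$ let $a^F_K$ (resp.\ $a^H_K$) be the number of admissible partitions of $F$ (resp.\ $H$) whose quotient is isomorphic to $K$. Grouping the display above by isomorphism type gives, for every $(G,\nu_G,\xi_G)\in\Gamma(\ell_1,\ell_2,S)$,
\[
\#\homs{(F,\nu_F,\xi_F)}{(G,\nu_G,\xi_G)} \;=\; \sum_{K\in\mathcal{Q}} a^F_K\cdot \#\embs{K}{(G,\nu_G,\xi_G)},
\]
and likewise for $H$. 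Arguing by contradiction, if $\#\homs{(F,\nu_F,\xi_F)}{(G,\nu_G,\xi_G)}=\#\homs{(H,\nu_H,\xi_H)}{(G,\nu_G,\xi_G)}$ for all $(G,\nu_G,\xi_G)$, then in particular $\sum_{K\in\mathcal{Q}}(a^F_K-a^H_K)\,\#\embs{K}{L}=0$ for every $L\in\mathcal{Q}$.

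The heart of the argument is the invertibility of the $\mathcal{Q}\times\mathcal{Q}$ matrix $M=\big(\#\embs{K}{L}\big)_{K,L}$. Order $\mathcal{Q}$ by increasing number of vertices, breaking ties by decreasing number of edges. If $\#\embs{K}{L}\neq 0$ then an injective colour-preserving homomorphism $K\to L$ forces $|V(K)|\le|V(L)|$; when $|V(K)|=|V(L)|$ it is a vertex-bijection carrying distinct edges to distinct edges, so $|E(K)|\le|E(L)|$; and when additionally $|E(K)|=|E(L)|$ it is bijective on edges as well, hence a colour-preserving isomorphism, so $K\cong L$. Thus with this ordering $M$ is block upper triangular, each diagonal block being a diagonal matrix whose entries $\#\embs{K}{K}$ are the numbers of colour-preserving automorphisms of $K$, all at least $1$; therefore $M$ is invertible. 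Consequently $a^F_K=a^H_K$ for all $K\in\mathcal{Q}$. Taking $K=(F,\nu_F,\xi_F)$, we have $a^F_K\ge 1$ (witnessed by $\rho=\bot$), hence $a^H_K\ge 1$, so some quotient of $(H,\nu_H,\xi_H)$ is isomorphic to $(F,\nu_F,\xi_F)$; in particular $|V(F)|\le|V(H)|$, and by symmetry $|V(F)|=|V(H)|$, which forces the quotient partition to be $\bot$ and hence $(F,\nu_F,\xi_F)\cong(H,\nu_H,\xi_H)$ --- a contradiction.

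I expect the main obstacle to be the triangularity bookkeeping rather than any deep idea: one must carefully check that colour-consistency plus injectivity of the edge colourings keeps all quotients within $\Gamma(\ell_1,\ell_2,S)$, and that the "same number of vertices and edges" case of an embedding collapses to a genuine coloured isomorphism. This last point is exactly where injectivity of $\xi$ enters, and it is the only place where the coloured setting differs in substance from \lovasz's original argument; the remainder is the standard linear-algebra packaging.
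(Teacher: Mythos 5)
Your proof is correct, but it takes a genuinely different route from the paper's. The paper adapts the surjective-homomorphism variant of \lovasz's argument: it introduces $\surhoms{\cdot}{\cdot}$, expresses it by inclusion--exclusion over induced subgraphs of the target, and then evaluates the assumed equality of homomorphism counts at the targets $(H,\nu_H,\xi_H)$ and $(F,\nu_F,\xi_F)$ themselves to obtain surjective colour-preserving homomorphisms in both directions, which forces isomorphism. You instead factor each homomorphism through its image, writing $\#\homs{(F,\nu_F,\xi_F)}{\cdot}$ as a sum of $\#\embs{\cdot}{\cdot}$ over colour-consistent loopless quotients (correctly observing that injectivity of $\xi_F$ keeps all such quotients inside $\Gamma(\ell_1,\ell_2,S)$), and then invert a finite triangular embedding matrix over quotient types to equate the coefficient vectors and conclude via mutual quotients. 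The paper's route is shorter---no quotient bookkeeping and no linear algebra, since the inclusion--exclusion identity plus the evaluation trick do all the work---while yours makes the hom/emb basis change explicit (essentially Lemma~\ref{lem:col_embs_to_homs} and its inverse) and would also let you extract individual coefficients, which is closer in spirit to how the paper later uses these bases algorithmically. One small fix: with your ordering (increasing $|V|$, ties broken by \emph{decreasing} $|E|$), a nonzero entry $\#\embs{K}{L}$ with $|V(K)|=|V(L)|$ and $|E(K)|<|E(L)|$ lands \emph{below} the diagonal, so you should break ties by \emph{increasing} edge count (or call the matrix lower triangular); moreover the diagonal blocks are diagonal only if you take blocks indexed by the pair $(|V|,|E|)$ rather than by $|V|$ alone. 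Neither point affects invertibility, and the rest of your argument goes through unchanged.
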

\begin{proof}
    For any pair of $(\ell_1,\ell_2)$-coloured graphs $(A,\nu_A,\xi_A),(G,\nu_G,\xi_G) \in \Gamma_{\text{\sf{inj}}}(\ell_1,\ell_2,S)$ we set
    \[ \surhoms{(A,\nu_A,\xi_A)}{(G,\nu_G,\xi_G)} := \{ \varphi \in \homs{(A,\nu_A,\xi_A)}{(G,\nu_G,\xi_G)} \mid \varphi \text{ is surjective} \} \]
    Given a subset $J \subseteq V(G)$, we set $(G,\nu_G,\xi_G)[J] := (G[J],\nu_G|_{J},\xi_G|_{E(G[J])})$,
    where $G[J]$ is the subgraph of $G$ induced by $J$, that is, $V(G[J])=J$ and $E(G[J])=\{e \in E(G) \mid e \subseteq J\}$. Note that $(G,\nu_G,\xi_G)[J]$ is still an element of $\Gamma_{\text{\sf{inj}}}(\ell_1,\ell_2,S)$.
    By the principle of inclusion and exclusion, we have
    \[\#\surhoms{(A,\nu_A,\xi_A)}{(G,\nu_G,\xi_G)} = \sum_{J \subseteq V(G)} (-1)^{|V(G)\setminus J|} \cdot \#\homs{(A,\nu_A,\xi_A)}{(G,\nu_G,\xi_G)[J]} \,.\]
    Now assume for contradiction that for all $(G,\nu_G,\xi_G) \in \Gamma_{\text{\sf{inj}}}(\ell_1,\ell_2,S)$ we have
    \[ \#\homs{(F,\nu_F,\xi_F)}{(G,\nu_G,\xi_G)} = \#\homs{(H,\nu_H,\xi_H)}{(G,\nu_G,\xi_G)} \,.\]
    Then, in particular, we have
    \begin{align*}
        \#\surhoms{(F,\nu_F,\xi_F)}{(H,\nu_H,\xi_H)} &= \sum_{J \subseteq V(H)} (-1)^{|V(H)\setminus J|} \cdot \#\homs{(F,\nu_F,\xi_F)}{(H,\nu_H,\xi_H)[J]}\\
        ~& = \sum_{J \subseteq V(H)} (-1)^{|V(H)\setminus J|} \cdot \#\homs{(H,\nu_H,\xi_H)}{(H,\nu_H,\xi_H)[J]}\\
        ~&= \#\surhoms{(H,\nu_H,\xi_H)}{(H,\nu_H,\xi_H)} > 0\,.
    \end{align*}
    Similarly, we have $\#\surhoms{(H,\nu_H,\xi_H)}{(F,\nu_F,\xi_F)}>0$. 

    Finally, it is easy to see that $(F,\nu_F,\xi_F)$ and $(H,\nu_H,\xi_H)$ must already be isomorphic if there are surjective homomorphisms from $(F,\nu_F,\xi_F)$ to $(H,\nu_H,\xi_H)$ and from $(H,\nu_H,\xi_H)$ to $(F,\nu_F,\xi_F)$. This yields the desired contradiction and concludes the proof. 
\end{proof}

\subsection{Statement and Proof of the Equivalence}

\begin{lemma}\label{lem:main_equivalence_hard}
    Let $\mathcal{S}$ be a finite set of signatures. We have
    
    \[ \holantprob(\mathcal{S}) \fptinterlinred \holantprobstar(\mathcal{S}) \]
\end{lemma}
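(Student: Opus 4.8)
The plan is to prove $\holantprob(\mathcal{S}) \fptlinred \holantprobstar(\mathcal{S})$; the reverse direction is Fact~\ref{fact:easy_direction_equivalence}. So let $\Omega=(G,\xi,\{s_v\}_{v\in V(G)})$ be an arbitrary $k$-edge-coloured signature grid; the difficulty is that the edge-colouring $\xi$ need not arise from an $H$-colouring of $G$, and moreover two vertices $u,v$ of $G$ with the same would-be colour could carry different signatures $s_u\neq s_v$. Both obstructions must be removed. First I would enrich the picture: turn $G$ into an $(\ell_1,\ell_2)$-coloured graph by using the signature assignment as a vertex-colouring $\nu$ (so $\nu(v)$ records which of the finitely many signatures in $\mathcal{S}$ sits at $v$) and using $\xi$ as the edge-colouring into $[k]$. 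Since $|\mathcal{S}|=\ell$ is a constant and $k$ is the parameter, the number $\ell_1=\ell$ is constant and $\ell_2=k$, and a colourful edge-subset of $\Omega$ is exactly an injectively-$k$-edge-coloured subgraph of $(G,\nu,\xi)$; the holant value only depends on, for each isomorphism type of such a coloured subgraph pattern, how many copies it has and the product of signature values along it.

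The key step is to express $\holant(\Omega)$ as a finite linear combination of terms $\#\subs{(H,\nu_H,\xi_H)}{(G,\nu,\xi)}$ where $(H,\nu_H,\xi_H)$ ranges over the (finitely many, parameter-bounded) isomorphism types of patterns with at most $k$ edges, $\xi_H$ injective into $[k]$, and $\nu_H$ into the signature set $S$. Concretely, a colourful $A\subseteq E(G)$ spans a set of at most $2k$ vertices carrying colours in $S$; grouping the terms of $\holant(\Omega)=\sum_{A\text{ colourful}}\prod_v s_v(|A\cap E(v)|)$ by the isomorphism type of the coloured graph $(V(A),A,\nu,\xi)$ (with isolated vertices deleted), the product $\prod_v s_v(\cdots)$ depends only on the type, so we get $\holant(\Omega)=\sum_{(H,\nu_H,\xi_H)} c(H,\nu_H,\xi_H)\cdot \#\subs{(H,\nu_H,\xi_H)}{(G,\nu,\xi)}$ for explicitly computable coefficients $c$ depending only on $\mathcal{S}$ and $k$. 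Now apply Corollary~\ref{cor:col_sub_to_hom} to rewrite each subgraph count as a $\mathbb{Q}$-linear combination of coloured homomorphism counts $\#\homs{(H,\nu_H,\xi_H)/\rho}{(G,\nu,\xi)}$ over colour-consistent partitions $\rho$. Each such homomorphism count is, up to reinterpreting the coloured target as an $H'$-coloured graph (where $H'$ is the quotient pattern, an element of $\Gamma(\ell_1,\ell_2,S)$ with $H'$ itself the colour-host), exactly an instance of $\holantprobstar(\mathcal{S})$ — specifically, choosing the signature that outputs $s(x)=\mathbf{1}[x=d_{H'}(v)]$... no: rather, a homomorphism count equals a holant-star value when we equip each pattern-vertex with a signature that detects its prescribed degree and make the target grid's signatures match the colour. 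This is the standard ``$\#\cphomsprob$ is a special case of counting colourful substructures'' trick run in reverse, and it is where I expect to lean on the semigroup/linear-independence machinery (Propositions~\ref{prop:semigroup}, \ref{prop:linear}, \ref{prop:distinct}) to justify that the reduction is faithful and that the oracle queries are genuine $\holantprobstar$ instances with parameter bounded by a function of $k$.

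For the algorithmic bookkeeping: the number of pattern types $(H,\nu_H,\xi_H)$ and of quotients $(H,\nu_H,\xi_H)/\rho$ is bounded by a function of $k$ only (since patterns have $\le k$ edges), so the total number of oracle calls is $f(k)$; each query has a target that is essentially $(G,\nu,\xi)$ possibly tensored with a small fixed coloured graph to realise the $H'$-colouring, hence of size $O(f(k)\cdot|\Omega|)$ with parameter $|H'|\le 2k$; and constructing all these, together with the coefficients $c$ and the Möbius values $\mu(\rho)$, takes time $f(k)\cdot O(|\Omega|)$. Combining the oracle answers by the (finitely many, precomputed) rational coefficients then yields $\holant(\Omega)$, giving a linear FPT Turing-reduction. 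The main obstacle, and the part requiring the most care, is the faithful translation of a coloured homomorphism count $\#\homs{H'}{(G,\nu,\xi)}$ into a genuine $\holantprobstar(\mathcal{S})$ instance: one must choose signatures in $\mathcal{S}$ (not arbitrary indicator functions) so that the holant of the constructed grid equals the homomorphism count up to a computable nonzero factor, and this is where the structure of $\mathcal{S}$, the $(\ell_1,\ell_2)$-colouring formalism, and Proposition~\ref{prop:distinct} (to separate distinct patterns by a suitable target) all come into play; I would handle it by first reducing, via the linear-algebraic independence of homomorphism vectors over $\Gamma(\ell_1,\ell_2,S)$, the evaluation of each individual $\#\homs{H'}{(G,\nu,\xi)}$ to the evaluation of $\holant$ on inputs whose edge-colourings do come from an $H'$-colouring, which is precisely the definitional shape of $\holantprobstar$.
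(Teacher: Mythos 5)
Your proposal is correct in outline and takes essentially the same route as the paper: view the signature assignment as an $(\ell,k)$ vertex/edge-colouring, expand $\holant(\Omega)$ over isomorphism types of colourful edge-subsets, pass to coloured homomorphism counts via Corollary~\ref{cor:col_sub_to_hom}, and then evaluate the resulting linear combination through the $\holantprobstar(\mathcal{S})$ oracle by tensoring with the members of $\Gamma(\ell,k,\mathcal{S})$ and invoking the semigroup/multiplicativity/distinctness machinery (Propositions~\ref{prop:semigroup}--\ref{prop:distinct}), which is exactly the paper's Dedekind-interpolation step. One caution you already half-note yourself: the idea of equipping pattern vertices with degree-detecting indicator signatures is not available (oracle signatures must come from $\mathcal{S}$); the workable mechanism is the one you settle on at the end --- the oracle is only ever called on the tensored grids, whose edge-colouring comes from an $H$-colouring and whose colour classes automatically carry a single signature by construction of the tensor product, and the individual weighted homomorphism terms are then recovered by solving the linear system these oracle answers provide.
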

\begin{proof}
By Fact~\ref{fact:easy_direction_equivalence}, we have $\holantprobstar(\mathcal{S}) \fptlinred \holantprob(\mathcal{S})$. In what follows, we prove the backwards direction.
    Let $k$ and $(G,\xi,\{s_v\}_{v\in V(G)})$ be the input to $\textsc{p-Holant}(\mathcal{S})$.
    Let $\ell=|\mathcal{S}|$, fix any ordering $s_1,\dots,s_\ell$ of $\mathcal{S}$ and, for each $i\in[\ell]$, we set $n^G_i$ as the number of vertices $v$ of $G$ with $s_v=s_i$. 
    
    Our goal is to compute 
     \[ \holant(\Omega) = \sum_{\substack{A \subseteq E(G)\\ A \text{ colourful}}} \prod_{v\in V(G)} s_v(|A \cap E(v)|) \]
     It will be convenient for the proof to consider the function $\nu(v):=s_v$ as an $\ell$-vertex colouring of $G$. In particular, we will consider the $(\ell,k)$-coloured graph $(G,\nu,\xi)$.

     Let $A\subseteq E(G)$ be a colourful edge-subset of $G$, and let $G\{A\}$ be the graph obtained from $(V(G),A)$ by deleting isolated vertices, that is, we just delete all edges not contained in $A$, and then we delete all isolated vertices. We say that two colourful edge-subsets $A_1$ and $A_2$ are equivalent, denoted by $A_1 \sim A_2$, if $(G\{A_1\},\nu|_{V(G\{A_1\})},\xi|_{A_1}) \cong (G\{A_2\},\nu|_{V(G\{A_1\})},\xi|_{A_1})$. Given an $(\ell,k)$-coloured graph $(H,\nu_H,\xi_H)$ with $|E(H)|=k$, $\nu_H: V(H) \to \mathcal{S}$ and bijective $\xi_H: E(H) \to [k]$, we write $[(H,\nu_H,\xi_H)]$ for the equivalence class of $\sim$ containing all $A$ with $(G\{A\},\nu|_{V(G\{A\})},\xi|_{A}) \cong (H,\nu_H,\xi_H)$ --- note that, for avoiding notational clutter, we are slightly abusing notation here, since $[(H,\nu_H,\xi_H)]$ might be empty.

     Now fix such an $(\ell,k)$-coloured graph $(H,\nu_H,\xi_H)$. For each $i\in[\ell]$ let $n^H_i$ be number of vertices of $H$ coloured by $\nu_H$ with $s_i$. Recalling that $\nu_H$ maps the vertices of $H$ to signatures in $\mathcal{S}$, we observe that for each $A\in [(H,\nu_H,\xi_H)]$ we have
     \[ \prod_{v\in V(G)}s_v(|A \cap E(v)|)= \prod_{u \in V(H)}\nu_H(u)(d_H(u)) \cdot \prod_{i=1}^\ell s_i(0)^{n^G_i-n^H_i}  \,.\]

     Let us write $C_k$ for the set of all (isomorphism types of) $(\ell,k)$-coloured graph $(H,\nu_H,\xi_H)$ with $|E(H)|=k$, $\nu_H: V(H) \to \mathcal{S}$ and bijective $\xi_H: E(H) \to [k]$.
     Then we have
     \[\holant(\Omega)= \prod_{i=1}^\ell s_i(0)^{n^G_i}\cdot \sum_{(H,\nu_H,\xi_H)\in C_k}|[(H,\nu_H,\xi_H)]| \cdot \prod_{u \in V(H)}\nu_H(u)(d_H(u)) \cdot \prod_{i=1}^\ell s_i(0)^{-n^H_i}\,. \]
     Observe that, by definition of $\sim$, we have
     \[ [(H,\nu_H,\xi_H)]| = \#\subs{(H,\nu_H,\xi_H)}{(G,\nu,\xi)} \,,\]
     which, by Corollary~\ref{cor:col_sub_to_hom}, implies that
     \[[(H,\nu_H,\xi_H)]| = \#\auts(H,\nu_H,\xi_H)^{-1} \sum_{\rho \in\ppart(H,\nu_H,\xi_H)} \mu(\rho) \cdot \#\homs{(H,\nu_H,\xi_H)/\rho}{(G,\nu,\xi)} \,.\]
     Now, given an $(\ell,k)$-coloured graph $(F,\nu_F,\xi_F)$, we set
     \[ \mathsf{coeff}(F,\nu_F,\xi_F):= \sum_{(H,\nu_H,\xi_H)\in C_k} \prod_{u \in V(H)}\nu_H(u)(d_H(u)) \prod_{i=1}^\ell s_i(0)^{-n^H_i} \cdot \#\auts(H,\nu_H,\xi_H)^{-1} \sum_{\substack{\rho \in \ppart(H,\nu_H,\xi_H)\\ (H,\nu_H,\xi_H)/\rho \cong (F,\nu_F,\xi_F)}} \mu(\rho)  \,.\]
     Note that $(H,\nu_H,\xi_H)/\rho \cong (F,\nu_F,\xi_F)$ is only possible if $(F,\nu_F,\xi_F)$ is contained in $C_k$ as well --- in particular, $\xi_F$ must be surjective since taking a quotient of $(H,\nu_H,\xi_H)$ w.r.t.\ a colour-consistent partition can never delete an edge-colour. Thus, we obtain
     \begin{equation}
         \holant(\Omega)= \prod_{i=1}^\ell s_i(0)^{n^G_i} \cdot \sum_{(F,\nu_F,\xi_F) \in C_k} \mathsf{coeff}(F,\nu_F,\xi_F) \cdot \#\homs{(F,\nu_F,\xi_F)}{(G,\nu,\xi)}\,.
     \end{equation}

     Clearly, $\prod_{i=1}^\ell s_i(0)^{n^G_i}$ can be computed in near-linear time, since the $s_i(0)$ are constants only depending on $\mathcal{S}$, which is fixed. It thus remains to be proved that we can compute 
     \[ \Phi(\Omega):= \sum_{(F,\nu_F,\xi_F)\in C_k} \mathsf{coeff}(F,\nu_F,\xi_F) \cdot \#\homs{(F,\nu_F,\xi_F)}{(G,\nu,\xi)}\,,\]
     in FPT-near-linear time using our oracle for $\holantprobstar(\mathcal{S})$.

     First of all, for $(F,\nu_F,\xi_F)\in C_k$, the term $\mathsf{coeff}(F,\nu_F,\xi_F)$ only depends on $k$ and $\mathcal{S}$ and so it can be computed in time only depending on $k$, as $\mathcal{S}$ is fixed. Moreover, the same is true for the set $C_k$. Hence it suffices to compute the terms \[
     T[(F,\nu_F,\xi_F),(G,\nu,\xi)] := \mathsf{coeff}(F,\nu_F,\xi_F) \cdot \#\homs{(F,\nu_F,\xi_F)}{(G,\nu,\xi)}\] for all $(F,\nu_F,\xi_F)\in C_k$.

     To this end, recall that $\Gamma_{\text{\sf{inj}}}(\ell,k,\mathcal{S})$ contains all isomorphism types of $(\ell,k)$-coloured graphs $(H,\nu_H,\xi_H)$ with $\nu_H: V(H) \to \mathcal{S}$ and \emph{injective} $\xi_H:E(H)\to [k]$. 
     We write $\Omega\otimes (H,\nu_H,\xi_H)$ for the signature grid obtained from $(G,\nu,\xi)\otimes (H,\nu_H,\xi_H)$ by equipping a vertex $(u,v)$ with the signature $\nu(u)$ (which is equal to $\nu_H(v)$ by definition of $\otimes$). Let $\hat{G}$ be the underlying (uncoloured) graph of $(G,\nu,\xi)\otimes (H,\nu_H,\xi_H)$.\footnote{Note that $\hat{G}$ is \emph{not} the Tensor product $G\otimes H$ of the uncoloured graphs.}

     Now, the crucial property that allows for the use of our oracle is the fact that, for $(H,\nu_H,\xi_H) \in \Gamma_{\text{\sf{inj}}}(\ell,k,\mathcal{S})$, the graph $\hat{G}$ admits a canonical $H$-colouring $h$ by setting $h(u,v)=v$. 
     \begin{claim}\label{claim:help_in_equivalence}
     We have
    \[\holant(\Omega\otimes (H,\nu_H,\xi_H))= \begin{cases}\holant(\hat{G},h,\{s_v\}_{v\in V(\hat{G})}) & \xi_H \text{ is bijective}\\
     0 & \text{otherwise}
     \end{cases}\,.\]
     \end{claim}
     \begin{claimproof}
     The second case above refers to the situation in which $(H,\mu_H,\xi_H)$ is missing one of the $k$ edge-colours (since $\xi_H$ must be injective in any case). Then $\Omega\otimes (H,\nu_H,\xi_H)$ will miss this colour too, and the holant value is just $0$ since no colourful edge-subsets exists. 
    
     For the first case observe that, if $\xi_H$ is bijective then $H$ contains exactly $k$ edges, each one coloured with a unique colour in~$[k]$. Hence, a $k$-edge-subset of $\hat{G}$ is colourful w.r.t.\ the edge-colouring of the Tensor product if and only if it is colourful w.r.t.\ the $H$-colouring $h$; this shows the first case and concludes the proof of this claim.
     \end{claimproof}
     
     Next, let $(u,v),(u',v')$ be two vertices of $\hat{G}$ and assume $h(u,v)=h(u',v')$, implying that $v=v'$, and thus, by definition of $\otimes$, we have $\nu_{G\otimes H}(u,v)=\nu_H(v)=\nu_{G\otimes H}(u',v)$. This shows that $(\hat{G},h,\{s_v\}_{v\in V(\hat{G})})$ is an instance of $\holantprobstar(\mathcal{S})$. 
     
     Thus, for each $(H,\nu_H,\xi_H) \in \Gamma_{\text{\sf{inj}}}(\ell,k,\mathcal{S})$, the term ${\holant(\Omega\otimes (H,\nu_H,\xi_H))}\cdot ({\prod_{i=1}^\ell s_i(0)^{n^{\hat{G}}_i}})^{-1}$ can be computed by either outputting $0$ or by using our oracle, depending on the cases of Claim~\ref{claim:help_in_equivalence}. This quantity equals
     \begin{align*}
        \Phi(\Omega \otimes (H,\nu_H,\xi_H))) &= \sum_{(F,\nu_F,\xi_F)\in C_k}\mathsf{coeff}(F,\nu_F,\xi_F) \cdot \#\homs{(F,\nu_F,\xi_F)}{(G,\nu,\xi)\otimes (H,\nu_H,\xi_H)} \\
        ~&=\sum_{(F,\nu_F,\xi_F)\in C_k} T[(F,\nu_F,\xi_F),(G,\nu,\xi)] \cdot \#\homs{(F,\nu_F,\xi_F)}{(H,\nu_H,\xi_H)}
     \end{align*}
     where the last equation holds by Proposition~\ref{prop:linear}.
     Since $C_k \subseteq \Gamma_{\text{\sf{inj}}}(\ell,k,\mathcal{S})$, and by Propositions~\ref{prop:semigroup}-\ref{prop:distinct}, all conditions for the application of Dedekind Interpolation~\cite[Theorem 18]{BLR2023stoc} are satisfied, which allows us to compute all $T[(F,\nu_F,\xi_F),(G,\nu,\xi)]$ in FPT-(near)-linear time. While we omit the details of Dedekind Interpolation, we emphasize that the oracle we invoke is given by the following map \[\Phi(\Omega\,\otimes\,\star) : \Gamma(\ell, k, \mathcal{S}) \mapsto \sum_{(F,\nu_F,\xi_F)\in C_k} T[(F,\nu_F,\xi_F),(G,\nu,\xi)] \cdot \#\homs{(F,\nu_F,\xi_F)}{\star}\,.\]
     In particular, the oracle is only queried for graphs of the form $\bigotimes_{i=1}^m(H_i, \nu_{H_i}, \xi_{H_i})$, where $m$ depends only on $k$ and each $(H_i, \nu_{H_i}, \xi_{H_i})$ is a member of $\Gamma_{\text{\sf{inj}}}(\ell, k, \mathcal{S})$. Note that the signature grid $\Omega \otimes \left(\bigotimes_{i=1}^m(H_i, \nu_{H_i}, \xi_{H_i})\right)$ is the same as $\left(\Omega\,\otimes\,\left(\bigotimes_{i=1}^{m-1}(H_i, \nu_{H_i}, \xi_{H_i})\right)\right)\,\otimes\,(H_m, \nu_{H_m}, \xi_{H_m})$, which follows from associativity of the tensor product. Since $\xi_{H_m}$ is injective, we can compute $\Phi\left(\Omega \otimes \left(\bigotimes_{i=1}^m(H_i, \nu_{H_i}, \xi_{H_i})\right)\right)$ according to \Cref{claim:help_in_equivalence}. Finally, the promised runtime complexity is guaranteed by observing that the number of vertices (resp. the number of edges) of the signature grid $\Omega \otimes \left(\bigotimes_{i=1}^m(H_i, \nu_{H_i}, \xi_{H_i})\right)$ is at most $g(k)\cdot |V(\Omega)|$ (resp. $g(k)\cdot|E(\Omega)|)$, for some computable function $g$ depending only on $k$, which concludes the proof.
\end{proof}

\section{Classification for $\holantprobstar$}

We start with the following transformation, which can be considered a weighted version of the (first part of the) transformation in~\cite[Lemma 4.1]{PeyerimhoffRSSVW23}, and which follows similar arguments. However, due to various technicalities regarding the vertex signatures, we provide a proof nevertheless. 

\begin{lemma}\label{lem:holant_star_to_cpembs}
Let $H$, $(G,h,\{s_v\}_{v\in V(G)})$ be an instance of $\holantprobstar(\mathcal{S})$ for some finite set of signatures $\mathcal{S}$. Assume that $V(H)=\{v_1,\dots,v_z\}$, and, for each $i\in [z]$, set $n_i$ as the number of vertices of $G$ coloured by $h$ with $v_i$, and let $s_i$ be the signature of the vertices coloured by $h$ with $v_i$. 
    Then
    \[ \holant(G,h,\{s_v\}_{v\in V(G)}) = \prod_{i=1}^z s_i(0)^{n_i} \cdot \sum_{\vec{\sigma}\in \mathcal{F}(H)} \#\embscp(\fracture{H}{\vec{\sigma}} \to (G,h)) \cdot \left(\prod_{i=1}^z \prod_{B \in \vec{\sigma}(v_i)} \frac{s_i(|B|)}{s_i(0)}  \right) \]
\end{lemma}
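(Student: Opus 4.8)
The plan is to expand the holant value directly from its definition and recognise the fracture structure as bookkeeping for how a colourful edge-subset $A$ distributes around each vertex. Concretely, I would start from
\[
\holant(G,h,\{s_v\}_{v\in V(G)}) = \sum_{\substack{A\subseteq E(G)\\ A\text{ colourful}}} \prod_{v\in V(G)} s_v(|A\cap E(v)|)\,.
\]
Fix a colourful $A$. Since $A$ is colourful it picks exactly one edge of each of the $|E(H)|$ colours, and because $h$ is an $H$-colouring, $A$ defines, for each vertex $v$ of $G$, a subset of the edge-colours present at $v$. The key observation is that $A$ gives rise to a canonical fracture $\vec\sigma_A$ of $H$: for each colour class (vertex $v_i$ of $H$) we look at how the edges of $A$ incident to the $v_i$-coloured vertices of $G$ are grouped — each $v_i$-coloured vertex $u$ of $G$ collects a block $A\cap E(u)$ of edge-colours of $H$ incident to $v_i$, and the singleton ``no edge'' groups get refined away. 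The precise route is to not try to define $\vec\sigma_A$ abstractly but rather to reorganise the sum: group colourful $A$ according to the isomorphism type of the ``pattern'' they trace out, which is exactly a copy of some fractured graph $\fracture{H}{\vec\sigma}$ mapped into $(G,h)$ in a colour-prescribed way.

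The cleanest way to carry this out is as follows. First, separate the vertices of $G$ touched by $A$ from those untouched: a $v_i$-coloured vertex with $A\cap E(u)=\emptyset$ contributes the factor $s_i(0)$, and there are $n_i$ many $v_i$-coloured vertices in total, so pulling out $\prod_i s_i(0)^{n_i}$ and dividing the contribution of each touched vertex by $s_i(0)$ gives the normalising factors $s_i(|B|)/s_i(0)$ appearing in the statement. Next, observe that the set of touched vertices together with the edges of $A$ — after deleting isolated vertices — is precisely (the image of) a colour-prescribed embedding of some fractured graph $\fracture{H}{\vec\sigma}$ into $(G,h)$: the block $v_i^B$ of the fractured graph corresponds to a $v_i$-coloured touched vertex whose incident $A$-edges realise exactly the colour set $B$, and two blocks are adjacent in $\fracture{H}{\vec\sigma}$ iff they share a common edge-colour, which matches the adjacency in $G$ because $A$ picks exactly one edge per colour. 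This bijection between colourful edge-subsets $A$ and pairs $(\vec\sigma, \varphi)$ with $\varphi\in \embscp(\fracture{H}{\vec\sigma}\to(G,h))$ is the heart of the argument; once established, the product over touched vertices $v$ of $s_v(|A\cap E(v)|)/s_i(0)$ becomes exactly $\prod_{i}\prod_{B\in\vec\sigma(v_i)} s_i(|B|)/s_i(0)$, since the block $B$ at $v_i^B$ has size $|B|$ equal to the number of $A$-edges at the corresponding vertex of $G$.

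Summing over $\vec\sigma\in\mathcal F(H)$ and then over $\varphi\in\embscp(\fracture{H}{\vec\sigma}\to(G,h))$ then reproduces the right-hand side, with the factor $\prod_i\prod_{B\in\vec\sigma(v_i)} s_i(|B|)/s_i(0)$ pulled out of the embedding sum because it depends only on $\vec\sigma$ and not on $\varphi$.

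\textbf{Main obstacle.} The delicate point is verifying that the correspondence $A \leftrightarrow (\vec\sigma,\varphi)$ is genuinely a bijection, and in particular that $\varphi$ is an \emph{embedding} (injective) rather than merely a homomorphism. Injectivity on the block-vertices $v_i^B$ of distinct blocks $B$ is automatic because a vertex $u$ of $G$ determines $A\cap E(u)$ uniquely, so two distinct blocks of $\vec\sigma(v_i)$ cannot map to the same $u$. One must also check that edges of $\fracture{H}{\vec\sigma}$ map to genuine edges of $G$ and no edge-identifications occur — this uses colourfulness of $A$ in an essential way: distinct edge-colours of $H$ incident to a common vertex $v_i$ correspond to distinct edges of $A$, so they land on distinct edges of $G$, and an edge $\{v_i^B, v_j^{B'}\}$ of $\fracture{H}{\vec\sigma}$ (meaning $B\cap B'$ contains a colour $c=\{v_i,v_j\}$) is mapped to the unique $A$-edge of colour $c$, which is incident to the appropriate vertices by construction. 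One should also handle the degenerate edge case where $\fracture{H}{\vec\sigma}$ has isolated vertices — this happens exactly at $v_i$ with $\vec\sigma(v_i)=\bot$ being trivial, and these simply do not arise because isolated vertices of $H$ are excluded and a block of $\vec\sigma(v_i)$ always contains at least one colour, hence at least one edge. Carefully matching up the bookkeeping of untouched vertices (contributing $s_i(0)^{n_i - (\text{number of blocks used at }v_i)}$, which after factoring out $\prod_i s_i(0)^{n_i}$ leaves the $1/s_i(0)$ per block) is routine but must be done precisely to land on the stated formula.
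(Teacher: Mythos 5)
Your proposal is correct and follows essentially the same route as the paper's proof: associate to each colourful edge-subset $A$ the fracture of $H$ it induces, group the sum by fractures, factor the vertex weights into $s_i(0)^{n_i}$ times $\prod_B s_i(|B|)/s_i(0)$, and identify the number of colourful subsets inducing a fixed fracture with $\#\embscp(\fracture{H}{\vec{\sigma}} \to (G,h))$. The only difference is presentational: where the paper delegates this last identification to the argument of~\cite{PeyerimhoffRSSVW23} (via subgraphs and the triviality of colour-preserving automorphisms of fractured graphs), you verify the bijection $A \leftrightarrow (\vec{\sigma},\varphi)$ directly, which is equivalent.
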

\begin{proof}
    Let $k=|E(H)|$ and let $A$ be a colourful edge-subset of $G$ w.r.t.\ $h$, that is, $|A|=k$ and $h(A)=[k]$. 

    Observe that each such $A$ induces a fracture $\vec{\sigma}_A$ of $H$, defined as follows. Let $v\in V(H)$, and recall that $E_H(v)$ is the set of all edges of $H$ incident to $v$.
    Let us furthermore denote the elements of $E_H(v)$ by $e^H_1,\dots,e^H_d$, where $d$ is the degree of $v$ in $H$. Since $A$ is colourful w.r.t.\ $h$, there are (pairwise distinct) edges $e^G_1,\dots,e^G_d$ in $A$ such that $h(e^G_i)=e^H_i$ for all $i\in[d]$.
    
    Recall that $\vec{\sigma}_A(v)$ is a partition of $E_H(v)$. We put two edges $e^H_i$ and $e^H_j$ into the same block of $\vec{\sigma}_A(v)$ if and only if the endpoints of $e^G_i$ and $e^G_j$ that are coloured by $h$ with $v$ are equal. Formally, let $e^G_i=\{x_i,y_i\}$ and $e^G_j=\{x_j,y_j\}$. By definition, $h(\{x_i,y_i\})=e^H_i$ and $h(\{x_j,y_j\})=e^H_j$. Therefore one vertex of both edges $e^G_i$ and $e^G_j$ must be mapped by $h$ to $v$; assume w.l.o.g.\ that $h(x_i)=h(x_j)=v$. Then we put two edges $e^H_i$ and $e^H_j$ into the same block of $\vec{\sigma}_A(v)$ if and only if $x_i=x_j$ (meaning that $e^G_i$ and $e^G_j$ share a vertex coloured by $h$ with $v$).

    For what follows, we say that two colourful edge-subsets $A_1$ and $A_2$ of $G$ are equivalent, denoted by $A_1\sim A_2$ if $\vec{\sigma}_{A_1} = \vec{\sigma}_{A_2}$. Given a fracture $\vec{\sigma}$ of $H$, we denote $[\vec{\sigma}]$ for the set of all colourful $A$ with $\vec{\sigma}_{A} = \vec{\sigma}$.
    Now recall that
    \begin{align*}
        \holant(G,h,\{s_v\}_{v\in V(G)}) &= \sum_{\substack{A \subseteq E(G)\\ A\text{ colourful}}} \prod_{v\in V(G)} s_v(|A \cap E(v)|)\,.
    \end{align*}
    \begin{claim}
        We have
        \[ \prod_{v\in V(G)} s_v(|A \cap E(v)|) = \prod_{i=1}^z s_i(0)^{n_i - |\vec{\sigma}_A(v_i)|} \prod_{B\in \vec{\sigma}_A(v_i)} s_i(|B|)  \,.\]
    \end{claim}
    \begin{claimproof}
        For $i\in[z]$ let $V_i\subseteq V(G)$ be the set containing all vertices of $G$ coloured by $h$ with $v_i$ (in particular, this implies that $n_i=|V_i|$). Next consider $ \prod_{v\in V_i} s_v(|A \cap E(v)|)$ and recall that  all $v\in V_i$ have signature $s_v = s_i$. Now observe that for each block $B\in \vec{\sigma}_A(v_i)$ there is a vertex $v\in V_i$ incident to $|B|$ edges of $A$. Moreover, the remaining $n_i-|\vec{\sigma}_A(v_i)|$ vertices of $V_i$ are not incident to any edge in $A$. Thus
        \[\prod_{v\in V_i} s_v(|A \cap E(v)|) = s_i(0)^{n_i-|\vec{\sigma}_A(v_i)|} \cdot \prod_{B\in \vec{\sigma}_A(v_i)} s_i(|B|)\,,\]
        and, consequently,
        \[\prod_{v\in V(G)} s_v(|A \cap E(v)|) = \prod_{i=1}^z \prod_{v\in V_i} s_v(|A \cap E(v)|) = \prod_{i=1}^z s_i(0)^{n_i - |\vec{\sigma}_A(v_i)|} \prod_{B\in \vec{\sigma}_A(v_i)} s_i(|B|)\,. \]
    \end{claimproof}
    Therefore, grouping the colourful edge-subsets along their equivalence classes, we obtain
    \begin{align*}
        \holant(G,h,\{s_v\}_{v\in V(G)}) &= \sum_{\vec{\sigma}\in \mathcal{F}(H)} |[\vec{\sigma}]| \cdot \prod_{i=1}^z s_i(0)^{n_i - |\vec{\sigma}(v_i)|} \prod_{B\in \vec{\sigma}(v_i)} s_i(|B|)\\
        ~&= \prod_{i=1}^z s_i(0)^{n_i} \cdot \sum_{\vec{\sigma}\in \mathcal{F}(H)} |[\vec{\sigma}]| \cdot \left(\prod_{i=1}^z \prod_{B \in \vec{\sigma}(v_i)} \frac{s_i(|B|)}{s_i(0)}  \right) \,.
    \end{align*}
    Finally, recall that $|[\vec{\sigma}]|$ counts the number of colourful edge-subsets of $G$ that induce the fracture $\vec{\sigma}$. Equivalently, this is the number of subgraphs of $G$ that contain each edge colour exactly once, and that are isomorphic to $\fracture{H}{\vec{\sigma}}$. As was shown in the proof of~\cite[Lemma 4.1]{PeyerimhoffRSSVW23}, using the fact that $\fracture{H}{\vec{\sigma}}$ does not have non-trivial automorphisms as an $H$-coloured graph, this number is equal to $\#\embscp(\fracture{H}{\vec{\sigma}} \to (G,h))$, concluding the proof.
\end{proof}

\begin{lemma}\label{lem:holant_star_to_cphoms}
    Let $H$, $(G,h,\{s_v\}_{v\in V(G)})$ be an instance of $\holantprobstar(\mathcal{S})$ for some finite set of signatures $\mathcal{S}$. Assume that $V(H)=\{v_1,\dots,v_z\}$, and, for each $i\in [z]$, set $n_i$ as the number of vertices of $G$ coloured by $h$ with $v_i$, and let $s_i$ be the signature of the vertices coloured by $h$ with $v_i$. 
    Then
    \[ \holant(G,h,\{s_v\}_{v\in V(G)}) = \prod_{i=1}^z s_i(0)^{n_i} \cdot \sum_{\vec{\sigma}\in \mathcal{F}(H)} \sum_{\vec{\rho} \geq \vec{\sigma}} \vec{\mu}(\vec{\sigma},\vec{\rho}) \cdot \#\homscp(\fracture{H}{\vec{\rho}} \to (G,h)) \cdot \left(\prod_{i=1}^z \prod_{B \in \vec{\sigma}(v_i)} \frac{s_i(|B|)}{s_i(0)}  \right) \,.\]
\end{lemma}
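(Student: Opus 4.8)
The plan is to combine the previous Lemma~\ref{lem:holant_star_to_cpembs} with the standard relationship between colour-prescribed embeddings and colour-prescribed homomorphisms on fractured graphs, via M\"obius inversion over the poset of fractures. Concretely, Lemma~\ref{lem:holant_star_to_cpembs} already expresses $\holant(G,h,\{s_v\}_{v\in V(G)})$ as a sum over fractures $\vec{\sigma}\in\mathcal{F}(H)$ of $\#\embscp(\fracture{H}{\vec{\sigma}} \to (G,h))$ weighted by the factor $w(\vec{\sigma}):=\prod_{i=1}^z\prod_{B\in\vec{\sigma}(v_i)}\tfrac{s_i(|B|)}{s_i(0)}$, with the overall prefactor $\prod_{i=1}^z s_i(0)^{n_i}$. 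So it suffices to rewrite each $\#\embscp(\fracture{H}{\vec{\sigma}}\to(G,h))$ as a linear combination of the $\#\homscp(\fracture{H}{\vec{\rho}}\to(G,h))$ over coarsenings $\vec{\rho}\geq\vec{\sigma}$, and then interchange the order of summation so the $\vec{\sigma}$-sum becomes an inner sum.

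The first step is to establish the identity
\[
\#\embscp(\fracture{H}{\vec{\sigma}}\to(G,h)) = \sum_{\vec{\rho}\geq\vec{\sigma}} \vec{\mu}(\vec{\sigma},\vec{\rho})\cdot \#\homscp(\fracture{H}{\vec{\rho}}\to(G,h))\,.
\]
This is the colour-prescribed, fractured-graph analogue of the classical embeddings-to-homomorphisms M\"obius inversion (cf.\ the arguments in~\cite{Lovasz12,PeyerimhoffRSSVW23}). The natural route is the "dual" direction: every $\varphi\in\homscp(\fracture{H}{\vec{\sigma}}\to(G,h))$ induces a coarsening $\vec{\rho}_\varphi\geq\vec{\sigma}$ recording exactly which vertices $v^B$ of $\fracture{H}{\vec{\sigma}}$ are identified (i.e.\ which blocks of the partitions $\vec{\sigma}(v)$ get merged because $\varphi$ sends them to the same vertex of $G$), and one checks that $\varphi$ then descends to a colour-prescribed \emph{embedding} of $\fracture{H}{\vec{\rho}_\varphi}$. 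This yields
\[
\#\homscp(\fracture{H}{\vec{\sigma}}\to(G,h)) = \sum_{\vec{\rho}\geq\vec{\sigma}} \#\embscp(\fracture{H}{\vec{\rho}}\to(G,h))\,,
\]
and since the poset of fractures of $H$ is the product of partition lattices (so $\vec{\mu}$ is its M\"obius function, as recalled in the excerpt before Definition~\ref{def:fract_graph}), M\"obius inversion over this poset gives the displayed formula. One subtlety to check here: that the correspondence $v^{B}\mapsto$ (block of $\vec{\rho}_\varphi$) is the canonical $H$-colouring of $\fracture{H}{\vec{\rho}_\varphi}$, so that "colour-prescribed" is preserved under passing to the quotient --- this is exactly the point where the construction of fractured graphs (Definition~\ref{def:fract_graph}) and the canonical $H$-colouring are used, and it is essentially the same bookkeeping as in~\cite[Lemma 4.1]{PeyerimhoffRSSVW23}.

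With the inversion identity in hand, the second step is purely formal: substitute it into the conclusion of Lemma~\ref{lem:holant_star_to_cpembs}, obtaining
\[
\holant(G,h,\{s_v\}_{v\in V(G)}) = \prod_{i=1}^z s_i(0)^{n_i}\cdot \sum_{\vec{\sigma}\in\mathcal{F}(H)} \left(\sum_{\vec{\rho}\geq\vec{\sigma}} \vec{\mu}(\vec{\sigma},\vec{\rho})\cdot \#\homscp(\fracture{H}{\vec{\rho}}\to(G,h))\right)\cdot w(\vec{\sigma})\,,
\]
which is literally the claimed statement once one leaves the sum in the order $\sum_{\vec\sigma}\sum_{\vec\rho\geq\vec\sigma}$ as written in the lemma. (If one preferred the $\vec\rho$-indexed form, one would swap to $\sum_{\vec\rho}\#\homscp(\fracture{H}{\vec\rho}\to(G,h))\cdot\sum_{\vec\sigma\leq\vec\rho}\vec\mu(\vec\sigma,\vec\rho)w(\vec\sigma)$, but that is not needed here.) I do not expect any real obstacle in the second step; the main work is the first step, i.e.\ carefully verifying the embeddings/homomorphisms correspondence on colour-prescribed maps from fractured graphs and that it is compatible with coarsening of fractures --- and even this is a routine adaptation of the argument already used in~\cite{PeyerimhoffRSSVW23}, so the proof can legitimately be kept short by citing that argument and only spelling out the signature-weight bookkeeping.
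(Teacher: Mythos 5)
Your proposal is correct and matches the paper's proof, which likewise substitutes the M\"obius-inversion identity $\#\embscp(\fracture{H}{\vec{\sigma}} \to (G,h)) = \sum_{\vec{\rho} \geq \vec{\sigma}} \vec{\mu}(\vec{\sigma},\vec{\rho}) \cdot \#\homscp(\fracture{H}{\vec{\rho}} \to (G,h))$ (cited as Equation~(4.1) of~\cite{PeyerimhoffRSSVW23}) into Lemma~\ref{lem:holant_star_to_cpembs}. The only difference is that you sketch a re-derivation of that identity, whereas the paper simply cites it; both are fine.
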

\begin{proof}
    Follows immediately by Lemma~\ref{lem:holant_star_to_cpembs} and the following transformation~\cite[Equation (4.1)]{PeyerimhoffRSSVW23}:
    \[\#\embscp(\fracture{H}{\vec{\sigma}} \to (G,h)) = \sum_{\vec{\rho} \geq \vec{\sigma}} \vec{\mu}(\vec{\sigma},\vec{\rho}) \cdot \#\homscp(\fracture{H}{\vec{\rho}} \to (G,h))\,. \]
\end{proof}

Next we collect the coefficient for individual fractures in the previous lemma. 

\begin{definition}
   Let $H$ be a graph, and let $\mathcal{S}$ be a finite set of signatures. Assume that $V(H)=\{v_1,\dots,v_z\}$, and let $\vec{s}=(s_1,\dots,s_z)$ be a $z$-tuple of (not necessarily distinct) signatures in $\mathcal{S}$. For a fracture $\vec{\rho}$ of $H$ we define
   \[ \mathsf{coeff}_{H,\vec{s}}(\vec{\rho})= \sum_{\vec{\sigma} \leq \vec{\rho}} \vec{\mu}(\vec{\sigma},\vec{\rho}) \cdot \left(\prod_{i=1}^z \prod_{B \in \vec{\sigma}(v_i)} \frac{s_i(|B|)}{s_i(0)}  \right) \,.\]
   We might drop the subscript $H,\vec{s}$ if it is clear from the context.
\end{definition}

\begin{corollary}\label{cor:collect_coeffs}
    Let $(H,(G,h,\{s_v\}_{v\in V(G)}))$ be an instance of $\holantprobstar(\mathcal{S})$ for some finite set of signatures $\mathcal{S}$. Assume that $V(H)=\{v_1,\dots,v_z\}$, and, for each $i\in [z]$, set $n_i$ as the number of vertices of $G$ coloured by $h$ with $v_i$, and let $s_i$ be the signature of the vertices coloured by $h$ with $v_i$. Moreover, set $\vec{s}=(s_1,\dots,s_z)$.
    Then
    \[ \holant(G,h,\{s_v\}_{v\in V(G)}) \cdot \prod_{i=1}^z s_i(0)^{-n_i} = \sum_{\vec{\rho}\in \mathcal{F}(H)} \mathsf{coeff}_{H,\vec{s}}(\vec{\rho})  \cdot \#\homscp(\fracture{H}{\vec{\rho}} \to (G,h))  \,.\]
\end{corollary}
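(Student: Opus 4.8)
The plan is to obtain the claimed identity by simply reorganising the double sum of Lemma~\ref{lem:holant_star_to_cphoms}: exchanging the order of summation over $\vec{\sigma}$ and $\vec{\rho}$, and then recognising the resulting inner sum as the coefficient $\mathsf{coeff}_{H,\vec{s}}(\vec{\rho})$ from the preceding definition. First I would invoke Lemma~\ref{lem:holant_star_to_cphoms}, noting that the tuple $\vec{s}=(s_1,\dots,s_z)$ is exactly the list of signatures prescribed by the $H$-colouring $h$ (this uses that $h(u)=h(v)$ forces $s_u=s_v$, which is part of the definition of $\textsc{p-Holant}^\star$), so that
\[ \holant(G,h,\{s_v\}_{v\in V(G)}) = \prod_{i=1}^z s_i(0)^{n_i} \cdot \sum_{\vec{\sigma}\in \mathcal{F}(H)} \sum_{\vec{\rho} \geq \vec{\sigma}} \vec{\mu}(\vec{\sigma},\vec{\rho}) \cdot \#\homscp(\fracture{H}{\vec{\rho}} \to (G,h)) \cdot \left(\prod_{i=1}^z \prod_{B \in \vec{\sigma}(v_i)} \frac{s_i(|B|)}{s_i(0)}  \right) \,. \]

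Next I would divide both sides by $\prod_{i=1}^z s_i(0)^{n_i}$, which is legitimate since $s_i(0)\neq 0$ for every signature $s_i\in\mathcal{S}$ by Definition~\ref{def:signatures_intro}. Then the main manipulation is to swap the two nested sums. Since $H$ has finitely many vertices and each edge set $E_H(v)$ is finite, the fracture set $\mathcal{F}(H)$ is finite, so the double sum $\sum_{\vec{\sigma}}\sum_{\vec{\rho}\geq\vec{\sigma}}(\cdots)$ is a finite sum over the index set $\{(\vec{\sigma},\vec{\rho})\in\mathcal{F}(H)^2 : \vec{\sigma}\leq\vec{\rho}\}$; describing this same index set by first fixing $\vec{\rho}$ and then ranging over $\vec{\sigma}\leq\vec{\rho}$ gives the reordering $\sum_{\vec{\rho}}\sum_{\vec{\sigma}\leq\vec{\rho}}(\cdots)$ with no change of value.

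Finally, in the reordered expression the factor $\#\homscp(\fracture{H}{\vec{\rho}} \to (G,h))$ depends only on $\vec{\rho}$ and hence factors out of the inner sum over $\vec{\sigma}$, leaving
\[ \sum_{\vec{\rho}\in\mathcal{F}(H)} \#\homscp(\fracture{H}{\vec{\rho}} \to (G,h)) \cdot \left( \sum_{\vec{\sigma}\leq\vec{\rho}} \vec{\mu}(\vec{\sigma},\vec{\rho}) \cdot \prod_{i=1}^z \prod_{B \in \vec{\sigma}(v_i)} \frac{s_i(|B|)}{s_i(0)} \right) \,, \]
and the bracketed inner sum is by definition exactly $\mathsf{coeff}_{H,\vec{s}}(\vec{\rho})$, which yields the corollary. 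There is essentially no obstacle: the only points that deserve an explicit word are the finiteness of $\mathcal{F}(H)$ (so that reindexing the finite double sum is unproblematic) and the non-vanishing of each $s_i(0)$ (so that dividing through is allowed), both of which are immediate from the definitions.
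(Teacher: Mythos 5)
Your proposal is correct and matches the paper's own argument: the paper likewise starts from Lemma~\ref{lem:holant_star_to_cphoms}, collects the coefficient of $\#\homscp(\fracture{H}{\vec{\rho}} \to (G,h))$ for each fixed fracture $\vec{\rho}$ (i.e., swaps the finite double sum and recognises the inner sum as $\mathsf{coeff}_{H,\vec{s}}(\vec{\rho})$), and divides by $\prod_{i=1}^z s_i(0)^{n_i}$. You merely spell out the reindexing and the non-vanishing of $s_i(0)$ more explicitly than the paper does.
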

\begin{proof}
    We start from Lemma~\ref{lem:holant_star_to_cphoms}, collect the coefficients of $\#\homscp(\fracture{H}{\vec{\rho}} \to (G,h))$ for each fracture $\vec{\rho}$, and divide by $\prod_{i=1}^z s_i(0)^{n_i}$.
\end{proof}

We continue by analysing the coefficients $\mathsf{coeff}(\vec{\rho})$ in detail. For convenience, we first recall and expand the definition of signature fingerprints:

\begin{definition}
    Let $\rho$ be a partition of a finite set, and let $s$ be a signature. We define 
    \[\chi(\rho,s) := \sum_{\sigma \leq \rho} \mu(\sigma,\rho) \cdot \prod_{B \in \sigma} \frac{s(|B|)}{s(0)}\,.\]
    Moreover, given a positive integer $d$, we define the \emph{signature fingerprint} of $d$ and $s$ as follows:
    \[\chi(d,s) := \sum_{\sigma} (-1)^{|\sigma|-1} (|\sigma|-1)! \cdot \prod_{B \in \sigma} \frac{s(|B|)}{s(0)} \,,\]
    where the sum is over all partitions of $[d]$.
\end{definition}

\begin{lemma}\label{lem:just_distributivity}
    Let $\rho$ be a partition of a finite set, let $B_1,\dots,B_t$ be the blocks of $\rho$, and let $s$ be a signature. We have $\chi(\rho,s) =  \prod_{i=1}^t \chi(|B_i|,s)$.
\end{lemma}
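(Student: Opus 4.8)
The plan is to unpack both sides in terms of the Möbius function of the partition lattice and then invoke the product structure of that lattice. First I would recall from Definition~\ref{def:mobius} that if $\rho = \{B_1,\dots,B_t\}$ is a partition of a finite set $S$, then the interval $[\bot_S,\rho]$ in the partition lattice of $S$ is isomorphic to the product $\prod_{i=1}^t \ppart(B_i)$ of the full partition lattices of the blocks; concretely, a partition $\sigma \leq \rho$ corresponds uniquely to a tuple $(\sigma^1,\dots,\sigma^t)$ where $\sigma^i$ is a partition of $B_i$ and $\sigma = \dot\cup_{i=1}^t \sigma^i$. Under this correspondence the Möbius function factorises: $\mu(\sigma,\rho) = \prod_{i=1}^t \mu_{B_i}(\sigma^i, \top_{B_i})$, which by Definition~\ref{def:mobius} equals $\prod_{i=1}^t (-1)^{|\sigma^i|-1}(|\sigma^i|-1)!$ — this is exactly the quantity appearing in the definition of the fingerprint $\chi(|B_i|,s)$ when we identify $B_i$ with $[|B_i|]$.

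Next I would observe that the product $\prod_{B\in\sigma} \frac{s(|B|)}{s(0)}$ also factorises along the blocks of $\rho$: since $\sigma = \dot\cup_i \sigma^i$, we have $\prod_{B\in\sigma}\frac{s(|B|)}{s(0)} = \prod_{i=1}^t \prod_{B\in\sigma^i}\frac{s(|B|)}{s(0)}$. Substituting both factorisations into the definition of $\chi(\rho,s)$ and interchanging the sum over $\sigma\leq\rho$ with the product over $i$ (using that $\sigma\leq\rho$ ranges over all tuples $(\sigma^1,\dots,\sigma^t)$ independently), we get
\[
\chi(\rho,s) = \sum_{\sigma\leq\rho} \prod_{i=1}^t \left((-1)^{|\sigma^i|-1}(|\sigma^i|-1)! \cdot \prod_{B\in\sigma^i}\frac{s(|B|)}{s(0)}\right) = \prod_{i=1}^t \left( \sum_{\sigma^i \in \ppart(B_i)} (-1)^{|\sigma^i|-1}(|\sigma^i|-1)! \prod_{B\in\sigma^i}\frac{s(|B|)}{s(0)} \right).
\]
Finally I would identify each inner sum with $\chi(|B_i|,s)$: the sum over all partitions $\sigma^i$ of the $|B_i|$-element set $B_i$, weighted by $(-1)^{|\sigma^i|-1}(|\sigma^i|-1)!$ and $\prod_{B\in\sigma^i}s(|B|)/s(0)$, is by definition exactly $\chi(|B_i|,s)$ (the fingerprint only depends on the cardinality of the ground set, not on its identity, since the summand depends only on block sizes). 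This yields $\chi(\rho,s) = \prod_{i=1}^t \chi(|B_i|,s)$.

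The only genuinely nontrivial point — and thus the main obstacle to state carefully rather than to prove — is the factorisation of the interval $[\bot,\rho]$ and of the Möbius function across it; this is a standard fact about partition lattices (see \cite[Chapter 3]{Stanley11}), but since the paper introduces $\mu$ via the explicit formula in Definition~\ref{def:mobius} rather than via the incidence algebra, I would make sure the bijection $\sigma \leftrightarrow (\sigma^1,\dots,\sigma^t)$ and the resulting product formula $\mu(\sigma,\rho)=\prod_i\mu_{B_i}(\sigma^i,\top_{B_i})$ are spelled out directly from that formula, which is immediate since $\mu_S(\sigma,\rho)$ in Definition~\ref{def:mobius} is literally defined as such a product over the blocks of $\rho$. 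Everything else is bookkeeping with sums and products.
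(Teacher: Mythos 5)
Your proposal is correct and follows essentially the same route as the paper: decompose each $\sigma\leq\rho$ into its restrictions $\sigma^1,\dots,\sigma^t$ to the blocks of $\rho$, use that Definition~\ref{def:mobius} already gives $\mu(\sigma,\rho)$ as a product over these restrictions, and then interchange the sum over $\sigma\leq\rho$ with the product over $i$ by distributivity, identifying each inner sum with $\chi(|B_i|,s)$. The paper carries out the same distributivity argument, just written as an explicit iterated-sum manipulation rather than via the lattice-product phrasing.
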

\begin{proof}
    Given a partition $\sigma$ with $\sigma \leq \rho$, we set $\sigma^1,\dots,\sigma^t$ be the sub-partitions of $\sigma$ such that $\sigma^i$ partitions $B_i$ (that is, $\sigma^i \leq \{B_i\}$) for each $i\in[z]$. The explicit formula for the M\"obius function of the partition lattice (see e.g.\ \cite[Chapter 3]{Stanley11}) then states that
    \[ \mu(\sigma,\rho) = \prod_{i=1}^t (-1)^{|\sigma^i|-1}(|\sigma^i|-1)! \,.\]
    For $i\in[t]$ we set
    \[ T_i := (-1)^{|\sigma^i|-1}(|\sigma^i|-1)! \cdot \prod_{B \in \sigma^i} \frac{s(|B|)}{s(0)}\,, \]
    and we observe
    \begin{align*}
        \chi(\rho,s) &= \sum_{\sigma \leq \rho} \left(\prod_{i=1}^t (-1)^{|\sigma^i|-1}(|\sigma^i|-1)!\right) \cdot \prod_{B \in \sigma} \frac{s(|B|)}{s(0)}= \sum_{\sigma^1 \leq \{B_1\}} \dots \sum_{\sigma^t \leq \{B_t\}} \prod_{i=1}^t T_i\\
        ~&= \sum_{\sigma^1 \leq \{B_1\}}\left( T_1~ \dots \sum_{\sigma^{t-1} \leq \{B_{t-1}\}}\left( T_{t-1} \sum_{\sigma^t \leq \{B_t\}}T_t \right)\dots \right) = \sum_{\sigma^1 \leq \{B_1\}}\left( T_1~ \dots \sum_{\sigma^{t-1} \leq \{B_{t-1}\}} T_{t-1} \cdot \chi(|B_t|,s) \dots \right)\\
        ~&= \chi(|B_t|,s) \cdot \sum_{\sigma^1 \leq \{B_1\}}\left( T_1~ \dots \sum_{\sigma^{t-1} \leq \{B_{t-1}\}} T_{t-1}  \dots \right) = \dots = \prod_{i=1}^t \chi(|B_i|,s)\,,
    \end{align*}
    where the last equality follows by just inductively applying distributivity.
\end{proof}

\begin{lemma}\label{lem:coeffs_done}
    Let $H$ be a graph with vertices $V(H)=\{v_1,\dots,v_z\}$, let $\vec{\rho}$ be a fracture of $H$, and let $\vec{s}=(s_1,\dots,s_z)$ be a $z$-tuple of signatures. We have
    \[\mathsf{coeff}_{H,\vec{s}}(\vec{\rho}) = \prod_{i=1}^z \prod_{B\in \vec{\rho}(v_i)} \chi(|B|,s_i) \,.\]
\end{lemma}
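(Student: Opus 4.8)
The plan is to exploit the fact that both the M\"obius function of fractures and the weight term in the definition of $\mathsf{coeff}_{H,\vec s}(\vec\rho)$ factorise over the vertices of $H$, so that the defining sum splits into a product of $z$ independent sums, one per vertex, each of which is by definition a partition-version fingerprint $\chi(\vec\rho(v_i),s_i)$; the last step is then a direct application of Lemma~\ref{lem:just_distributivity}.

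First I would abbreviate $\rho_i := \vec{\rho}(v_i)$ for $i\in[z]$, and recall from the definition of the M\"obius function of fractures that $\vec{\mu}(\vec{\sigma},\vec{\rho}) = \prod_{i=1}^z \mu_{E(v_i)}(\vec{\sigma}(v_i),\rho_i)$, while the weight $\prod_{i=1}^z\prod_{B\in\vec{\sigma}(v_i)} s_i(|B|)/s_i(0)$ is already a product over $i$ in its stated form. Next I would use that a fracture $\vec{\sigma}$ of $H$ satisfies $\vec{\sigma}\leq\vec{\rho}$ (pointwise refinement) if and only if $\vec{\sigma}(v_i)\leq\rho_i$ for every $i$, so that the sum over $\{\vec{\sigma}\in\mathcal{F}(H):\vec{\sigma}\leq\vec{\rho}\}$ is exactly the sum over all $z$-tuples $(\sigma_1,\dots,\sigma_z)$ with $\sigma_i$ a partition of $E(v_i)$ and $\sigma_i\leq\rho_i$, the $\sigma_i$ ranging independently. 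Substituting these factorisations and distributing the product over the (now independent) nested sums yields
\[
\mathsf{coeff}_{H,\vec{s}}(\vec{\rho}) = \prod_{i=1}^z \left(\sum_{\sigma_i\leq\rho_i}\mu_{E(v_i)}(\sigma_i,\rho_i)\prod_{B\in\sigma_i}\frac{s_i(|B|)}{s_i(0)}\right) = \prod_{i=1}^z\chi(\rho_i,s_i),
\]
where the second equality is just the definition of $\chi(\rho_i,s_i)$, noting that the M\"obius function in that definition is precisely the partition-lattice function $\mu_{E(v_i)}$ used to assemble $\vec{\mu}$.

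Finally, for each $i$ I would apply Lemma~\ref{lem:just_distributivity} to the partition $\rho_i$ of $E(v_i)$, obtaining $\chi(\rho_i,s_i) = \prod_{B\in\rho_i}\chi(|B|,s_i)$; combining this with the display above gives $\mathsf{coeff}_{H,\vec{s}}(\vec{\rho}) = \prod_{i=1}^z\prod_{B\in\vec{\rho}(v_i)}\chi(|B|,s_i)$, as claimed. There is no genuine obstacle in this proof; the only points needing a line of care are the identification of the refinement poset $\{\vec{\sigma}\leq\vec{\rho}\}$ with the product of the per-vertex refinement posets and the consistency of the two uses of the symbol $\mu$, both of which are immediate from the definitions in Sections~\ref{sec:parts_quots_mobius} and~\ref{sec:col_graphs_fractures}. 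In fact, once these bookkeeping observations are made, the statement is essentially the assertion that $\mathsf{coeff}$ factorises because the poset of fractures is a product of partition lattices and the summand is multiplicative along that product.
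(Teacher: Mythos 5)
Your proposal is correct and follows essentially the same route as the paper's proof: factorising the sum over $\vec{\sigma}\leq\vec{\rho}$ into independent per-vertex sums using the product form of $\vec{\mu}$, identifying each factor as $\chi(\vec{\rho}(v_i),s_i)$, and then invoking Lemma~\ref{lem:just_distributivity}. The paper merely writes out the distributivity step as an explicit chain of nested sums rather than stating the product decomposition of the refinement poset up front, but the argument is the same.
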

\begin{proof}
    For ease of notation, assume w.l.o.g.\ that $V(H)=\{1,\dots,z\}$ and, given a fracture $\vec{\sigma}$ of $H$, set $\sigma_i := \vec{\sigma}(i)$. Recall that fractures $\vec{\sigma}$ of $H$ are then vectors $\vec{\sigma}=(\sigma_1,\dots,\sigma_z)$. Moreover, as was shown in~\cite{PeyerimhoffRSSVW23}, given two fractures $\vec{\sigma}$ and $\vec{\rho}$, we have
    \[\vec{\mu}(\vec{\sigma},\vec{\rho}) = \prod_{i=1}^z\mu(\sigma_i,\rho_i) \,.\]
    Recall that $\mu$ denotes the M\"obius function of the partition lattice. Thus we have
    \begin{align*}
        \mathsf{coeff}(\vec{\rho})&= \sum_{\sigma_1 \leq \rho_1} \dots \sum_{\sigma_z \leq \rho_z} \prod_{i=1}^z\mu(\sigma_i,\rho_i)\cdot \prod_{B \in \sigma_i} \frac{s_i(|B|)}{s_i(0)}\\
        ~&=\sum_{\sigma_1 \leq \rho_1}\left(\mu(\sigma_1,\rho_1) \cdot \prod_{B \in \sigma_1} \frac{s_1(|B|)}{s_1(0)}~ \dots \sum_{\sigma_{z-1} \leq \rho_{z-1}} \left(\mu(\sigma_{z-1},\rho_{z-1})\cdot \prod_{B \in \sigma_{z-1}} \frac{s_{z-1}(|B|)}{s_{z-1}(0)} \sum_{\sigma_z \leq \rho_z} \mu(\sigma_z,\rho_z)\cdot \prod_{B \in \sigma_z} \frac{s_z(|B|)}{s_z(0)}\right)\dots\right)\\
         ~&=\sum_{\sigma_1 \leq \rho_1}\left(\mu(\sigma_1,\rho_1) \cdot \prod_{B \in \sigma_1} \frac{s_1(|B|)}{s_1(0)}~ \dots \sum_{\sigma_{z-1} \leq \rho_{z-1}} \left(\mu(\sigma_{z-1},\rho_{z-1})\cdot \prod_{B \in \sigma_{z-1}} \frac{s_{z-1}(|B|)}{s_{z-1}(0)} \cdot \chi(\rho_z,s_z)\right)\dots\right)\\
         ~&= \chi(\rho_z,s_z) \cdot \sum_{\sigma_1 \leq \rho_1}\left(\mu(\sigma_1,\rho_1) \cdot \prod_{B \in \sigma_1} \frac{s_1(|B|)}{s_1(0)}~ \dots \sum_{\sigma_{z-1} \leq \rho_{z-1}} \left(\mu(\sigma_{z-1},\rho_{z-1})\cdot \prod_{B \in \sigma_{z-1}} \frac{s_{z-1}(|B|)}{s_{z-1}(0)}\right)\dots\right)\\
         ~&\vdots\\
         ~&= \prod_{i=1}^z \chi(\rho_i,s_i) = \prod_{i=1}^z \prod_{B\in \rho_i} \chi(|B|,s_i) \,,
    \end{align*}
    where the second-to-last equation follows just from inductively applying distributivity, and the final equation is Lemma~\ref{lem:just_distributivity}.
\end{proof}

Having understood the coefficients of the homomorphism expansion of $\holant(\mathcal{S})$ in terms of the signature fingerprints, we are now able to prove our main classification result.

First of all, we associate each (finite) set of signatures with a class of graphs defined as follows.

\begin{definition}[Homomorphism Supports $\homsupp(\mathcal{S})$ and $\homsupp_\top(\mathcal{S})$]
    Let $\mathcal{S}$ be a finite set of signatures. We say that a graph $F$ is \emph{weakly supported} by $\mathcal{S}$ if there is a graph $H$ with $V(H)=[z]$ (for some $z$), a $z$-tuple $\vec{s}=(s_1,\dots,s_z)$ of signatures in $\mathcal{S}$, and a fracture $\vec{\rho}$ of $H$ such that
    $F \cong \fracture{H}{\vec{\rho}}$ and
        \[\prod_{i=1}^z \prod_{B\in \vec{\rho}(v_i)} \chi(|B|,s_i)\neq 0\,.\]
The \emph{weak homomorphism support} of $\mathcal{S}$, denoted by $\homsupp(\mathcal{S})$, is defined to be the class of all graphs weakly supported by $\mathcal{S}$. 
        
Moreover, we say that a graph $H$ with $V(H)=[z]$  is \emph{strongly supported} by $\mathcal{S}$ if there is a $z$-tuple $\vec{s}=(s_1,\dots,s_z)$ of signatures in $\mathcal{S}$ such that 
    \[  \prod_{i=1}^z \chi(d_i,s_i) \neq 0\,,\]
    where $d_i$ is the degree of the $i$-th vertex. The \emph{strong homomorphism support} of $\mathcal{S}$, denoted by $\homsupp_\top(\mathcal{S})$, is defined to be the class of all graphs strongly supported by $\mathcal{S}$. 
\end{definition}

\begin{lemma}\label{lem:holant_sandwich}
    Let $\mathcal{S}$ be a finite set of signatures. We have
    \[\#\cphomsprob(\homsupp_\top(\mathcal{S})) \fptlinred \holantprobstar(\mathcal{S}) \fptlinred \#\colhomsprob(\homsupp(\mathcal{S})) \,.\]
\end{lemma}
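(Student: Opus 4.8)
The plan is to prove the two reductions separately, both building on Corollary~\ref{cor:collect_coeffs} together with Lemma~\ref{lem:coeffs_done}, which give us the identity
\[ \holant(G,h,\{s_v\}_{v\in V(G)}) \cdot \prod_{i=1}^z s_i(0)^{-n_i} = \sum_{\vec{\rho}\in \mathcal{F}(H)} \left(\prod_{i=1}^z \prod_{B\in \vec{\rho}(v_i)} \chi(|B|,s_i)\right)  \cdot \#\homscp(\fracture{H}{\vec{\rho}} \to (G,h)) \,. \]
The right-hand side is a fixed finite linear combination (the coefficients depend only on $H$, $\vec{s}$, and $\mathcal{S}$, hence only on the parameter) of colour-prescribed homomorphism counts $\#\homscp(\fracture{H}{\vec{\rho}} \to (G,h))$, and by the definition of the weak homomorphism support, every fractured graph $\fracture{H}{\vec{\rho}}$ appearing with a non-zero coefficient lies in $\homsupp(\mathcal{S})$.

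For the right-hand reduction $\holantprobstar(\mathcal{S}) \fptlinred \#\colhomsprob(\homsupp(\mathcal{S}))$: given an instance $H$, $(G,h,\{s_v\})$, I would first compute $\prod_i s_i(0)^{n_i}$ in near-linear time (the $s_i(0)$ are constants). Then, for each of the $f(|H|)$ many fractures $\vec{\rho}$ of $H$ with non-zero coefficient, I need to evaluate $\#\homscp(\fracture{H}{\vec{\rho}} \to (G,h))$ using an oracle for $\#\colhomsprob(\homsupp(\mathcal{S}))$. This requires translating colour-prescribed homomorphism counting into coloured homomorphism counting: a colour-prescribed homomorphism from $\fracture{H}{\vec{\rho}}$ into $(G,h)$ is exactly a homomorphism of the underlying graphs that respects the vertex colouring where we colour a vertex $v^B$ of $\fracture{H}{\vec{\rho}}$ by $v \in V(H)$ and a vertex $g \in V(G)$ by $h(g) \in V(H)$; one has to check (this is routine, using that $h$ is itself a homomorphism and that the edges of $\fracture{H}{\vec{\rho}}$ project onto edges of $H$) that graph-homomorphisms respecting these vertex-colours are automatically edge-compatible, so that $\#\homscp(\fracture{H}{\vec{\rho}} \to (G,h)) = \#\homs{(\fracture{H}{\vec{\rho}}, \nu_H)}{(G, h)}$ for the appropriate vertex colourings. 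Each such query has parameter $|V(\fracture{H}{\vec{\rho}})| \le 2|E(H)| \le 2|H|^2$, bounded by a function of the parameter, the pattern $\fracture{H}{\vec{\rho}}$ lies in $\homsupp(\mathcal{S})$, and the number of queries is bounded by $|\mathcal{F}(H)|$, a function of the parameter. So this is a valid linear FPT Turing-reduction once we combine the (constantly many) oracle answers via the known coefficients.

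For the left-hand reduction $\#\cphomsprob(\homsupp_\top(\mathcal{S})) \fptlinred \holantprobstar(\mathcal{S})$: here the idea is to realise a single colour-prescribed homomorphism count $\#\cphoms(H \to (G,h))$ for $H \in \homsupp_\top(\mathcal{S})$ as (a rescaling of) a holant value, by picking the right signatures. Since $H \in \homsupp_\top(\mathcal{S})$, there is a tuple $\vec{s} = (s_1,\dots,s_z)$ with $\prod_{i=1}^z \chi(d_i, s_i) \neq 0$, where $d_i = d_H(v_i)$. I would equip each vertex $g$ of $G$ with the signature $s_{h(g)}$ (well-defined, and it satisfies the $\holantprobstar$-requirement that vertices of the same $h$-colour get the same signature), feed $(H, (G,h,\{s_g\}))$ into the oracle, and then isolate the term of the homomorphism expansion corresponding to the \emph{discrete} fracture $\vec{\rho} = \bot$ (each $\vec{\rho}(v_i)$ being the finest partition $\bot_{E(v_i)}$). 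For $\vec{\rho}=\bot$ we have $\fracture{H}{\bot} = M_H$, the matching of $|E(H)|$ disjoint edges, whose treewidth is $1$, and whose coefficient is $\prod_i \prod_{B \in \bot} \chi(1, s_i) = \prod_i \chi(1,s_i)^{d_i}$; note $\chi(1,s) = s(1)/s(0) $ which may be zero, so this naive single-query approach can fail. The fix is the standard trick of isolating a term by treewidth: rather than reading off $\bot$, I would instead note that among all $\vec{\rho}$ with non-zero coefficient, the unique one with $\fracture{H}{\vec{\rho}} \cong H$ (up to iso as $H$-coloured graphs) is $\vec{\rho} = \top$ (each vertex unfractured), whose coefficient is exactly $\prod_{i=1}^z \chi(d_i,s_i) \neq 0$; every other $\fracture{H}{\vec{\rho}}$ is a proper quotient/refinement and is non-isomorphic to $H$ as an $H$-coloured graph. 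One then uses the ``tensor-product / Dedekind interpolation'' machinery already invoked in Lemma~\ref{lem:main_equivalence_hard}: computing $\holant$ on $\Omega \otimes (H', h')$ for enough auxiliary $H$-coloured graphs $(H',h')$, together with Propositions~\ref{prop:semigroup}--\ref{prop:distinct} (in the colour-prescribed variant from~\cite{PeyerimhoffRSSVW23}), lets us solve for each individual term $\#\homscp(\fracture{H}{\vec{\rho}} \to (G,h))$, in particular the $\vec{\rho}=\top$ term, which is $c \cdot \#\cphoms(H \to (G,h))$ for the known non-zero constant $c = \prod_i \chi(d_i,s_i) / \#\auts$-type-factor; dividing out gives the answer. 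All queries have parameter bounded in $|H|$ and there are only $f(|H|)$ of them, so this is a linear FPT Turing-reduction.

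The main obstacle I anticipate is the left-hand reduction: the clean statement ``$\holant$ on the right signatures equals a constant times $\#\cphoms(H \to (G,h))$'' is \emph{not} literally true, because the homomorphism expansion mixes in all the lower-treewidth fractured-graph terms, and the ``obvious'' coefficient $\chi(1,s)^{d}$ of the matching term can vanish. Getting this to work requires the interpolation argument — arguing that the colour-prescribed tensor product is associative and that distinct $H$-coloured fractured graphs are distinguished by colour-prescribed homomorphism counts into suitable targets (the analogues of Propositions~\ref{prop:semigroup}--\ref{prop:distinct} for the $\homscp$/$\cphoms$ setting, which are established in~\cite{PeyerimhoffRSSVW23}) — so that one can extract precisely the $\vec{\rho}=\top$ coefficient. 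Verifying that $\top$ is the unique fracture (with non-zero coefficient) whose fractured graph is $H$-colour-isomorphic to $H$, and carefully tracking the automorphism normalisation factors, is where the real work lies; the right-hand reduction, by contrast, is essentially bookkeeping on top of Corollary~\ref{cor:collect_coeffs}.
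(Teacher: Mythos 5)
Your proposal is correct and follows essentially the same route as the paper: the right-hand reduction is exactly Corollary~\ref{cor:collect_coeffs} plus Lemma~\ref{lem:coeffs_done}, querying $\#\homs{(\fracture{H}{\vec{\rho}},h_{\vec{\rho}})}{(G,h)}$ for the $f(|H|)$ fractures with non-zero coefficient, and the left-hand reduction is the complexity-monotonicity/interpolation argument of~\cite{PeyerimhoffRSSVW23} applied to tensor products of $(G,h)$ with auxiliary graphs, isolating the $\vec{\top}$ term whose coefficient $\prod_{i}\chi(d_i,s_i)$ is non-zero precisely because $H\in\homsupp_\top(\mathcal{S})$. (The automorphism normalisation you worry about does not actually appear, since fractured graphs have no non-trivial automorphisms as $H$-coloured graphs, so the coefficient of the $\vec{\top}$ term is exactly $\prod_i\chi(d_i,s_i)$.)
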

\begin{proof}
    The second direction $\holantprobstar(\mathcal{S}) \fptlinred \#\cphomsprob(\homsupp(\mathcal{S}))$
    is the easier one: Given an instance $(H,(G,h,\{s_v\}_{v\in V(G)}))$ of $\holantprobstar(\mathcal{S})$,
    assume that $V(H)=\{v_1,\dots,v_z\}$, and, for each $i\in [z]$, set $n_i$ as the number of vertices of $G$ coloured by $h$ with $v_i$, and let $s_i$ be the signature of the vertices coloured by $h$ with $v_i$. Moreover, set $\vec{s}=(s_1,\dots,s_z)$. By Corollary~\ref{cor:collect_coeffs}, it suffices to compute  
    \[  \prod_{i=1}^z s_i(0)^{n_i} \cdot  \sum_{\vec{\rho}\in \mathcal{F}(H)} \mathsf{coeff}_{H,\vec{s}}(\vec{\rho})  \cdot \#\homscp(\fracture{H}{\vec{\rho}} \to (G,h)) \,.\]
    By Lemma~\ref{lem:coeffs_done}, we have that $\mathsf{coeff}_{H,\vec{s}}(\vec{\rho})\neq 0$ if and only if $\fracture{H}{\vec{\rho}}\in \homsupp(\mathcal{S})$. Thus all terms with a non-zero coefficient can be computed by just querying our oracle --- recall that $h_{\vec{\rho}}$ denotes the canonical $H$-colouring of $\fracture{H}{\vec{\rho}}$ and consider $(\fracture{H}{\vec{\rho}},h_{\vec{\rho}})$ as vertex coloured graphs; then $\homscp(\fracture{H}{\vec{\rho}} \to (G,h)) = \homs{(\fracture{H}{\vec{\rho}},h_{\vec{\rho}})}{(G,h)}$ and the task of computing its cardinality is clearly an instance of $\#\colhomsprob(\homsupp(\mathcal{S}))$.

    The first direction $\#\cphomsprob(\homsupp_\top(\mathcal{S})) \fptlinred \holantprobstar(\mathcal{S})$ is more interesting and holds by the principle of Complexity Monotonicity for fractured graphs as established in Section 4.1 in~\cite{PeyerimhoffRSSVW23}. Concretely, let $(H,(G,c_G))$ be an input instance of $\#\cphomsprob(\homsupp_\top(\mathcal{S}))$, that is, $H\in \homsupp_\top(\mathcal{S})$, and $(G,c_G)$ is an $H$-coloured graph. Assume w.l.o.g.\ that $V(H)=[z]$ for some positive integer $z$. Since $H\in \homsupp_\top(\mathcal{S})$ 
    there is a vector $\vec{s}=(s_1,\dots,s_z)$ of signatures such that
    $ \prod_{i=1}^z \chi(d_i,s_i) \neq 0$,
    where $d_i$ is the degree of the $i$-th vertex.
    
    The proof follows verbatim the proof of Lemma 4.6 in~\cite{PeyerimhoffRSSVW23}, with the only exception being that we use our oracle for computing the following linear combinations (see Corollary~\ref{cor:collect_coeffs}):
    \[\sum_{\vec{\rho}\in \mathcal{F}(H)} \mathsf{coeff}_{H,\vec{s}}(\vec{\rho})  \cdot \#\homscp(\fracture{H}{\vec{\rho}} \to (G',h'))\,,\]
    where $(G',h')$ are the (coloured) Tensor products of $(G,c_G)$ and the carefully selected graphs used in~\cite[Lemma 4.6]{PeyerimhoffRSSVW23}.
    Finally, note that, by Lemma~\ref{lem:coeffs_done}, we have $\mathsf{coeff}(\vec{\top}) = \prod_{i= 1}^z \chi(d_i,s_i)$; recall that $\vec{\top}$ is the fracture consisting of the coarsest partition (containing only one block) for each vertex of $H$. Since  $ \prod_{i=1}^z \chi(d_i,s_i) \neq 0$, the term 
    \[ \#\homscp(\fracture{H}{\vec{\top}} \to (G',h')) = \#\cphoms(H \to (G',h')) \]
    survives with a non-zero coefficient and can be isolated by solving the system of linear equations used in the proof of~\cite[Lemma 4.6]{PeyerimhoffRSSVW23}. Finally we conclude by pointing out that an inspection of the latter proof reveals that the reduction runs in FPT-near-linear time as promised.
\end{proof}

The final ingredient for our trichotomy result provides upper and lower bounds on the treewidth of graphs in $\homsupp(\mathcal{S})$ and $\homsupp_\top(\mathcal{S})$.

\begin{lemma}\label{lem:hom_support_characterisation}
    Let $\mathcal{S}$ be a finite set of signatures. 
    \item[(1)] If $\mathcal{S}$ is of type $\mathbb{T}[\mathsf{Lin}]$ then $\homsupp(\mathcal{S})$ only contains acyclic graphs.
    \item[(2)] If $\mathcal{S}$ is of type $\mathbb{T}[\omega]$ then $\homsupp_\top(\mathcal{S})$ contains the triangle, but $\homsupp(\mathcal{S})$ only contains graphs of treewidth at most $2$.
    \item[(3)] If $\mathcal{S}$ is of type $\mathbb{T}[\infty]$ then there exists a constant $d\geq 3$ such that $\homsupp_\top(\mathcal{S})$ contains all $d$-regular graphs
\end{lemma}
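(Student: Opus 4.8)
The plan is to unwind the definitions of the weak and strong homomorphism supports and of the three types, relying on a single elementary observation about fractured graphs that I would establish first: for every graph $H$ without isolated vertices, every fracture $\vec{\rho}$, every $v\in V(H)$ and every block $B\in\vec{\rho}(v)$, the vertex $v^B$ of $\fracture{H}{\vec{\rho}}$ has degree exactly $|B|$. This holds because $\fracture{H}{\vec{\rho}}=M_H/\tau$ and, since $H$ is simple, the partition $\tau$ can neither merge the two endpoints of a single edge-copy (which would create a loop) nor send two distinct edge-copies incident to $v$ to a common neighbour of $v^B$ (which would create a multi-edge), so each of the $|B|$ edges of $B$ contributes exactly one incidence at $v^B$. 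Consequently $\Delta(\fracture{H}{\vec{\rho}})=\max_{v\in V(H),\,B\in\vec{\rho}(v)}|B|$, and for the finest fracture $\fracture{H}{\vec{\bot}}=M_H$ is a disjoint union of $|E(H)|$ independent edges.

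With this in hand, each of the three ``upper bound'' claims reduces to inspecting when the defining product $\prod_{i=1}^z\prod_{B\in\vec{\rho}(v_i)}\chi(|B|,s_i)$ can be non-zero, which forces $\chi(|B|,s_i)\neq 0$ for every block $B$ (and blocks are non-empty). For type $\mathbb{T}[\mathsf{Lin}]$, where $\chi(d,s)=0$ for all $s\in\mathcal{S}$ and $d\geq 2$, this forces every block to be a singleton, i.e.\ $\vec{\rho}=\vec{\bot}$, so $\fracture{H}{\vec{\rho}}=M_H$ is acyclic and $\homsupp(\mathcal{S})$ contains only matchings. For type $\mathbb{T}[\omega]$, where $\chi(d,s)=0$ for all $s\in\mathcal{S}$ and $d\geq 3$, this forces every block to have size at most $2$, so by the degree observation $\fracture{H}{\vec{\rho}}$ has maximum degree at most $2$, hence is a disjoint union of paths and cycles and therefore has treewidth at most $2$.

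For the two positive statements I would exhibit a single suitable signature in each case. For type $\mathbb{T}[\omega]$, the type hypothesis provides $s^\star\in\mathcal{S}$ with $\chi(2,s^\star)\neq 0$; choosing $H=K_3$, all of whose vertices have degree $2$, together with the constant tuple $\vec{s}=(s^\star,s^\star,s^\star)$ gives $\prod_{i=1}^3\chi(d_i,s_i)=\chi(2,s^\star)^3\neq 0$, so $K_3\in\homsupp_\top(\mathcal{S})$. For type $\mathbb{T}[\infty]$, fix $s^\star\in\mathcal{S}$ and a constant $d\geq 3$ with $\chi(d,s^\star)\neq 0$; then for any $d$-regular graph $H$ on vertex set $[z]$ the constant tuple $\vec{s}=(s^\star,\dots,s^\star)$ yields $\prod_{i=1}^z\chi(d_i,s_i)=\chi(d,s^\star)^z\neq 0$, so every $d$-regular graph lies in $\homsupp_\top(\mathcal{S})$.

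I do not expect a genuinely hard step here; the only point that needs a little care is the structural observation that fracturing preserves ``local'' degrees, $\deg_{\fracture{H}{\vec{\rho}}}(v^B)=|B|$, so that bounding block sizes bounds the maximum degree of the fractured graph. Combined with the standard fact that graphs of maximum degree at most $2$ have treewidth at most $2$, this is what turns the purely arithmetic conditions on the fingerprints $\chi(|B|,s_i)$ into the stated structural conclusions.
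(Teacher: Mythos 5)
Your proposal is correct and follows essentially the same route as the paper's proof: a non-zero coefficient forces $\chi(|B|,s_i)\neq 0$ for every block, which bounds block sizes (and hence degrees of the fractured graph) in cases (1) and (2), while the positive claims use exactly the same witnesses ($K_3$ with the constant tuple for $\chi(2,s^\star)\neq 0$, and arbitrary $d$-regular graphs with the constant tuple for $\chi(d,s^\star)\neq 0$). The only difference is that you make explicit the observation $\deg_{\fracture{H}{\vec{\rho}}}(v^B)=|B|$, which the paper uses implicitly in contrapositive form.
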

\begin{proof}
    We prove all cases separately.
    \begin{itemize}
        \item[(1)] If $\mathcal{S}$ is of type $\mathbb{T}[\mathsf{Lin}]$ then $\chi(d,s)=0$ for all $s\in \mathcal{S}$ and $d\geq 2$. Assume for contradiction that there is a graph $F\in \homsupp(\mathcal{S})$ that contains a vertex $u$ of degree at least $2$. Then there is also a graph $H$ with $V(H)=[z]$ for some $z$, a $z$-tuple $\vec{s}=(s_1,\dots,s_z)$ of signatures and a fracture $\vec{\rho}$ of $H$ such that $F \cong \fracture{H}{\vec{\rho}}$ and
        \begin{equation}\label{eq:types_tw_case_1}
            \prod_{i=1}^z \prod_{B\in \vec{\rho}(v_i)} \chi(|B|,s_i)\neq 0\,.
        \end{equation}
        Let $i\in [z]$ and $B \in \vec{\rho}(v_i)$ such that $u$ is mapped to $v^B$ by the isomorphism from $F$ to $\fracture{H}{\vec{\rho}}$. Thus $v^B$ has degree at least $2$ in $\fracture{H}{\vec{\rho}}$. But then $|B|\geq 2$ as well, and thus $\chi(|B|,s_i)=0$, contradicting~(\ref{eq:types_tw_case_1}). 
        Thus $\homsupp(\mathcal{S})$ contains in fact only matchings (i.e., graphs with degree $1$), which are clearly acyclic.
        \item[(2)] The claim that $\homsupp(\mathcal{S})$ contains only graphs of treewidth at most $2$ holds with an analogous argument as in Case (1): We obtain that no graph in $\homsupp(\mathcal{S})$ can contain a vertex of degree at least $3$, but graphs of degree at most $2$ cannot have treewidth $3$ or larger. 

        Hence it only remains to prove that $\homsupp_\top(\mathcal{S})$ contains the triangle.
        Let $s\in \mathcal{S}$ with $\chi(2,s)\neq 0$. Let $H$ be the triangle with $V(H)=\{1,2,3\}$ and set $\vec{s}=(s,s,s)$. Note that $d_1=d_2=d_3 =2$. Thus 
        \[ \prod_{i=1}^3 \chi(d_i,s_i) = \chi(2,s)^3 \neq 0\,,\]
        and hence the triangle is contained in $\homsupp_\top(\mathcal{S})$.
        \item[(3)] If $\mathcal{S}$ is of type $\mathbb{T}[\infty]$ then there are $s\in \mathcal{S}$ and $d\geq 3$ with $\chi(d,s)\neq 0$.
        Let $H$ be any $d$-regular graph and assume w.l.o.g.\ that $V(H)=[z]$ for some $z$. Set $\vec{s}=(s,\dots,s)$ to be $z$ tuple each entry of which is $s$. Since $H$ is $d$-regular we have $d_i=d$ for all $i\in [z]$, and thus
        \[ \prod_{i=1}^z \chi(d_i,s_i) = \chi(d,s)^3 \neq 0\,.\]
        and hence $H$ is contained in $\homsupp_\top(\mathcal{S})$.
    \end{itemize}
\end{proof}

We are now able to prove~\cref{main_thm}, which we restate for the readers convenience.
\begin{theorem}[\cref{main_thm}, restated]
    Let $\mathcal{S}$ be a finite set of signatures.
    \begin{itemize}
        \item[(I)] If $\mathcal{S}$ is of type $\mathbb{T}[\mathsf{Lin}]$, then $\textsc{p-Holant}(\mathcal{S})$ can be solved in FPT-near-linear time, that is, there is a computable function $f$ such that $\textsc{p-Holant}(\mathcal{S})$ can be solved in time $f(k)\cdot \tilde{\mathcal{O}}(|V(\Omega)|+|E(\Omega)|)$.
        \item[(II)] If $\mathcal{S}$ is of type $\mathbb{T}[\omega]$, then $\textsc{p-Holant}(\mathcal{S})$ can be solved in FPT-matrix-multiplication time, that is, there is a computable function $f$ such that $\textsc{p-Holant}(\mathcal{S})$ can be solved in time $f(k)\cdot \mathcal{O}(|V(\Omega)|^{\omega})$. Moreover, $\textsc{p-Holant}(\mathcal{S})$ cannot be solved in time $f(k)\cdot \tilde{\mathcal{O}}(|V(\Omega)|+|E(\Omega)|)$ for any function $f$, unless the Triangle Conjecture fails.
        \item[(III)] Otherwise, that is, if $\mathcal{S}$ is of type $\mathbb{T}[\infty]$, $\textsc{p-Holant}(\mathcal{S})$ is $\#\W[1]$-complete. Moreover, $\textsc{p-Holant}(\mathcal{S})$ cannot be solved in time $f(k)\cdot |V(\Omega)|^{o(k/\log k)}$ for any function $f$, unless ETH fails. 
    \end{itemize}
\end{theorem}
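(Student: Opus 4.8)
The plan is to obtain the trichotomy purely by assembling the reductions and combinatorial facts already proved; no new argument is needed, only careful bookkeeping of running times. The backbone is that, by Lemma~\ref{lem:main_equivalence_hard}, $\holantprob(\mathcal{S})\fptinterlinred\holantprobstar(\mathcal{S})$, and, by Lemma~\ref{lem:holant_sandwich}, $\holantprobstar(\mathcal{S})$ is sandwiched
\[\#\cphomsprob(\homsupp_\top(\mathcal{S}))\ \fptlinred\ \holantprobstar(\mathcal{S})\ \fptlinred\ \#\colhomsprob(\homsupp(\mathcal{S})),\]
so the whole classification is governed by the two classes $\homsupp(\mathcal{S})$ and $\homsupp_\top(\mathcal{S})$, whose treewidth is controlled by Lemma~\ref{lem:hom_support_characterisation}. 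I would transport running times along each of these linear FPT Turing-reductions using the transitivity lemma for linear FPT Turing-reductions, once per link, keeping in mind that the encoding size of a signature grid $\Omega$ is $\tilde{\mathcal{O}}(|V(\Omega)|+|E(\Omega)|)$.

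For the upper bounds: if $\mathcal{S}$ is of type $\mathbb{T}[\mathsf{Lin}]$ then $\homsupp(\mathcal{S})$ consists of acyclic graphs (Lemma~\ref{lem:hom_support_characterisation}(1)), so $\#\colhomsprob(\homsupp(\mathcal{S}))$ runs in FPT-near-linear time (Fact~\ref{fact:colhoms_lintime}); pulling this back along $\holantprob(\mathcal{S})\fptlinred\holantprobstar(\mathcal{S})\fptlinred\#\colhomsprob(\homsupp(\mathcal{S}))$ with the transitivity lemma (exponent $c=1$) yields the claimed $f(k)\cdot\tilde{\mathcal{O}}(|V(\Omega)|+|E(\Omega)|)$ algorithm. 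If $\mathcal{S}$ is of type $\mathbb{T}[\omega]$ then $\homsupp(\mathcal{S})$ has treewidth at most $2$ (Lemma~\ref{lem:hom_support_characterisation}(2)), so $\#\colhomsprob(\homsupp(\mathcal{S}))$ runs in time $f(|H|)\cdot\mathcal{O}(|V(G)|^\omega)$ (Lemma~\ref{lem:colhom_matrix_multi}). Here I would not apply the transitivity lemma verbatim (it is phrased in the total input size, and a signature grid may have $\Theta(|V(\Omega)|^2)$ edges); instead I would observe that the reductions of Lemmas~\ref{lem:main_equivalence_hard} and~\ref{lem:holant_sandwich} only query homomorphism-counting instances whose host graph is a Tensor product of $G$ with a graph of size bounded by a function of $k$, hence has $O_k(|V(\Omega)|)$ vertices, and that (since $\omega\ge 2$) the $O_k(|V(\Omega)|^2)$ cost of forming those Tensor products and of reading the input is absorbed; this gives the $f(k)\cdot\mathcal{O}(|V(\Omega)|^\omega)$ bound.

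For the lower bounds: if $\mathcal{S}$ is of type $\mathbb{T}[\omega]$ then $\homsupp_\top(\mathcal{S})$ contains the triangle (Lemma~\ref{lem:hom_support_characterisation}(2)), so via $\#\cphomsprob(\homsupp_\top(\mathcal{S}))\fptlinred\holantprobstar(\mathcal{S})\fptlinred\holantprob(\mathcal{S})$ (Lemma~\ref{lem:holant_sandwich} and Fact~\ref{fact:easy_direction_equivalence}) an FPT-near-linear algorithm for $\holantprob(\mathcal{S})$ would, again by the transitivity lemma, give one for $\#\cphomsprob(\homsupp_\top(\mathcal{S}))$, contradicting Lemma~\ref{lem:cphom_lower_bounds}(1) unless the Triangle Conjecture fails. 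If $\mathcal{S}$ is of type $\mathbb{T}[\infty]$ then, by Lemma~\ref{lem:hom_support_characterisation}(3), there is a fixed $d\ge 3$ with every $d$-regular graph in $\homsupp_\top(\mathcal{S})$; I would fix an explicit recursively enumerable family $\mathcal{H}'$ of $d$-regular graphs with $\tw(H)=\Theta(|V(H)|)$ (e.g.\ an explicit expander family), which is then a subclass of $\homsupp_\top(\mathcal{S})$ of unbounded treewidth. By Lemma~\ref{lem:cphom_lower_bounds}(2), $\#\cphomsprob(\mathcal{H}')$ is $\#\W[1]$-hard and, under ETH, not solvable in time $f(|H|)\cdot|V(G)|^{o(\tw(H)/\log\tw(H))}$; composing $\#\cphomsprob(\mathcal{H}')\fptlinred\holantprob(\mathcal{S})$ with the Remark that $\holantprob(\mathcal{S})$ always reduces to $\#\textsc{Clique}$ gives $\#\W[1]$-completeness. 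Finally, since this reduction maps a pattern $H\in\mathcal{H}'$ to $\holantprob$-instances of parameter $k=\Theta(|E(H)|)=\Theta(\tw(H))$ (using the defining property of $\mathcal{H}'$) and host graph of $O(|V(G)|)$ vertices, an $f(k)\cdot|V(\Omega)|^{o(k/\log k)}$ algorithm for $\holantprob(\mathcal{S})$ would refute the ETH bound just stated.

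I do not expect any genuine obstacle: the real content of the trichotomy lies entirely in the factorisation of $\mathsf{coeff}_{H,\vec{s}}(\vec{\rho})$ into a product of signature fingerprints $\chi(|B|,s)$ (Lemma~\ref{lem:coeffs_done}) and in the ensuing treewidth analysis of $\homsupp(\mathcal{S})$ and $\homsupp_\top(\mathcal{S})$ (Lemma~\ref{lem:hom_support_characterisation}), both of which are already established. The only points needing real care are matching the exact exponents: keeping host graphs linear in $|V(\Omega)|$ so that the matrix-multiplication upper bound reads $\mathcal{O}(|V(\Omega)|^\omega)$ and not $\mathcal{O}(|V(\Omega)|^{2\omega})$, and tying the Holant parameter $k$ to $\tw(H)$ --- which is exactly what forces the restriction to the expander family $\mathcal{H}'$ --- so that the ETH lower bound in type $\mathbb{T}[\infty]$ comes out as the tight $|V(\Omega)|^{o(k/\log k)}$.
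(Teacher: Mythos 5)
Your proposal is correct and follows essentially the same route as the paper's own proof: reduce to $\holantprobstar(\mathcal{S})$ via Lemma~\ref{lem:main_equivalence_hard}, use the sandwich of Lemma~\ref{lem:holant_sandwich}, apply the treewidth characterisation of Lemma~\ref{lem:hom_support_characterisation}, and obtain the upper bounds from Fact~\ref{fact:colhoms_lintime} and Lemma~\ref{lem:colhom_matrix_multi} and the lower bounds from Lemma~\ref{lem:cphom_lower_bounds} together with $d$-regular expander families for type $\mathbb{T}[\infty]$. Your extra bookkeeping (keeping host graphs linear in $|V(\Omega)|$ for the matrix-multiplication bound, and tying $k=\Theta(\tw(H))$ via the expander restriction for the ETH bound) only makes explicit details the paper leaves implicit.
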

\begin{proof}
Thanks to Lemma~\ref{lem:main_equivalence_hard}, it suffices to prove the classification for $\holantprobstar(\mathcal{S})$. Recall that by Lemma~\ref{lem:holant_sandwich}, we have
\[\#\cphomsprob(\homsupp_\top(\mathcal{S})) \fptlinred \holantprobstar(\mathcal{S}) \fptlinred \#\colhomsprob(\homsupp(\mathcal{S})) \,.\]
For each of the three types, we can thus obtain the lower bound from $\#\cphomsprob(\homsupp_\top(\mathcal{S}))$ and the upper bound from $\#\colhomsprob(\homsupp(\mathcal{S}))$.
\begin{itemize}
    \item[(I)] If $\mathcal{S}$ is of type $\mathbb{T}[\mathsf{Lin}]$, then $\homsupp(\mathcal{S})$ only contains acyclic graphs by Lemma~\ref{lem:hom_support_characterisation}. Thus $\#\colhomsprob(\homsupp(\mathcal{S}))$ can be solved in FPT-linear-time by Fact~\ref{fact:colhoms_lintime}.
    \item[(II)] If $\mathcal{S}$ is of type $\mathbb{T}[\omega]$ then, by Lemma~\ref{lem:hom_support_characterisation}, $\homsupp_\top(\mathcal{S})$ contains the triangle, but $\homsupp(\mathcal{S})$ only contains graphs of treewidth at most $2$. 
    
    By Lemma~\ref{lem:cphom_lower_bounds}~(1), $\#\cphomsprob(\homsupp_\top(\mathcal{S}))$ can thus not be solved in FPT-near-linear time unless the Triangle Conjecture fails.  

    However, by Lemma~\ref{lem:colhom_matrix_multi}, $\#\colhomsprob(\homsupp(\mathcal{S}))$ can be solved in time $f(|H|)\cdot \mathcal{O}(V(|G|)^{\omega})$ for some computable function $f$.
    \item[(III)]
    If $\mathcal{S}$ is of type $\mathbb{T}[\infty]$ then, by Lemma~\ref{lem:hom_support_characterisation}, $\homsupp_\top(\mathcal{S})$ contains, for some $d\geq 3$, all $d$-regular graphs. In particular, it must contain as a subset each family of $d$-regular expander graphs. Since $d$-regular expanders have treewidth linear in their size\footnote{See for instance Proposition 1 in~\cite{Grohe&2009expansion} and set $\alpha=1/2$.}, the desired lower bounds for $\#\cphomsprob(\homsupp_\top(\mathcal{S}))$, and thus for $\holantprobstar(\mathcal{S})$, follow from Lemma~\ref{lem:cphom_lower_bounds}~(2).
\end{itemize}
\end{proof}

\subsection{Consequences for Modular Counting}\label{sec:modular}
Our analysis of $\holantprob(\mathcal{S})$ applies verbatim to the case of counting modulo a fixed prime $p$, if we restrict ourselves to signatures $s$ with $s(0)\neq 0\mod p$. For the formal statement, we define, for each prime $p$, the problem $\holantprob_p(\mathcal{S})$ to be the version of $\holantprob(\mathcal{S})$ where we output the value of the holant modulo $p$. Likewise, we define the types $\mathbb{T}_p[\mathsf{lin}]$, $\mathbb{T}_p[\omega]$, and $\mathbb{T}_p[\infty]$ by evaluating the fingerprints modulo $p$.

For our lower bounds, we require the parameterised complexity class $\mathsf{Mod}_p\text{-}\W[1]$, which consists of all parameterised counting problems reducible to the problem of counting $k$-cliques modulo $p$, parameterised by $k$ (see~\cite{CurticapeanDH21}). Moreover, we will rely on the \emph{randomised} Exponential Time Hypothesis rETH, which is identical to ETH except for additionally ruling out sub-exponential time \emph{randomised}, bounded-error algorithms for $3\textsc{-SAT}$. We note that some authors already state ETH in a way to account for randomised algorithms~\cite{CurticapeanDH21}; however, to avoid confusion, we emphasise the need for rETH in our result on modular counting. 

\begin{theorem}\label{thm:modular_classification}
     Let $p$ be a prime, and let $\mathcal{S}$ be a finite set of signatures with $s(0)\not\equiv 0 \mod p$ for each $s\in \mathcal{S}$.
    \begin{itemize}
        \item[(I)] If $\mathcal{S}$ is of type $\mathbb{T}_p[\mathsf{Lin}]$, then $\holantprob_p(\mathcal{S})$ can be solved in FPT-near-linear time.
        \item[(II)] If $\mathcal{S}$ is of type $\mathbb{T}_p[\omega]$, then $\holantprob_p(\mathcal{S})$ can be solved in FPT-matrix-multiplication time. Moreover, $\holantprob_p(\mathcal{S})$ cannot be solved in FPT-near-linear time, unless the Triangle Conjecture fails.
        \item[(III)] Otherwise, that is, if $\mathcal{S}$ is of type $\mathbb{T}_p[\infty]$, $\holantprob_p(\mathcal{S})$ is $\mathsf{Mod}_p\text{-}\W[1]$-hard. Moreover, $\holantprob(\mathcal{S})$ cannot be solved in time $f(k)\cdot |V(\Omega)|^{o(k/\log k)}$ for any function $f$, unless rETH fails.  \qed
    \end{itemize}
\end{theorem}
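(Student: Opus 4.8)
The plan is to re-run the entire pipeline that proves Main Theorem~\ref{main_thm} --- the equivalence $\holantprob\fptinterlinred\holantprobstar$ (Lemma~\ref{lem:main_equivalence_hard}), the homomorphism expansion of Corollary~\ref{cor:collect_coeffs} with the coefficient formula of Lemma~\ref{lem:coeffs_done}, and the sandwich of Lemma~\ref{lem:holant_sandwich} --- with all arithmetic performed in $\mathbb{F}_p$ rather than in $\mathbb{Q}$, and then to feed in modular hardness results for colour-prescribed homomorphism counting. Every step of the pipeline is a linear-algebraic identity over the base ring; the only places that use more than the ring structure are (i) the divisions by $s_i(0)$ throughout the definitions of $\chi$ and $\mathsf{coeff}$ and in Corollary~\ref{cor:collect_coeffs}, (ii) the divisions by automorphism counts $\#\auts(H,\nu_H,\xi_H)$ inside the proof of Lemma~\ref{lem:main_equivalence_hard}, (iii) the appeal to Dedekind interpolation there, and (iv) the inversion of the fractured-graph linear system in the complexity-monotonicity step of Lemma~\ref{lem:holant_sandwich}. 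All four survive modulo $p$: by the hypothesis $s(0)\not\equiv 0\bmod p$ every $s_i(0)$ is a unit in $\mathbb{F}_p$; all colour-prescribed graphs whose automorphism counts appear carry an injective edge-colouring and have no isolated vertices, so an automorphism must fix every edge, hence every vertex, the automorphism group is trivial, and the counts equal $1$; Dedekind interpolation rests only on the semigroup and multiplicativity properties (Propositions~\ref{prop:semigroup} and~\ref{prop:linear}) together with the separation property (Proposition~\ref{prop:distinct}) and the linear independence of distinct field-valued characters, all valid over any field --- and Proposition~\ref{prop:distinct} itself goes through over $\mathbb{F}_p$ because its proof only needs $\#\surhoms{(H,\nu_H,\xi_H)}{(H,\nu_H,\xi_H)}=\#\auts(H,\nu_H,\xi_H)\not\equiv 0\bmod p$, again by triviality of the automorphism group; and the fractured-graph system is unitriangular with respect to the refinement order (its diagonal entries being automorphism counts of fractured $H$-coloured graphs, which are $1$), hence invertible over $\mathbb{F}_p$.

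Consequently, writing $\homsupp^{(p)}(\mathcal{S})$ and $\homsupp_\top^{(p)}(\mathcal{S})$ for the weak and strong homomorphism supports defined by the fingerprints taken modulo $p$, the mod-$p$ analogue of Lemma~\ref{lem:holant_sandwich} holds:
\[ \#\cphomsprob_p(\homsupp_\top^{(p)}(\mathcal{S})) \fptlinred \holantprob_p(\mathcal{S}) \fptlinred \#\colhomsprob_p(\homsupp^{(p)}(\mathcal{S})) \,, \]
where a subscript $p$ denotes the modulo-$p$ version of the corresponding counting problem. Moreover, Lemma~\ref{lem:hom_support_characterisation} transfers verbatim to these supports: its proof is a pure ``a vertex of degree $\geq 2$ (resp.\ $\geq 3$) forces a vanishing fingerprint'' argument, which is indifferent to whether ``vanishing'' means $0$ or $0\bmod p$. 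Thus for $\mathbb{T}_p[\mathsf{Lin}]$ the class $\homsupp^{(p)}(\mathcal{S})$ is acyclic; for $\mathbb{T}_p[\omega]$ it has treewidth $\leq 2$ while $\homsupp_\top^{(p)}(\mathcal{S})$ contains the triangle; and for $\mathbb{T}_p[\infty]$ the class $\homsupp_\top^{(p)}(\mathcal{S})$ contains all $d$-regular graphs for some $d\geq 3$, hence a family of expanders and thus graphs of unbounded treewidth. The upper bounds in (I) and (II) now follow exactly as in Main Theorem~\ref{main_thm} via the right-hand reduction: counting coloured homomorphisms from acyclic patterns (Fact~\ref{fact:colhoms_lintime}), respectively from treewidth-$2$ patterns (Lemma~\ref{lem:colhom_matrix_multi}), can be carried out with all arithmetic in $\mathbb{F}_p$ at no change in running time, and fast matrix multiplication in particular works over $\mathbb{F}_p$ with the same exponent $\omega$.

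It remains to supply the modular lower bounds, i.e.\ modular analogues of Lemma~\ref{lem:cphom_lower_bounds}. For (III) we need that $\#\cphomsprob_p(\mathcal{H})$ is $\mathsf{Mod}_p\text{-}\W[1]$-hard and, under rETH, not solvable in time $f(|H|)\cdot|V(G)|^{o(\tw(H)/\log\tw(H))}$ whenever $\mathcal{H}$ has unbounded treewidth: the hardness is the modular counterpart of the Dalmau--Jonsson classification of parameterised counting CSPs, and the fine-grained bound is the modular counterpart of Marx's ``you cannot beat treewidth'' theorem; the passage from ETH to rETH comes from the randomised isolation step needed to make the residue modulo $p$ faithful to the existence of a solution. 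For (II) we need that $\#\cphomsprob_p(\{K_3\})$ --- counting the colourful triangles of a $3$-edge-coloured graph modulo $p$ --- cannot be done in near-linear time unless the Triangle Conjecture fails; here we reduce from triangle detection by a randomised near-linear-time reduction: a random $3$-partition reduces detection to detecting a colourful triangle, and a Bernoulli vertex-subsampling across geometric density scales (an isolation argument kept within near-linear instance size) ensures that, with constant probability, the subsampled instance has exactly one colourful triangle whenever the original graph has a triangle, so a near-linear-time modular counting algorithm yields a near-linear-time randomised triangle detector, contradicting the Triangle Conjecture (which explicitly also forbids randomised algorithms).

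The ``verbatim lift'' of the algebraic pipeline to $\mathbb{F}_p$ is routine once the two observations --- invertibility of $s_i(0)$ modulo $p$ and triviality of the relevant automorphism groups --- have been recorded. The genuine work, and the main obstacle, is the modular lower-bound layer of the previous paragraph: transferring Marx-type treewidth lower bounds to $\mathsf{Mod}_p\text{-}\W[1]$ and rETH, and producing a near-linear (not merely polynomial-time) randomised reduction from triangle detection to counting triangles modulo $p$. This is precisely why Theorem~\ref{thm:modular_classification} must invoke $\mathsf{Mod}_p\text{-}\W[1]$ and rETH in place of $\#\W[1]$ and ETH.
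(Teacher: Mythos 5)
Your overall route is the same as the paper's: re-run the exact-counting pipeline with all fingerprints, coefficients and homomorphism supports read modulo $p$ (well defined since each $s(0)$ is a unit mod $p$), and then substitute modular counterparts for the two lower-bound ingredients, which is precisely where $\mathsf{Mod}_p\text{-}\W[1]$ and rETH enter. The difference lies in how the modular lower bounds are discharged. The paper does not reprove them: for the triangle case it invokes a Schwartz--Zippel-type randomised reduction due to Williams et al., and for the unbounded-treewidth case (both $\mathsf{Mod}_p\text{-}\W[1]$-hardness and the rETH bound) as well as for the mod-$p$ validity of the fractured-graph monotonicity argument inside Lemma~\ref{lem:holant_sandwich}, it cites the corresponding lemmas of earlier work on modular counting of small patterns. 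You instead sketch your own vertex-subsampling isolation reduction for the triangle case (plausible, and in the same spirit as the cited argument, though the ``exactly one surviving colourful triangle with constant probability'' step still needs a second-moment-type analysis over the density scales) and merely assert the existence of ``modular counterparts'' of Dalmau--Jonsson and Marx for the treewidth case; as you acknowledge, that layer is the genuine work, and in your write-up it remains an assertion rather than a proof or a concrete citation.

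There is also a concrete error in your justification of the ``verbatim lift'' of the equivalence $\holantprob_p(\mathcal{S})\fptinterlinred\holantprobstar_p(\mathcal{S})$: you claim that the graphs in $\Gamma(\ell,k,\mathcal{S})$ (injective edge-colouring, no isolated vertices) have trivial automorphism groups because an automorphism fixes every edge and hence every vertex. Fixing an edge setwise does not fix its endpoints: a component consisting of a single edge whose endpoints carry the same signature admits the endpoint swap, so $\#\auts(H,\nu_H,\xi_H)$ is in general a nontrivial power of $2$. For odd $p$ this is harmless, but for $p=2$ --- the case powering Theorem~\ref{thm:colmatch_mod_2_intro} --- the inverses $\#\auts(H,\nu_H,\xi_H)^{-1}$ occurring in $\mathsf{coeff}(F,\nu_F,\xi_F)$ need not exist in $\mathbb{F}_2$, your appeal to Proposition~\ref{prop:distinct} over $\mathbb{F}_p$ fails (for a single same-coloured edge $F$ one has $\#\homs{F}{\cdot}\equiv 0 \pmod 2$ for every target, so mod-$2$ hom-vectors do not separate non-isomorphic patterns), and the Dedekind-interpolation system is not guaranteed to be invertible modulo $2$: already for $k=1$ every tensor-product oracle value is even, while the Holant mod $2$ is not determined by these values. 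The triviality you want does hold for the $H$-coloured fractured graphs used in the monotonicity step of Lemma~\ref{lem:holant_sandwich} (which is what the paper's cited prior work exploits); it is exactly the $\Gamma(\ell,k,\mathcal{S})$-graphs of Lemma~\ref{lem:main_equivalence_hard} for which your claim is false, so this part of your argument needs a genuine repair for $p=2$ rather than the blanket ``all arithmetic in $\mathbb{F}_p$'' statement (the paper itself passes over this step silently, but your explicit justification is the part that is wrong).
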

\begin{proof}
    As mentioned previously, the classification is proved almost verbatim as in the case of exact counting. There are only two places in the proof which require slight modifications or further explanation; we discuss them subsequently.
    \begin{itemize}
        \item Lemma~\ref{lem:cphom_lower_bounds} on lower bounds on $\#\textsc{cp-Hom}(\mathcal{H})$ remains true if colour-prescribed homomorphisms are counted modulo $p$: If $\mathcal{H}$ contains a triangle, we still reduce from finding a triangle in a graph $G$ by constructing the tensor $G\otimes K_3$. However, the number of triangles in $G\otimes K_3$ is divisible by $3$ and might be divisible by other primes too. For this reason, we use a version of the Schwartz-Zippel-Lemma due to Williams et al.\ \cite{WilliamsWWY15} to complete the reductions; the details are identical to the reduction in Lemma~2.3 in the full version~\cite{PeyerimhoffRSSVarxiv} of~\cite{PeyerimhoffRSV21}. 

        If $\mathcal{H}$ has unbounded treewidth, the claim follows immediately from Lemma~2.3 in the full version~\cite{PeyerimhoffRSSVarxiv} of~\cite{PeyerimhoffRSV21}. 
        \item The reduction $\#\cphomsprob(\homsupp_\top(\mathcal{S})) \fptlinred \holantprobstar(\mathcal{S})$ in Lemma~\ref{lem:holant_sandwich} relies on the Complexity Monotonicity framework for counting homomorphisms from fractured graphs in~\cite[Section 4.1]{PeyerimhoffRSSVW23}. However, as observed in the proof of Lemma~2.4 in the full version~\cite{PeyerimhoffRSSVarxiv} of~\cite{PeyerimhoffRSV21}, the framework applies verbatim to counting modulo a fixed prime.
    \end{itemize}
\end{proof}

We provide the following example application of our trichotomy for modular counting; recall that the problem $\oplus\textsc{ColMatch}$ gets as input a positive integer $k$ and a $k$-edge-coloured graph $G$, and the output is the parity of the number of edge-colourful $k$-matchings in $G$; the parameter is $k$.
\begin{corollary}
    $\oplus\textsc{ColMatch}$ can be solved in FPT-matrix-multiplication time. Moreover, it cannot be solved in FPT-near-linear time, unless the Triangle Conjecture fails.
\end{corollary}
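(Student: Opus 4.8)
The plan is to realise $\oplus\textsc{ColMatch}$ as the $p=2$ instance of the modular parameterised Holant problem for a single signature, and then simply invoke Theorem~\ref{thm:modular_classification}. First I would observe that an instance $(k,G,\xi)$ of $\oplus\textsc{ColMatch}$ is, verbatim, the instance of $\holantprob_2(\{\mathsf{hw}_{\leq 1}\})$ obtained by equipping every vertex $v$ of $G$ with the signature $\mathsf{hw}_{\leq 1}$, where $\mathsf{hw}_{\leq 1}(0)=\mathsf{hw}_{\leq 1}(1)=1$ and $\mathsf{hw}_{\leq 1}(n)=0$ for $n\geq 2$: for a colourful edge subset $A$ the product $\prod_{v} \mathsf{hw}_{\leq 1}(|A\cap E(v)|)$ equals $1$ if $A$ is a matching and $0$ otherwise, so $\holant(\Omega)$ counts the colourful $k$-matchings of $G$, and reducing modulo $2$ gives exactly $\oplus\textsc{ColMatch}$. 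In particular no transformation of the input is needed -- the two problems coincide on the nose -- and the hypothesis $s(0)\not\equiv 0\bmod 2$ of Theorem~\ref{thm:modular_classification} holds since $\mathsf{hw}_{\leq 1}(0)=1$.

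The only real content is then to verify that $\mathcal{S}=\{\mathsf{hw}_{\leq 1}\}$ has modular type $\mathbb{T}_2[\omega]$, i.e., $\chi(d,\mathsf{hw}_{\leq 1})\equiv 0\bmod 2$ for all $d\geq 3$ but $\chi(2,\mathsf{hw}_{\leq 1})\not\equiv 0\bmod 2$. This is precisely the computation already carried out for the type-$\mathbb{T}[\omega]$ example in the introduction, and I would just recall it: for $d\geq 3$, every partition $\sigma$ of $[d]$ either contains a block $B$ of size $\geq 2$, in which case $\mathsf{hw}_{\leq 1}(|B|)=0$ annihilates its contribution, or $\sigma=\bot_d$ is the all-singletons partition, whose contribution carries the factor $(|\bot_d|-1)!=(d-1)!\equiv 0\bmod 2$; hence $\chi(d,\mathsf{hw}_{\leq 1})\equiv 0\bmod 2$. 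For $d=2$ the set $[2]$ has only the partitions $\top_2$ (a single block of size $2$, contribution $0$) and $\bot_2$, and the latter contributes $(-1)^{|\bot_2|-1}(|\bot_2|-1)!\cdot 1 = -1\equiv 1\bmod 2$, so $\chi(2,\mathsf{hw}_{\leq 1})=1\neq 0$ modulo $2$. Thus $\mathcal{S}$ is of type $\mathbb{T}_2[\omega]$.

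With these two observations in hand, the corollary follows immediately from Theorem~\ref{thm:modular_classification}(II): $\holantprob_2(\{\mathsf{hw}_{\leq 1}\})=\oplus\textsc{ColMatch}$ can be solved in FPT-matrix-multiplication time, and it cannot be solved in FPT-near-linear time unless the Triangle Conjecture fails. I do not expect any genuine obstacle here; the one point that deserves care is making sure that the identification of $\oplus\textsc{ColMatch}$ with $\holantprob_2(\{\mathsf{hw}_{\leq 1}\})$ really is an equality of problems -- so that both the upper and the lower bound transfer without loss -- rather than merely a reduction in one direction, which it is, since the signature grid is built from $G$ with no change to the underlying graph or the edge-colouring.
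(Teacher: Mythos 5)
Your proposal is correct and follows essentially the same route as the paper: it identifies $\oplus\textsc{ColMatch}$ with $\holantprob_2(\{\mathsf{hw}_{\leq 1}\})$, verifies that $\{\mathsf{hw}_{\leq 1}\}$ is of type $\mathbb{T}_2[\omega]$ via the same fingerprint computation (your case split for $d\geq 3$ — block of size $\geq 2$ versus all singletons — is just a cosmetic reorganisation of the paper's split along $|\sigma|\geq 3$ versus $|\sigma|\leq 2$), and then invokes Theorem~\ref{thm:modular_classification}. No gaps.
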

\begin{proof}
    Let $\mathsf{hw}_{\leq 1}$ be the signature defined by $\mathsf{hw}_{\leq 1}(x)=1$ for $x\leq 1$ and $\mathsf{hw}_{\leq 1}(x)=0$ otherwise. Clearly, the problem of counting edge-colourful $k$-matchings modulo $2$ is identical to $\holantprob_p(\{\mathsf{hw}_{\leq 1}\})$. 
    
    Now note that we have $(|\sigma|-1)!=0$ modulo $2$ whenever $|\sigma| \geq 3$, and, for $s=\mathsf{hw}_{\leq 1}$, we have $s(|B|)=0$ whenever $|B|\geq 2$. Therefore, for any $d\geq 3$, we have that $\chi(d,s)=0$ modulo $2$.

    However, note also that $\chi(2,d)=1$ modulo $2$: The set $[2]$ only has two partitions $\bot_2=\{\{1\},\{2\}\}$ and $\top_2=\{\{1,2\}\}$, and observe that $\top_2$ contains a block $B$ of size $2$, hence $s(|B|)$ and the contribution of $\top_2$ to $\chi(2,s)$ vanishes. Therefore 
    \[ \chi(2,s) = (-1)^{|\bot_2|-1}(|\bot_2|-1)! \prod_{B\in \bot_2} \frac{s(|B|)}{s(0)} = 1 \mod 2 \,. \]
    Note that dividing by $s(0)$ means multiplying with the inverse of $s(0)$ w.r.t.\ arithmetic modulo $p$, which must exist since $s(0)\neq 0\mod p$ and $p$ is a prime.
    Thus $\mathcal{S}=\{\mathsf{hw}_{\leq 1}\}$ is indeed of type $\mathbb{T}_2[\mathsf{\omega}]$. The claim hence follows from Theorem~\ref{thm:modular_classification} which concludes our proof.
\end{proof}

\begin{proof}[Proof of Theorem~\ref{thm:colmatch_mod_2_intro}]
    Follows immediately from the previous result and the fact that \[\oplus\textsc{ColMatch}\fptlinred\oplus\textsc{Match}\] via inclusion-exclusion (see Lemma~\ref{lem:col_to_uncol}).
\end{proof}

Finally, as mentioned as example in the abstract, it is easy to see that Theorem~\ref{thm:modular_classification} also implies hardness of counting edge-colourful $k$-matchings modulo $p$, for any prime $p>2$, since the type changes to $\mathbb{T}_p[\infty]$ in that case. We omit stating this example as a theorem since it has already been shown to be hard in previous work, using different methods, by Curticapean, Dell, and Husfeldt~\cite{CurticapeanDH21}.

\section{Extension to Signatures Allowing $s(0) = 0$}\label{sec:sig0}

So far, we have only considered signatures $s$ restricted to $s(0) \neq 0$. In this section, we lift this restriction and establish, similar to \Cref{main_thm}, a trichotomy for $\holantprob(\mathcal{S})$ for any finite set $\mathcal{S}$ of signatures $s$ without requiring $s(0) \neq 0$ (cf. \Cref{def:signatures_intro}).

\subsection{List Homomorphisms}

For our proofs, we need to extend the notion of coloured homomorphisms to list homomorphisms given below. We then show that all algorithmic results for counting coloured homomorphisms, mentioned in \Cref{sec:algo_count_col_homs}, also apply to counting list homomorphisms.

\begin{definition}\label{def:listHoms}
Let $H, G$ be two graphs and let $\mathcal{L} = (L_v)_{v\in V(H)}$ be a collection of sets $L_v \subseteq V(G)$. We set
\[
\homs{H}{G}[\mathcal{L}] = \{\phi \in \homs{H}{G} \mid \forall v \in V(H) : \phi(v) \in L_v\}.
\]
\end{definition}

\begin{definition}
Let $\mathcal{H}$ be a class of graphs. The problem $\#\listhomsprob(\mathcal{H})$ expects as input a graph $H \in \mathcal{H}$, a graph $G$, and a collection $\mathcal{L} = (L_v)_{v\in V(H)}$ of sets $L_v \subseteq V(G)$, and outputs $\#\homs{H}{G}[\mathcal{L}]$.
\end{definition}

As in the case of coloured homomorphisms (see \Cref{fact:colhoms_lintime}), we can interpret the lists $L_v$ as unary predicates and show that computing $\#\homs{H}{G}[\mathcal{L}]$ reduces to counting answers to acyclic conjunctive queries without quantified variables, which can be done in FPT-near-linear time (see e.g.\ \cite[Theorem 7]{BeraGLSS22}).

\begin{lemma}\label{lem:listHomsLinear}
Let $\mathcal{H}$ be a class of acyclic graphs. Then, $\#\listhomsprob(\mathcal{H})$ can be solved in time $f(|H|)\cdot\tilde{\mathcal{O}}(|V(G)| + |E(G)|)$, for some computable function $f$.   \qed 
\end{lemma}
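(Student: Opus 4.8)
The plan is to follow the very same route used for coloured homomorphisms in Fact~\ref{fact:colhoms_lintime}: phrase $\#\homs{H}{G}[\mathcal{L}]$ as the number of answers to an acyclic conjunctive query without quantified variables over a database whose size is linear in $|V(G)|+|E(G)|$. First I would set up the database $D$ with domain $V(G)$, one binary relation $E$ holding the (symmetric closure of the) edge relation of $G$, and, for each $v\in V(H)$, one unary relation $R_v$ holding the list $L_v$. Its encoding size satisfies $\|D\|\in\mathcal{O}(|V(G)|+|E(G)|+\sum_{v}|L_v|)$, and since $\sum_{v}|L_v|\le |V(H)|\cdot|V(G)|$ we get $\|D\|\le f(|H|)\cdot\mathcal{O}(|V(G)|+|E(G)|)$, with $D$ also constructible within this time bound.

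Next I would write the query
\[ Q\big((x_v)_{v\in V(H)}\big) \;=\; \bigwedge_{\{u,v\}\in E(H)} E(x_u,x_v)\;\wedge\;\bigwedge_{v\in V(H)} R_v(x_v)\,, \]
which has one free variable per vertex of $H$ and no quantified variables. Unfolding the definitions, a tuple $(a_v)_{v\in V(H)}\in V(G)^{V(H)}$ is an answer to $Q$ over $D$ if and only if $v\mapsto a_v$ lies in $\homs{H}{G}[\mathcal{L}]$, so the number of answers to $Q$ over $D$ equals $\#\homs{H}{G}[\mathcal{L}]$. Then I would argue that $Q$ is $\alpha$-acyclic: its hypergraph consists of the edges of $H$ as binary hyperedges together with singletons $\{x_v\}$; since $H$ is acyclic, a spanning forest of $H$ already yields a join tree for the binary hyperedges, and each singleton hyperedge can be attached to any tree node containing its variable. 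Hence the algorithm of Brault-Baron~\cite[Theorem 12]{BraultBaron13} for counting answers to acyclic conjunctive queries without quantified variables applies and computes the number of answers to $Q$ over $D$ in time $g(|Q|)\cdot\tilde{\mathcal{O}}(\|D\|)$; since $|Q|$ depends only on $|H|$, composing with the bound on $\|D\|$ gives the claimed running time $f'(|H|)\cdot\tilde{\mathcal{O}}(|V(G)|+|E(G)|)$.

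The only points needing (minor) care are that the unary atoms $R_v(x_v)$ do not destroy acyclicity of the query — immediate from the structure of a join tree — and that every variable is kept free, so that counting answers really counts list homomorphisms rather than some projection thereof; as there are no quantified variables, the latter is automatic. I therefore expect no genuine obstacle beyond bookkeeping. As an alternative, one may bypass the conjunctive-query machinery and reduce directly to $\#\colhomsprob$ on acyclic graphs: replace each $u\in L_v$ by a private copy $u^v$ coloured $v$, and make $u^v$ adjacent to $w^{v'}$ whenever $\{u,w\}\in E(G)$ and $\{v,v'\}\in E(H)$; this yields an equivalent instance of size $f(|H|)\cdot\mathcal{O}(|V(G)|+|E(G)|)$ (at most $2|E(H)|$ copies of each edge of $G$), and one then invokes Fact~\ref{fact:colhoms_lintime} since the pattern graph is still acyclic.
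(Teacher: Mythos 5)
Your proposal is correct and is essentially the paper's own argument: the paper likewise treats the lists $L_v$ as unary predicates and reduces counting list homomorphisms to counting answers to an acyclic conjunctive query without quantified variables, then invokes the known FPT-near-linear-time algorithm for that problem. Your acyclicity and bookkeeping checks (and the alternative direct reduction to $\#\colhomsprob$) are fine but add nothing beyond what the paper's one-line reduction already asserts.
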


Next, we adapt a result due to Curticapean, Dell and Marx from uncoloured homomorphisms to list homomorphisms to obtain, for $\mathcal{H}$ containing graphs of treewidth at most $2$, an algorithm for $\#\listhomsprob(\mathcal{H})$ via fast matrix multiplication; the proof of the following lemma can be found in \Cref{lem:AppendixListHomsMatrix}.

\begin{lemma}\label{lem:listHomsMatrix}
Let $\mathcal{H}$ be a class of graphs of treewidth at most 2.
Then $\#\listhomsprob(\mathcal{H})$ can be solved in time $f(|H|)\cdot \mathcal{O}(|V(G)|^{\omega})$ for some computable function $f$. Here, $\omega$ is the matrix multiplication exponent. \qed
\end{lemma}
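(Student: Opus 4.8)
The plan is to reduce $\#\listhomsprob(\mathcal{H})$ to $\#\colhomsprob(\mathcal{H})$ by a product construction whose size grows only linearly in $|V(G)|$ (with a parameter-dependent constant), and then invoke Lemma~\ref{lem:colhom_matrix_multi}, which already solves $\#\colhomsprob(\mathcal{H})$ in time $f(|H|)\cdot\mathcal{O}(|V(G)|^{\omega})$ whenever $\mathcal{H}$ has treewidth at most $2$. The conceptual point is that a list constraint $\phi(v)\in L_v$ is naturally encoded as a colour-prescription once the vertices of $G$ are ``spread out'' over the vertices of $H$.

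Concretely, given an instance $(H,G,\mathcal{L})$ with $\mathcal{L}=(L_v)_{v\in V(H)}$, I would build the graph $G^\ast$ on vertex set $\{(v,w):v\in V(H),\,w\in L_v\}$, placing an edge between $(u,w)$ and $(u',w')$ exactly when $\{u,u'\}\in E(H)$ and $\{w,w'\}\in E(G)$; since $H$ and $G$ are simple, $G^\ast$ is simple as well. The projection $h\colon G^\ast\to H$, $(v,w)\mapsto v$, is a homomorphism by construction, so $(G^\ast,h)$ is an $H$-coloured graph, and one checks directly that colour-prescribed homomorphisms $H\to(G^\ast,h)$ are in bijection with list homomorphisms in $\homs{H}{G}[\mathcal{L}]$: such a homomorphism must send each $v$ to a pair of the form $(v,w_v)$ with $w_v\in L_v$, and the homomorphism property translates precisely to $\{w_u,w_v\}\in E(G)$ for every $\{u,v\}\in E(H)$. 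Hence $\#\homs{H}{G}[\mathcal{L}]=\#\cphoms(H\to(G^\ast,h))$, and since colour-prescribed homomorphism counting is the special case of $\#\colhomsprob$ in which $H$ carries the identity vertex-colouring $\mathrm{id}_{V(H)}$ and $G^\ast$ carries $h$, the right-hand side is an instance of $\#\colhomsprob(\mathcal{H})$ with the same pattern graph $H\in\mathcal{H}$.

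It remains to track sizes. We have $|V(G^\ast)|=\sum_{v\in V(H)}|L_v|\leq |V(H)|\cdot|V(G)|$ and $|E(G^\ast)|\leq |E(H)|\cdot|E(G)|$, so $G^\ast$ together with $h$ can be constructed in time $f_0(|H|)\cdot\mathcal{O}(|V(G)|^2)$, which, since $\omega\geq 2$, is absorbed into $f_0(|H|)\cdot\mathcal{O}(|V(G)|^{\omega})$. Applying Lemma~\ref{lem:colhom_matrix_multi} to $(H,G^\ast)$ then gives running time $f(|H|)\cdot\mathcal{O}(|V(G^\ast)|^{\omega})=f(|H|)\cdot|V(H)|^{\omega}\cdot\mathcal{O}(|V(G)|^{\omega})$, and setting $g(|H|):=f(|H|)\cdot|V(H)|^{\omega}+f_0(|H|)$ yields the claimed bound $g(|H|)\cdot\mathcal{O}(|V(G)|^{\omega})$.

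I expect no serious obstacle here; the only points requiring care are the routine bijection check and the bookkeeping that the blow-up factor $|V(H)|^{\omega}$ depends only on the parameter. If one preferred a self-contained argument not routed through Lemma~\ref{lem:colhom_matrix_multi}, one could instead adapt the Curticapean--Dell--Marx dynamic program over a nice tree decomposition of $H$ of width $2$, representing the partial-homomorphism tables at two-vertex separators as $|V(G)|\times|V(G)|$ matrices and realising join and series steps via Hadamard products and matrix multiplications; the only new ingredient would be to zero out, at the leaves, every table entry that assigns a vertex $v$ of $H$ outside its list $L_v$, together with the observation that Hadamard products and matrix multiplications preserve these vanishing patterns. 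In that variant the genuinely laborious part is the correctness of the decomposition into series and parallel steps for treewidth-$2$ patterns, which is exactly what makes the reduction route above preferable.
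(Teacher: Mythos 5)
Your proof is correct, but it takes a genuinely different route from the paper. The paper proves the lemma directly by adapting the Curticapean--Dell--Marx dynamic program: it fixes a tree decomposition of $H$ with bags of size three and separators of size two, stores the partial counts $h_t(v_1,v_2)$ as $|V(G)|\times|V(G)|$ matrices, incorporates the lists via indicator factors $J^{ij}_{t,v,v'}$ and $I_{t,v_1,v_2,v_3}$ that check membership in $\homs{H[\beta(t)]}{G}[\mathcal{L}|_{\beta(t)}]$, and realises the sum over the third bag vertex by one matrix product per bag --- essentially the self-contained variant you sketch in your closing paragraph. Your main argument instead reduces $\#\listhomsprob(\mathcal{H})$ to $\#\colhomsprob(\mathcal{H})$ via the product graph $G^\ast$ on $\{(v,w): v\in V(H),\, w\in L_v\}$; the bijection between colour-prescribed homomorphisms into $(G^\ast,h)$ and list homomorphisms is right, and the blow-up $|V(G^\ast)|\leq |V(H)|\cdot|V(G)|$ only enters the parameter-dependent factor, so the claimed bound goes through. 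The one point to be careful about is the logical order inside this paper: Lemma~\ref{lem:colhom_matrix_multi} is not proved independently here --- its proof is deferred to the appendix, where only the present list version is established, the coloured version being its special case with $L_v=\nu_G^{-1}(\nu_H(v))$. So, taken verbatim, your reduction would be circular within the paper; it becomes a complete proof once you supply an independent proof of the coloured (or colour-prescribed) case, which is precisely the CDM-style dynamic program you outline at the end. What your route buys is a clean separation of concerns (the lists are compiled away by a tensor-with-colour-classes trick, and all matrix-multiplication machinery lives in one lemma); what the paper's route buys is that the list version is the primitive from which the coloured statement follows for free, with no extra $|V(H)|^{\omega}$ blow-up and no quadratic-size product construction. (A trivial nit: $|E(G^\ast)|$ can be as large as $2|E(H)|\cdot|E(G)|$ rather than $|E(H)|\cdot|E(G)|$; this changes nothing.)
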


\subsection{The Tractable Cases}
The criterion for the classification of $\holantprob(\mathcal{S})$ is the type of the set of signatures $s \in \mathcal{S}$, for which $s(0) \neq 0$ holds. We first deal with the tractable cases of the classification. We show that these cases reduce to a tractable restriction of the holant problem considered in \Cref{main_thm}, that considers signature grids with a restricted number of vertices with signatures of type $\mathbb{T}[\infty]$ (introducing a parameter that upper-bounds the latter number). To this end, we first prove some algorithmic results, we will need next.


\begin{definition}
Let $G_1, G_2$ be two graphs with respective vertex-colorings $\nu_{G_1} : V(G_1) \to [\ell]$ and $\nu_{G_2} : V(G_2) \to [\ell]$, for some $\ell \in \mathbb{N}$. For a subset $X \subseteq V(G_1)$ and $\phi \in \homs{(G_1[X], \nu_{G_1}|_{X})}{(G_2, \nu_{G_2})}$, we define,
\begin{equation*}
\parthomsX{(G_1, \nu_{G_1})}{(G_2, \nu_{G_2})}{X} = \{h \in \homs{(G_1, \nu_{G_1})}{(G_2, \nu_{G_2})} : h|_{X} = \phi\}\,.    
\end{equation*}
\end{definition}

\begin{lemma}\label{lem:partialHomsFPT}
Let $H, G$ be two graphs with respective vertex colorings $\nu_H: V(H) \to [\ell]$ and $\nu_G : V(G) \to [\ell]$, for some $\ell \in \mathbb{N}$. Further, let $V(H) = X_H \,\dot\cup\, Y_H$ and $V(G) = X_G \,\dot\cup\, Y_G$ such that $\nu_H(X_H) \subseteq \nu_G(X_G)$, $\nu_H(Y_H) \subseteq \nu_G(Y_G)$, and $\nu_G(X_G)\cap\nu_G(Y_G) = \emptyset$, and let $\phi \in \homs{(H[X_H], \nu_H|_{X_H})}{(G[X_G], \nu_G|_{X_G})}$.
\begin{enumerate}
    \item If every vertex $v \in Y_H$ has degree 1 then we can compute $\#\parthomsX{(H, \nu_H)}{(G, \nu_G)}{X_H}$ in $f(|H|)\cdot\tilde{\mathcal{O}}(|V(G)| + |E(G)|)$ time, for some computable function $f$.
    \item If every vertex $v \in Y_H$ has maximum degree at most 2, then we can compute $\#\parthomsX{(H, \nu_H)}{(G, \nu_G)}{X_H}$ in $g(|H|)\cdot\mathcal{O}(|V(G)|^{\omega})$ time, for some computable function $g$.
\end{enumerate}
\end{lemma}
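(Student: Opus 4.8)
The plan is to reduce the computation of $\#\parthomsX{(H,\nu_H)}{(G,\nu_G)}{X_H}$ to a single instance of counting list homomorphisms from the induced subgraph $H[Y_H]$, and then to invoke Lemma~\ref{lem:listHomsLinear} for Part~1 and Lemma~\ref{lem:listHomsMatrix} for Part~2. The starting observation is that the hypothesis $\nu_G(X_G) \cap \nu_G(Y_G) = \emptyset$, together with $\nu_H(Y_H) \subseteq \nu_G(Y_G)$, forces every colour-preserving homomorphism $h$ that agrees with $\phi$ on $X_H$ to map $Y_H$ into $Y_G$ (while $X_H$ is already pinned down by $h|_{X_H} = \phi$, mapping into $X_G$). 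Hence such an $h$ is uniquely determined by $h' := h|_{Y_H}\colon Y_H \to Y_G$, and the requirement ``$h' \cup \phi$ is a colour-preserving homomorphism $H \to G$'' decouples into three conditions: (a) $h'$ respects the vertex colours; (b) $h'$ maps every edge of $H[Y_H]$ to an edge of $G$; and (c) for every edge $\{x,y\}$ of $H$ with $x \in X_H$, $y \in Y_H$, the pair $\{\phi(x), h'(y)\}$ is an edge of $G$. The edges inside $X_H$ need no condition, since $\phi$ is already a homomorphism.

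Next I would fold conditions (a) and (c) into lists: for each $y \in Y_H$ let $L_y$ be the set of vertices $w$ of $G$ with $\nu_G(w) = \nu_H(y)$ that are adjacent in $G$ to $\phi(x)$ for every $X_H$-neighbour $x$ of $y$; the colour constraint automatically gives $L_y \subseteq Y_G$. Then condition (b) on its own says exactly that $h'$ is a homomorphism from $H' := H[Y_H]$ to $G$, so with $\mathcal{L} := (L_y)_{y \in Y_H}$ I obtain the identity
\[ \#\parthomsX{(H,\nu_H)}{(G,\nu_G)}{X_H} = \#\homs{H'}{G}[\mathcal{L}]. \]

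It then remains to verify the running times. In Part~1 every vertex of $Y_H$ has degree $1$ in $H$, so $H'$ has maximum degree at most $1$ and is in particular acyclic; applying Lemma~\ref{lem:listHomsLinear} (with $\mathcal{H}$ the class of all acyclic graphs) computes $\#\homs{H'}{G}[\mathcal{L}]$ in time $f(|H'|)\cdot\tilde{\mathcal{O}}(|V(G)| + |E(G)|) = f(|H|)\cdot\tilde{\mathcal{O}}(|V(G)| + |E(G)|)$. In Part~2 every vertex of $Y_H$ has degree at most $2$, so $H'$ is a disjoint union of paths and cycles and thus has treewidth at most $2$; Lemma~\ref{lem:listHomsMatrix} (with $\mathcal{H}$ the class of all graphs of treewidth at most $2$) then computes the count in $g(|H|)\cdot\mathcal{O}(|V(G)|^{\omega})$. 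Finally I would check that building $\mathcal{L}$ stays within these budgets: each $L_y$ is an intersection of at most $d_H(y)\le 2$ neighbourhoods of $G$ with one colour class, computable in $\mathcal{O}(|V(G)| + |E(G)|)$ time, so all lists together cost $\mathcal{O}(|H|\cdot(|V(G)| + |E(G)|))$ — inside the Part~1 bound directly, and inside the Part~2 bound because $\omega \ge 2$ gives $|V(G)| + |E(G)| \le \mathcal{O}(|V(G)|^{\omega})$.

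The argument is mostly routine bookkeeping; the one point that needs care is checking that the reduction to list homomorphisms is faithful — that the colour constraints on $h'$ plus the adjacency constraints pulled back from $X_H$ carve out precisely the extensions of $\phi$, with no over- or under-counting — and here the disjointness hypothesis $\nu_G(X_G) \cap \nu_G(Y_G) = \emptyset$ is exactly what makes the split of $H$ into $X_H \,\dot\cup\, Y_H$ and of $G$ into $X_G \,\dot\cup\, Y_G$ compatible. The only other place to stay alert is remembering to use $\omega \ge 2$ when absorbing the list-construction cost in Part~2.
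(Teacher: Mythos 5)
Your proposal is correct and follows essentially the same route as the paper: it encodes the colour constraint and the adjacency constraints to $\phi(X_H)$ as per-vertex lists (the paper's sets $U_\phi(v)$), reduces to counting list homomorphisms from $H[Y_H]$, and then invokes Lemma~\ref{lem:listHomsLinear} resp.\ Lemma~\ref{lem:listHomsMatrix}, with the same linear-time list construction. The only cosmetic difference is that you keep $G$ as the target with lists contained in $Y_G$ while the paper restricts to $G[Y_G]$, which yields the same count.
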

\begin{proof}
For any vertex $v \in Y_H$, recall that $\mathcal{N}_{H[X_H]}(v)$ denotes the set of neighbors of $v$ in $X_H$. Let $U_{\phi}(v)$ denote the set of all vertices $u \in V(G)$, that $v$ may be mapped to, by some homomorphism $h \in \homs{(H,\nu_H)}{(G,\nu_G)}$ that respects $\phi$. The set $U_{\phi}(v)$ can be computed as follows. \begin{equation*}
U_{\phi}(v) = \bigcap_{w \in \mathcal{N}_{H[X_H]}(v)}\{u \in \mathcal{N}_{G}(\phi(w)) : \nu_G(u) = \nu_H(v)\}
\end{equation*}

It is easy to verify that the problem of computing $\#\parthomsX{(H, \nu_H)}{(G, \nu_G)}{X_H}$ is equivalent to computing the number of homomorphisms in $\homs{H[Y_H]}{G[Y_G]}$ under the restriction that every vertex $v \in Y_H$ may only be mapped to vertices in $U_{\phi}(v)$. Equivalently,
\[\#\parthomsX{(H, \nu_H)}{(G, \nu_G)}{X_H} = \#\{h \in \homs{H[Y_H]}{G[Y_G]} \mid \forall v\in Y_H : h(v) \in U_{\phi}(v)\}.\]

Let $\mathcal{U}$ denote the collection of the sets $U_{\phi}(v)$, for all vertices $v \in Y_H$. By the equation above and \Cref{def:listHoms}, follows that 
\[\#\parthomsX{(H,\nu_H)}{(G,\nu_G)}{X_H} = \#\homs{H[Y_H]}{G[Y_G]}[\mathcal{U}].\]

It remains to argue that the time we need to compute $\mathcal{U}$ is bounded by the desired running times, since then, Cases 1 and 2 follow from \Cref{lem:listHomsLinear,lem:listHomsMatrix} respectively. To this end, assume that for some vertex $v\in Y_H$, the set $\mathcal{N}_{H[X_H]}(v)$ contains two vertices $a, b \in V(G)$. We wish to compute the intersection of the sets $A = \{u \in \mathcal{N}_{G}(\phi(a)) : \nu_G(u) = \nu_H(v)\}$ and $B = \{u \in \mathcal{N}_{G}(\phi(b)) : \nu_G(u) = \nu_H(v)\}$. This can be done in time $O(|V(G)|)$, as follows. We can implement the characteristic function $\boldsymbol{1}_B : V(G) \rightarrow \{0, 1\}$ of the set $B$, in linear time (e.g, via a $|V(G)|$-size array). Then, for each element $a' \in A$ we check in constant time whether $\boldsymbol{1}_B(a') = 1$.
\end{proof}

As already argued, restricting the holant problem so that the number of vertices with signatures of type $\mathbb{T}[\infty]$ is upper-bounded by some parameter, renders the problem tractable. All of the above are formally stated in the following theorem.

\begin{lemma}\label{lem:restrictedHolant}
Let $\mathcal{S}$ be a finite set of signatures such that for all $s \in \mathcal{S}$, $s(0) \neq 0$. We assume that $\mathcal{S}$ is the disjoint union of a set $\mathcal{S}^e$ which is not of type $\mathbb{T}[\infty]$ and a set $\mathcal{S}^h$ that can be of any type. Let $\holantprob(\mathcal{S}^e;\mathcal{S}^h, r)$ be the restriction of $\holantprob(\mathcal{S})$ on instances where the number of vertices with signatures in $\mathcal{S}^h$ is at most $r$. 
\begin{enumerate} 
\item If $\mathcal{S}^e$ is of type $\mathbb{T}[\mathsf{Lin}]$ then $\holantprob(\mathcal{S}^e;\mathcal{S}^h, r)$ can be solved in FPT-near-linear time, that is, there exists a computable function $f$ such that $\holantprob(\mathcal{S}^e;\mathcal{S}^h,r)$ can be solved in $f(k,r)\cdot\tilde{\mathcal{O}}(|V(\Omega)| + |E(\Omega)|)$ time. 
\item If $\mathcal{S}^e$ is of type $\mathbb{T}[\omega]$, then $\holantprob(\mathcal{S}^e; \mathcal{S}^h, r)$ can be solved in FPT-matrix-multiplication time, that is, there exists a computable function $f$ such that $\holantprob(\mathcal{S}^e;\mathcal{S}^h,r)$ can be solved in $f(k, r)\cdot\mathcal{O}(|V(\Omega)|^{\omega})$ time.
\end{enumerate}
\end{lemma}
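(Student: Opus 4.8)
The plan is to reduce the restricted coloured holant problem to a correspondingly restricted version of $\holantprobstar$, and then to solve the latter by combining the homomorphism expansion of $\holantprobstar$ along fractured graphs (Corollary~\ref{cor:collect_coeffs} and Lemma~\ref{lem:coeffs_done}) with the partial-homomorphism counting algorithms of Lemma~\ref{lem:partialHomsFPT}. Concretely, I would first define $\holantprobstar(\mathcal{S}^e;\mathcal{S}^h,r)$ as the restriction of $\holantprobstar(\mathcal{S})$ to instances $(H,(G,h,\{s_v\}_{v\in V(G)}))$ in which at most $r$ vertices of $G$ carry a signature from $\mathcal{S}^h$, and then inspect the proof of Lemma~\ref{lem:main_equivalence_hard}: every oracle call made by the reduction there is to an instance whose pattern has size bounded in terms of $k$ and whose host is the signature grid associated with a Tensor product $(G,\nu,\xi)\otimes(H_0,\nu_{H_0},\xi_{H_0})$ for an $(\ell,k)$-coloured graph $H_0$ of size bounded in terms of $k$. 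Since a Tensor-product vertex $(u,v)$ carries a signature from $\mathcal{S}^h$ if and only if $u$ does, the host of each oracle call has at most $r\cdot O_k(1)$ vertices with signatures in $\mathcal{S}^h$, so $\holantprob(\mathcal{S}^e;\mathcal{S}^h,r)\fptlinred\holantprobstar(\mathcal{S}^e;\mathcal{S}^h,r')$ with $r'$ and the resulting pattern sizes bounded in terms of $k$ and $r$. By the composition lemma for linear FPT Turing-reductions it then suffices to solve $\holantprobstar(\mathcal{S}^e;\mathcal{S}^h,r')$ within the claimed running times, with parameter $(k,r)$.

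For the star version, let $(H,(G,h,\{s_v\}_{v\in V(G)}))$ be an instance of $\holantprobstar(\mathcal{S}^e;\mathcal{S}^h,r')$ with $V(H)=\{v_1,\dots,v_z\}$, let $s_i$ be the signature on the $h$-colour class of $v_i$, and call $v_i$ (and every vertex of $G$ in its colour class) \emph{heavy} if $s_i\in\mathcal{S}^h$ and \emph{light} if $s_i\in\mathcal{S}^e$ (well-defined since $\mathcal{S}^e\cap\mathcal{S}^h=\emptyset$); the heavy colour classes together contain at most $r'$ vertices of $G$. By Corollary~\ref{cor:collect_coeffs} and Lemma~\ref{lem:coeffs_done}, $\holant(G,h,\{s_v\}_{v\in V(G)})$ equals $\prod_i s_i(0)^{n_i}$ times $\sum_{\vec{\rho}\in\mathcal{F}(H)}(\prod_i\prod_{B\in\vec{\rho}(v_i)}\chi(|B|,s_i))\cdot\#\homscp(\fracture{H}{\vec{\rho}}\to(G,h))$, a sum over the parameter-many fractures of $H$ with parameter-computable coefficients. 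The key observation is that a non-zero coefficient forces $\chi(|B|,s_i)\ne0$ for every \emph{light} $v_i$ and every block $B\in\vec{\rho}(v_i)$, which by the definition of the type of $\mathcal{S}^e$ (recall $\mathcal{S}^e$ is not of type $\mathbb{T}[\infty]$) means $|B|\le1$ if $\mathcal{S}^e$ is of type $\mathbb{T}[\mathsf{Lin}]$ and $|B|\le2$ if $\mathcal{S}^e$ is of type $\mathbb{T}[\omega]$; since the degree of $v_i^B$ in $\fracture{H}{\vec{\rho}}$ equals $|B|$, in every surviving fractured graph $F=\fracture{H}{\vec{\rho}}$ every light vertex has degree at most $1$ (resp.\ at most $2$), while the boundedly many heavy vertices may have arbitrary degree.

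It then remains to compute $\#\homscp(F\to(G,h))$ for each surviving $F$, which I would do by brute-forcing over the heavy part. Writing $F=\fracture{H}{\vec{\rho}}$ with canonical $H$-colouring $h_{\vec{\rho}}$, split $V(F)=X\,\dot\cup\,Y$ into heavy and light vertices; then $|X|\le|V(F)|\le2|E(H)|$ is bounded in the parameter, $h_{\vec{\rho}}$ assigns heavy colours to $X$ and light colours to $Y$, and in $(G,h)$ the heavy (resp.\ light) colours are realised exactly by the at most $r'$ heavy (resp.\ the light) vertices of $G$. I would enumerate all colour-prescribed homomorphisms $\phi$ from $F[X]$ into $G$ (at most $(r')^{|X|}$, hence parameter-many), and for each invoke Lemma~\ref{lem:partialHomsFPT} with $X_H=X$, $Y_H=Y$, and $X_G,Y_G$ the heavy resp.\ light vertices of $G$ — the colour-class and disjointness hypotheses hold, and every vertex of $Y$ has degree at most $1$ (resp.\ at most $2$) in $F$ — to count colour-preserving homomorphisms $(F,h_{\vec{\rho}})\to(G,h)$ extending $\phi$; summing over $\phi$ gives $\#\homscp(F\to(G,h))$, and combining parameter-many coefficients with parameter-many such counts yields $\holant(G,h,\{s_v\}_{v\in V(G)})$ in time $g(|H|)\cdot\tilde{\mathcal{O}}(|V(G)|+|E(G)|)$ if $\mathcal{S}^e$ is of type $\mathbb{T}[\mathsf{Lin}]$ and $g(|H|)\cdot\mathcal{O}(|V(G)|^\omega)$ if $\mathcal{S}^e$ is of type $\mathbb{T}[\omega]$. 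The main obstacle I anticipate is organisational rather than a hard computation: one must make sure that non-vanishing of the fracture coefficient constrains precisely the \emph{light} part of $\fracture{H}{\vec{\rho}}$, so that the few heavy vertices absorb all the high degree, that the heavy/light splits of $F$ and of $G$ exactly satisfy the disjointness and degree hypotheses of Lemma~\ref{lem:partialHomsFPT}, and that the reduction of Lemma~\ref{lem:main_equivalence_hard} inflates the number of heavy host vertices only by a factor depending on $k$.
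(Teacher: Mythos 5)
Your proposal is correct and follows essentially the same route as the paper's proof: reduce to the analogously restricted $\holantprobstar$ version by observing that the Tensor products in Lemma~\ref{lem:main_equivalence_hard} blow up the number of heavy-signature vertices only by a factor $\mathcal{O}(k)$, then use Corollary~\ref{cor:collect_coeffs} and Lemma~\ref{lem:coeffs_done} to see that non-vanishing fracture coefficients force light vertices of the fractured graph to have degree at most $1$ (type $\mathbb{T}[\mathsf{Lin}]$) resp.\ $2$ (type $\mathbb{T}[\omega]$), and finally brute-force over the boundedly many heavy vertices and count extensions via Lemma~\ref{lem:partialHomsFPT}. The details you flag as remaining checks (disjointness of heavy/light colour classes and the degree hypotheses) go through exactly as in the paper.
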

\begin{proof} 
Let $\holantprobstar(\mathcal{S}^e;\mathcal{S}^h, r)$ denote the restriction of $\holantprobstar(\mathcal{S})$ on instances $(H, (G,\xi,\{s_v\}_{v\in V(G)}))$ where the number of vertices $v \in V(G)$ with signature $s_v \in \mathcal{S}^h$ is at most $r$. Set $V^h = \{v \in V(G) : s_v \in \mathcal{S}^h\}$. Recall that $G$ is an $H$-colored graph with respect to the $H$-coloring $\xi \in \homs{G}{H}$ (see \Cref{sec:col_graphs_fractures}). Assume that $V(H)=\{v_1,\dots,v_z\}$. For each $i\in [z]$, let $n_i$ denote the number of vertices of $G$ colored by $\xi$ with $v_i$, and let $s_i$ be the signature of the vertices colored by $\xi$ with $v_i$. From \Cref{cor:collect_coeffs}, the following identity holds.
\begin{equation}
\holant(G,\xi,\{s_v\}_{v\in V(G)}) \cdot \prod_{i=1}^z s_i(0)^{-n_i} = \sum_{\vec{\rho}\in \mathcal{F}(H)} \mathsf{coeff}(\vec{\rho})  \cdot \#\homscp(\fracture{H}{\vec{\rho}} \to (G,\xi))\,,   
\end{equation}
where $\mathsf{coeff}(\vec{\rho})$ is given by \Cref{lem:coeffs_done} as $\prod_{i = 1}^{z}\prod_{B \in \vec{\rho}(v_i)}\chi(|B|, s_i)$.

We show that whenever $\mathsf{coeff}(\vec{\rho}) \neq 0$, we can compute $\#\homscp(\fracture{H}{\vec{\rho}} \to (G, \xi))$  in FPT time with respect to the parameters $k, r$,  distinguishing between the following two cases regarding the types of signatures in $\mathcal{S}^e$.
\begin{enumerate}
    \item \textbf{$\mathcal{S}^e$ is of type} $\mathbb{T}[\mathsf{Lin}]$. We may assume that $\vec{\rho}(v_i) = \bot$, for all $v_i \in V(H)$ with $s_i \in \mathcal{S}^e$, since otherwise, (that is, if for a vertex $v_i \in V(H)$ and for some $B \in \vec{\rho}(v_i)$ we had $|B| > 1$) we would have $\mathsf{coeff}(\vec{\rho}) = 0$. This follows from the assumption that the signature $s_i$ is of type $\mathbb{T}[\mathsf{Lin}]$, which implies that $\chi(d, s_i) = 0$, for all $d > 1$. Let $V^e_{\vec{\rho}}$ denote the set of vertices $v \in V(\fracture{H}{\vec{\rho}})$ such that the vertices of $G$ coloured by $\xi$ with $v$ have signature in $\mathcal{S}^e$ and let $V^h_{\vec{\rho}}$ denote the respective vertex set for signatures in $\mathcal{S}^h$. Since $\vec{\rho}(v_i) = \bot$, for all $v_i \in V(H)$ with $s_i \in \mathcal{S}^e$, it follows that the degree of each vertex $v\in V^e_{\vec{\rho}}$ is 1. Hence, we can compute $\#\homscp(\fracture{H}{\vec{\rho}} \to (G,\xi))$ in FPT-near-linear time as follows. We enumerate all $\phi \in \homscp(\fracture{H}{\vec{\rho}}\,[V^h_{\vec{\rho}}] \to (G[V^h],\xi|_{V^h}))$ via brute-force in $|V^h|^{\mathcal{O}(|V_{\vec{\rho}}^h|)}$ time and then, for each $\phi$, compute the number of its extensions to $\#\homs{\fracture{H}{\vec{\rho}}}{(G,\xi)}$, according to \Cref{lem:partialHomsFPT}, Case 1, in $f_1(|\fracture{H}{\vec{\rho}}|)\cdot\tilde{\mathcal{O}}(|G|)$ time, for some computable function $f_1$.

    \item \textbf{$\mathcal{S}^e$ is of type} $\mathbb{T}[\omega]$. We may assume that for all vertices $v_i \in V(H)$ such that $s_i \in \mathcal{S}^e$ and all $B \in \vec{\rho}(v_i)$, it is $|B| \leq 2$, since otherwise, (that is, if for a vertex $v_i \in V(H)$ and for some $B \in \vec{\rho}(v_i)$, we had $|B| > 2$) we would have $\mathsf{coeff}(\vec{\rho}) = 0$. This follows from the assumption that the signature $s_i$ can also be of type $\mathbb{T}[\omega]$, which implies that $\chi(d, s_i) = 0$, for all $d > 2$. Let $V^e_{\vec{\rho}}, V^h_{\vec{\rho}}$ defined as in case 1. By the previous assumption on the fracture $\vec{\rho}$, it follows that, the degree of each vertex $v \in V^e_{\vec{\rho}}$ is at most 2. Hence, we can compute $\#\homscp(\fracture{H}{\vec{\rho}} \to (G,h))$ in FPT-matrix-multiplication time as follows. We enumerate all $\phi \in \homscp(\fracture{H}{\vec{\rho}}\,[V^h_{\vec{\rho}}] \to (G[V^h],\xi|_{V^h}))$ via brute-force in $|V^h|^{\mathcal{O}(|V_{\vec{\rho}}^h|)}$ time and then, for each $\phi$, compute the number of its extensions to $\#\homs{\fracture{H}{\vec{\rho}}}{(G,h)}$, according to \Cref{lem:partialHomsFPT}, Case 2, in $f_2(|\fracture{H}{\vec{\rho}}|)\cdot\mathcal{O}(|V(G)|^\omega)$ time, for some computable function $f_2$.
\end{enumerate}

From \Cref{lem:main_equivalence_hard}, we know that for any instance $\Omega \in \holantprob(\mathcal{S})$ we can compute $\holant(\Omega)$ via an FPT-(near)-linear reduction to $\holantprobstar(\mathcal{S})$. In particular, the oracle is queried with instances $\Omega^* \in \holantprobstar(\mathcal{S})$, where each $\Omega^* = \Omega\,\otimes\,\left(\bigotimes_{i=1}^{m}\,(H_i, \nu_{H_i}, \xi_{H_i})\right)$, for graphs $(H_i, \nu_{H_i}, \xi_{H_i}) \in \Gamma_{\text{\sf{inj}}}(\ell, k, \mathcal{S})$ and $m$ depending only on $k$. Recall that, for $\ell = |\mathcal{S}|$, $\Gamma_{\text{\sf{inj}}}(\ell, k, \mathcal{S})$ contains all isomorphism types of $(\ell, k)$-colored graphs $(H, \nu_H, \xi_H)$ with $\nu_H : V(H) \to \mathcal{S}$ and injective $\xi_H : E(H) \to [k]$. So, $\Omega^*$ contains at most $\max_{1\leq i\leq m}|V(H_i)|^m\cdot r \leq (2k)^m\cdot r$ vertices with signatures in $\mathcal{S}^h$, which implies that each $\Omega^* \in \holantprobstar(\mathcal{S}^e; \mathcal{S}^h, g(k)\cdot r)$, for some computable function $g$ that depends only on $k$. Recall that $|V(\Omega^*)|$ (resp. $|E(\Omega^*)|)$ is at most $g'(k)\cdot|V(\Omega)|$ (resp. $g'(k)\cdot|E(\Omega)|)$, for some computable function $g'$. Finally, we compute $\holant(\Omega) \in \holantprob(\mathcal{S}^e;\mathcal{S}^h, r)$ via the aforementioned reduction to $\holantprobstar(\mathcal{S}^e; \mathcal{S}^h, g(k)\cdot r)$ which in turn is computed according to the Cases (1) and (2) above, yielding the desired.
\end{proof}

Now, we are ready to study the tractable cases of the classification.

\begin{lemma}\label{lem:reductionRestrictedHolant}
Let $\mathcal{S}$ be a finite set of signatures. Let $\mathcal{S}_0 = \{s \in \mathcal{S} \mid s(0) = 0\}$. Let $\Omega \in \holantprob(\mathcal{S})$ and set $n_0$ to be the number of vertices in $V(\Omega)$ with signatures in $\mathcal{S}_0$.
\begin{enumerate}
\item If $\mathcal{S}\setminus\mathcal{S}_0$ is of type $\mathbb{T}[\mathsf{Lin}]$, then $\holant(\Omega)$ can be computed in FPT-near-linear time, with respect to the parameters $k, n_0$, that is, there is a computable function $f$ such that $\holant(\Omega)$ can be computed in $f(k,n_0)\cdot\tilde{\mathcal{O}}(|V(\Omega)| + |E(\Omega)|)$ time.
\item If $\mathcal{S}\setminus\mathcal{S}_0$ is of type $\mathbb{T}[\omega]$, then $\holant(\Omega)$ can be computed in FPT-matrix-multiplication time with respect to the parameters $k, n_0$, that is, there is a computable function $f$ such that $\holant(\Omega)$ can be computed in $f(k, n_0)\cdot\mathcal{O}(|V(\Omega)|^{\omega})$ time.
\end{enumerate}
\end{lemma}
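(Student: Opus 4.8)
The plan is to reduce the problem to the restricted holant problem $\holantprob(\mathcal{S}^e;\mathcal{S}^h,r)$ of \Cref{lem:restrictedHolant}, using inclusion--exclusion to eliminate the obstruction created by signatures $s$ with $s(0)=0$. Throughout, write $W\subseteq V(\Omega)$ for the set of $n_0$ vertices whose signature lies in $\mathcal{S}_0$.

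First, for each $v\in W$ with signature $s_v$, I would introduce the modified signature $\tilde{s}_v$ defined by $\tilde{s}_v(0)=1$ and $\tilde{s}_v(j)=s_v(j)$ for all $j\geq 1$; then $s_v(m)=\tilde{s}_v(m)-[m=0]$ for every $m\in\mathbb{N}$. Substituting this identity into the product over $v\in W$ inside the definition of $\holant(\Omega)$ and expanding yields
\[
\holant(\Omega)=\sum_{T\subseteq W}(-1)^{|T|}\,\holant(\Omega_T),
\]
where $\Omega_T$ is the $k$-edge-coloured signature grid obtained from $\Omega$ by deleting the vertices in $T$ together with their incident edges, replacing the signature of each $v\in W\setminus T$ by $\tilde{s}_v$, keeping all other signatures, and restricting the edge-colouring to the surviving edges. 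The identity to verify is that the term indexed by $T$ equals $(-1)^{|T|}$ times the sum, over all colourful edge-subsets $A$ of $\Omega$ avoiding every edge incident to $T$, of the appropriate product of signature evaluations, and that these $A$ are exactly the colourful edge-subsets of $\Omega_T$; on such $A$ the vertices of $T$ contribute nothing, and if deleting $T$ wipes out an entire colour class then both $\holant(\Omega_T)$ and the corresponding term are $0$. In particular the parameter remains $k$ for every $\Omega_T$.

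Next I would observe that each $\Omega_T$ is an instance of $\holantprob(\mathcal{S}^e;\mathcal{S}^h,n_0)$ in the sense of \Cref{lem:restrictedHolant}: take $\mathcal{S}^e=\mathcal{S}\setminus\mathcal{S}_0$, every signature of which satisfies $s(0)\neq 0$, and $\mathcal{S}^h=\{\tilde{s}:s\in\mathcal{S}_0\}$, regarded as fresh symbols so that $\mathcal{S}^e$ and $\mathcal{S}^h$ are disjoint, and noting that $\tilde{s}(0)=1\neq 0$ as well. Indeed, $\Omega_T$ contains at most $|W\setminus T|\leq n_0$ vertices with a signature in $\mathcal{S}^h$, and by hypothesis $\mathcal{S}^e=\mathcal{S}\setminus\mathcal{S}_0$ is of type $\mathbb{T}[\mathsf{Lin}]$ (Case~1) or $\mathbb{T}[\omega]$ (Case~2) --- in particular not of type $\mathbb{T}[\infty]$ --- while $\mathcal{S}^h$ is permitted to be of arbitrary type. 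Hence \Cref{lem:restrictedHolant} provides a computable $f$ such that each $\holant(\Omega_T)$ is computable in time $f(k,n_0)\cdot\tilde{\mathcal{O}}(|V(\Omega_T)|+|E(\Omega_T)|)$ in Case~1, respectively $f(k,n_0)\cdot\mathcal{O}(|V(\Omega_T)|^{\omega})$ in Case~2.

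Finally, summing over all $2^{n_0}$ subsets $T\subseteq W$, and using that each $\Omega_T$ can be constructed from $\Omega$ in linear time with $|V(\Omega_T)|\leq|V(\Omega)|$ and $|E(\Omega_T)|\leq|E(\Omega)|$, the total running time becomes $g(k,n_0)\cdot\tilde{\mathcal{O}}(|V(\Omega)|+|E(\Omega)|)$ in Case~1 and $g(k,n_0)\cdot\mathcal{O}(|V(\Omega)|^{\omega})$ in Case~2, for a suitable computable $g$, as claimed. I expect the only mildly delicate point to be the bookkeeping in the inclusion--exclusion step --- verifying that removing the edges incident to $T$ can at worst make $\holant(\Omega_T)$ vanish and never changes the parameter $k$ --- together with the harmless technical remark that the modified signatures $\tilde{s}$ must be treated as symbols distinct from those already in $\mathcal{S}$ so that \Cref{lem:restrictedHolant} applies to a genuine disjoint union; everything else is routine.
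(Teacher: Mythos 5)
Your proposal is correct, but it takes a genuinely different route from the paper. Where you eliminate the zero-at-zero obstruction by writing $s_v(m)=\tilde{s}_v(m)-[m=0]$ and expanding into an inclusion--exclusion over subsets $T\subseteq W$, which yields $\holant(\Omega)=\sum_{T\subseteq W}(-1)^{|T|}\holant(\Omega_T)$ and hence $2^{n_0}$ calls to the restricted problem of \Cref{lem:restrictedHolant}, the paper instead perturbs the value at $0$: it replaces each $s_v$ with $v\in W$ by $\zsig{s_v}{\alpha}$ and observes that $\holant(\Omega_{\alpha})$ is a polynomial in $\alpha$ of degree at most $n_0$ whose constant coefficient is $\holant(\Omega)$, recovered by polynomial interpolation from $n_0+1$ evaluations at values $\alpha$ chosen distinct from the $s(0)$ of $\mathcal{S}\setminus\mathcal{S}_0$ (this choice plays the role of your ``fresh symbols'' remark, which is indeed harmless: if some $\tilde{s}$ coincides with a signature in $\mathcal{S}\setminus\mathcal{S}_0$ you may simply reclassify those vertices, which only lowers the count of $\mathcal{S}^h$-vertices). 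Both arguments verify the same key bookkeeping --- the modified signatures satisfy $s(0)\neq 0$, the number of $\mathcal{S}^h$-vertices stays at most $n_0$, the parameter stays $k$, and instances whose colour classes are destroyed contribute $0$ --- and both give overhead depending only on $k$ and $n_0$, so they are interchangeable for the theorem (recall $n_0\leq 2k$ in the application). The interpolation route buys a polynomial number, $n_0+1$, of oracle calls instead of your $2^{n_0}$, at the cost of choosing admissible evaluation points and an $\mathcal{O}(n_0^3)$ linear-algebra step; your inclusion--exclusion is more elementary, needs no interpolation or choice of evaluation points, and its correctness is a one-line algebraic identity, but it is exponential in $n_0$ --- still FPT, so nothing is lost here.
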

\begin{proof}
Let $V_0$ denote the set of vertices $v \in V(\Omega)$ with signatures $s_v \in \mathcal{S}_0$. Note that, $|V_0| = n_0$. Let $\mathcal{A}(\Omega)$ denote the set of all colorful edge-subsets $A$ such that, for all $v \in V_0$, $A \cap E(v) \neq \emptyset$. We observe that the contribution of any $A \notin \mathcal{A}(\Omega)$ to $\holant(\Omega)$ is zero since for every $A \notin \mathcal{A}(\Omega)$ there is at least one vertex $v \in V_0$ such that $|A\cap E(v)| = 0$. 

For a signature $s$ and an algebraic complex number $\alpha$, let $\zsig{s}{\alpha}$ denote the following signature.
\[
\zsig{s}{\alpha}(d) = 
\begin{cases}
s(d), & d > 0  \\
\alpha, & d = 0 
\end{cases}
\]

For an algebraic complex number $\alpha$, we consider the signature grid $\Omega_{\alpha}$, which is obtained from $\Omega$ by replacing, for every vertex $v \in V_0$, its signature $s_v$ with the signature $\zsig{s_v}{\alpha}$ and retaining the same signatures for the rest of the vertices. Let $s_v'$ denote the signature of vertex $v$ in the new grid $\Omega_{\alpha}$. We have,
\begin{equation}\label{eq:holant-alpha}
\holant(\Omega_{\alpha}) = \sum_{A \in \mathcal{A}(\Omega)}\prod_{v \in V(\Omega_{\alpha})}s_v'(|A\cap E(v)|) + \sum_{A \notin \mathcal{A}(\Omega)}\prod_{v \in V(\Omega_{\alpha})}s_v'(|A\cap E(v)|)\,.
\end{equation} 
The left summand in \Cref{eq:holant-alpha}, that is,
the contribution of all colorful edge-subsets in $\mathcal{A}(\Omega)$ to $\holant(\Omega_{\alpha})$, is equal to $\holant(\Omega)$. This follows from the aforementioned fact that the contribution of $A\notin \mathcal{A}(\Omega)$ to $\holant(\Omega)$ is zero, and from the observation that for $A \in \mathcal{A}(\Omega)$ and a vertex $v \in V_0$, we have $A\,\cap\,E(v) \neq \emptyset$ and so $\zsig{s_v}{\alpha}(|A\,\cap\,E(v)|) = s_v(|A\,\cap\,E(v)|)$. Regarding the right summand, that is, the contribution of all $A \notin \mathcal{A}(\Omega)$ to $\holant(\Omega_{\alpha})$, we observe that we can express it as a polynomial in indeterminate $\alpha$ with zero constant coefficient; the latter holds since for every colorful edge-subset $A \notin \mathcal{A}(\Omega)$, there is at least one vertex $v \in V_0$, such that $A\,\cap\,E(v) = \emptyset$, and so $\zsig{s_v}{\alpha}(|A\cap E(v)|) = \zsig{s_v}{\alpha}(0) = \alpha$. Furthermore, the degree of the polynomial is at most the maximum number of vertices $v \in V(\Omega_{\alpha})$ such that $s'_v(0) = \alpha$. The latter is by definition equal to $n_0$. 

Hence, $\holant(\Omega_{\alpha})$ can be seen as a polynomial in $\alpha$ of degree $n_0$ with a constant coefficient equal to $\holant(\Omega)$. Using polynomial interpolation, assuming we are given the evaluation of $\holant(\Omega_{\alpha})$ for $n_0+1$ distinct values of $\alpha$, we can recover the coefficients of the polynomial, and hence compute $\holant(\Omega)$, with $\mathcal{O}(n_0^3)$ additional arithmetic operations.  

It remains to show how to compute $\holant(\Omega_{\alpha}$), for various $\alpha$. To this end, let $\alpha$ be an algebraic complex number and let $\mathcal{S}_{\alpha}$ denote the set of all signatures $\zsig{s}{\alpha}$ for all $s \in \mathcal{S}_0$. We can force $(\mathcal{S}\setminus\mathcal{S}_0)\,\cap\,\mathcal{S}_{\alpha} = \emptyset$, by further restricting $\alpha \neq s(0)$, for every $s \in \mathcal{S}\setminus \mathcal{S}_0$. The latter implies that the number of vertices in $V(\Omega_{\alpha})$ with signatures in $\mathcal{S}_{\alpha}$, is at most $n_0$ and hence  $\Omega_{\alpha} \in \holantprob(\mathcal{S}\setminus\mathcal{S}_0; \mathcal{S}_{\alpha}, n_0)$. So, $\holant(\Omega_{\alpha})$ can be computed according to \Cref{lem:restrictedHolant}, based on the type of $\mathcal{S}\setminus\mathcal{S}_0$, yielding the desired.
\end{proof}

\subsection{A Trichotomy for $\holantprob(\mathcal{S})$ for all signatures}

With the upper bounds obtained above and the lower bounds implied directly by \Cref{main_thm}, we can fully classify $\holantprob(\mathcal{S})$ for all finite signature sets.

\begin{theorem}\label{thm:holant_trichotomy_zero}
Let $\mathcal{S}$ be a finite set of signatures. Let $\mathcal{S}_0 = \{s \in \mathcal{S} \mid s(0) = 0\}$.
\begin{enumerate}
\item If $\mathcal{S}\setminus\mathcal{S}_0$ is of type $\mathbb{T}[\mathsf{Lin}]$, then $\holantprob(\mathcal{S})$ can be solved in FPT-near-linear time, that is, there is a computable function $f$ such that $\holantprob(\mathcal{S})$ can be solved in $f(k)\cdot\tilde{\mathcal{O}}(|V(\Omega)| + |E(\Omega)|)$ time.
\item If $\mathcal{S}\setminus\mathcal{S}_0$ is of type $\mathbb{T}[\omega]$, then $\holantprob(\mathcal{S})$ can be solved in FPT-matrix-multiplication time, that is, there is a computable function $f$ such that $\holantprob(\mathcal{S})$ can be solved in $f(k)\cdot\mathcal{O}(|V(\Omega)|^{\omega})$ time. Moreover, $\textsc{p-Holant}(\mathcal{S})$ cannot be solved in time $f(k)\cdot \tilde{\mathcal{O}}(|V(\Omega)|+|E(\Omega)|)$ for any function $f$, unless the Triangle Conjecture fails.
\item Otherwise, that is, if $\mathcal{S}\setminus\mathcal{S}_0$ is of type $\mathbb{T}[\infty]$, $\holantprob(\mathcal{S})$ is $\#\mathrm{W}[1]$-complete. Moreover, $\textsc{p-Holant}(\mathcal{S})$ cannot be solved in time $f(k)\cdot |V(\Omega)|^{o(k/\log k)}$ for any function $f$, unless the Exponential Time Hypothesis fails.
\end{enumerate}
\end{theorem}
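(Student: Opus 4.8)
The plan is to derive all three cases from Lemma~\ref{lem:reductionRestrictedHolant} and Main~Theorem~\ref{main_thm} by means of a single elementary structural observation about signatures with $s(0)=0$. Given an instance $\Omega=(G,\xi,\{s_v\}_{v\in V(G)})$ with parameter $k$, I would first set $V_0=\{v\in V(G)\mid s_v\in\mathcal{S}_0\}$ and $n_0=|V_0|$; since $\mathcal{S}$ is a fixed finite set, $n_0$ is computable in linear time. Every colourful edge-subset $A$ satisfies $|A|=k$ and hence meets at most $2k$ vertices, so if $n_0>2k$ then for every colourful $A$ there is a vertex $v\in V_0$ with $|A\cap E(v)|=0$, and since $s_v(0)=0$ the summand $\prod_v s_v(|A\cap E(v)|)$ vanishes. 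Thus $n_0>2k$ forces $\holant(\Omega)=0$; in each case the algorithm first checks this condition, outputs $0$ if it holds, and otherwise proceeds under the assumption $n_0\le 2k$.

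For the upper bounds (Cases~1 and~2), once $n_0\le 2k$ I would simply invoke Lemma~\ref{lem:reductionRestrictedHolant} with the given $\mathcal{S}_0$: its hypothesis on the type of $\mathcal{S}\setminus\mathcal{S}_0$ matches exactly the case distinction, so it computes $\holant(\Omega)$ in time $f(k,n_0)\cdot\tilde{\mathcal{O}}(|V(\Omega)|+|E(\Omega)|)$ in Case~1 and in time $f(k,n_0)\cdot\mathcal{O}(|V(\Omega)|^{\omega})$ in Case~2. Substituting the bound $n_0\le 2k$ turns these into $g(k)\cdot\tilde{\mathcal{O}}(|V(\Omega)|+|E(\Omega)|)$ and $g(k)\cdot\mathcal{O}(|V(\Omega)|^{\omega})$ respectively, which is precisely what is claimed.

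The lower bounds transfer verbatim from the $s(0)\neq 0$ classification. Writing $\mathcal{S}'=\mathcal{S}\setminus\mathcal{S}_0$, every signature in $\mathcal{S}'$ is non-zero at $0$, and $\holantprob(\mathcal{S}')$ is precisely the restriction of $\holantprob(\mathcal{S})$ to instances that do not use any signature of $\mathcal{S}_0$; hence any algorithm for $\holantprob(\mathcal{S})$ solves $\holantprob(\mathcal{S}')$ within the same running time, and any reduction from $\holantprob(\mathcal{S}')$ is a reduction from $\holantprob(\mathcal{S})$. In Case~2, $\mathcal{S}'$ is of type $\mathbb{T}[\omega]$, so Main~Theorem~\ref{main_thm}(II) forbids an FPT-near-linear algorithm for $\holantprob(\mathcal{S}')$, and hence for $\holantprob(\mathcal{S})$, unless the Triangle Conjecture fails. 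In Case~3, $\mathcal{S}'$ is of type $\mathbb{T}[\infty]$, so Main~Theorem~\ref{main_thm}(III) yields $\#\W[1]$-hardness of $\holantprob(\mathcal{S}')$ (hence of $\holantprob(\mathcal{S})$) and rules out a running time of $f(k)\cdot|V(\Omega)|^{o(k/\log k)}$ unless ETH fails.

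The one remaining point is membership of $\holantprob(\mathcal{S})$ in $\#\W[1]$ in Case~3, for which I would reuse the polynomial-interpolation argument from the proof of Lemma~\ref{lem:reductionRestrictedHolant}. After the $n_0\le 2k$ reduction, one picks $n_0+1\le 2k+1$ pairwise distinct algebraic numbers $\alpha$ avoiding the finite set $\{0\}\cup\{s(0)\mid s\in\mathcal{S}'\}$; for each such $\alpha$ the grid $\Omega_\alpha$ obtained by replacing every signature $s_v\in\mathcal{S}_0$ with $\zsig{s_v}{\alpha}$ is an instance of $\holantprob$ over a finite signature set all of whose signatures are non-zero at $0$, and $\holant(\Omega)$ is the constant coefficient of the degree-$n_0$ polynomial $\alpha\mapsto\holant(\Omega_\alpha)$, recoverable by interpolation. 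Since every such $\holantprob$ can be cast as a finitely supported linear combination of (colour-prescribed) homomorphism counts (Lemma~\ref{lem:main_equivalence_hard} and Corollary~\ref{cor:collect_coeffs}), and each homomorphism count reduces to $\#\textsc{Clique}$~\cite{DalmauJ04}, composing these reductions with the interpolation step gives an FPT Turing reduction $\holantprob(\mathcal{S})\fptred\#\textsc{Clique}$. I do not expect a genuine obstacle: the whole argument is bookkeeping on top of Lemma~\ref{lem:reductionRestrictedHolant} and Main~Theorem~\ref{main_thm}, and the only step requiring a moment's care is checking that the parameters of all oracle queries, the sizes of the pattern graphs, and the number of interpolation nodes stay bounded by a function of $k$ alone, which goes through exactly as in the corresponding remark for signatures with $s(0)\neq 0$.
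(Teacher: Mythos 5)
Your proposal is correct and follows essentially the same route as the paper: observe that $n_0>2k$ forces $\holant(\Omega)=0$, otherwise invoke Lemma~\ref{lem:reductionRestrictedHolant} with $n_0\le 2k$ for the two tractable cases, and transfer the hardness/lower bounds from Main Theorem~\ref{main_thm} since $\holantprob(\mathcal{S}\setminus\mathcal{S}_0)$ is a restriction of $\holantprob(\mathcal{S})$. Your explicit interpolation-based argument for $\#\W[1]$-membership is only a spelled-out version of what the paper defers to its earlier remark that parameterised Holants reduce to $\#\textsc{Clique}$ via their homomorphism expansions.
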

\begin{proof}
Regarding the first two cases, let $\Omega \in \holantprob(\mathcal{S})$ and set $n_0$ to be the number of vertices in $\Omega$ with signatures in $\mathcal{S}_0$. Note that since any subset of $k$ edges can only cover up to $2k$ vertices, it follows that if $k < n_0/2$ then there is always at least one uncovered vertex in $V_0$, in which case, $\holant(\Omega)$ trivially evaluates to 0. So, we may assume that $n_0 \leq 2k$. 

Following \Cref{lem:reductionRestrictedHolant}, if $\mathcal{S}\setminus\mathcal{S}_0$ is of type $\mathbb{T}[\mathsf{Lin}]$ then, $\holant(\Omega)$ can be computed in $f(k, n_0) \cdot \tilde{\mathcal{O}}(|V(\Omega)|+|E(\Omega)|)$ time for some computable function $f$. Equivalently, if $\mathcal{S}\setminus\mathcal{S}_0$ is of type $\mathbb{T}[\omega]$, then $\holant(\Omega)$ can be computed in $g(k,n_0)\cdot\mathcal{O}(|V(\Omega)|^{\omega})$ time for some computable function $g$. Since $n_0 \leq 2k$, the first two cases follow.

For the last case, if $\mathcal{S}\setminus\mathcal{S}_0$ is of type $\mathbb{T}[\infty]$, then from \Cref{main_thm} follows that $\holantprob(\mathcal{S}\setminus\mathcal{S}_0)$ is $\#\mathrm{W}[1]$-complete, and so is $\holantprob(\mathcal{S})$.
\end{proof}

We obtain, as immediate consequence, the classification of the edge-coloured graph factor problem.
\begin{corollary}\label{cor:factor_classification_col}
    If $\mathcal{B}$ contains a set $\{0\} \subsetneq S \subsetneq \mathbb{N}$ then $\textsc{ColFactor}(\mathcal{B})$ is $\#\W[1]$-complete, and cannot be solved in time $f(k)\cdot n^{o(k/\log k)}$ for any function~$f$, unless the Exponential Time Hypothesis fails. Otherwise $\textsc{ColFactor}(\mathcal{B})$ is solvable FPT-near-linear time.
\end{corollary}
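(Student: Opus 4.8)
The plan is to recognise $\textsc{ColFactor}(\mathcal{B})$ as a parameterised coloured Holant problem and then read off the classification from \Cref{thm:holant_trichotomy_zero}. For a set $S\subseteq\mathbb{N}$ write $\mathbf{1}_S$ for the indicator signature, $\mathbf{1}_S(n)=1$ if $n\in S$ and $\mathbf{1}_S(n)=0$ otherwise, and set $\mathcal{S}_{\mathcal{B}}=\{\mathbf{1}_S\mid S\in\mathcal{B}\}$, a finite set of $\{0,1\}$-valued signatures (possibly with $s(0)=0$ for some $s$, i.e.\ we work in the setting of \Cref{thm:holant_trichotomy_zero}). A subset $A\subseteq E(G)$ is an $f$-factor exactly when $\prod_{v\in V(G)}\mathbf{1}_{f(v)}(|A\cap E(v)|)=1$; hence the number of colourful $f$-factors of $G$ equals $\holant(G,\xi,\{\mathbf{1}_{f(v)}\}_{v\in V(G)})$, and under the trivial linear-time correspondence $f(v)\mapsto\mathbf{1}_{f(v)}$ the problems $\textsc{ColFactor}(\mathcal{B})$ and $\holantprob(\mathcal{S}_{\mathcal{B}})$ coincide (same parameter $k$). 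By \Cref{thm:holant_trichotomy_zero} the complexity of $\holantprob(\mathcal{S}_{\mathcal{B}})$ is governed by the type of $\mathcal{S}_{\mathcal{B}}\setminus\mathcal{S}_0$ with $\mathcal{S}_0=\{\mathbf{1}_S\mid S\in\mathcal{B},\,0\notin S\}$, i.e.\ by the fingerprints $\chi(d,\mathbf{1}_S)$ for the $S\in\mathcal{B}$ with $0\in S$. So it remains to prove: $(i)$ if some $S\in\mathcal{B}$ satisfies $\{0\}\subsetneq S\subsetneq\mathbb{N}$ then $\chi(d,\mathbf{1}_S)\neq 0$ for some $d\geq 3$, whence $\mathcal{S}_{\mathcal{B}}\setminus\mathcal{S}_0$ is of type $\mathbb{T}[\infty]$ and part~(3) of \Cref{thm:holant_trichotomy_zero} gives the lower bounds; and $(ii)$ otherwise $\chi(d,\mathbf{1}_S)=0$ for all $d\geq 2$ and all $S\in\mathcal{B}$ with $0\in S$, whence $\mathcal{S}_{\mathcal{B}}\setminus\mathcal{S}_0$ is of type $\mathbb{T}[\mathsf{Lin}]$ and part~(1) gives the FPT-near-linear algorithm.

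The combinatorial engine is the exponential generating function identity
\[
\sum_{d\geq 1}\chi(d,\mathbf{1}_S)\,\frac{x^d}{d!}=\log\!\big(1+g_S(x)\big),\qquad g_S(x):=\sum_{n\in S\setminus\{0\}}\frac{x^n}{n!},
\]
valid for every $S$ with $0\in S$. It follows from the compositional formula for exponential generating functions applied to $\log(1+y)=\sum_{m\geq 1}(-1)^{m-1}(m-1)!\,y^m/m!$, using that the weight of a partition $\sigma$ of $[d]$ in $\chi(d,\cdot)$ is $(-1)^{|\sigma|-1}(|\sigma|-1)!$, the value of the partition-lattice M\"obius function (cf.\ \Cref{def:fingerprint_intro}). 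Claim~$(ii)$ is then immediate: in the ``otherwise'' case every $S\in\mathcal{B}$ with $0\in S$ equals $\{0\}$ or $\mathbb{N}$, so $g_S(x)\in\{0,\,e^x-1\}$ and $\log(1+g_S(x))\in\{0,\,x\}$ has degree at most $1$, forcing $\chi(d,\mathbf{1}_S)=0$ for all $d\geq 2$; since an empty set of signatures is also of type $\mathbb{T}[\mathsf{Lin}]$, $\mathcal{S}_{\mathcal{B}}\setminus\mathcal{S}_0$ is of type $\mathbb{T}[\mathsf{Lin}]$.

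For claim~$(i)$ fix $S$ with $\{0\}\subsetneq S\subsetneq\mathbb{N}$ and suppose for contradiction that $\chi(d,\mathbf{1}_S)=0$ for all $d\geq 3$. Then $\log(1+g_S(x))=\alpha x+\beta x^2$ for some rationals $\alpha,\beta$, so $1+g_S(x)=\exp(\alpha x+\beta x^2)$. Comparing the coefficients of $x$ and $x^2$ gives $\alpha=\mathbf{1}_S(1)\in\{0,1\}$ and $\beta=\tfrac{1}{2}(\mathbf{1}_S(2)-\mathbf{1}_S(1))\in\{-\tfrac{1}{2},0,\tfrac{1}{2}\}$, leaving four cases. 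If $(\alpha,\beta)=(1,0)$ then $1+g_S(x)=e^x$, i.e.\ $S=\mathbb{N}$; if $(\alpha,\beta)=(0,0)$ then $g_S\equiv 0$, i.e.\ $S=\{0\}$; if $(\alpha,\beta)=(1,-\tfrac{1}{2})$ then $1+g_S(x)=\exp(x-x^2/2)$, whose $x^3$-coefficient is $-\tfrac{1}{3}<0$, impossible as $g_S$ has nonnegative coefficients; if $(\alpha,\beta)=(0,\tfrac{1}{2})$ then $1+g_S(x)=\exp(x^2/2)$, whose $x^4$-coefficient is $\tfrac{1}{8}$, impossible since $[x^4]g_S=\mathbf{1}_S(4)/4!\in\{0,\tfrac{1}{24}\}$. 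Each case contradicts one of $\{0\}\subsetneq S$, $S\subsetneq\mathbb{N}$, or nonnegativity of the coefficients of $g_S$; hence some $\chi(d,\mathbf{1}_S)$ with $d\geq 3$ is nonzero, as required.

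The main obstacle is exactly this last case analysis: one has to confirm that $S\neq\{0\}$ and $S\neq\mathbb{N}$ are precisely what rules out $1+g_S(x)$ being $\exp$ of a quadratic, and in particular that $e^x$ and the constant $1$ are the only ``exceptional'' power series of this form compatible with $g_S$ having nonnegative coefficients. Everything else is routine: the translation of $\textsc{ColFactor}$ into $\holantprob$, the generating function identity from the compositional formula, and the appeal to \Cref{thm:holant_trichotomy_zero}. As a sanity check, $S=\{0,1\}$ yields $\chi(d,\mathbf{1}_{\{0,1\}})=(-1)^{d-1}(d-1)!\neq 0$ for all $d$, recovering $\#\W[1]$-completeness of counting colourful $k$-matchings, in line with \Cref{main_thm}.
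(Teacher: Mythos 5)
Your proposal is correct, and it shares the paper's outer strategy --- encode $\textsc{ColFactor}(\mathcal{B})$ as a parameterised Holant over $0/1$ indicator signatures and invoke \Cref{thm:holant_trichotomy_zero} --- but the combinatorial core, determining the type of the signatures with $s(0)=1$, is done by a genuinely different argument. The paper evaluates fingerprints directly: it expands $\chi(3,s)=2s(1)^3-3s(1)s(2)+s(3)$, reduces to the three exceptional triples $(0,0,0)$, $(0,1,0)$, $(1,1,1)$, and for each exhibits an explicit non-vanishing fingerprint ($\chi(4,s)=-3+s(4)$, or $\chi(c,s)$ at the first index $c$ where the indicator flips, using the M\"obius-sum identity over the partition lattice); the tractable direction is dispatched via Lemma~\ref{lem:linear_type_sigs_equivalence}. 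You instead package all fingerprints at once through the exponential formula, $\sum_{d\ge 1}\chi(d,\mathbf{1}_S)\,x^d/d!=\log\bigl(1+g_S(x)\bigr)$, and argue that vanishing of $\chi(d,\mathbf{1}_S)$ for all $d\ge 3$ forces $1+g_S=\exp(\alpha x+\beta x^2)$, which the $0/1$ (in particular nonnegativity) constraints on the coefficients of $g_S$ at $x^3$ and $x^4$ rule out except for $S=\{0\}$ and $S=\mathbb{N}$; the same identity immediately gives type $\mathbb{T}[\mathsf{Lin}]$ in the remaining case, so you do not even need Lemma~\ref{lem:linear_type_sigs_equivalence}. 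Both routes end in a small case analysis on low-order data: yours is conceptually cleaner and handles all $d$ simultaneously, while the paper's argument is elementary, avoids the EGF machinery, and produces explicit witnesses $d$ with $\chi(d,s)\neq 0$ (namely $3$, $4$, or the first flip point). A further cosmetic difference: for the hardness you identify $\textsc{ColFactor}(\mathcal{B})$ with $\holantprob(\mathcal{S}_{\mathcal{B}})$ outright, whereas the paper only uses the one-signature reduction $\textsc{p-Holant}(\{s\})\fptlinred\textsc{ColFactor}(\mathcal{B})$; both are valid since the indicator encoding is a bijective correspondence of instances that preserves the parameter.
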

\begin{proof}
    For the lower bounds, assume there is an $S\in \mathcal{B}$ with $\{0\} \subsetneq S \subsetneq \mathbb{N}$. We define a signature 
    \[s(x) := \begin{cases}
        1 & x \in S\\
        0 & x \notin S\,,
    \end{cases}\]
    and set $\mathcal{S}=\{s\}$. Then, clearly, $\textsc{p-Holant}(\mathcal{S})\fptlinred \textsc{ColFactor}(\mathcal{B})$. 
    
    We show that $\{s\}$ is of type $\mathbb{T}[\infty]$. To this end, observe that $\chi(3,s)= 2s(1)^3 - 3s(1)s(2) + s(3)$, which is non-zero unless $(s(1),s(2),s(3))\in \{(0,0,0),(0,1,0),(1,1,1)\}$. We will consider these three cases separately:
\begin{itemize}[leftmargin=1.5cm]
    \item[$(0,1,0)$:] Consider $\chi(4,s)$. Since $s(1)=s(3)=0$, the only partitions contributing to $\chi(4,s)$ are $\{\{1,2\},\{3,4\}\}$, $\{\{1,3\},\{2,4\}\}$, $\{\{1,4\},\{2,3\}\}$, and $\{\{1,2,3,4\}\}$. The former three each contribute $(-1)^{2-1} (2-1)! = -1$, and the latter one contributes $s(4)$. Thus $\chi(4,s) = -3 +s(4)\neq 0$.
    \item[$(0,0,0)$:] Set $c=\min\{x>0 \mid s(x)= 1\}$. Note that $c$ must exist since $\{0\} \subsetneq S$, and that $c\geq 4$. Then, clearly, $\chi(c,s)= s(c)\neq 0$.
    \item[$(1,1,1)$:] Set $c=\min\{x>0 \mid s(x)= 0\}$. Note that $c$ must exist since $S\neq \mathbb{N}$, and that $c\geq 4$. Then
    \[\chi(c,s)= \sum_{\sigma < \top_c} (-1)^{|\sigma|-1} (|\sigma|-1)! = \sum_{\sigma} (-1)^{|\sigma|-1} (|\sigma|-1)! - \left((-1)^{|\top_c|-1} (|\top_c|-1)!\right) = 0 -1 = -1 \,,\]
    where the third equation holds again by the properties of the M\"obius function of the partition lattice (see e.g.\ \cite[Section 3.7]{Stanley11}).
\end{itemize}
This shows that $\{s\}$ is of type $\mathbb{T}[\infty]$. Given that $s(0)=1\neq 0$, the lower bounds thus follow immediately from \Cref{thm:holant_trichotomy_zero}.

Now, for the upper bound, assume that $\mathcal{B}=\{S_1,\dots,S_\ell\}$ for some $\ell>0$, such that none of the $S_i$ satisfies $\{0\}\subsetneq S_i \subsetneq \mathbb{N}$. Equivalently, this means that for each $i\in[\ell]$, either $0\notin S_i$, $S_i=\{0\}$, or $S_i=\mathbb{N}$. For each $i\in [\ell]$, define a signature $s_i$ by setting $s_i(x)=1$ if $x\in S_i$ and $s_i(x)=0$ otherwise. Moreover, let $\mathcal{S}=\{s_1,\dots,s_\ell\}$ and note that $\textsc{ColFactor}(\mathcal{B}) \fptlinred \textsc{p-Holant}(\mathcal{S})$. 
If $S_i=\{0\}$ or $S_i=\mathbb{N}$ then, clearly, $s_i(0)=1$, and $s_i$ is constant $0$ or $1$ on $\mathbb{N}_{>0}$. It is then easy to prove (see Lemma~\ref{lem:linear_type_sigs_equivalence} for a more general version) that the subset of $\mathcal{S}$ containing only signatures with $s(0)\neq 0$ is of type $\mathbb{T}[\mathsf{lin}]$. The claim hence follows from \Cref{thm:holant_trichotomy_zero}.
\end{proof}

\section{Parameterised Uncoloured Holants}\label{sec:uncoloured}
We begin by recalling the definition of parameterised uncoloured holant problems; to avoid notational clutter, given a finite set $E$, we define $\binom{E}{k}:=\{A \subseteq E \mid |A| = k\}$.
\begin{definition}
    Let $\mathcal{S}$ be a finite set of signatures.
    An (uncoloured) \emph{signature grid} over $\mathcal{S}$ is a pair of a graph~$G$ and a collection of signatures $\{s_v\}_{v\in V(G)}$ from $\mathcal{S}$. Given a signature grid $\Omega=(G,\{s_v\}_{v\in V(G)})$, and a positive integer $k$, we set
    \[ \mathsf{UnColHolant}(\Omega,k) = \sum_{A \in \binom{E(G)}{k}} \prod_{v\in V(G)} s_v(|A \cap E(v)|) \,.\]
    The problem $\text{\sc{p-UnColHolant}}(\mathcal{S})$ expects as input a positive integer $k$ and a signature grid $\Omega=(G,\{s_v\}_{v\in V(G)})$ over $\mathcal{S}$. The output is $\mathsf{UnColHolant}(\Omega,k)$, and the problem is parameterised by $k$. 
\end{definition}

The goal of this section is to prove our main classification theorem for $\text{\sc{p-UnColHolant}}(\mathcal{S})$:

\begin{theorem}[Theorem~\ref{thm:main_uncol}, restated]\label{thm:main_uncol_restate}
    Let $\mathcal{S}$ be a finite set of signatures. 
    \begin{itemize}
        \item[(I)] If $\mathcal{S}$ is of type $\mathbb{T}[\mathsf{Lin}]$, then $\text{\sc{p-UnColHolant}}(\mathcal{S})$ can be solved in FPT-near-linear time.
        \item[(II)] Otherwise $\text{\sc{p-UnColHolant}}(\mathcal{S})$ is $\#\W[1]$-complete. If, additionally, $\mathcal{S}$ is of type $\mathbb{T}[\infty]$, then $\text{\sc{p-UnColHolant}}(\mathcal{S})$ cannot be solved in time $f(k)\cdot |V(\Omega)|^{o(k/\log k)}$, unless ETH fails. \qed
    \end{itemize}
\end{theorem}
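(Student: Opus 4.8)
The strategy mirrors the coloured case: express $\mathsf{UnColHolant}(\Omega,k)$ as a finite linear combination of homomorphism counts into $\Omega$ (this is Equation~\eqref{eq:to_hombasis_intro} from the introduction, obtained by Möbius inversion), invoke complexity monotonicity to reduce the complexity of evaluating this combination to the complexity of its hardest surviving term $\#\homs{\Omega_H}{\Omega}$, and then analyse exactly which graphs $H$ appear with nonzero coefficient $\zeta_{\mathcal{S},k}(\Omega_H)$. The whole content is in controlling the coefficient function $\zeta$; everything else is machinery already in place (Lemma~\ref{lem:listHomsLinear} and Lemma~\ref{lem:listHomsMatrix} give the algorithmic upper bounds for acyclic / treewidth-$\leq 2$ patterns, and Lemma~\ref{lem:cphom_lower_bounds} gives the $\#\W[1]$-hardness and ETH lower bound once unbounded treewidth appears). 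The key structural claim, to be proved in Section~\ref{sec:analysis_of_zeta}, is a support characterisation for $\zeta$: if $\mathcal{S}$ is of type $\mathbb{T}[\mathsf{Lin}]$ then every $H$ in the support of $\zeta_{\mathcal{S},k}$ is acyclic (in fact a disjoint union of matchings / small components, up to the $s(0)$ normalisation), whereas if $\mathcal{S}$ is \emph{not} of type $\mathbb{T}[\mathsf{Lin}]$ — i.e.\ there is $s\in\mathcal{S}$ with $\chi(2,s)\neq 0$ — then the support of $\zeta_{\mathcal{S},k}$ contains graphs of arbitrarily large treewidth.

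\textbf{First steps.} First I would reduce, exactly as in Sections~\ref{sec:sig0}, to the case $s(0)=1$ for all $s\in\mathcal{S}$: by the polynomial-interpolation argument of Lemma~\ref{lem:reductionRestrictedHolant} (and the observation $n_0\leq 2k$ or the holant is $0$) it suffices to handle signatures with $s(0)\neq 0$, and then dividing by $\prod_i s_i(0)^{n_i}$ normalises. Second, I would set up the homomorphism expansion: a $k$-edge subset $A$ of $G$ determines a fracture-like quotient pattern; grouping edge subsets by the isomorphism type of the subgraph they span (with signatures as vertex colours, now \emph{without} the edge-colouring that made the coloured case easy) and applying Möbius inversion over the relevant partition lattice yields $\mathsf{UnColHolant}(\Omega,k) = \prod_i s_i(0)^{n_i}\cdot\sum_{\Omega_H}\zeta_{\mathcal{S},k}(\Omega_H)\cdot\#\homs{\Omega_H}{\Omega}$, where the coefficients $\zeta$ factor, vertex by vertex, through the signature fingerprints $\chi(d,s)$ just as in Lemma~\ref{lem:coeffs_done} — with the crucial difference that in the uncoloured setting one must also sum over how the $k$ chosen edges distribute among the (uncoloured) edge slots of $H$, which is where the extra hardness comes from. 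Third, complexity monotonicity (the uncoloured analogue, e.g.\ via the tensor-product / Lovász-style interpolation argument already used in the proof of Lemma~\ref{lem:main_equivalence_hard} and in~\cite{CurticapeanDM17}) turns the linear combination into something interreducible with $\#\colhomsprob$ (here $\#\listhomsprob$) on the class of patterns in the support of $\zeta$.

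\textbf{The tractable direction and the $\mathbb{T}[\mathsf{Lin}]$ support.} For case~(1), by Lemma~\ref{lem:simplify_type_1_intro} the $\mathbb{T}[\mathsf{Lin}]$ condition (with $s(0)=1$) says $s(n)=s(1)^n$, i.e.\ every signature is ``multiplicative''. I would show that for such signatures the fingerprint $\chi(d,s)=0$ for all $d\geq 2$ (this is in the excerpt), and hence, reading off the factorised formula for $\zeta$, any pattern $H$ with a vertex of degree $\geq 2$ gets coefficient $0$; the surviving patterns are unions of single edges (matchings) plus isolated vertices, which are acyclic, so Lemma~\ref{lem:listHomsLinear} gives the FPT-near-linear algorithm. (One must check that the bookkeeping over ``how $k$ edges distribute among slots'' does not resurrect higher-treewidth terms — but since each surviving $H$ is already a matching, the distribution is trivial.)

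\textbf{The hardness direction — the main obstacle.} For case~(2) the job is to show that whenever $\mathcal{S}$ is not of type $\mathbb{T}[\mathsf{Lin}]$, patterns of unbounded treewidth survive in the support of $\zeta_{\mathcal{S},k}$ for suitable $k$. This is genuinely harder than the coloured case and is the crux of the whole theorem, because here even type $\mathbb{T}[\omega]$ signature sets — for which $\chi(d,s)=0$ for all $d\geq 3$, so \emph{no} vertex of degree $\geq 3$ can appear in any single fractured graph $\fracture{H}{\vec{\rho}}$ — must nevertheless yield high-treewidth support. The resolution: in the \emph{uncoloured} expansion the coefficient of a fixed pattern $F$ is a \emph{sum} of contributions $\zeta$ over many $(H,\vec{\rho})$ with $\fracture{H}{\vec{\rho}}\cong F$, \emph{together with} the combinatorial multiplicities from distributing $k$ edges into slots; I would show these do not all cancel. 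Concretely I expect to take a fixed $s$ with $\chi(2,s)\neq 0$, build $H$ from a high-treewidth cubic (or $d$-regular) graph by subdividing / gadgeteering so that all ``real'' vertices have degree $2$ (forcing the $\chi(2,s)$ factor) while the connectivity — hence treewidth — is preserved, and then argue via a careful sign/Möbius computation (isolating a leading term, e.g.\ the coarsest fracture $\vec\top$, or using a minimal-counterexample / degree argument on the number of edges) that the total coefficient of this $H$ is a nonzero polynomial in the values $\chi(2,s)$ and the $s(j)$'s and hence does not vanish. Combined with Lemma~\ref{lem:cphom_lower_bounds}~(2) and the expander-treewidth bound used in the proof of Main Theorem~\ref{main_thm}, this yields $\#\W[1]$-completeness and, when additionally $\chi(d,s)\neq 0$ for some $d\geq 3$ (type $\mathbb{T}[\infty]$), the $f(k)\cdot|V(\Omega)|^{o(k/\log k)}$ ETH lower bound. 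The delicate point, and where I expect to spend most of the effort, is ruling out complete cancellation of the high-treewidth coefficients in the type $\mathbb{T}[\omega]$ subcase — the coloured case avoided this entirely because the edge-colouring pinned down each edge slot, whereas uncoloured holants smear contributions across many patterns and the surviving support must be extracted by a genuinely new combinatorial argument about $\zeta$, carried out in the staged manner announced in Section~\ref{sec:analysis_of_zeta}.
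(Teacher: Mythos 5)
Your overall architecture matches the paper's: normalise to $s(0)=1$, expand $\mathsf{UnColHolant}(\Omega,k)$ into a linear combination of homomorphism counts from $\mathcal{S}$-vertex-coloured patterns, invoke complexity monotonicity, and reduce everything to a support analysis of $\zeta_{\mathcal{S},k}$; the tractable direction is also essentially right (multiplicativity $s(n)=s(1)^n$ kills every vertex of degree $\geq 2$, and this does extend to the generalised fingerprints $\chi(\lambda,s)$ with $\mathsf{len}(\lambda)\geq 2$, as in Lemma~\ref{lem:linear_type_sigs_equivalence}). However, your concrete plan for the crux — showing that high-treewidth patterns survive when $\mathcal{S}$ is of type $\mathbb{T}[\omega]$ — does not work as described. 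You propose to build a high-treewidth pattern $H$ ``by subdividing/gadgeteering so that all real vertices have degree $2$'': but any graph in which every vertex has degree at most $2$ is a disjoint union of paths and cycles and hence has treewidth at most $2$, and subdividing a $d$-regular graph does not lower the degrees of its original vertices, so no such high-treewidth pattern exists. Your fallback of isolating a leading term such as the coarsest fracture $\vec\top$ also fails in exactly this subcase, since that term's coefficient is $\prod_v\chi(d_H(v),s)$, which vanishes for type $\mathbb{T}[\omega]$ as soon as some vertex of $H$ has degree $\geq 3$. The mechanism the paper actually uses is different in kind: because $k$ may exceed $|E(H)|$, each edge of $H$ carries an integer partition $\lambda(e)$ with $\sum_e|\lambda(e)|=k$, and the vertex contributions are generalised fingerprints $\chi(\mathsf{deg}(H,v,\lambda),s)$ which can be nonzero at high-degree vertices; the paper proves a vanishing threshold ($\chi(\lambda,s)=0$ when $|\lambda|<2\,\mathsf{len}(\lambda)-2$) and a sign-coherent evaluation at the threshold ($\chi(\lambda,s)=a_\lambda(s(2)-s(1)^2)^{\mathsf{len}(\lambda)-1}$ with $a_\lambda>0$), and then shows for $H=K_{d+1}$ at $k=d^2-1$ that only threshold terms survive and all carry the same sign, so $\zeta(K_{d+1},d^2-1)\neq 0$. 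Nothing in your sketch supplies this (or any other) non-cancellation argument, and this is precisely the step you yourself flag as the hard part.

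A secondary issue: for the $\mathbb{T}[\infty]$ ETH lower bound you plan to combine the surviving high-treewidth patterns with Lemma~\ref{lem:cphom_lower_bounds}(2), but in the uncoloured support the patterns one can exhibit (e.g.\ $K_{d+1}$ at $k=d^2-1$) have treewidth only about $\sqrt{k}$, which would yield a bound of the form $|V(\Omega)|^{o(\sqrt{k}/\log k)}$ rather than the claimed $|V(\Omega)|^{o(k/\log k)}$. The paper sidesteps this by reducing the \emph{coloured} problem to the uncoloured one via inclusion--exclusion over colour subsets (Lemma~\ref{lem:col_to_uncol}) and importing the tight ETH bound from Main Theorem~\ref{main_thm}; you should do the same (you mention the matchings instance of this reduction but do not use it here).
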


Similarly as in the previous sections, we will translate the uncoloured Holants into a linear combination of homomorphism counts. To this end, given a finite set of signatures $\mathcal{S}$, we denote by $\mathcal{G}(\mathcal S)$ the set of all (isomorphism types of) $\mathcal{S}$-vertex-coloured graphs $(H,\nu)$, where $H$ does not contain isolated vertices, and $\nu:V(H)\to \mathcal{S}$ assigns each vertex of $H$ to a signature in $\mathcal{S}$. Two $\mathcal{S}$-vertex-coloured graphs $(H_1,\nu_1)$ and $(H_2,\nu_2)$ are isomorphic, denoted by $(H_1,\nu_1)\cong(H_2,\nu_2)$ if there is an isomorphism $\iota$ from $H_1$ to $H_2$ that preserves colours, that is, $\nu_2(\iota(v))=\nu_1(v)$ for all $v\in V(H_1)$. Moreover, an automorphism of $(H,\nu)$ is an isomorphism from $(H,\nu)$ to itself, and we denote by $\mathsf{Aut}(H,\nu)$ the set of all automorphisms of $(H,\nu)$. Similarly, a homomorphism from $(H,\nu)$ to $\Omega=(G,\{s_v\}_{v\in V(G)})$ is a homomorphism $\varphi$ from $H$ to $G$ such that $s_{\varphi(v)} = \nu(v)$ for all $v\in V(H)$. We write $\homs{(H,\nu)}{\Omega}$ for the set of all homomorphisms from $\varphi$ to $\Omega$. Embeddings from $(H,\nu)$ to $\Omega=(G,\{s_v\}_{v\in V(G)})$ are defined, likewise, as embeddings from $H$ to $G$ that preserve the signatures of the vertices, and we denote by $\embs{(H,\nu)}{\Omega}$ the set of all embeddings from $(H,\nu)$ to $\Omega$.

We begin by translating embeddings to homomorphisms. To this end, given $(H,\nu)$, a partition $\rho$ of $V(H)$ is called \emph{colour-consistent} if vertices in the same block must have the same colour w.r.t.\ $\nu$, and we denote the set of all colour-consistent partitions of $V(H)$ by $\mathsf{colPart}(H)$. For $\rho \in \mathsf{colPart}(H)$ we set $(H,\nu)/\rho = (H/\rho,\nu/\rho)$, where $(H/\rho)$ is the usual quotient graph, and $\nu/\rho$ assigns a vertex $v_B$ of $H/\rho$ to the signature of the vertices in the block $B$, which is well-defined since $\rho$ is colour-consistent.

We start by translating embeddings to homomorphisms; the proof is standard and deferred to \Cref{sec:app-uncol}.
\begin{lemma}\label{lem:colEmb_to_colHom_S-coloured}
    Let $\mathcal{S}$ be a finite set of signatures, let $(H,\nu)$ be an $\mathcal{S}$-vertex-coloured graph, and let $\Omega=(G,\{s_v\}_{v\in V(G)})$ be a signature grid over $\mathcal{S}$, we have
    \[ \#\embs{(H,\nu)}{\Omega} = \sum_{\rho \in \mathsf{colPart}(H)} \mu(\bot,\rho)\cdot \#\homs{(H,\nu)/\rho}{\Omega}\,,\]
    where $\mu(\bot,\rho)=\prod_{B\in \rho}(-1)^{|B|-1}(|B|-1)!$ is the (usual) M\"obius function of partitions.\qed
\end{lemma}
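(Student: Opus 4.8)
The plan is to run the classical Lov\'{a}sz argument in the signature‑preserving setting: first express $\#\homs{(H,\nu)}{\Omega}$ as a sum of embedding counts of quotients indexed by colour‑consistent partitions, and then invert this relation over the partition lattice by M\"obius inversion.

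\emph{Step 1: homomorphisms decompose along kernel partitions.} Every $\varphi\in\homs{(H,\nu)}{\Omega}$ determines its \emph{kernel partition} $\sigma_\varphi$ of $V(H)$, which places $u,v$ in the same block iff $\varphi(u)=\varphi(v)$. Since $\varphi$ preserves signatures, $\sigma_\varphi$ is colour‑consistent, so $\sigma_\varphi\in\mathsf{colPart}(H)$. The homomorphism $\varphi$ then factors as $\varphi=\bar\varphi\circ h_{\sigma_\varphi}$, where $h_{\sigma_\varphi}$ is the canonical quotient map onto $(H,\nu)/\sigma_\varphi$ and $\bar\varphi$ is injective and signature‑preserving, i.e.\ $\bar\varphi\in\embs{(H,\nu)/\sigma_\varphi}{\Omega}$; conversely, $\psi\mapsto\psi\circ h_\sigma$ sends $\embs{(H,\nu)/\sigma}{\Omega}$ bijectively onto the set of homomorphisms with kernel partition exactly $\sigma$. (If $\sigma$ identifies two adjacent vertices of $H$, then $(H,\nu)/\sigma$ has a loop and both sides are empty, so the bookkeeping remains consistent.) Summing over $\sigma$ yields
\[ \#\homs{(H,\nu)}{\Omega} \;=\; \sum_{\sigma\in\mathsf{colPart}(H)} \#\embs{(H,\nu)/\sigma}{\Omega}\,. \]

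\emph{Step 2: apply Step 1 to every quotient.} I would apply the identity of Step 1 to $(H,\nu)/\sigma$ in place of $(H,\nu)$, using the order isomorphism between $\mathsf{colPart}(H/\sigma)$ and the principal filter $\{\rho\in\mathsf{colPart}(H):\rho\geq\sigma\}$ (obtained by merging blocks of $\sigma$), together with the fact that re‑quotienting $(H,\nu)/\sigma$ by the image of such a $\rho$ recovers $(H,\nu)/\rho$. This gives
\[ \#\homs{(H,\nu)/\sigma}{\Omega} \;=\; \sum_{\substack{\rho\in\mathsf{colPart}(H)\\ \rho\geq\sigma}} \#\embs{(H,\nu)/\rho}{\Omega}\,. \]

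\emph{Step 3: M\"obius inversion.} Writing $g(\sigma):=\#\homs{(H,\nu)/\sigma}{\Omega}$ and $f(\rho):=\#\embs{(H,\nu)/\rho}{\Omega}$, Step 2 reads $g(\sigma)=\sum_{\rho\geq\sigma}f(\rho)$ over the finite poset $\mathsf{colPart}(H)$, so M\"obius inversion gives $f(\sigma)=\sum_{\rho\geq\sigma}\mu(\sigma,\rho)\,g(\rho)$; specialising to $\sigma=\bot$ yields exactly the claimed formula. The last point is that on any interval $[\bot,\rho]$ the poset $\mathsf{colPart}(H)$ coincides with the full partition lattice of $V(H)$ — every refinement of the colour‑consistent partition $\rho$ is again colour‑consistent — so $\mu(\bot,\rho)$ is the usual partition‑lattice M\"obius value $\prod_{B\in\rho}(-1)^{|B|-1}(|B|-1)!$. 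The argument is entirely routine; the only place that needs a little care is Step 1, namely checking that the quotient construction interacts correctly with the colour‑consistency constraint and with the possible appearance of loops in $(H,\nu)/\sigma$, which is why (as noted in the text) it can safely be deferred to the appendix.
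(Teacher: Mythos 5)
Your proof is correct, and it reaches the lemma by a mildly different route than the paper. The paper's proof encodes $(H,\nu)$ and $\Omega$ as relational structures with one unary predicate per signature, invokes the standard Lov\'{a}sz identity $\#\embs{\mathcal{H}}{\mathcal{G}}=\sum_{\rho}\mu(\bot,\rho)\#\homs{\mathcal{H}/\rho}{\mathcal{G}}$ over \emph{all} partitions of $V(H)$, and then observes that every non-colour-consistent $\rho$ contributes $0$ because the quotient structure has a vertex lying in two distinct unary predicates while no vertex of $\Omega$ does. You instead work intrinsically inside the sub-poset $\mathsf{colPart}(H)$: the kernel/factorisation argument of your Step 1 (with the loop caveat handled correctly, since a quotient identifying adjacent vertices embeds nowhere in a simple grid and also receives no homomorphisms with that kernel), the quotient-of-quotient identity of Step 2, and M\"obius inversion over $\mathsf{colPart}(H)$ in Step 3. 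What your route additionally requires — and you correctly supply — is the check that the M\"obius values on $\mathsf{colPart}(H)$ are the usual partition-lattice ones; the justification is exactly the one you give (refinements of colour-consistent partitions are colour-consistent), though the phrasing should be that the interval $[\bot,\rho]$ in $\mathsf{colPart}(H)$ coincides with the interval $[\bot,\rho]$ in the full partition lattice, not with the full lattice itself. The trade-off is that the paper's version gets the identity essentially for free from the known structure-level statement at the cost of introducing the relational-structure encoding, whereas yours is self-contained and never leaves the coloured-graph setting; both are sound.
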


Now let $\mathcal{G}_k(\mathcal{S})$ denote the set of all (isomorphism types of) $\mathcal{S}$-vertex-coloured graphs $(H,\nu)$, where $H$ has exactly $k$ edges and does not contain isolated vertices, and $\nu:V(H)\to \mathcal{S}$ assigns each vertex of $H$ to a signature in $\mathcal{S}$.
\begin{lemma}\label{lem:uncoloured_hombasis_general}
    Let $\mathcal{S}=\{s_1,\dots,s_\ell\}$ be a finite set of signatures, and let
    $\Omega=(G,\{s_v\}_{v\in V(G)})$ be a signature grid over $\mathcal{S}$. Set $n_i=|\{v\in V(G)\mid s_v=s_i\}|$ for each $i\in [\ell]$, and let $k$ be a positive integer. We have
    \[ \mathsf{UnColHolant}(\Omega,k) = \prod_{i\in \ell}s_i(0)^{n_i} \cdot \sum_{(H,\nu)\in \mathcal{G}(\mathcal{S})} \zeta_{\mathcal{S},k}(H,\nu) \cdot \#\homs{(H,\nu)}{\Omega}\,, \]
    where 
    \[\zeta_{\mathcal{S},k}(H,\nu)= \sum_{(F,\nu_F)\in\mathcal{G}_k(\mathcal{S})} \frac{\prod_{v\in V(F)}{s_v(d_F(v))}/{s_v(0)}}{\#\mathsf{Aut}(F,\nu_F)} \cdot \sum_{\substack{\rho \in \mathsf{colPart}(F,\nu_F)\\(F,\nu_F)/\rho \cong (H,\nu)}}\mu(\bot,\rho)  \,.\]
\end{lemma}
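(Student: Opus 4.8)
The plan is to start from the definition of $\mathsf{UnColHolant}(\Omega,k)$ as a sum over $k$-edge-subsets $A$ of $G$, and group these subsets according to the isomorphism type of the subgraph they induce (after deleting isolated vertices), carrying along the vertex-colouring given by the signatures. First I would set, for each $A\in\binom{E(G)}{k}$, $G\{A\}$ to be the graph $(V(G),A)$ with isolated vertices removed, equipped with the vertex-colouring $v\mapsto s_v$; each such coloured graph is (isomorphic to) an element of $\mathcal{G}_k(\mathcal{S})$ since it has exactly $k$ edges and no isolated vertices. For a fixed $(F,\nu_F)\in\mathcal{G}_k(\mathcal{S})$, the key observation is that $\prod_{v\in V(G)}s_v(|A\cap E(v)|)$ depends only on the isomorphism type of $G\{A\}$: writing $d_F(v)$ for degrees in $F$, it equals $\prod_{v\in V(F)}s_v(d_F(v))$ times $\prod_{i\in\ell}s_i(0)^{n_i - n_i^F}$, where $n_i^F$ counts vertices of $F$ coloured $s_i$, since every vertex of $G$ not in the image contributes its factor $s_v(0)$. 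Pulling out $\prod_i s_i(0)^{n_i}$ as a global prefactor (legitimate since each $s_i(0)\neq 0$), we get
\[
\mathsf{UnColHolant}(\Omega,k) = \prod_{i\in\ell}s_i(0)^{n_i}\cdot\sum_{(F,\nu_F)\in\mathcal{G}_k(\mathcal{S})} \Bigl(\prod_{v\in V(F)}\frac{s_v(d_F(v))}{s_v(0)}\Bigr)\cdot \#\subs{(F,\nu_F)}{\Omega}\,,
\]
where $\#\subs{(F,\nu_F)}{\Omega}$ is the number of subgraphs of $G$ isomorphic, as vertex-coloured graphs, to $(F,\nu_F)$; indeed the map $A\mapsto G\{A\}$ is a bijection onto such subgraphs.

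Next I would convert subgraph counts to homomorphism counts. By the standard identity $\#\subs{(F,\nu_F)}{\Omega} = \#\embs{(F,\nu_F)}{\Omega}/\#\mathsf{Aut}(F,\nu_F)$ (the same orbit-counting argument as Proposition~\ref{prop:col_subs_to_embs}, now in the $\mathcal{S}$-vertex-coloured category, using that $\mathsf{Aut}(F,\nu_F)$ acts freely on embeddings with orbits in bijection to subgraphs), and then applying Lemma~\ref{lem:colEmb_to_colHom_S-coloured} to expand each embedding count as $\sum_{\rho\in\mathsf{colPart}(F,\nu_F)}\mu(\bot,\rho)\cdot\#\homs{(F,\nu_F)/\rho}{\Omega}$, we obtain
\[
\mathsf{UnColHolant}(\Omega,k) = \prod_{i\in\ell}s_i(0)^{n_i}\cdot\sum_{(F,\nu_F)\in\mathcal{G}_k(\mathcal{S})} \frac{\prod_{v\in V(F)}s_v(d_F(v))/s_v(0)}{\#\mathsf{Aut}(F,\nu_F)}\sum_{\rho\in\mathsf{colPart}(F,\nu_F)}\mu(\bot,\rho)\,\#\homs{(F,\nu_F)/\rho}{\Omega}\,.
\]
Finally I would reorganise the double sum by collecting, for each $\mathcal{S}$-vertex-coloured graph $(H,\nu)$ occurring as a quotient, all pairs $(F,\nu_F)$ and partitions $\rho$ with $(F,\nu_F)/\rho\cong(H,\nu)$. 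Every quotient of an element of $\mathcal{G}_k(\mathcal{S})$ by a colour-consistent partition still has no isolated vertices (quotienting cannot delete edges) and has a vertex-colouring into $\mathcal{S}$, so it lies in $\mathcal{G}(\mathcal{S})$; hence the outer sum ranges over $(H,\nu)\in\mathcal{G}(\mathcal{S})$, the coefficient of $\#\homs{(H,\nu)}{\Omega}$ is exactly the displayed $\zeta_{\mathcal{S},k}(H,\nu)$, and the sum is finitely supported because only finitely many $(H,\nu)$ arise as such quotients.

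The main obstacle — really the only non-routine point — is bookkeeping the passage from subgraphs to embeddings to homomorphisms with two separate colourings in play: the ambient ``signature colouring'' $v\mapsto s_v$ on $G$ and the pattern colouring $\nu_F$ on $F$, which must match under the embedding. One must check that Lemma~\ref{lem:colEmb_to_colHom_S-coloured} (an $\mathcal{S}$-vertex-coloured analogue of the classical embedding-to-homomorphism inversion) applies verbatim, that the automorphism/orbit argument respects $\nu_F$, and that $\mathsf{colPart}(F,\nu_F)$ as used here is precisely the index set in that lemma. I would also flag the subtlety that the prefactor extraction $\prod_i s_i(0)^{n_i}$ requires all $s_i(0)\neq 0$, which holds by the standing assumption on signatures in this section, and that the isomorphism-type quotient $(F,\nu_F)/\rho$ is well-defined only for colour-consistent $\rho$, which is exactly why $\mathsf{colPart}$ rather than $\ppart$ appears.
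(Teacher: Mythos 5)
Your proposal is correct and follows essentially the same route as the paper's proof: grouping $k$-edge subsets by their induced $\mathcal{S}$-vertex-coloured subgraph, extracting the $\prod_i s_i(0)^{n_i}$ prefactor, converting subgraph counts to embeddings via the free $\mathsf{Aut}(F,\nu_F)$-action, and then to homomorphisms via Lemma~\ref{lem:colEmb_to_colHom_S-coloured} before collecting coefficients of isomorphic quotients. The subtleties you flag (colour-consistency, $s_i(0)\neq 0$, finite support) are exactly the points the paper relies on.
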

\begin{proof}
    Let us write $\mathsf{colSub}((F,\nu_F)\to \Omega)$ for the set of all subgraphs of $G$ that are isomorphic to $(F,\nu_F)$ as an $\mathcal{S}$-vertex-coloured graph. Then we can partition the $k$-edge subsets of $E(G)$ by the ($S$-vertex-colored) subgraphs they induce and obtain:
    \begin{align*}
        \mathsf{UnColHolant}(\Omega,k) &= \sum_{(F,\nu_F) \in \mathcal{G}_k(\mathcal{S}) }\#\mathsf{colSub}((F,\nu_F) \to \Omega) \cdot \prod_{s_i:i\in [\ell]}s_i(0)^{n_i-n^F_i} \cdot \prod_{v\in V(F)}s_v(d_F(v)) \\
    ~&= \prod_{i \in [\ell]}s_i(0)^{n_i} \cdot \sum_{(F,\nu_F) \in \mathcal{G}_k(\mathcal{S}) }\#\mathsf{colSub}((F,\nu_F) \to \Omega) \cdot \prod_{v\in V(F)}\frac{s_v(d_F(v))}{s_v(0)}\,,
    \end{align*}
    where $n_i$ and $n^F_i$ denote the numbers of vertices of $G$ and $F$, respectively, with signature $s_i$.

    Next observe that, similarly as for uncoloured graphs, $\mathsf{Aut}(F,\nu_F)$ acts on $\mathsf{colEmb}((F,\nu_F) \to \Omega)$ and the orbits of this action correspond precisely to the elements in $\mathsf{colSub}((F,\nu_F) \to \Omega)$. Thus
    \[\#\mathsf{colSub}((F,\nu_F) \to \Omega) = \#\mathsf{Aut}(F,\nu_F)^{-1}\cdot \#\mathsf{colEmb}((F,\nu_F) \to \Omega)  \,.\]
    The lemma then follows by invoking Lemma~\ref{lem:colEmb_to_colHom_S-coloured} and collecting for isomorphic terms.
\end{proof}

Next, our goal is to establish complexity monotonicity for counting homomorphisms from $\mathcal{S}$-vertex-coloured graphs into signature grids over $\mathcal{S}$. The formal statement sufficient for our purposes is provided below and the proof is deferred to \Cref{sec:app-uncol}, since it is an easy consequence of previous works on linear combinations of homomorphism counts.

\begin{lemma}\label{lem:monotonicity_S-coloured}
    Let $\mathcal{S}$ be a finite set of signatures and let $\mathcal{C}$ be the class of all graphs $H$ for which there is a positive integer $k$ and a colouring $\nu:V(H)\to \mathcal{S}$ such that $\zeta_{\mathcal{S},k}(H,\nu)\neq 0$.
    \begin{itemize}
        \item[(1)] If all graphs in $\mathcal{C}$ are acyclic, then  $\text{\sc{p-UnColHolant}}(\mathcal{S})$ can be solved in FPT-near-linear time.
        \item[(2)] If $\mathcal{C}$ has unbounded treewidth, then $\text{\sc{p-UnColHolant}}(\mathcal{S})$ is $\#\W[1]$-complete.\qed
    \end{itemize}
\end{lemma}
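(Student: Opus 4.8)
The plan is to invoke the complexity monotonicity principle for linear combinations of homomorphism counts, together with the homomorphism expansion from Lemma~\ref{lem:uncoloured_hombasis_general}. First I would observe that Lemma~\ref{lem:uncoloured_hombasis_general} expresses $\mathsf{UnColHolant}(\Omega,k)$ as $\prod_{i} s_i(0)^{n_i}$ times a \emph{finite} linear combination $\sum_{(H,\nu)} \zeta_{\mathcal{S},k}(H,\nu)\cdot \#\homs{(H,\nu)}{\Omega}$, where the sum is over $\mathcal{S}$-vertex-coloured graphs $(H,\nu)$ and only finitely many coefficients $\zeta_{\mathcal{S},k}(H,\nu)$ are nonzero (since each surviving $(H,\nu)$ is a quotient of some $(F,\nu_F)\in\mathcal{G}_k(\mathcal S)$, and there are only finitely many such $F$ up to isomorphism, so $H$ has at most $2k$ vertices). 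Since $\mathcal{S}$ is fixed and $k$ is the parameter, the set of surviving pairs $(H,\nu)$ and the coefficients $\zeta_{\mathcal{S},k}(H,\nu)$ can be computed in time depending only on $k$. The factor $\prod_i s_i(0)^{n_i}$ is computable in near-linear time.

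Next I would set up the two directions. For the \textbf{upper bound} (part (1)): if every graph $H$ appearing with a nonzero coefficient for some $k$ and some $\nu$ is acyclic, then for a fixed input $(k,\Omega)$ every term $\#\homs{(H,\nu)}{\Omega}$ in the expansion is a count of colour-preserving homomorphisms from an acyclic $\mathcal{S}$-vertex-coloured graph; interpreting $\nu$ and the vertex-signatures of $\Omega$ as vertex colourings, this is exactly an instance of $\#\colhomsprob(\mathcal{H})$ with $\mathcal{H}$ acyclic, solvable in FPT-near-linear time by Fact~\ref{fact:colhoms_lintime}. Summing the (at most $f(k)$ many) terms, each weighted by a precomputed coefficient, yields an FPT-near-linear-time algorithm for $\text{\sc{p-UnColHolant}}(\mathcal{S})$. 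For the \textbf{lower bound} (part (2)): if $\mathcal{C}$ has unbounded treewidth, I would apply the complexity monotonicity machinery (as used, e.g., in~\cite{CurticapeanDM17,DalmauJ04}) to conclude that evaluating the linear combination is at least as hard as evaluating its most complex surviving term $\#\homs{(H^\star,\nu^\star)}{\cdot}$, where $H^\star\in\mathcal{C}$ has large treewidth. Concretely, one reduces $\#\colhomsprob(\mathcal{C})$ (equivalently $\#\cphomsprob$ restricted to the high-treewidth graphs of $\mathcal{C}$, via the standard tensor-product/colour-splitting trick) to $\text{\sc{p-UnColHolant}}(\mathcal{S})$: given a target $(H^\star,\nu^\star)$-coloured graph, one queries $\text{\sc{p-UnColHolant}}(\mathcal{S})$ on suitably constructed signature grids obtained by tensoring the target with carefully chosen small gadgets (one per isomorphism type of colour-consistent quotient appearing in the expansion), and recovers the individual homomorphism count by solving the resulting linear system; the Lovász-type argument (Propositions analogous to~\ref{prop:semigroup}--\ref{prop:distinct}, adapted to $\mathcal{S}$-vertex-coloured graphs) guarantees the system is non-singular. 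Since $\mathcal{C}$ has unbounded treewidth, $\#\colhomsprob(\mathcal{C})$ is $\#\W[1]$-hard by Lemma~\ref{lem:cphom_lower_bounds}(2) (via Dalmau--Jonson), and $\#\W[1]$-completeness of $\text{\sc{p-UnColHolant}}(\mathcal{S})$ follows using the remark that both problems reduce to $\#\textsc{Clique}$.

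The main obstacle I expect is making the complexity monotonicity reduction rigorous in the $\mathcal{S}$-vertex-coloured setting: one must verify that the analogues of the semigroup structure under tensor product, the multiplicativity of homomorphism counts, and the distinguishability property (Propositions~\ref{prop:semigroup}, \ref{prop:linear}, \ref{prop:distinct}) all go through for $\mathcal{S}$-vertex-coloured graphs — which they do, since vertex colours are preserved under tensor products and homomorphisms — and then carefully track that the resulting tensor-product grids remain valid instances of $\text{\sc{p-UnColHolant}}(\mathcal{S})$ (i.e.\ that the product vertex inherits a well-defined signature from $\mathcal{S}$) with parameter $k$ bounded by a function of the original parameter. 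Most of this is routine adaptation of~\cite{CurticapeanDM17,PeyerimhoffRSSVW23}, but the bookkeeping of which quotient types appear and ensuring the gadgets only introduce a bounded number of vertices of each colour is where care is needed; this is precisely why the formal proof is deferred to the appendix.
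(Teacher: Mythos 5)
Your proposal follows essentially the same route as the paper: expand $\mathsf{UnColHolant}(\Omega,k)$ via Lemma~\ref{lem:uncoloured_hombasis_general} into a finite linear combination of coloured homomorphism counts, handle the acyclic case term-by-term in FPT-near-linear time, and obtain hardness from complexity monotonicity when the surviving treewidth is unbounded. Part~(1) of your plan coincides with the paper's proof (the paper phrases the coloured hom counts as homomorphisms of relational structures with unary predicates and cites the acyclic conjunctive-query counting result, which is the same fact underlying Fact~\ref{fact:colhoms_lintime}). For part~(2) the paper takes a shortcut you do not: after stripping the trivially computable factor $\prod_i s_i(0)^{n_i}$, it observes that the remaining task is exactly evaluating a linear combination of vertex-coloured homomorphism counts and directly invokes \cite[Lemma~3.8 and Remark~3.9]{CurticapeanDM17}, which already states that such a linear combination is $\#\W[1]$-hard whenever the treewidth of the graphs with non-zero coefficients is unbounded; no new tensor-product interpolation or Lov\'asz-type distinguishability argument needs to be set up. Your plan to re-derive the monotonicity machinery for $\mathcal{S}$-vertex-coloured graphs (analogues of Propositions~\ref{prop:semigroup}--\ref{prop:distinct}, a linear system over tensored grids) would also work, since $\mathcal{S}$-coloured graphs are just structures with unary predicates and the tensored hosts inherit well-defined signatures, but it amounts to reproving what the citation provides. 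The one place where your sketch is genuinely loose is the parenthetical claim that $\#\colhomsprob(\mathcal{C})$ is ``equivalently'' $\#\cphomsprob$ on the high-treewidth graphs of $\mathcal{C}$: in a colour-prescribed instance every pattern vertex carries its own colour, whereas the surviving patterns $(H^\star,\nu^\star)$ use only the $|\mathcal{S}|$ signature colours, so several pattern vertices may share a colour and coloured homomorphisms overcount colour-prescribed ones; the direction you need (hardness of the $\mathcal{S}$-coloured count from Lemma~\ref{lem:cphom_lower_bounds}) requires an additional inclusion--exclusion or interpolation step over colour classes, or simply the coloured statement of \cite{CurticapeanDM17} that the paper uses. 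With that step made explicit (or replaced by the citation), your argument is correct.
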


The heart, and in fact the most challenging part, of our investigation of the uncoloured holant problem boils down to understanding the coefficient function $\zeta_{\mathcal{S},k}$. We provide the detailed analysis encapsulated in Section~\ref{sec:analysis_of_zeta}. In what follows, we present our key result on $\zeta_{\mathcal{S},k}$ and invoke it in the proof of Theorem~\ref{thm:main_uncol_restate}.

\begin{remark}[On $s(0)=1$]\label{rem:s_zero_equals_1}
Let $\mathcal{S}$ be a finite set of signatures $s$ such that $s(0) \neq 0$. Let $\Omega = (G, \{s_v\}_{v\in V(G)}) \in$ \text{\sc{p-UnColHolant}}$(\mathcal{S})$. Consider the signature grid $\Omega' = (G, \{s'_v\}_{v \in V(G)})$ obtained by replacing every signature $s_v$, $v \in V(G)$ with the signature $s_v' = s_v/s_v(0)$. For any $k$, it can be readily verified that 
\[\textup{\textsf{UnColHolant}}(\Omega', k) = \textup{\textsf{UnColHolant}}(\Omega, k) \cdot \prod_{v \in V(G)}s_v(0)^{-1}.\] 
Furthermore, for any algebraic complex number $q$, the signature sets $\{s\}$ and $\{q\cdot s\}$ have the same type, because the signatures $s$ and $q\cdot s$ have the same fingerprints. Hence, it suffices to classify \text{\sc{p-UnColHolant}}$(\mathcal{S})$ for finite signature sets such that, for each $s \in \mathcal{S}$, $s(0) = 1$.
\end{remark}

\begin{lemma}\label{lem:main_result_on_zeta}
    Let $\mathcal{S}$ be a finite set of signatures and let $\mathcal{C}$ be the class of all graphs $H$ for which there is a positive integer $k$ and a colouring $\nu:V(H)\to \mathcal{S}$ such that $\zeta_{\mathcal{S},k}(H,\nu)\neq 0$. If $\mathcal{S}$ is of type $\mathbb{T}[\mathsf{lin}]$, then all graphs in $\mathcal{C}$ are acyclic. Otherwise $\mathcal{C}$ has unbounded treewidth.\qed
\end{lemma}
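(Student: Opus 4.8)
The plan is to prove the two halves of the statement separately. Throughout one may normalise so that $s(0)=1$ for every $s\in\mathcal{S}$: the definition of $\zeta_{\mathcal{S},k}$ involves the signatures only through the ratios $s(d)/s(0)$ and through the colour-preserving automorphism groups, so replacing each $s$ by $s/s(0)$ changes neither $\mathcal{C}$ nor, by the fingerprint computation, the type of $\mathcal{S}$ (cf.\ Remark~\ref{rem:s_zero_equals_1}). With this normalisation the prefactor $\prod_i s_i(0)^{n_i}$ in Lemma~\ref{lem:uncoloured_hombasis_general} is $1$, and by Lemma~\ref{lem:simplify_type_1_intro} the set $\mathcal{S}$ is of type $\mathbb{T}[\mathsf{lin}]$ if and only if every $s\in\mathcal{S}$ is multiplicative, i.e.\ $s(n)=s(1)^n$ for all $n$.

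For the $\mathbb{T}[\mathsf{lin}]$ direction I would compare two homomorphism expansions of the holant. If every signature is multiplicative then, writing $s_v(1)^{|A\cap E(v)|}=\prod_{e\in A\cap E(v)}s_v(1)$, one gets $\mathsf{UnColHolant}(\Omega,k)=\sum_{A\in\binom{E(G)}{k}}\prod_{e=\{u,v\}\in A}w_e=e_k((w_e)_{e\in E(G)})$, where $w_{\{u,v\}}:=s_u(1)s_v(1)$ and $e_k$ is the $k$-th elementary symmetric polynomial. By Newton's identities $e_k$ is a rational polynomial without constant term in the power sums $p_j=\sum_{e\in E(G)}w_e^{\,j}$, and each power sum is a linear combination of homomorphism counts from a single $\mathcal{S}$-vertex-coloured edge, namely $p_j=\frac{1}{2}\sum_{s,s'\in\mathcal{S}}(s(1)s'(1))^{j}\,\#\homs{(K_2,(s,s'))}{\Omega}$. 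Since a product of homomorphism counts is the homomorphism count from the disjoint union, it follows that $\mathsf{UnColHolant}(\Omega,k)=\sum_{(M,\nu_M)}\lambda_{M,\nu_M}\,\#\homs{(M,\nu_M)}{\Omega}$, the sum ranging over $\mathcal{S}$-vertex-coloured matchings. Comparing this with the expansion of Lemma~\ref{lem:uncoloured_hombasis_general} and using the linear independence of the functions $\Omega\mapsto\#\homs{(H,\nu)}{\Omega}$ over $\mathcal{S}$-vertex-coloured graphs (standard; cf.\ \cite{Lovasz12} and the argument behind Proposition~\ref{prop:distinct}), one obtains $\zeta_{\mathcal{S},k}(H,\nu)=0$ whenever $H$ is not a matching, so every graph in $\mathcal{C}$ is acyclic.

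For the hardness direction I would begin with an observation that already settles type $\mathbb{T}[\infty]$. For a simple, loopless graph $H$ and the choice $k=|E(H)|$, every $F\in\mathcal{G}_k(\mathcal{S})$ admitting some $\rho$ with $(F,\nu_F)/\rho\cong(H,\nu)$ has $|E(F)|=|E(H)|$, hence its quotient map is edge-bijective, hence $F\cong\fracture{H}{\vec\sigma}$ for a fracture $\vec\sigma$ of $H$; conversely every fracture of $H$ yields an admissible $F$ (as $\fracture{H}{\vec\sigma}$ is again simple and loopless when $H$ is). The defining sum of $\zeta_{\mathcal{S},|E(H)|}(H,\nu)$ therefore factorises over the vertices of $H$ and equals $\frac{1}{\#\mathsf{Aut}(H,\nu)}\prod_{u\in V(H)}\chi(d_H(u),\nu(u))$. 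If there are $s\in\mathcal{S}$ and $d\ge 3$ with $\chi(d,s)\neq 0$, taking $H$ to be a $d$-regular expander coloured entirely by $s$ gives $\zeta_{\mathcal{S},|E(H)|}(H,\nu)=\chi(d,s)^{|V(H)|}/\#\mathsf{Aut}(H)\neq 0$; since $d$-regular expanders have treewidth linear in their size, $\mathcal{C}$ contains graphs of unbounded treewidth.

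The remaining, and genuinely hard, case is type $\mathbb{T}[\omega]$, where $\chi(d,s)=0$ for all $s$ and all $d\ge 3$ but $\chi(2,s)\neq 0$ for some $s$. Here the factorised formula above vanishes on every pattern containing a vertex of degree $\ge 3$, so the choice $k=|E(H)|$ only certifies disjoint unions of paths and cycles (treewidth $\le 2$), exactly mirroring the edge-coloured dichotomy. To obtain patterns of unbounded treewidth one is forced to take $k>|E(H)|$, i.e.\ to use "thickened" de-quotient graphs $F$ in which some edges of $H$ have several lifts, and for such $k$ the coefficient $\zeta_{\mathcal{S},k}(H,\nu)$ no longer factorises over $V(H)$: the M\"obius-weighted sum over the many admissible $F$ must be analysed directly. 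This is precisely the content of Section~\ref{sec:analysis_of_zeta}, which constructs suitable high-treewidth patterns and shows that the cancellations among the de-quotient graphs do not kill the surviving high-treewidth term, working up from illustrative special cases to the general statement by iterative refinement. I expect this last step — exhibiting a provably non-vanishing high-treewidth coefficient in the absence of any product formula — to be the main obstacle and the technical core of the argument.
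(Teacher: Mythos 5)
Your treatment of the $\mathbb{T}[\mathsf{Lin}]$ direction is correct and genuinely different from the paper's: rather than deriving the general closed form for $\zeta_{\mathcal{S},k}$ (Theorem~\ref{thm:general_coefficient}) and killing every pattern with a vertex of degree at least $2$ via Lemma~\ref{lem:linear_type_sigs_equivalence}, you use multiplicativity of the signatures to write the holant as the elementary symmetric polynomial $e_k$ in edge weights, expand it via Newton's identities into homomorphism counts from $\mathcal{S}$-coloured matchings, and conclude by the standard linear independence of homomorphism counts that $\zeta_{\mathcal{S},k}(H,\nu)=0$ unless $H$ is a matching. This is sound and arguably more elementary; the only care needed is that the $\Omega$-dependent prefactor $\prod_i s_i(0)^{n_i}$ must be divided out before comparing expansions, and that the blanket claim ``replacing $s$ by $s/s(0)$ changes neither $\mathcal{C}$'' glosses over the possibility that distinct signatures merge under normalisation --- easiest is to keep the original colour set and work with the ratios $s(d)/s(0)$ throughout. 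Your $\mathbb{T}[\infty]$ argument (coefficient at $k=|E(H)|$ factorises as $\#\mathsf{Aut}(H,\nu)^{-1}\prod_v \chi(d_H(v),\nu(v))$, then take $d$-regular expanders) is also correct; the factorisation is exactly the paper's Theorem~\ref{thm:k_edges_multiple_signatures}, and your sketch of it is the right idea, though the automorphism and fibre bookkeeping you wave at is a substantial piece of work you are assuming.

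The genuine gap is the type-$\mathbb{T}[\omega]$ case, which you explicitly leave open and which is precisely the technical core of this lemma: you correctly diagnose that one must take $k>|E(H)|$, that the coefficient then no longer factorises over vertices, and that cancellation among the de-quotient graphs must be controlled, but you provide no candidate pattern, no choice of $k$, and no non-cancellation mechanism. The paper resolves this by first proving the general formula for $\zeta_{\mathcal{S},k}(H,\nu)$ as a sum over assignments $\lambda:E(H)\to\mathcal{P}$ of integer partitions with $\sum_e|\lambda(e)|=k$ (Theorem~\ref{thm:general_coefficient}), and then evaluating it at $H=K_{d+1}$ with $k=d^2-1$ (Theorem~\ref{Thm:Conj_2_false}): a recursion for the generalised fingerprint $\chi(\lambda,s)$ (Lemma~\ref{Lem:chi_recursion}) yields, for type $\mathbb{T}[\omega]$, that $\chi(\lambda,s)=0$ whenever $|\lambda|<2\,\mathsf{len}(\lambda)-2$ and that $\chi(\lambda,s)$ is a \emph{strictly positive} multiple of $\chi(2,s)^{\mathsf{len}(\lambda)-1}$ when $|\lambda|=2\,\mathsf{len}(\lambda)-2$ (Proposition~\ref{Prop:chi_vanishing_property}); an averaging argument over the $d+1$ vertices then forces every surviving $\lambda$ to assign singleton partitions to edges with vertex degrees exactly $2d-2$, so all surviving terms carry the same sign and a common factor $\chi(2,s)^{(d+1)(d-1)}$ and cannot cancel, while a $(d-2)$-regular subgraph of $K_{d+1}$ witnesses that at least one such $\lambda$ exists. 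Without this (or an equivalent explicit non-vanishing argument for some family of unbounded treewidth), your proposal does not establish the ``otherwise $\mathcal{C}$ has unbounded treewidth'' claim when $\mathcal{S}$ is of type $\mathbb{T}[\omega]$, so the lemma is not proved.
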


Note that Theorem~\ref{thm:main_uncol_restate} almost immediately follows from the combination of Lemma~\ref{lem:monotonicity_S-coloured} and Lemma~\ref{lem:main_result_on_zeta}. The only missing part is the ETH-based lower bound for signatures of type $\mathbb{T}[\infty]$. Fortunately, for this case we can easily reduce from the coloured version:

\begin{lemma}\label{lem:col_to_uncol}
    Let $\mathcal{S}$ be a finite set of signatures. We have 
    \[ \holantprob(\mathcal{S}) \fptlinred \text{\sc{p-UnColHolant}}(\mathcal{S})\,.\]
\end{lemma}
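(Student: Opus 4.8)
The plan is to use the standard inclusion--exclusion argument that extracts the edge-colourful sum from uncoloured holant evaluations, applied here to signature grids. Given an instance of $\holantprob(\mathcal{S})$ consisting of a positive integer $k$ and a $k$-edge-coloured signature grid $\Omega=(G,\xi,\{s_v\}_{v\in V(G)})$, the goal is to compute $\holant(\Omega)=\sum_{A:|A|=k,\,\xi(A)=[k]}\prod_{v}s_v(|A\cap E(v)|)$ using an oracle for $\text{\sc{p-UnColHolant}}(\mathcal{S})$. First I would, for each subset $I\subseteq[k]$, form the signature grid $\Omega_I=(G_I,\{s_v\}_{v\in V(G_I)})$ where $G_I$ is the subgraph of $G$ obtained by deleting all edges whose $\xi$-colour lies outside $I$ (and, if one wants, deleting resulting isolated vertices --- but note $s_v(0)\neq 0$ so keeping them only multiplies the value by a constant $\prod_i s_i(0)^{n_i}$ that is computable in linear time and can be divided out). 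Each $\Omega_I$ is a valid (uncoloured) signature grid over $\mathcal{S}$, has size $\mathcal{O}(|\Omega|)$, and can be built in linear time.

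The key identity is that, by inclusion--exclusion over which colour classes are ``forbidden'',
\[
\holant(\Omega)=\sum_{I\subseteq[k]}(-1)^{k-|I|}\cdot \mathsf{UnColHolant}(\Omega_I,k)\,,
\]
since a $k$-edge set $A$ with $|A|=k$ that lives in $G_I$ and uses exactly the colours in some set $J\subseteq I$ is counted with total sign $\sum_{J\subseteq I\subseteq[k]}(-1)^{k-|I|}$, which is $1$ if $J=[k]$ (i.e.\ $A$ is colourful) and $0$ otherwise; here one uses crucially that $|A|=k$ forces $A$ to hit at least $|J|$ colours and exactly $k$ colours iff it is colourful. I would spell this out by expanding $\mathsf{UnColHolant}(\Omega_I,k)=\sum_{A\subseteq E(G_I),|A|=k}\prod_v s_v(|A\cap E(v)|)$, swapping the order of summation, and observing that for a fixed $A$ with colour-set $J=\xi(A)$ the coefficient is $\sum_{I:\,J\subseteq I\subseteq[k]}(-1)^{k-|I|}=[J=[k]]$ by the binomial theorem.

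This yields a linear FPT Turing-reduction: the algorithm makes $2^k$ oracle calls to $\text{\sc{p-UnColHolant}}(\mathcal{S})$, each on an instance with parameter exactly $k$ (so $\kappa'\le k = f(\kappa)$) and of linear size, performs $\mathcal{O}(|\Omega|)$ preprocessing plus $\mathcal{O}(2^k)$ arithmetic to combine, and the number of oracle queries $2^k$ is bounded by a function of $k$ --- exactly matching the definition of $\fptlinred$. The only mild subtlety, and the one place to be careful, is handling the isolated vertices created by edge deletion: since each $s_v$ satisfies $s_v(0)\neq 0$, either keep all vertices (so all $\Omega_I$ share the same vertex set and the $\prod s_i(0)^{n_i}$ factors are identical and cancel in the alternating sum after pulling them out), or delete isolated vertices and re-multiply by the appropriate constant power of the $s_i(0)$'s, computable in linear time. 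With that bookkeeping in place the identity and the reduction go through, completing the proof.
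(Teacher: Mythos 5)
Your proposal is correct and follows essentially the same route as the paper: delete the edges outside each colour-set $I\subseteq[k]$, call the uncoloured oracle on each of the $2^k$ resulting grids, and recover $\holant(\Omega)$ via the alternating sum $\sum_{I\subseteq[k]}(-1)^{k-|I|}\mathsf{UnColHolant}(\Omega_I,k)$ (the paper phrases the coefficient computation as M\"obius inversion over the subset lattice, which is the same argument as your binomial-theorem calculation). One small remark: your bookkeeping about the $\prod_i s_i(0)^{n_i}$ factors is unnecessary --- if you simply keep all vertices (as the paper does), the vertex products in the coloured and uncoloured holants coincide term by term, so no rescaling or cancellation is needed.
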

\begin{proof}
Let $\Omega=(G,\xi,\{s_v\}_{v\in V(G)})$ be a $k$-edge-coloured signature grid, and let $\tilde{\Omega}=(G,\{s_v\}_{v\in V(G)})$ be the underlying uncoloured signature grid.
For each $S\subseteq [k]$, define
\begin{align*}
f(S) := \sum_{\substack{A \in \binom{E(G)}{k}\\ \xi(A) = S}}~ \prod_{v\in V(G)} s_v(|A \cap E(v)|)  ~~~~\text{and}~~~~  g(S) := \sum_{\substack{A \in \binom{E(G)}{k}\\ \xi(A) \subseteq S}}~ \prod_{v\in V(G)} s_v(|A \cap E(v)|) \,. 
\end{align*} 
Moreover, we write $\tilde{\Omega}[S]$ for the signature grid obtained from $\tilde{\Omega}$ by deleting all edges $e$ with $\xi(e)\notin S$.
We next make the following three observations: 
\begin{enumerate}
    \item[(a)] For each $S\subseteq [k]$ we have $g(S)=\mathsf{UnColHolant}(\tilde{\Omega}[S],k)$.
    \item[(b)] $f([k]) = \holant(\Omega)$.
    \item[(c)] For each $S\subseteq [k]$ we have $g(S)=\sum_{T\subseteq S}f(T)$.
\end{enumerate}
By M\"obius inversion over the subset lattice, also called the weighted inclusion-exclusion principle, Observation (c) implies that $f([k]) = \sum_{T\subseteq [k]} (-1)^{k-|T|} g(T)$ (see e.g.\ \cite[Example 3.8.3]{Stanley11}). In combination with Observations (a) and (b), this yields
\[ \holant(\Omega) = \sum_{T\subseteq [k]} (-1)^{k-|T|} \cdot \mathsf{UnColHolant}(\tilde{\Omega}[T],k) \,.\]
Therefore, $\holant(\Omega)$ can be computed using $2^k$ calls to our oracle for $\textsc{p-UnColHolant}(\mathcal{S})$. Moreover, each oracle call is of size at most $|\Omega|$, and does not increase the parameter ($k$). This concludes the proof. 
\end{proof}

\begin{proof}[Proof of Theorem~\ref{thm:main_uncol_restate}]
Let $\mathcal{C}$ be the class of all graphs $H$ for which there is a positive integer $k$ and a colouring $\nu:V(H)\to \mathcal{S}$ such that $\zeta_{\mathcal{S},k}(H,\nu)\neq 0$. 
\begin{itemize}
    \item[(I)] If $\mathcal{S}$ of of type $\mathbb{T}[\mathsf{lin}]$, then each graph in $\mathcal{C}$ is acyclic by Lemma~\ref{lem:main_result_on_zeta}. Thus $\textsc{p-UnColHolant}(\mathcal{S})$ can be solved in FPT-near-linear time by Lemma~\ref{lem:monotonicity_S-coloured}.
    \item[(II)] Otherwise $\mathcal{C}$ has unbounded treewidth by Lemma~\ref{lem:main_result_on_zeta}, and thus $\textsc{p-UnColHolant}(\mathcal{S})$ is $\#\W[1]$-hard by Lemma~\ref{lem:monotonicity_S-coloured}. If, additionally, $\mathcal{S}$ is of type $\mathbb{T}[\infty]$, then, by Theorem~\ref{main_thm}, $\holantprob(\mathcal{S})$ cannot be solved in time $f(k)\cdot |\Omega|^{o(k/\log k)}$ for any function $f$, unless ETH fails. The same must then be true for $\textsc{p-UnColHolant}(\mathcal{S})$ since $\holantprob(\mathcal{S}) \fptlinred \text{\sc{p-UnColHolant}}(\mathcal{S})$ by Lemma~\ref{lem:col_to_uncol}.
\end{itemize}
\end{proof}

\subsection{Analysis of $\zeta_{\mathcal{S},k}$ and the Proof of Lemma~\ref{lem:main_result_on_zeta}}\label{sec:analysis_of_zeta}

For reasons of accessibility, we analyze the coefficients $\zeta_{\mathcal{S}, k}(H, \nu_H)$ in multiple fashions. First, we assume that the vertex-colouring of $H$ is given by $\nu_H : V(H) \to \{s\}$, that is, all vertices of $H$ have been assigned the same signature $s$. Under this assumption, we first study the case where $|E(H)| = k$ and then lift the results to the case where $|E(H)| \leq k$. In each case, we show how to adapt the results above to the setting with multiple signatures. 

When we deal with graphs coloured by a single signature, we can simplify notation based on the following observation. 

\begin{remark}\label{rem:simplified_single_signature}
Let $(H, \nu_H) \in \mathcal{G}(\mathcal{S})$, where $\mathcal{S} = \{s\}$, for some signature $s$, such that $s(0) = 1$. We assume that $|E(H)| \leq k$, for some $k > 0$. Let $\mathcal{G}_k$ be the set of all uncoloured simple graphs with no isolated vertices and precisely $k$ edges and set
\[\zeta(H, k):= \sum_{\substack{F\in \mathcal{G}_k}} \frac{\prod_{v\in V(F)}{s(d_F(v))}}{\#\mathsf{Aut}(F)} \cdot \sum_{\substack{\rho \in \mathsf{Part}(F)\\F/\rho \cong H}}\mu(\bot,\rho)  \,.\]
Then, it can be readily verified that $\zeta_{\mathcal{S}, k}(H, \nu_H) = \zeta(H, k)$, which allows us to assume that, in the setting of $\mathcal{S}$ containing only one signature, all graphs we consider are uncoloured and thus simplify the notation.
\end{remark}

\begin{theorem} \label{Thm:special_case}
Let $(H, \nu_H)$ be a vertex-coloured graph without isolated vertices and precisely $k$ edges. Its vertex-colouring is given by $\nu_H : V(H) \to \mathcal{S}$, where $\mathcal{S} = \{s\}$, for some signature $s$ such that $s(0) = 1$. Then we have
\begin{equation}
    \zeta_{\mathcal{S}, k}(H,\nu_H)=\frac{1}{\#\mathsf{Aut}(H, \nu_H)}\prod_{v \in V(H)} \chi(d_H(v), s)
\end{equation}
\end{theorem}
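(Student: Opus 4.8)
The plan is to start from the definition
\[
\zeta_{\mathcal{S},k}(H,\nu_H)= \sum_{(F,\nu_F)\in\mathcal{G}_k(\mathcal{S})} \frac{\prod_{v\in V(F)}{s_v(d_F(v))}/{s_v(0)}}{\#\mathsf{Aut}(F,\nu_F)} \cdot \sum_{\substack{\rho \in \mathsf{colPart}(F,\nu_F)\\(F,\nu_F)/\rho \cong (H,\nu_H)}}\mu(\bot,\rho),
\]
and, using Remark~\ref{rem:simplified_single_signature}, drop the colourings entirely: since $\mathcal{S}=\{s\}$ with $s(0)=1$, this equals $\zeta(H,k)=\sum_{F\in\mathcal{G}_k}\frac{\prod_{v\in V(F)}s(d_F(v))}{\#\mathsf{Aut}(F)}\sum_{\rho:\,F/\rho\cong H}\mu(\bot,\rho)$. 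The key idea is to re-index the double sum: instead of summing over pairs $(F,\rho)$ with $F/\rho\cong H$, sum over the single object $H$ together with a ``de-quotienting'' datum on $H$. Concretely, a graph $F$ with precisely $k$ edges together with a partition $\rho$ of $V(F)$ such that $F/\rho\cong H$ corresponds (up to the ambiguity measured by automorphisms) to a \emph{fracture} $\vec{\rho}\in\mathcal{F}(H)$: the preimages of a vertex $v\in V(H)$ under the quotient map form the blocks $v^B$, and the way the $k$ edges of $F$ distribute among the split copies of $v$ is exactly a partition of $E_H(v)$. This is precisely the fractured-graph correspondence of Definition~\ref{def:fract_graph}, and under it $F\cong \fracture{H}{\vec{\rho}}$, $d_F(v^B)=|B|$, and $|E(F)|=|E(H)|=k$ automatically (so the constraint $F\in\mathcal{G}_k$ is free).

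With this re-indexing the inner Möbius sum becomes a sum over fractures, and a careful bookkeeping of the automorphism factors is needed: the orbit–stabiliser argument must relate $\#\mathsf{Aut}(F)$ (equivalently $\#\mathsf{Aut}(\fracture{H}{\vec\rho})$ as an uncoloured graph) and $\#\mathsf{Aut}(H,\nu_H)$ to the number of partitions $\rho$ on a fixed $F$ with $F/\rho\cong H$. The clean way to do this is to count triples consistently: pairs $(F,\rho)$ up to isomorphism with weight $1/\#\mathsf{Aut}(F)$ biject, after summing the Möbius values, with fractures $\vec\rho\in\mathcal F(H)$ carrying weight $1/\#\mathsf{Aut}(H,\nu_H)$, because the fractured graph $\fracture{H}{\vec\rho}$ has no non-trivial \emph{$H$-coloured} automorphisms (the fact already exploited in the proof of Lemma~\ref{lem:holant_star_to_cpembs}, citing~\cite[Lemma 4.1]{PeyerimhoffRSSVW23}), so the $\mathsf{Aut}(H,\nu_H)$-action on fractures together with the quotient construction reproduces each isomorphism class of $(F,\rho)$ with the correct multiplicity. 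After this identification we obtain
\[
\zeta(H,k)=\frac{1}{\#\mathsf{Aut}(H,\nu_H)}\sum_{\vec\rho\in\mathcal F(H)}\ \Big(\prod_{v\in V(H)}\prod_{B\in\vec\rho(v)} s(|B|)\Big)\cdot \vec\mu(\bot,\vec\rho),
\]
where $\vec\mu(\bot,\vec\rho)=\prod_{v}\mu(\bot_{E_H(v)},\vec\rho(v))=\prod_v\prod_{B\in\vec\rho(v)}\mu(\ldots)$ factorises pointwise, exactly as in the M\"obius-function-of-fractures setup of Section~\ref{sec:col_graphs_fractures}. (Here I use $s(0)=1$ to suppress the $s(0)$ denominators appearing in the original $\mathsf{coeff}$ formula.)

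It then remains to factorise the inner sum over $\mathcal F(H)=\prod_{v}\mathsf{Part}(E_H(v))$ as a product over vertices, and over blocks within each vertex: since both the weight and $\vec\mu$ are products over vertices and blocks, distributivity gives
\[
\sum_{\vec\rho\in\mathcal F(H)}\prod_{v}\prod_{B\in\vec\rho(v)}\mu(\dots)\,s(|B|)=\prod_{v\in V(H)}\ \sum_{\rho_v\in\mathsf{Part}(E_H(v))}\ \prod_{B\in\rho_v}\mu(\bot,\rho_v)\,s(|B|)=\prod_{v\in V(H)}\chi(d_H(v),s),
\]
where the last equality is exactly the definition of the fingerprint $\chi(d,s)=\sum_{\sigma}\mu(\bot_{[d]},\sigma)\prod_{B\in\sigma}\tfrac{s(|B|)}{s(0)}$ specialised to $s(0)=1$ and $d=d_H(v)$ (using Lemma~\ref{lem:just_distributivity} / the explicit Möbius formula of Definition~\ref{def:mobius} to identify $\mu(\bot,\sigma)=\prod_B(-1)^{|B|-1}(|B|-1)!$). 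Combining with the $1/\#\mathsf{Aut}(H,\nu_H)$ prefactor yields the claimed formula. The main obstacle I anticipate is the middle step: making the re-indexing over $(F,\rho)$ versus fractures of $H$ completely rigorous, i.e., verifying that the automorphism weights match up and that each fracture contributes exactly once with the right Möbius coefficient — this is where one must be careful to distinguish automorphisms of $F$ as an abstract graph from automorphisms of $H$, and to invoke the rigidity of $\fracture{H}{\vec\rho}$ as an $H$-coloured graph. The two factorisation steps are routine distributivity (essentially Lemmas~\ref{lem:just_distributivity} and~\ref{lem:coeffs_done} re-used), and the reduction to the uncoloured single-signature setting is Remark~\ref{rem:simplified_single_signature}.
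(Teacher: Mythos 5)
Your overall route is the same as the paper's: re-index the $(F,\rho)$ double sum by fractures $\vec\rho\in\mathcal{F}(H)$ with a global prefactor $1/\#\mathsf{Aut}(H,\nu_H)$, and then factor over vertices by distributivity. The re-indexing step you flag as the main obstacle is indeed the crux; the paper makes it rigorous via surjective edge-preserving homomorphisms $\phi:F\to H$, with a free $\mathsf{Aut}(H)$-action whose orbits are the fibers of $\phi\mapsto R(\phi)$ and a free $\mathsf{Aut}(F)$-action whose orbits are the fibers of $(F,\phi)\mapsto S(F,\phi)\in\mathcal{F}(H)$; your appeal to the rigidity of $\fracture{H}{\vec\rho}$ as an $H$-coloured graph is exactly the ingredient behind the second freeness, so your sketch of the bookkeeping is consistent with the paper, though it would still need to be written out.

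There is, however, a concrete error in the M\"obius bookkeeping, and as written your chain of equalities does not prove the stated identity. The M\"obius value carried by a pair $(F,\rho)$ is $\mu(\bot_{V(F)},\rho)$, and under the fracture correspondence the block of $\rho$ over a vertex $v\in V(H)$ has $|\vec\rho(v)|$ elements (one per block of the fracture at $v$); hence the weight attached to $\vec\rho$ is
\[
\mu(\bot_{V(F)},\rho)\;=\;\prod_{v\in V(H)}(-1)^{|\vec\rho(v)|-1}\,\bigl(|\vec\rho(v)|-1\bigr)!\,,
\]
a function of the \emph{number} of fracture blocks at each vertex. You instead insert the M\"obius function of fractures from Section~\ref{sec:col_graphs_fractures}, $\vec\mu(\vec\bot,\vec\rho)=\prod_{v}\prod_{B\in\vec\rho(v)}(-1)^{|B|-1}(|B|-1)!$, which depends on the block \emph{sizes}; the two differ already for a degree-$2$ vertex with $\vec\rho(v)=\bot$, where the correct factor is $-1$ and yours is $+1$. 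You then compensate by mis-quoting the fingerprint: $\chi(d,s)$ is defined with coefficients $(-1)^{|\sigma|-1}(|\sigma|-1)!$, i.e.\ $\mu(\sigma,\top_{[d]})$, not $\mu(\bot_{[d]},\sigma)$; for instance $\chi(3,s)=2s(1)^3-3s(1)s(2)+s(3)$, whereas your expression would give $s(1)^3-3s(1)s(2)+2s(3)$. So either your intermediate formula over fractures is false (with the paper's $\chi$ as target), or your final identification uses a wrong definition of $\chi$. The fix is local: with the correct weight, the per-vertex factor is $(-1)^{|\rho_v|-1}(|\rho_v|-1)!\prod_{B\in\rho_v}s(|B|)$, and the distributivity step then produces exactly the paper's $\chi(d_H(v),s)$, after which the argument matches the paper's proof.
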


By \Cref{rem:simplified_single_signature}, it suffices to analyze $\zeta(H, k)$ and show
\[\zeta(H, k) = \frac1{\#\mathsf{Aut}(H)}\prod_{v \in V(H)}\chi(d_H(v), s).\]

We introduce one auxiliary notion which is used in the proof:
\begin{definition}
Given graphs $F,H$, a homomorphism $\phi : F \to H$ is called \emph{edge preserving} if it induces a bijection $E(F) \to E(H)$. 
\end{definition}

\begin{proof}[Proof of Theorem~\ref{Thm:special_case}]
Fix graphs $F,H \in \mathcal{G}_k$, and recall that neither $F$ nor $H$ have isolated vertices by definition of $\mathcal{G}_k$. Then any homomorphism $\phi : F \to H$, induces a partition 
$$R(\phi) = \{\phi^{-1}(v) : v \in V(H)\} \in \mathsf{Part}(F)$$
on the set of vertices of $F$. Consider the map
\begin{align} \label{eqn:mor_to_part}
    M_{F,H}=\{\phi: F \to H : \phi \text{ surjective, edge preserving}\} \xrightarrow{R} \{\rho \in \mathsf{Part}(F) : F/\rho \cong H \}\,.
\end{align}
First we note that the map is well-defined: indeed the image of the graph homomorphism $\phi$ is isomorphic to $F/R(\phi)$, since it's obtained by identifying vertices with the same image under $\phi$. On the other hand, $\phi$ being surjective implies that its image is all of $H$, so indeed we have $F/R(\phi) \cong H$.

The automorphism group $\mathsf{Aut}(H)$ acts freely on the left-hand side $M_{F,H}$ of \eqref{eqn:mor_to_part}. Indeed, if an automorphism $\psi: H \to H$ satisfies $\psi \circ \phi = \phi$, then it must be the identity on the vertices in the image $\phi(V(F))$. Since $\phi$ is surjective, we must have $\psi=\mathrm{id}_F$.

Given a map $f:X\to Y$, its \emph{fibers} are the sets in $f^{-1}(y):=\{x\in X\mid f(x)=y\}$ for an element $y\in Y$, which is a partition of $X$.
We claim that the map $R$ is surjective, and the orbits of the $\mathsf{Aut}(H)$-action on $M_{F,H}$ are precisely the fibers of $R$. To see the surjectivity, just note that for $\rho$ with $F/\rho \cong H$ we clearly get a map $\phi : F \to F/\rho \cong H$ which is contained in $M_{F,H}$ and satisfies $R(\phi)=\rho$.

Assume now that we have $\phi_1, \phi_2 \in M_{F,H}$ with $R(\phi_1) = R(\phi_2)$. Then we claim that there is a unique $\psi \in \mathsf{Aut}(H)$ with $\phi_2 = \psi \circ \phi_1$. Given $v \in V(H)$, let $w \in V(F)$ be a vertex with $\phi_1(w)=v$. Then we want to define $\psi(v)=\phi_2(w)$. This gives a well-defined map of sets $V(H) \to V(H)$ because $R(\phi_1)=R(\phi_2)$, and so the preimage $\phi_1^{-1}(v)$ is one block of the set of preimages of the map $\phi_2$. We claim that the map $\psi: V(H) \to V(H)$ of sets gives rise to a graph homomorphism $\psi: H \to H$. Indeed, if two vertices $v_1, v_2$ are connected in $H$, then there must be two preimages $w_1, w_2 \in V(F)$ under $\phi_1$ which are connected as well (since $\phi$ is edge-preserving). But then $\psi(v_1) = \phi_2(w_1)$ and $\psi(v_2)=\phi_2(w_2)$ are also connected by an edge in $H$ since $\phi_2$ is a graph homomorphism.

To summarize: we have shown that the map $R$ in \eqref{eqn:mor_to_part} is surjective, with fibers given by the free orbits of the group $\mathsf{Aut}(H)$.
What this intermediate result allows us to do, is to rewrite $\zeta(H, k)$ as
\begin{equation} \label{eqn:zeta_intermediate_form1}
\zeta(H,k) = \#\mathsf{Aut}(H)^{-1} \cdot \sum_{F \in \mathcal{G}_k} \left(\#\mathsf{Aut}(F)^{-1}\cdot  \prod_{u\in V(F)}s(d_F(u))\right)\cdot \sum_{\phi \in M_{F,H}}\left( (-1)^{|V(F)|-|V(H)|}\cdot \prod_{B \in R(\phi)} (|B|-1)!\right) \,.
\end{equation}

To simplify this expression further, we would like to find a combinatorially easier way to enumerate the surjective, edge-preserving homomorphisms $\phi: F \to H$ with $F$ having no isolated vertices. The crucial idea is to obtain them from \emph{fractures} of the graph $H$ (see \Cref{def:fracture})). 

For the subsequent analysis, it will be very convenient to consider \emph{half-edges} or \emph{flags} of a vertex in a graph.
Fixing the graph $H$ and a vertex $v \in V(H)$, for each edge incident to $v$, we consider the corresponding half-edge for which we only specify the endpoint $v$; formally, we can associate with each $e$ incident to $v$ a half-edge $e_v$. Then we denote by $\he(H,v)$ the set of all half-edges at $v$, and we can view a fracture $\vec{\rho}$ equivalently as a tuple containing a partition of $\he(H, v)$ for each vertex of $H$. Then, recalling that $\mathcal{F}(H)$ is the set of fractures of $H$ (see \Cref{def:fracture}), we have $\mathcal{F}(H)=\prod_{v \in V(H)} \mathsf{Part}(\he(H, v))$.

We claim that there is a well-defined map
\begin{equation} \label{eqn:second_map}
S : \mathcal{M} = \dot\bigcup_{F \in \mathcal{G}_k} M_{F,H} \to \mathcal{F}(H)\,.
\end{equation}
Given a pair $(F, \phi: F \to H)$ in the domain of \eqref{eqn:second_map}, note that $\phi$ induces a bijection $\phi^\he : \he(F) \to \he(H)$ from the set of (all) half-edges of $F$ to the set of (all) half-edges of $H$. This follows again since $\phi$ is assumed to be edge-preserving. Then given 
$v \in V(H)$ and $h_1, h_2 \in \he(H,v)$, they are in the same block $A$ inside $S(F, \phi)_v$ if and only if $(\phi^\he)^{-1}(h_1)$ and $(\phi^\he)^{-1}(h_2)$ are adjacent to the same vertex in $F$.

\emph{Note:} To avoid later confusion, we emphasize that in the disjoint union on the left-hand side of \eqref{eqn:second_map} we fix, once and for all, a \emph{representative} $F$ from each isomorphism class inside $\mathcal{G}_k$.

Next we analyze the properties of the map $S$. The first observation is that the map $S$ is surjective. Given a fracture $\vec\rho = (\rho_v)_{v \in V(H)} \in \mathcal{F}(H)$, consider the fractured graph $\fracture{H}{\vec\rho}$ according to \Cref{def:fract_graph}.
$\fracture{H}{\vec\rho}$ admits a natural surjective, edge-preserving homomorphism $\phi: \fracture{H}{\vec\rho} \to H$, sending the vertex $v^B \in \vec\rho(v) \subseteq V(\fracture{H}{\vec\rho})$ to $v \in V(H)$. 
Let $F \in \mathcal{G}_k$ be the chosen representative of the isomorphism class of $\fracture{H}{\vec\rho}$ and let $\eta : F \xrightarrow{\sim} \fracture{H}{\vec\rho}$ be such an isomorphism. Then the pair $(F, \phi \circ \eta:  F \to H)$ is an element of the left-hand side of \eqref{eqn:second_map} and by the construction of the fractured graph it maps to $\vec\rho$ under $S$.

What we note in this last step is that the choice of identification $\eta$ above is only unique up to pre-composing $\eta$ with an automorphism $\widetilde \eta \in \mathsf{Aut}(F)$. In fact, the group $\mathsf{Aut}(F)$ acts naturally on $M_{F,H} \subseteq \mathcal{M}$ (by pre-composition) and clearly the map $S$ is invariant under this action (two half-edges being adjacent to the same vertex of $F$ can equivalently be checked after applying an automorphism of $F$). 
Moreover, we claim, again, that the automorphisms of $F$ act freely on $M_{F,H}$. Indeed, any automorphism $\widetilde \eta$ of $F$ fixing a map $\phi: F \to H \in M_{F,H}$ must act as the identity on the set of half-edges of $F$ (since they map bijectively to the half-edges of $H$ under $\phi^\he$). But by the assumption that $F$ has no isolated vertices, any vertex is adjacent to at least one half-edge, and so $\widetilde \eta$ also has to fix all vertices of $F$ (and thus must be equal to the identity $\widetilde \eta = \mathrm{id}_F$).

The final property of $S$ that we want to show is that the fibers of $S$ are precisely the orbits under these automorphism groups.
To state this more rigorously: let $(F, \phi: F \to H)$ and $(F', \phi': F' \to H)$ be two elements of $\mathcal{M}$. We claim
\begin{equation}  \label{eqn:S_equivalence_claim}
    S(F, \phi: F \to H) = S(F', \phi': F' \to H) \iff F = F' \text{ and there exists $\widetilde \eta \in \mathsf{Aut}(F)$ with }\phi' = \phi \circ \widetilde \eta\,.
\end{equation}
To see this, assume that $S(F, \phi: F \to H) = \vec\rho$. Then we claim that we have a natural factorization of $\phi$ as
\[
\phi = (F \xrightarrow{\widetilde \phi} \fracture{H}{\vec\rho} \to H)
\]
and $\widetilde \phi$ is an isomorphism. Indeed, given a vertex $w \in V(F)$ with $\phi(w)=v \in V(H)$ we have by definition that $A=\he(F,w)$ is one of the blocks of the partition $\vec\rho(v)$. On the other hand, we constructed the vertices of $\fracture{H}{\vec\rho}$ to exactly be those blocks, and then $\widetilde \phi$ is just defined by $\widetilde \phi(w)=A$. It's a straightforward check that $\widetilde \phi$ is a graph homomorphism and in fact an isomorphism (again using that $\phi$ is edge-preserving).

Now we can finish the proof of the equivalence \eqref{eqn:S_equivalence_claim}: if $S(F, \phi: F \to H) = \xi = S(F', \phi': F' \to H)$, then we obtain isomorphisms $\widetilde \phi, \widetilde \phi'$ fitting into the diagram
\[
\begin{tikzcd}
F \arrow[dr, "\widetilde \phi"]  \arrow[ddr, bend right, "\phi", swap] & & F' \arrow[ld, "\widetilde \phi'", swap] \arrow[ddl, bend left, "\phi'"]\\
& \fracture{H}{\vec\rho} \arrow[d] &\\
& H &
\end{tikzcd}
\]
Since $F,F'$ were chosen as unique representatives of the isomorphism classes $\mathcal{G}_k$, this proves that $F=F'$ since they are in fact isomorphic (via the composition $\widetilde \eta = (\widetilde \phi)^{-1} \circ \widetilde \phi'$). On the other hand, the diagram also proves that for $\widetilde \eta\in \mathsf{Aut}(F)$ we have $\phi' = \phi \circ \widetilde \eta$. This relies on the fact that we used the same natural map $H[\xi] \to H$ in both constructions.

To summarize again: the map $S : \mathcal{M} = \dot\bigcup_{F} M_{F,H} \to \mathcal{F}(H)$ is surjective, each of the $M_{F,H}$ admits a free action of $\mathsf{Aut}(F)$ and the orbits of these group actions are precisely the fibers of the map $S$.

To conclude the desired formula for $\zeta(H,k)$ it remains to note that all the terms in the formula \eqref{eqn:zeta_intermediate_form1}, which a priori depend on the full tuple $(F, \phi: F \to H) \in \mathcal{M}$, in fact only depend on its image $\vec\rho = S(F, \phi)$ in $\mathcal{F}(H)$. Indeed, firstly we have
\begin{align*}
\prod_{u \in V(F)} s(d_F(u)) = \prod_{v \in V(H), B \in \vec\rho(v)} s(|B|)
\end{align*}
since the degree $d_F(u)$ is precisely the cardinality of the set $B$ of half-edges adjacent to $u$.

Secondly, we have
\[
|V(F)| - |V(H)| =\left( \sum_{v \in V(H)} |\vec\rho(v)| \right) - \left(\sum_{v \in V(H)} 1 \right)  = \sum_{v \in V(H)} |\vec\rho(v)| -1\,,
\]
since the vertices of $F$ over $v \in V(H)$ are in bijection with the blocks of $\vec\rho(v)$. In particular, we have
\[
(-1)^{|V(F)| - |V(H)|} = \prod_{v \in V(H)} (-1)^{|\vec\rho(v)| -1}
\]
Finally, we have
\[
\prod_{B \in R(\phi)} (|B|-1)! = \prod_{v \in V(H)} (|\vec\rho(v)|-1)!\,,
\]
again for the same reason: the blocks $B$ correspond to preimages of vertices $v$ of $H$, which are described by $\vec\rho(v)$.

Putting this all together, we can rewrite \eqref{eqn:zeta_intermediate_form1} as
\begin{align} \label{eqn:zeta_penultimate_form}
\zeta(H,k) &= \#\mathsf{Aut}(H)^{-1} \cdot \sum_{(F,\phi) \in \mathcal{M}} \#\mathsf{Aut}(F)^{-1}\cdot \left(  \prod_{u\in V(F)}s(d_F(u)) \right) \cdot  (-1)^{|V(F)|-|V(H)|}\cdot \prod_{B \in R(\phi)} (|B|-1)!\\
&= \#\mathsf{Aut}(H)^{-1} \cdot \sum_{(F,\phi) \in \mathcal{M}} \#\mathsf{Aut}(F)^{-1}\cdot  \underbrace{\prod_{v\in V(H)}\left(\prod_{B \in \vec\rho(v)} s(|B|) \cdot (-1)^{|\vec\rho(v)| -1} \cdot (|\vec\rho(v)|-1)! \right)}_{=:\mathsf{Cont}(\vec\rho)}\,, \nonumber
\end{align}
where $\vec\rho=S(F,\phi)$ in the expression of $\mathsf{Cont}(\vec\rho)$. Now we simply rearrange the sum according to the possible values of $\vec\rho$ and find
\[
\zeta(H,k)= \#\mathsf{Aut}(H)^{-1} \cdot \sum_{\vec\rho \in \mathcal{F}(H)} \mathsf{Cont}(\vec\rho) \cdot  \underbrace{\sum_{(F,\phi) \in S^{-1}(\vec\rho)} \#\mathsf{Aut}(F)^{-1}}_{=1}\,,
\]
where we (finally) use that the fibers of $S$ are exactly free orbits under the groups $\mathsf{Aut}(F)$. 

Recall that the set of fractures can be written as $\mathcal{F}(H)=\prod_{v \in V(H)} \mathsf{Part}(\he(H, v)).$
Approaching the finish line of this rather involved calculation, we conclude
\begin{align}
\zeta(H,k) &= \#\mathsf{Aut}(H)^{-1} \cdot \sum_{\vec\rho\in \prod_{v \in V(H)} \mathsf{Part}(\he(H, v))} \ \ \ \prod_{v\in V(H)}\left(\prod_{B \in \vec\rho(v)} s(|B|) \cdot (-1)^{|\vec\rho(v)| -1} \cdot (|\vec\rho(v)|-1)! \right) \nonumber\\
&= \#\mathsf{Aut}(H)^{-1} \cdot \prod_{v\in V(H)} \sum_{\vec\rho(v) \in \mathsf{Part}(\he(H, v))} \left(\prod_{A \in \vec\rho(v)} s(|A|) \cdot (-1)^{|\vec\rho(v)| -1} \cdot (|\vec\rho(v)|-1)! \right) \label{eqn:distributivity_trick}\\
&= \#\mathsf{Aut}(H)^{-1} \cdot \prod_{v\in V(H)} \chi(\underbrace{|\he(H, v)|}_{=d_H(v)}, s)\,, \nonumber
\end{align}
where the second equality uses distributivity of the product over $v \in V(H)$
\end{proof}

\begin{theorem}\label{thm:k_edges_multiple_signatures}
Let $(H, \nu_H)$ be a vertex-coloured graph with no isolated vertices and precisely $k$ edges. Its vertex colouring is given by $\nu_H : V(G) \to \mathcal{S}$, where $\mathcal{S}$ is a finite set of signatures $s$ such that $s(0) = 1$. Then, we have
\[\zeta_{\mathcal{S}, k}(H, \nu_H) = \frac1{\#\mathsf{Aut}(H, \nu_H)}\prod_{v\in V(H)}\chi(d_H(v), \nu_H(v)).\]
\end{theorem}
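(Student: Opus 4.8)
The plan is to run the proof of Theorem~\ref{Thm:special_case} again, carrying the vertex-colouring along unchanged at every step; the combinatorics is identical and only the bookkeeping of the signature factors is new. Recall that by definition $\zeta_{\mathcal{S},k}(H,\nu_H)$ is a sum over representatives $(F,\nu_F)$ of the isomorphism classes in $\mathcal{G}_k(\mathcal{S})$ of
$\#\mathsf{Aut}(F,\nu_F)^{-1}\cdot\prod_{u\in V(F)}\nu_F(u)(d_F(u))$ times $\sum_{\rho}\mu(\bot,\rho)$, where $\rho$ ranges over colour-consistent partitions of $(F,\nu_F)$ with $(F,\nu_F)/\rho\cong(H,\nu_H)$. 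First I would rewrite the inner sum via the map $R$ sending a surjective, edge-preserving, \emph{colour-preserving} homomorphism $\phi:(F,\nu_F)\to(H,\nu_H)$ to the partition $R(\phi)=\{\phi^{-1}(v):v\in V(H)\}$. Exactly as in the single-signature case, $R(\phi)$ is colour-consistent, $R$ is surjective onto $\{\rho:(F,\nu_F)/\rho\cong(H,\nu_H)\}$, and its fibres are precisely the free orbits of the $\mathsf{Aut}(H,\nu_H)$-action by post-composition. This produces the analogue of \eqref{eqn:zeta_intermediate_form1}, expressing $\zeta_{\mathcal{S},k}(H,\nu_H)$ as $\#\mathsf{Aut}(H,\nu_H)^{-1}$ times a sum over $\mathcal{M}=\dot\bigcup_{(F,\nu_F)}M_{(F,\nu_F),(H,\nu_H)}$.

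Next I would redo the fracture analysis. The one observation to add is that a fracture $\vec\rho\in\mathcal{F}(H)$ of the \emph{underlying uncoloured graph} $H$ canonically colours the fractured graph $\fracture{H}{\vec\rho}$, namely $v^B$ receives colour $\nu_H(v)$, and with this colouring the natural surjective edge-preserving homomorphism $\fracture{H}{\vec\rho}\to H$ is colour-preserving. Conversely, for any $(F,\nu_F,\phi)\in\mathcal{M}$ the isomorphism $\widetilde\phi:F\xrightarrow{\sim}\fracture{H}{S(F,\phi)}$ from the uncoloured argument is automatically colour-preserving, since a colour-preserving surjection out of a graph without isolated vertices forces $\nu_F(w)=\nu_H(\phi(w))$ for every $w$. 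Hence the map $S:\mathcal{M}\to\mathcal{F}(H)$ is well-defined and surjective, the groups $\mathsf{Aut}(F,\nu_F)$ act freely on the $M_{(F,\nu_F),(H,\nu_H)}$, and the fibres of $S$ are exactly these free orbits; all these verifications are the ones already given, reading ``colour-preserving homomorphism'' in place of ``homomorphism''.

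The only genuinely new step is the signature bookkeeping. Along a fibre $S^{-1}(\vec\rho)$, the vertices of $F$ lying over $v\in V(H)$ biject with the blocks of $\vec\rho(v)$, a block $B$ corresponding to a vertex of degree $|B|$ and colour $\nu_H(v)$; therefore $\prod_{u\in V(F)}\nu_F(u)(d_F(u))=\prod_{v\in V(H)}\prod_{B\in\vec\rho(v)}\nu_H(v)(|B|)$. Combining this with the colour-independent identities $(-1)^{|V(F)|-|V(H)|}=\prod_{v}(-1)^{|\vec\rho(v)|-1}$ and $\prod_{B\in R(\phi)}(|B|-1)!=\prod_{v}(|\vec\rho(v)|-1)!$ gives the analogue of \eqref{eqn:zeta_penultimate_form}; collapsing each fibre of $S$ (which contributes $\sum_{(F,\nu_F,\phi)\in S^{-1}(\vec\rho)}\#\mathsf{Aut}(F,\nu_F)^{-1}=1$) and applying distributivity over $\mathcal{F}(H)=\prod_{v}\mathsf{Part}(\he(H,v))$ yields
\[
\zeta_{\mathcal{S},k}(H,\nu_H)=\#\mathsf{Aut}(H,\nu_H)^{-1}\cdot\prod_{v\in V(H)}\sum_{\vec\rho(v)\in\mathsf{Part}(\he(H,v))}\prod_{A\in\vec\rho(v)}\nu_H(v)(|A|)\cdot(-1)^{|\vec\rho(v)|-1}(|\vec\rho(v)|-1)!=\frac{1}{\#\mathsf{Aut}(H,\nu_H)}\prod_{v\in V(H)}\chi(d_H(v),\nu_H(v)).
\]

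There is no real conceptual obstacle here: the single-signature proof was already organised so that the signature entered only through the factor $\prod_{u}s(d_F(u))$, and replacing it by $\prod_{u}\nu_F(u)(d_F(u))$ with $\nu_F$ forced by the fracture is transparent. The only point that needs a line of care is checking that restricting $R$ and $S$ to colour-preserving homomorphisms preserves surjectivity and the description of the fibres — and, as noted above, this is immediate because the colouring of the source is completely determined by a colour-preserving surjection onto $(H,\nu_H)$. I therefore expect the write-up to be short, essentially a pointer to the proof of Theorem~\ref{Thm:special_case} with the substitutions $F\rightsquigarrow(F,\nu_F)$, $\mathsf{Aut}(F)\rightsquigarrow\mathsf{Aut}(F,\nu_F)$, $s\rightsquigarrow\nu_H(v)$ indicated explicitly.
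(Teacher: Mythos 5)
Your proposal is correct and follows essentially the same route as the paper's proof: restrict $R$ to colour-preserving surjective edge-preserving homomorphisms (fibres = free $\mathsf{Aut}(H,\nu_H)$-orbits), observe that the colouring of $F$ is forced by $\phi$ so the fracture map $S$ and its fibre description (free $\mathsf{Aut}(F,\nu_F)$-orbits) carry over verbatim, and finish with the same signature bookkeeping and distributivity over $\mathcal{F}(H)=\prod_{v}\mathsf{Part}(\he(H,v))$. No gaps.
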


\begin{proof}
Recall that 
\[\zeta_{\mathcal{S}, k}(H,\nu_H) = \!\!\!\sum_{(F,\nu_F) \in \mathcal{G}_k(\mathcal{S})} \left(\#\mathsf{Aut}(F, \nu_F)^{-1}\cdot\prod_{u\in V(F)}\nu_F(u)(d_F(u))\right)\cdot \sum_{\substack{\rho \in \mathsf{Part}(F, \nu_F)\\(F, \nu_F)/\rho \cong (H, \nu_H)}}\left( (-1)^{|V(F)|-|V(H)|}\cdot \prod_{B \in \rho} (|B|-1)!\right). \]
We now consider homomorphisms $\phi : (F, \nu_F) \to (H, \nu_H)$ that preserve colours. Colour-preserving homomorphisms induce colour-consistent partitions $\rho \in \mathsf{Part}(F, \nu_F)$ and if, in addition, $\phi$ is surjective, then $(F, \nu_F)/\rho \cong (H, \nu_H)$. Let $M_{(F, \nu_F), (H, \nu_H)}$ denote the restriction of $M_{F, H}$ on homomorphisms $\phi \in (F, \nu_F) \to (H, \nu_H)$ that preserve colours. Then, $R$ defines a surjection
\begin{equation} \label{eqn:R_coloured}
R: M_{(F, \nu_F), (H, \nu_H)} \to \{\rho \in \mathsf{Part}(F, \nu_F) : (F, \nu_F)/\rho \cong (H, \nu_H)\}
\end{equation}
For $\phi_1, \phi_2 \in M_{(F, \nu_F), (H, \nu_H)}$ with $R(\phi_1) = R(\phi_2)$, there is a unique $\psi \in \mathsf{Aut}(H)$, such that $\phi_2 = \psi \circ \phi_1$. By construction, $\psi$ preserves the colours, and so $\psi \in \mathsf{Aut}(H, \nu_H)$. Furthermore, $\mathsf{Aut}(H, \nu_H)$ acts freely on $M_{(F, \nu_F), (H, \nu_H)}$ as a subgroup of $\mathsf{Aut}(H)$ acting freely on $M_{F, H} \supseteq M_{(F, \nu_F), (H, \nu_H)}$. Hence, the fibers of the map \eqref{eqn:R_coloured} coincide with the free orbits of the group $\mathsf{Aut}(H, \nu_H)$. So,
\begin{align*}
\zeta_{\mathcal{S}, k}(H,\nu_H) = \#\mathsf{Aut}(H, \nu_H)^{-1}  \cdot \sum_{(F, \nu_F)\in \mathcal{G}_k(\mathcal{S})} &\left(\#\mathsf{Aut}(F, \nu_F)^{-1}\cdot   \prod_{u\in V(F)}\nu_F(u)(d_F(u))\right) \\ & \cdot \sum_{\phi \in M_{(F,\nu_F),(H,\nu_H)}}\left( (-1)^{|V(F)|-|V(H)|}\cdot \prod_{B \in R(\phi)} (|B|-1)!\right).
\end{align*}
Next we observe that there is an one-to-one correspondence between the sets $\mathcal{M} = \dot\bigcup_{F \in \mathcal{G}_k}M_{F, H}$ and $\mathcal{C} = \dot\bigcup_{(F, \nu_F) \in \mathcal{G}_k(\mathcal{S})}M_{(F, \nu_F), (H, \nu_H)}$. This holds, because for $(F, \phi) \in \mathcal{M}$, we can uniquely define a vertex-colouring $c_{\phi} : V(F) \to \mathcal{S}$ where for $u \in V(F)$, $c_{\phi}(u) = \nu_H(\phi(u))$ such that $((F, c_{\phi}), \phi) \in \mathcal{C}$. For a fracture $\vec\rho \in \mathcal{F}(H)$ of the uncolored graph $H$, the fractured graph $\fracture{H}{\vec\rho}$ admits a natural vertex-colouring $\nu_{\vec\rho}$ that assigns each block $B \in \vec\rho(v)$ the colour $\nu_H(v), v \in V(H)$. 
Recall the map $S : \mathcal{M} \to \mathcal{F}(H)$. By slightly abusing notation, we define $S : \mathcal{C} \to \mathcal{F}(H)$, taking $S((F, \nu_F), \phi) = S(F, \phi)$. We observe that by definition $S$ preserves colours in the sense that $(F, \nu_F) \cong (\fracture{H}{\vec\rho}, \nu_{\vec\rho})$, where $\vec\rho = S(F, \phi)$. Similarly to the proof of \Cref{Thm:special_case}, if $S((F, \nu_F), \phi)) = S((F', \nu_{F'}), \phi')$, then $(F, \nu_F) \cong (F', \nu_{F'})$ via an isomorphism $\widetilde{\eta}$ $(\in \mathsf{Aut}(F, \nu_F))$ which can be used to show that $\phi' = \phi \circ \widetilde{\eta}$. 
Furthermore, $\mathsf{Aut}(F, \nu_F)$ acts freely on $M_{(F, \nu_F), (H, \nu_H)}$. We have
\begin{align*}
\zeta_{\mathcal{S}, k}(H, \nu_H) &= \#\mathsf{Aut}(H, \nu_H)^{-1} \cdot \sum_{((F,\nu_F),\phi) \in \mathcal{C}} \#\mathsf{Aut}(F, \nu_F)^{-1}\cdot  \underbrace{\prod_{v\in V(H)}\left(\prod_{B \in \vec\rho(v)} \nu_H(v)(|B|) \cdot (-1)^{|\vec\rho(v)| -1} \cdot (|\vec\rho(v)|-1)! \right)}_{=:\mathsf{Cont}(\vec\rho)} \\ &= \#\mathsf{Aut}(H, \nu_H)^{-1}\cdot\sum_{\vec\rho \in \mathcal{F}(H)}\mathsf{Cont}(\vec\rho)\cdot\sum_{((F, \nu_F), \phi) \in S^{-1}(\vec\rho)}\#\mathsf{Aut}(F, \nu_F)^{-1}\,,
\end{align*}
where,
\begin{align*}
\sum_{((F, \nu_F), \phi) \in S^{-1}(\vec\rho)}\#\mathsf{Aut}(F, \nu_F)^{-1} &= \sum_{\substack{(F, \nu_F) \\ \exists \phi : ((F, \nu_F), \phi) \in S^{-1}(\vec\rho)}}\mathsf{Aut}(F, \nu_F)^{-1}\cdot\sum_{\phi: ((F, \nu_F), \phi) \in S^{-1}(\vec\rho)}1 \\ &= \#\{{(F, \nu_F) \mid \exists\phi : ((F, \nu_F), \phi) \in S^{-1}(\vec\rho)}\} = 1\,,    
\end{align*}
since we consider unique representatives of isomorphism classes of coloured graphs.
The rest of the proof follows verbatim the distributivity argument of \Cref{Thm:special_case}.
\end{proof}

For the general case, we introduce some additional notation:
\begin{itemize}
    \item A \emph{partition} $\lambda$ of a positive integer $d$ is a decomposition of $d$ into an (unordered) sum of positive integers (e.g. $\lambda=3+2+2+1$ is a partition of $d=8$). Sometimes these are written in exponential notation, e.g.
    \[
    4+3+3+3+2+1+1+1+1+1 = 4^1 3^3 2^1 1^5.
    \]
    \item We write $|\lambda|=d$ for the sum of all parts of the partition and $\mathsf{len}(\lambda)$ for the number of its summands. E.g. we have
    \[
    |3+2+2+1| = 8 \text{ and }\mathsf{len}(3+2+2+1)=4.
    \]
    \item Given two partitions $\lambda_1, \lambda_2$, their \emph{union} $\lambda_1 \cup \lambda_2$ is formed by combining the summands from each of the partitions:
    \[
    (3+2+2+1) \cup (2+1+1+1) = 3 + 2+2+2 + 1 + 1 + 1 + 1.
    \]
    Similarly, given a finite family $(\lambda_i)_{i \in I}$, we write $\bigcup_{i \in I} \lambda_i$ for their union.
    \item We write $\mathcal{P}_d$ for the set of all partitions of $d$ and $\mathcal{P} = \dot\bigcup_{d \geq 1} \mathcal{P}_d$ for the set of all partitions.
    \item Given a set-partition $\rho$ of $[d]$, its \emph{shape} is the partition of $d$ obtained from the size of the blocks $B \in \rho$, e.g.
    \[
    \mathsf{shape}(\{1,4\} \cup \{2,5,6\} \cup \{3\} \cup \{7\}) = 3 + 2 + 1 + 1.
    \]
    \item The set of \emph{automorphisms} of a partition $\lambda$ is the set of permutations of its summands leaving its shape unchanged. Its cardinality can be computed as the product of factorials from its exponential notation:
    \[
    \# \mathsf{Aut}(4^1 3^3 2^1 1^5) = 1! \cdot 3! \cdot 1! \cdot 5!.
    \]
    \item For an integer partition $\lambda$ of $d$, we define a multiplicity associated to $\lambda$ as
    \[
    \mathsf{mult}(\lambda) = \sum_{\substack{\rho \in P(d):\\\mathsf{shape}(\rho)=\lambda}} \mu(\{1\} \cup \{2\} \cup \ldots \cup \{d\}, \rho)\,,
    \]
    where $\mu$ is the Moebius function of the set partition lattice. This can be explicitly calculated as
    \begin{equation}
        \mathsf{mult}(\lambda) = \underbrace{\frac{1}{\# \mathsf{Aut}(\lambda)} \binom{d}{\lambda}}_{=\#\{\rho \in P(d): \mathsf{shape}(\rho)=\lambda\}} \cdot \underbrace{(-1)^{d-\mathsf{len}(\lambda)} \prod_{\lambda_i \in \lambda} (\lambda_i-1)!}_{=\mu(\{1\} \cup \{2\} \cup \ldots \cup \{d\}, \rho)}\,,
    \end{equation}
    where $\binom{d}{\lambda} = d! / \prod_{\lambda_i \in \lambda} \lambda_i!$ is the multinomial coefficient.
    \item We also define a generalization of the function $\chi(-, s)$ from above which takes partitions as arguments. Let $\lambda = \lambda_1 + \ldots + \lambda_\ell$ be an integer partition. Then we define
    \begin{equation} \label{eqn:chi_generalization}
        \chi(\lambda , s) := \sum_{\sigma\in P(\ell)} (-1)^{|\sigma|-1} (|\sigma|-1)! \cdot \prod_{B \in \sigma} s\left(\sum_{i \in B} \lambda_i \right)\,.
    \end{equation}
    One checks that for $\lambda = 1 + 1 + \ldots + 1 = 1^d$ we indeed have $\chi(\lambda, s)=\chi(d,s)$ in the notation above.
\end{itemize}

\begin{theorem}\label{thm:zeta_at_most_k_edges}
Let $H$ be a vertex-coloured graph without isolated vertices and at most $k \geq 0$ edges. Its vertex-colouring is given by $\nu_H : V(H) \to \mathcal{S}$, where $\mathcal{S} = \{s\}$, for some signature $s$ such that $s(0) = 1$. Then we have
\begin{equation} \label{eqn:Theorem_general1}
    \zeta_{\mathcal{S}, k}(H,\nu_H)=\frac{1}{\#\mathsf{Aut}(H, \nu_H)} \cdot \sum_{\substack{\lambda: E(H) \to \mathcal{P}\\\sum_e |\lambda(e)| = k }}\ \  \prod_{e \in E(H)} \frac{\mathsf{mult}(\lambda(e))}{|\lambda(e)|!}  \prod_{v \in V(H)} \chi(\mathsf{deg}(H, v, \lambda), s)\,,
\end{equation}
where the partitions $\mathsf{deg}(H, v, \lambda)$ are defined as
\[
\mathsf{deg}(H, v, \lambda) = \bigcup_{\substack{e \in E(H):\\ e \text{ incident to }v}} \lambda(e)\,. 
\]
\end{theorem}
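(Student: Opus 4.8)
The strategy is to reduce Theorem~\ref{thm:zeta_at_most_k_edges} to the already-established Theorem~\ref{Thm:special_case} (the case $|E(H)|=k$) by a ``blow-up'' or ``edge-subdivision/multiplication'' argument. The key observation is that $\zeta(H,k)$ is a sum over graphs $F \in \mathcal{G}_k$ together with surjective edge-preserving homomorphisms $\phi: F \to H$ — but now $F$ has $k$ edges while $H$ may have fewer, so $\phi$ cannot be edge-preserving in the strict bijective sense. The fix is to allow $\phi$ to collapse several parallel edges of $F$ onto a single edge of $H$. Concretely, for each edge $e$ of $H$, the preimage $\phi^{-1}(e)$ is a multiset of edges of $F$, all sharing (up to the fracture) the same pair of endpoints over $e$'s endpoints, and the combinatorial data describing how these edges are distributed among the fractured vertices is exactly an integer partition $\lambda(e)$ with $\sum_e|\lambda(e)| = k$. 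I would set up the analogue of the map $S: \mathcal{M} \to \mathcal{F}(H)$ from the proof of Theorem~\ref{Thm:special_case}, but where the target is now pairs $(\lambda, \vec\rho)$ consisting of an edge-labelling $\lambda: E(H) \to \mathcal{P}$ and a compatible fracture; the half-edge formalism carries over once each edge $e$ of $H$ is replaced by $\mathsf{len}(\lambda(e))$ ``virtual'' edges.

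**Key steps, in order.** First, I would re-derive the analogue of Equation~\eqref{eqn:zeta_intermediate_form1}: express $\zeta_{\mathcal{S},k}(H,\nu_H)$ as a sum over $(F,\phi) \in \dot\bigcup_{F\in\mathcal{G}_k}M_{F,H}$, where now $M_{F,H}$ consists of surjective homomorphisms $\phi:F\to H$ inducing a \emph{surjection} (not bijection) on edge sets — this follows the same orbit-counting argument with $\mathsf{Aut}(H,\nu_H)$ acting freely, using Lemma~\ref{lem:colEmb_to_colHom_S-coloured}. Second, I would factor the data of such a $\phi$ into (a) for each $e\in E(H)$, the partition $\lambda(e)$ recording the ``shape'' of how $\phi$ groups the parallel fibre-edges at the two endpoints, and (b) a fracture $\vec\rho \in \mathcal{F}(H^\lambda)$ of the $\lambda$-blown-up graph $H^\lambda$ (replace $e$ by $\mathsf{len}(\lambda(e))$ edges), recording which fibre-vertices are identified. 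Third, I would observe that the $\mathsf{mult}(\lambda(e))/|\lambda(e)|!$ factor is precisely the M\"obius-weighted count of set-partitions of the $|\lambda(e)|$-element fibre with shape $\lambda(e)$, divided by the symmetry $|\lambda(e)|!$ that comes from the unordered nature of $F$'s edges — this is where the identity $\mathsf{mult}(\lambda) = \tfrac{1}{\#\mathsf{Aut}(\lambda)}\binom{d}{\lambda}(-1)^{d-\mathsf{len}(\lambda)}\prod(\lambda_i-1)!$ is used. Fourth, having isolated the per-edge contributions, the remaining sum over fractures $\vec\rho$ of the fixed blown-up graph $H^\lambda$ factorises over vertices exactly as in the proof of Theorem~\ref{Thm:special_case} via the distributivity trick~\eqref{eqn:distributivity_trick}, yielding $\prod_v \chi(\mathsf{deg}(H,v,\lambda),s)$ — noting that the degree partition at $v$ in $H^\lambda$, refined by the fibre-structure, is exactly $\mathsf{deg}(H,v,\lambda) = \bigcup_{e\ni v}\lambda(e)$, and that $\chi$ applied to a partition (rather than an integer) is the correct generalisation~\eqref{eqn:chi_generalization}.

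**Main obstacle.** The delicate point is the bookkeeping of automorphisms: when edges of $H$ carry nontrivial partitions $\lambda(e)$, the blown-up graph $H^\lambda$ has extra automorphisms (permuting the $\mathsf{len}(\lambda(e))$ copies of $e$ with equal parts, and permuting parallel edges with equal $\lambda$), and one must check that these are precisely absorbed by the $1/\#\mathsf{Aut}(\lambda(e))$ implicit in $\mathsf{mult}$ and by the $1/\#\mathsf{Aut}(F,\nu_F)$ factors, so that the free-orbit count $\sum_{(F,\phi)\in S^{-1}(\lambda,\vec\rho)} \#\mathsf{Aut}(F,\nu_F)^{-1} = 1$ still holds. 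In particular one needs that $H$ is simple (no multi-edges or loops) so that distinct edges of $H$ cannot be confused, and that $H^\lambda$'s parallel edges over a fixed $e$ are distinguishable only up to the $\mathsf{Aut}(\lambda(e))$ action. I expect verifying that the fibres of the extended map $S$ are exactly these orbits — and that no ``cross-edge'' symmetries arise — to be the most technical part; the rest is a direct adaptation of the single-signature, $|E(H)|=k$ argument already carried out in detail. The extension to multiple signatures then follows verbatim as in the passage from Theorem~\ref{Thm:special_case} to Theorem~\ref{thm:k_edges_multiple_signatures}, replacing $s$ by $\nu_H(v)$ at each vertex and restricting to colour-preserving homomorphisms.
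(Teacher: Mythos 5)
Your outline reproduces the paper's overall architecture (the intermediate sum over homomorphisms that are surjective on vertices and edges, the blow-up multigraph, distributivity over vertices, the generalised $\chi$ from \eqref{eqn:chi_generalization}, and the verbatim extension to several signatures), but it has a genuine gap at the central step. In your step 2 you claim that the data of each pair $(F,\phi)$ factors into an edge-labelling $\lambda:E(H)\to\mathcal{P}$ together with a fracture of the blown-up graph $H^{\lambda}$, with $\lambda(e)$ ``recording how $\phi$ groups the parallel fibre-edges.'' Since $F$ is simple, no two edges in the fibre over $e$ can share both endpoints once the fracture is fixed, so the only grouping such a direct decomposition can ever produce is the trivial one $\lambda(e)=1^{d(e)}$; partitions with parts of size at least $2$ do not arise from the structure of any single $\phi$. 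The signs confirm this: $\mathsf{mult}(\lambda)$ is negative for many $\lambda$, so the terms of \eqref{eqn:Theorem_general1} with nontrivial $\lambda(e)$ cannot be counts of homomorphisms attached to a factorisation of $\phi$ — they are correction terms, and your proposal supplies no mechanism that produces them.

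The missing idea, which is the heart of the paper's proof, is the following. After fixing $d=d_\phi$ and passing to the multigraph $H(d)$, the contributions with $d_\phi=d$ form a sum of $\mathsf{Cont}(\vec\rho)$ over only the \emph{admissible} fractures $\mathcal{F}^{\textup{adm}}(H(d))$, i.e.\ those whose fractured graph has no multi-edges, as in \eqref{eqn:zeta_d_intermediate_form}. Admissibility couples the partitions at the two endpoints of every edge, so this restricted sum does \emph{not} factorise over vertices, and the distributivity trick of Theorem~\ref{Thm:special_case} cannot be applied directly — contrary to what your step 4 asserts. The paper restores factorisability by a Möbius inversion over the product of per-edge partition lattices $\prod_{e}\mathsf{Part}(\{e_1,\dots,e_{d(e)}\})$: the indicator of the finest partition is expanded as a signed sum over coarsenings $\sigma$; for each $\sigma$ the relevant fractures correspond to fractures of the weighted quotient $H(d)/\sigma$, on which distributivity does apply and yields $\prod_{v}\chi(\mathsf{deg}(H,v,\lambda_\sigma),s)$; and regrouping the $\sigma$ by their shapes produces exactly the factors $\mathsf{mult}(\lambda(e))$. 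Your step 3 states that $\mathsf{mult}(\lambda(e))/|\lambda(e)|!$ is ``the Möbius-weighted count of set-partitions of the fibre,'' but gives no reason why Möbius-weighted set-partitions enter at all; that reason is the inclusion–exclusion just described, and without it the argument does not go through. (Your step 1, the factor $1/\prod_e d(e)!$ coming from the unlabelled fibre edges, and the automorphism bookkeeping you single out as the main obstacle are all handled as in the paper and are not where the difficulty lies.)
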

\begin{proof}
Using again \Cref{rem:simplified_single_signature} for simplifying notation, it suffices to analyze $\zeta(H, k)$ and show
\[
\zeta(H,k)=\frac{1}{\#\mathsf{Aut}(H)} \cdot \sum_{\substack{\lambda: E(H) \to \mathcal{P}\\\sum_e |\lambda(e)| = k }}\ \  \prod_{e \in E(H)} \frac{\mathsf{mult}(\lambda(e))}{|\lambda(e)|!}  \prod_{v \in V(H)} \chi(\mathsf{deg}(H, v, \lambda), s).
\]

Very similar to the first part of the proof of Theorem \ref{Thm:special_case} we see that
\begin{equation} \label{eqn:zeta_intermediate_form}
\zeta(H,k) = \#\mathsf{Aut}(H)^{-1} \cdot \sum_{F \in \mathcal{G}_k} \left(\#\mathsf{Aut}(F)^{-1}\cdot  \prod_{u\in V(F)}s(d_F(u))\right)\cdot \sum_{\phi \in M_{F,H}}\left( (-1)^{|V(F)|-|V(H)|}\cdot \prod_{B \in R(\phi)} (|B|-1)!\right) \,.
\end{equation}
where however we now need to define
\[
M_{F, H} = \{\phi: F \to H : \phi \text{ surjective on vertices and edges}\}\,.
\]
In the case where $F,H$ have the same number $k$ of edges, this condition is equivalent to being edge-preserving. To start approaching the shape of \eqref{eqn:Theorem_general1} we note that given $\phi \in M_{F,H}$ we obtain a map $d_\phi: E(H) \to \mathbb{Z}_{>0}$ satisfying $\sum_e d_\phi(e) = k$ by
\begin{equation}
    d_\phi(e) = \# \{\widehat e \in E(F) : \phi_E(\widehat e) = e\}\,,
\end{equation}
where $\phi_E : E(F) \to E(H)$ is the map on edges induced by the graph homomorphism $\phi$. Given any map $d : E(H) \to \mathbb{Z}_{>0}$ with $\sum_e d(e)=k$, we define
\begin{equation} \label{eqn:zeta_d}
\zeta_d(H, k) = \frac{1}{\#\mathsf{Aut}(H)} \cdot \sum_{\substack{\lambda: E(H) \to \mathcal{P}\\|\lambda(e)| = d(e) }} \ \  \prod_{e \in E(H)} \frac{\mathsf{mult}(\lambda(e))}{|\lambda(e)|!}  \prod_{v \in V(H)} \chi(\mathsf{deg}(H, v, \lambda), s)\,,
r\end{equation}
so that $\zeta(H,k) = \sum_{d} \zeta_d(H, k)$. We show the finer statement that $\zeta_d(H, k)$ computes the sum of all contributions $(F, \phi)$ of the summation \eqref{eqn:zeta_intermediate_form} with $d_\phi = d$, which would finish the proof.

To show this equality, we first define an auxiliary object: a graph $H(d)$ with multi-edges replacing each edge $e$ of $H$ with $d(e)$ copies of the edge:
\[
H = \begin{tikzpicture}
    \draw[fill] (0,0) circle (1pt);
    \draw[fill] (1,0) circle (1pt);
    \draw[fill] (2,0) circle (1pt);

    \draw (0,0) --node[midway, above] {$e_1$} (1,0) --node[midway, above] {$e_2$} (2,0);
\end{tikzpicture}, d(e_1)=3, d(e_2)=2 \implies H(d) = 
\begin{tikzpicture}
    \draw[fill] (0,0) circle (1pt);
    \draw[fill] (1,0) circle (1pt);
    \draw[fill] (2,0) circle (1pt);

    \draw (0,0) -- (1,0);
    \draw (0,0) to[bend left] (1,0);
    \draw (0,0) to[bend right] (1,0);
    \draw (1,0) to[bend left] (2,0);
    \draw (1,0) to[bend right] (2,0);
\end{tikzpicture}
\]
For each half edge $h \in \he(H, v)$ forming part of an edge $e=\{h,h'\} \in E(H)$, the new graph $H(d)$ has $d(e)$ many half-edges $h_1, \ldots, h_{d(e)} \in \he(H(d), v)$ adjacent to the vertex $v$.

The utility of the multi-edge graph $H(d)$ comes from the fact that any homomorphism $\phi: F \to H$ with $d_\phi=d$ factors (in a non-unique way) as 
\[
\begin{tikzcd}
    F \arrow[r, "\phi'"] \arrow[rr, bend right, "\phi", swap] & H(d) \arrow[r] & H
\end{tikzcd}
\]
 with $\phi'$ edge-preserving, and so $F$ can be obtained from $H(d)$ by a suitable fracture
\[
\vec\rho \in \mathcal{F}(H(d)) = \prod_{v \in V(H)} \mathsf{Part}(\he(H(d), v))\,.
\]
This is illustrated by the following picture:
\[
\begin{tikzpicture}[thick]
\begin{scope}[yshift=1cm]
    \draw[blue] (0,0) -- (1,0);
    \draw[red] (0,0.5) to (1,0.5);
    \draw[orange] (0,0.5) to (1,0);
    \draw[green] (1,0.5) to (2,0);
    \draw[violet] (1,0) to (2,0);
    \draw[fill] (0,0) circle (1pt);
    \draw[fill] (0,0.5) circle (1pt);
    \draw[fill] (1,0) circle (1pt);
    \draw[fill] (1,0.5) circle (1pt);
    \draw[fill] (2,0) circle (1pt);
\end{scope}
    \draw[blue] (0,0) -- (1,0);
    \draw[red] (0,0) to[bend left] (1,0);
    \draw[orange] (0,0) to[bend right] (1,0);
    \draw[green] (1,0) to[bend left] (2,0);
    \draw[violet] (1,0) to[bend right] (2,0);
    \draw[fill] (0,0) circle (1pt);
    \draw[fill] (1,0) circle (1pt);
    \draw[fill] (2,0) circle (1pt);
    \draw (1,-2) node {\text{good}};
\begin{scope}[yshift=-1cm]
    \draw[fill] (0,0) circle (1pt);
    \draw[fill] (1,0) circle (1pt);
    \draw[fill] (2,0) circle (1pt);

    \draw (0,0) -- (1,0) -- (2,0);
\end{scope}

    \draw (-1,1.5) node {$F$};
    \draw (-1,0) node {$H(d)$};
    \draw (-1,-1) node {$H$};
\end{tikzpicture}
\quad \quad
\begin{tikzpicture}[thick]
\begin{scope}[yshift=1cm]
    \draw[blue] (0,0) to[bend left] (1,0);
    \draw[red] (0,0.5) to (1,0.5);
    \draw[orange] (0,0) to[bend right] (1,0);
    \draw[green] (1,0.5) to (2,0);
    \draw[violet] (1,0) to (2,0);
    
    \draw[fill] (0,0) circle (1pt);
    \draw[fill] (0,0.5) circle (1pt);
    \draw[fill] (1,0) circle (1pt);
    \draw[fill] (1,0.5) circle (1pt);
    \draw[fill] (2,0) circle (1pt);
\end{scope}
    
    \draw[blue] (0,0) -- (1,0);
    \draw[red] (0,0) to[bend left] (1,0);
    \draw[orange] (0,0) to[bend right] (1,0);
    \draw[green] (1,0) to[bend left] (2,0);
    \draw[violet] (1,0) to[bend right] (2,0);
    
    \draw[fill] (0,0) circle (1pt);
    \draw[fill] (1,0) circle (1pt);
    \draw[fill] (2,0) circle (1pt);
\begin{scope}[yshift=-1cm]
    \draw[fill] (0,0) circle (1pt);
    \draw[fill] (1,0) circle (1pt);
    \draw[fill] (2,0) circle (1pt);

    \draw (0,0) -- (1,0) -- (2,0);
\end{scope}
        
        \draw (1,-2) node {\text{bad}};
\end{tikzpicture}
\]
We can see that for some partitions $\vec\rho$ the fractured graph $F=\fracture{H(d)}{\vec\rho}$ is an honest (simple) graph, whereas for others it would still contain multi-edges, and thus does not appear in the summation \eqref{eqn:zeta_d}. These bad cases are precisely characterized by the property that there exist two vertices $v, v' \in V(H(d))$ connected by edges $\{h_1, h_1'\}, \{h_2, h_2'\}$ such that $h_1, h_2$ are in the same block of $\vec\rho(v)$ and $h_1', h_2'$ are in the same block of $\vec\rho(v')$. 

In fact, we get a well-defined map
\begin{equation} \label{eqn:ep_map}
\mathsf{ep} : \mathcal{F}(H(d)) \to \prod_{e \in E(H)} \mathsf{Part}(\{e_1, \ldots, e_{d(e)}\}) =: \mathsf{Part}(d)\,,
\end{equation}
where for each $e \in E(H)$ we again denote $e_1, \ldots, e_{d(e)} \in E(H(d))$ the edges lying over $e$. The map $\mathsf{ep}$ sends a fracture $\vec\rho$ to the unique collection of set-partitions $\mathsf{ep}(\vec\rho) = (\sigma_e)_{e \in E(H)}$ such that two edges $e_i, e_j \in E(H(d))$ lying over the same edge $e \in E(H)$ are in the same block of $\sigma_e$ if and only if the lifts of $e_i, e_j$ in $F=\fracture{H(d)}{\vec\rho}$ connect the same two vertices in $F$.
Denote by 
\begin{equation} \label{eqn:F_adm_def}
    \mathcal{F}^\textup{adm}(H(d)) = \mathsf{ep}^{-1}\left((\{e_1\} \cup \ldots \cup \{e_{d(e)}\})_{e \in E(H)} \right) \subseteq \mathcal{F}(H(d))
\end{equation}
the set of "good" $\vec\rho$ for which $F=\fracture{H(d)}{\vec\rho}$ is a graph without multi-edges. This condition is indeed equivalent to $\mathsf{ep}(\vec\rho)$ being the product of finest set-partitions, only consisting of singleton sets $\{e_i\}$, i.e. no two edges of $\fracture{H(d)}{\vec\rho}$ have the same start- and end-vertex.

Then similar arguments as before show that the sum of all terms in \eqref{eqn:zeta_intermediate_form} with $d_\phi = d$ is given by
\begin{equation} \label{eqn:zeta_d_intermediate_form}
\frac{1}{\#\mathsf{Aut}(H) \cdot \prod_{e \in E(H)} d(e)!}  \cdot \sum_{\vec\rho \in \mathcal{F}^\textup{adm}(H(d))} \ \ \ \underbrace{\prod_{v\in V(H)}\left(\prod_{B \in \vec\rho(v)} s(|B|) \cdot (-1)^{|\vec\rho(v)| -1} \cdot (|\vec\rho(v)|-1)! \right)}_{=:\mathsf{Cont}(\vec\rho)} \,.
\end{equation}
Here the additional factor $1/\prod_{e \in E(H)} d(e)!$ comes from the set of permutations of the edges of $H(d)$ that leave all vertices fixed. Indeed, since in a lift of the map $\phi: F \to H$ to an edge-preserving homomorphism $\phi': F \to H(d)$ we do not see which edges of $F$ map to which edges of $H(d)$, there are exactly $\prod_{e \in E(H)} d(e)!$ such lifts. 

Compared to the situation of Theorem \ref{Thm:special_case} this sum is more complicated, precisely because the condition $\vec\rho \in \mathcal{F}^\textup{adm}(H(d))$ \emph{cannot} be checked independently at each vertex (and so $\mathcal{F}^\textup{adm}(H(d))$ does not decompose as a product over vertices). 
To overcome this challenge, we calculate the sum using an inclusion-exclusion formula on the whole set $\mathcal{F}(H(d))$ of fractions, where each part of the inclusion-exclusion summation \emph{will} feature an index set which is a product over the vertices.

More concretely, if we consider the variant of formula \eqref{eqn:zeta_d_intermediate_form} where the sum goes over \emph{all} $\vec\rho \in \mathcal{F}(H(d))$, then the same argument as in \eqref{eqn:distributivity_trick}, via distributivity, shows that the result equals
\[
\frac{1}{\#\mathsf{Aut}(H) \cdot \prod_{e \in E(H)} d(e)!} \cdot \prod_{v\in V(H)} \chi(|\he(H(d), v)|, s)\,.
\]
Comparing with the formula \eqref{eqn:zeta_d} that we want to prove, this is precisely the part of that formula associated to the map
\[
\lambda : E(H) \to \mathcal{P}, e \mapsto 1 + 1 + \ldots + 1 = 1^{d(e)}\,.
\]
Indeed, we have $|\lambda(e)|=d(e)$, $|d(H,v,\lambda)|=|\he(H(d), v)|$ and from the definition of $\mathsf{mult}$ it is easy to verify that $\mathsf{mult}(1+1+\ldots+1) = 1$. 

As mentioned before, we need to correct this formula by subtracting all terms associated to $\vec\rho \in \mathcal{F}(H(d)) \setminus \mathcal{F}^\textup{adm}(H(d))$. Our approach is to do this by considering the map $\mathsf{ep}$ from \eqref{eqn:ep_map} and performing an inclusion-exclusion calculation over the target of that map, which is the product $\mathsf{Part}(d)$ of partition lattices $\mathsf{Part}(\{e_1, \ldots, e_{d(e)}\})$. 

Let $\sigma = (\sigma_e)_{e \in E(H)} = (B_1^e \dot\cup B_2^e \dot\cup \ldots \dot\cup B_{m_e}^e)_e$ be an element of $\mathsf{Part}(d)$. Consider the set $\mathsf{Part}(d)^{\geq \sigma} \subseteq \mathsf{Part}(d)$ of all partitions which are \emph{coarsenings} of $\sigma$ (i.e. obtained by possibly combining some of the blocks $B_i^e, B_j^e$). Given any subset $S \subseteq \mathsf{Part}(d)$ let 
\[
\boldone_S : \mathsf{Part}(d) \to \mathbb{Z}, \sigma \mapsto \begin{cases}
    1& \text{ if }\sigma \in S,\\
    0& \text{ otherwise}
\end{cases}
\]
be the characteristic function of $S$. Then one (fancy) version of an inclusion-exclusion statement states that
\begin{equation}  \label{eqn:char_function_moebius}
\boldone_{ \{(\{e_1\} \cup \ldots \cup \{e_{d(e)}\})_{e \in E(H)}\}  } = \sum_{\rho \in \mathsf{Part}(d)}  \left(\prod_{e \in E(H)} \mu(\{e_1\} \cup \ldots \cup \{e_{d(e)}\}, \rho_e) \right) \cdot \boldone_{\mathsf{Part}(d)^{\geq \sigma}}\,.
\end{equation}
To obtain the desired quantity \eqref{eqn:zeta_d_intermediate_form} we then use the following chain of equalities:
\begin{align}
\sum_{\vec\rho \in \mathcal{F}^\textup{adm}(H(d))} \mathsf{Cont}(\xi) &\overset{\eqref{eqn:F_adm_def}}{=} \sum_{\vec\rho \in \mathcal{F}(H(d))} \mathsf{Cont}(\vec\rho) \cdot \boldone_{ \{(\{e_1\} \cup \ldots \cup \{e_{d(e)}\})_{e \in E(H)}\}  }(\mathsf{ep}(\vec\rho))\\
&\overset{\eqref{eqn:char_function_moebius} }{=} \sum_{\sigma \in \mathsf{Part}(d)}  \left(\prod_{e \in E(H)} \mu(\{e_1\} \cup \ldots \cup \{e_{d(e)}\}, \sigma(e)) \right) \cdot \sum_{\vec\rho \in \mathcal{F}(H(d))} \mathsf{Cont}(\vec\rho) \cdot \boldone_{\mathsf{Part}(d)^{\geq \sigma}}(\mathsf{ep}(\vec\rho))\\
&= \sum_{\sigma\in \mathsf{Part}(d)}  \left(\prod_{e \in E(H)} \mu(\{e_1\} \cup \ldots \cup \{e_{d(e)}\}, \sigma(e)) \right) \cdot \sum_{\vec\rho \in \mathsf{ep}^{-1}(\mathsf{Part}(d)^{\geq \sigma})} \mathsf{Cont}(\vec\rho)\,. \label{eqn:Contxi_sum}
\end{align}
To analyze this sum further, note that $\mathsf{ep}^{-1}(\mathsf{Part}(d)^{\geq \sigma)}$ is the set of fractures $\vec\rho$ on $H(d)$ such that any two edges $e, e' \in E(H(d))$ in the same block of $\sigma$ share the same end-vertices inside the fractured graph $\fracture{H(d)}{\vec\rho}$. 

But such fractures can equivalently be described as follows: take the graph $H(d(e))$, combine all edges from blocks $B_i^e$ of $\sigma$ into one edge $E_{e,i}$, which gets weighted by the cardinality $\mathsf{wt}(E_{e,i})=|B_i^e|$, to obtain a graph $H(d(e))/\sigma$ with weighted multi-edges. We extend the weight function $\mathsf{wt}$ to the set of half-edges of $H(d(e))/\sigma$ by sending each half-edge to the weight of the edge of which it forms a part.
Then there is a one-to-one correspondence
\begin{equation} \label{eqn:chi_chibar_correspondence}
\vec\rho \in\mathsf{ep}^{-1}(\mathsf{Part}(d)^{\geq \sigma})  \leftrightarrow \vec\rho' \in \prod_{v \in V(H)} \mathsf{Part}(\he(H(d)/\sigma, v))
\end{equation}
to the set of fractures on the graph $H(d(e))/\sigma$. One important formula that is straightforward to check from the definitions is that under the correspondence \eqref{eqn:chi_chibar_correspondence}, the contribution $\mathsf{Cont}(\vec\rho)$ is equal to the formula
\begin{equation}
\mathsf{Cont}(\vec\rho') := \prod_{v\in V(H)}\left(\prod_{A \in \vec\rho'(v)} s(\sum_{h \in A} \mathsf{wt}(h) ) \cdot (-1)^{|\vec\rho'(v)| -1} \cdot (|\vec\rho'(v)|-1)! \right)
\end{equation}

To make the connection to the formula \eqref{eqn:zeta_d}, note that from the data of $\sigma$ above, we obtain a map $\lambda_\sigma: E(H) \to \mathcal{P}$ as in \eqref{eqn:zeta_d}, where the partition $\lambda_\sigma(e)$ is defined by
\[
\lambda_\sigma(e) = |B_1^e| + \ldots + |B_{m_e}^e| \in \mathcal{P}(d(e))\,.
\]
\textbf{Claim:} For $\lambda: E(H) \to \mathcal{P}$ with $|\lambda(e)|=d(e)$, the sum of contributions in \eqref{eqn:Contxi_sum} from those $\sigma$ with $\lambda_\sigma = \lambda$ equals
the term
\begin{equation} \label{eqn:final_desired_form}
\prod_{e \in E(H)} \mathsf{mult}(\lambda(e)) \cdot   \prod_{v \in V(H)} \chi(\mathsf{deg}(H, v, \lambda), s)
\end{equation}
appearing in \eqref{eqn:Theorem_general1}.

If we prove this claim, then multiplying with the common factor
\[
\frac{1}{\#\mathsf{Aut}(H) \cdot \prod_{e \in E(H)} |\lambda(e)|!}
\]
appearing in both expressions \eqref{eqn:zeta_d_intermediate_form} and \eqref{eqn:Theorem_general1}, and summing over all $\lambda$ yields the statement of the theorem.

\noindent {\bf Proof of Claim:} 
Fixing $\sigma$ with $\lambda_\sigma=\lambda$, we can now use the distributivity trick from before and obtain
\begin{align}
\sum_{\vec\rho' \in \prod_{v \in V(H)} \mathsf{Part}(\he(H(d)/\sigma, v))} \mathsf{Cont}(\vec\rho') = \prod_{v \in V(H)} \underbrace{\sum_{\vec\rho'(v) \in \mathsf{Part}(\he(H(d)/\sigma, v))} \left(\prod_{A \in \vec\rho'(v)} s(\sum_{h \in A} \mathsf{wt}(h) ) \cdot (-1)^{|\vec\rho'(v)| -1} \cdot (|\vec\rho'(v)|-1)! \right)}_{=\chi(\mathsf{deg}(H, v, \lambda), s)}
\end{align}
The identification with $\chi(\mathsf{deg}(H, v, \lambda), s)$ just uses the definition of the function $\chi$ from \eqref{eqn:chi_generalization} and an identification of the half-edges $h \in \he(H(d)/\sigma, v)$ with the summands $\lambda(e)_i$ in the partition $\mathsf{deg}(H, v, \lambda)$ which respects the weighting, i.e. satisfies $\mathsf{wt}(h)=\lambda(e)_i$. This basically comes from the definition of the function $\mathsf{wt}$ on one side and the partition-valued function $\lambda$ on the other side.

Putting the final pieces together, we have that the sum of contributions in \eqref{eqn:Contxi_sum} from those $\sigma$ with $\lambda_\sigma = \lambda$ equals
\begin{align}
\sum_{\substack{\sigma = (\sigma_e)_e \in \prod_{e \in E(H)} \mathsf{Part}(d(e)) \\ \lambda_\sigma = \lambda } }  \left(\prod_{e \in E(H)} \mu(\{e_1\} \cup \ldots \cup \{e_{d(e)}\}, \sigma_e) \right) \cdot \underbrace{\prod_{v \in V(H)} \chi(\mathsf{deg}(H, v, \lambda), s)}_{(*)}
\end{align}
In this expression, we see that the term $(*)$ is independent of $\sigma$ and can be drawn out. On the other hand, the condition $\lambda_\sigma=\lambda$ can be checked independently for each edge $e \in E(H)$ and there is just equivalent to requiring that $\mathsf{shape}(\sigma_e)=\lambda(e)$. Then we can use distributivity a final time and obtain the desired expression
\begin{equation}
    \prod_{v \in V(H)} \chi(\mathsf{deg}(H, v, \lambda), s) \cdot \prod_{e \in E(H)} \underbrace{\sum_{\substack{\sigma_e\in \mathsf{Part}(d(e)) \\ \mathsf{shape}(\sigma_e) = \lambda(e) } }  \mu(\{e_1\} \cup \ldots \cup \{e_{d(e)}\}, \sigma_e)}_{=\mathsf{mult}(\lambda(e))}
\end{equation}
from \eqref{eqn:final_desired_form}. This shows the claim and thus finishes the entire proof.
\end{proof}

\begin{theorem}\label{thm:general_coefficient}
Let $(H, \nu_H)$ be a vertex-coloured graph with no isolated vertices and at most $k \geq 0$ edges. Its vertex colouring is given by $\nu_H : V(H) \to \mathcal{S}$, where $\mathcal{S}$ is a finite set of signatures $s$ such that $s(0) = 1$. Then we have
\begin{equation} \label{eqn:Theorem_general}
    \zeta_{\mathcal{S},k}(H,\nu_H)=\frac{1}{\#\mathsf{Aut}(H, \nu_H)} \cdot \sum_{\substack{\lambda: E(H) \to \mathcal{P}\\\sum_e |\lambda(e)| = k }}\ \  \prod_{e \in E(H)} \frac{\mathsf{mult}(\lambda(e))}{|\lambda(e)|!}  \prod_{v \in V(H)} \chi(\mathsf{deg}(H, v, \lambda), \nu_H(v))\,,
\end{equation}
where the partitions $\mathsf{deg}(H, v, \lambda)$ are again defined as
\[
\mathsf{deg}(H, v, \lambda) = \bigcup_{\substack{e \in E(H):\\ e \text{ incident to }v}} \lambda(e)\,. 
\]    
\end{theorem}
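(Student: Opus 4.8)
The plan is to combine the two ingredients that were already developed in isolation: the colour-tracking bijection from Theorem~\ref{thm:k_edges_multiple_signatures} and the multi-edge inclusion--exclusion machinery from Theorem~\ref{thm:zeta_at_most_k_edges}. As in Remark~\ref{rem:simplified_single_signature} (adapted to the coloured setting), I would start from the definition
\[
\zeta_{\mathcal{S},k}(H,\nu_H)=\!\!\!\sum_{(F,\nu_F)\in\mathcal{G}_k(\mathcal{S})}\!\!\!\Big(\#\mathsf{Aut}(F,\nu_F)^{-1}\!\!\prod_{u\in V(F)}\nu_F(u)(d_F(u))\Big)\cdot\!\!\!\sum_{\substack{\rho\in\mathsf{Part}(F,\nu_F)\\(F,\nu_F)/\rho\cong(H,\nu_H)}}\!\!\!(-1)^{|V(F)|-|V(H)|}\!\!\prod_{B\in\rho}(|B|-1)!\,,
\]
where now, crucially, the graphs $F\in\mathcal{G}_k(\mathcal{S})$ need only have \emph{at most} $k$ edges (with no isolated vertices), so that the relevant surjective homomorphisms $\phi\colon(F,\nu_F)\to(H,\nu_H)$ are those surjective on both vertices and edges (not necessarily edge-preserving), exactly as in the proof of Theorem~\ref{thm:zeta_at_most_k_edges}. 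The first step is to rerun the $R$-map analysis of Theorem~\ref{thm:k_edges_multiple_signatures} in this broader setting: $\mathsf{Aut}(H,\nu_H)$ acts freely on the set $M_{(F,\nu_F),(H,\nu_H)}$ of colour-preserving surjective homomorphisms, the fibres of $R$ are its orbits, and (as in Theorem~\ref{thm:k_edges_multiple_signatures}) the colour $c_\phi$ of a source graph is forced by $c_\phi(u)=\nu_H(\phi(u))$, giving a bijection $\mathcal{M}\leftrightarrow\mathcal{C}$ between the uncoloured and coloured disjoint unions of such homomorphism sets.

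Next I would apply the multi-edge construction: decompose the sum over the edge-multiplicity function $d\colon E(H)\to\mathbb{Z}_{>0}$ with $\sum_e d(e)=k$, introduce the multi-graph $H(d)$, the "edge-projection" map $\mathsf{ep}\colon\mathcal{F}(H(d))\to\mathsf{Part}(d)$, and the admissible fractures $\mathcal{F}^{\mathrm{adm}}(H(d))$ whose fractured graphs are honest simple graphs. Every colour-preserving $\phi\colon(F,\nu_F)\to(H,\nu_H)$ with $d_\phi=d$ factors through $H(d)$ via an edge-preserving map, and since each vertex $v^B$ of any fractured graph $\fracture{H(d)}{\vec\rho}$ lies over a unique vertex $v\in V(H)$, the induced colouring is always $\nu_H(v)$ — so the colour data factors cleanly over $V(H)$ and never interacts with the edge combinatorics. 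With this in hand the inclusion--exclusion identity \eqref{eqn:char_function_moebius} over the product of partition lattices $\mathsf{Part}(d)$ applies verbatim, producing the $\mathsf{mult}(\lambda(e))$ coefficients; the only change from the single-signature case is that the final distributivity step over $v\in V(H)$ yields $\chi(\mathsf{deg}(H,v,\lambda),\nu_H(v))$ rather than $\chi(\mathsf{deg}(H,v,\lambda),s)$, because the contribution $\mathsf{Cont}(\vec\rho')$ at vertex $v$ uses the signature $\nu_H(v)$ of that vertex.

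The main obstacle I anticipate is purely bookkeeping rather than conceptual: one must check carefully that the free group actions of $\mathsf{Aut}(F,\nu_F)$ on $M_{(F,\nu_F),(H,\nu_H)}$ still have exactly the fibres of the (coloured version of the) map $S$ as orbits — this is where the argument of Theorem~\ref{Thm:special_case} about half-edges mapping bijectively and no isolated vertices is invoked, now in a way compatible with the colour-forcing $c_\phi(u)=\nu_H(\phi(u))$ — and that the weight-respecting identification of half-edges $h\in\he(H(d)/\sigma,v)$ with the summands of $\mathsf{deg}(H,v,\lambda)$ still matches up with the generalized $\chi(-,\nu_H(v))$ from \eqref{eqn:chi_generalization}. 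Since the colours are attached to $V(H)$ and the entire multi-edge/inclusion--exclusion apparatus operates on edges and half-edges over a \emph{fixed} $H$, these two layers decouple, and I expect the proof to read essentially as the concatenation of the proofs of Theorems~\ref{thm:k_edges_multiple_signatures} and~\ref{thm:zeta_at_most_k_edges}, substituting $s\mapsto\nu_H(v)$ at the final per-vertex distributivity step. Consequently I would write the proof by pointing to those two proofs and only spelling out the places where the substitution of $\nu_H(v)$ for $s$ occurs and where the colour-compatibility of the group actions must be re-verified.
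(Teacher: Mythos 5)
Your proposal matches the paper's own proof: the paper likewise derives the coloured analogues of the intermediate identities \eqref{eqn:zeta_intermediate_form} and \eqref{eqn:zeta_d_intermediate_form} by rerunning the $R$-map/free-action argument of Theorem~\ref{thm:k_edges_multiple_signatures} for homomorphisms surjective on vertices and edges, observes that $H(d)$, the fractured graphs and the map $S$ all carry canonical vertex-colourings so that $\mathsf{Aut}(F,\nu_F)$ acts freely with fibres of $S$ as orbits, and then lets the multi-edge inclusion--exclusion and distributivity steps of Theorem~\ref{thm:zeta_at_most_k_edges} go through verbatim with $s$ replaced by $\nu_H(v)$ at each vertex. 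Your plan, including the bookkeeping points you flag (colour-compatibility of the group actions and the weight-respecting half-edge identification), is exactly what the paper checks, so the proposal is correct and takes essentially the same route.
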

\begin{proof}
For the adaptation of \Cref{thm:zeta_at_most_k_edges} to the setting with multiple signatures, it suffices to derive analogous equations to \eqref{eqn:zeta_intermediate_form} and \eqref{eqn:zeta_d_intermediate_form}. The rest of the argument then follows verbatim the proof of \Cref{thm:zeta_at_most_k_edges}, with the additional observation that all auxiliary graphs we define come with a very natural vertex-colouring. 

For the analogous of \eqref{eqn:zeta_intermediate_form}, similar arguments employed in \Cref{thm:k_edges_multiple_signatures} give 
\begin{align}\label{eqn:zeta_intermediate_coloured}
\zeta_{\mathcal{S},k}(H,\nu_H) = \#\mathsf{Aut}(H, \nu_H)^{-1}  \cdot \sum_{(F, \nu_F)\in \mathcal{G}_k(\mathcal{S})} &\left(\#\mathsf{Aut}(F, \nu_F)^{-1}\cdot   \prod_{u\in V(F)}\nu_F(u)(d_F(u))\right) \nonumber \\ & \cdot \sum_{\phi \in M_{(F,\nu_F),(H,\nu_H)}}\left( (-1)^{|V(F)|-|V(H)|}\cdot \prod_{B \in R(\phi)} (|B|-1)!\right) \!\!\,,
\end{align}
where now $M_{(F, \nu_F), (H, \nu_F)}$ is the set of all surjective (on vertices and edges) homomorphisms $\phi : (F, \nu_F) \to (H, \nu_H)$ that preserve colours.

Following the proof of \Cref{thm:zeta_at_most_k_edges}, given $\phi \in M_{(F, \nu_F), (H, \nu_H)}$ we obtain a map $d_\phi: E(H) \to \mathbb{Z}_{>0}$ satisfying $\sum_e d_\phi(e) = k$ by
\begin{equation}
    d_\phi(e) = \# \{\widehat e \in E(F) : \phi_E(\widehat e) = e\}\,,
\end{equation}
where $\phi_E : E(F) \to E(H)$ is the map on edges induced by the graph homomorphism $\phi$. Given any map $d : E(H) \to \mathbb{Z}_{>0}$ with $\sum_e d(e)=k$, we define
\begin{equation} \label{eqn:zeta_d_multiple_signatures}
\zeta_{\mathcal{S}, k, d}(H, \nu_H) = \frac{1}{\#\mathsf{Aut}(H, \nu_H)} \cdot \sum_{\substack{\lambda: E(H) \to \mathcal{P}\\|\lambda(e)| = d(e) }} \ \  \prod_{e \in E(H)} \frac{\mathsf{mult}(\lambda(e))}{|\lambda(e)|!}  \prod_{v \in V(H)} \chi(\mathsf{deg}(H, v, \lambda), \nu_H(v))\,,\end{equation}
so that $\zeta_{\mathcal{S}, k}(H,\nu_H) = \sum_{d} \zeta_{\mathcal{S}, k, d}(H, \nu_H)$. Again, it suffices to show that $\zeta_{\mathcal{S}, k, d}(H, \nu_H)$ computes the sum of all contributions $((F, \nu_F), \phi)$ of the summation \eqref{eqn:zeta_intermediate_coloured} with $d_\phi = d$, which would finish the proof.

To this end, recall that $H(d)$ is the multigraph obtained by $H$ by replacing each edge $e$ of $H$ with $d(e)$ copies of the edge, that comes with a natural colouring $\nu_{H(d)}$. Let $M_{F, H}(d) = \{\phi \in M_{F, H} : d_{\phi} = d\}$ and $M_{(F, \nu_F), (H, \nu_H)}(d) = \{\phi \in M_{(F, \nu_F), (H, \nu_H)} : d_{\phi} = d\}$. We also define $\mathcal{M}(d) = \dot\bigcup_{F \in \mathcal{G}_k}M_{F, H}(d)$ and $\mathcal{C}(d) = \dot\bigcup_{(F, \nu_F) \in \mathcal{G}_{k}(\mathcal{S})}M_{(F, \nu_F), (H, \nu_H)}$. By the proof of \Cref{thm:zeta_at_most_k_edges} we can define a map $S : \mathcal{M}(d) \to \mathcal{F}^{\textup{adm}}(H(d))$ that is surjective and the fibers of which coincide with the orbits of $M_{F, H}(d)$ under the free group actions of $\mathsf{Aut}(F, \nu_F)$. Further observe, that there is a one-to-one correspondence between $\mathcal{M}(d)$ and $\mathcal{C}(d)$, so we can, by slightly abusing notation, define $S : \mathcal{C} \to \mathcal{F}^{\textup{adm}}(H(d))$ given by $S((F, \nu_F), \phi) = S(F, \phi)$. Again, by \Cref{thm:k_edges_multiple_signatures}, $S$ preserves the colours in the sense that if $S((F, \nu_F), \phi) = \vec\rho \in \mathcal{F}^{\textup{adm}}(H(d))$, then $(F, \nu_F) \cong (\fracture{H(d)}{\vec\rho}, \nu_{\vec\rho})$. Furthermore, if $S((F, \nu_F), \phi) = S((F', \nu_{F'}), \phi')$ then $(F, \nu_F) \cong (F', \nu_{F'})$ and $\phi' = \phi \circ \widetilde{\eta}$, for $\widetilde{\eta} \in \mathsf{Aut}(F, \nu_F)$. We have,
\[\zeta_{\mathcal{S}, k, d}(H, \nu_H) = \#\mathsf{Aut}(H, \nu_H)^{-1}\cdot\!\!\!\!\!\!\sum_{((F, \nu_F), \phi) \in \mathcal{M}(d)}\!\!\!\!\!\!\#\mathsf{Aut}(F, \nu_F)^{-1}\cdot\underbrace{\prod_{v\in V(H)}\left(\prod_{B \in \vec\rho(v)} \nu_H(v)(|B|) \cdot (-1)^{|\vec\rho(v)| -1} \cdot (|\vec\rho(v)|-1)! \right)}_{=:\mathsf{Cont}(\vec\rho)}\,,\]
where $\vec\rho = S((F, \nu_F), \phi)$.
Thus, we obtain the analogous expression to \eqref{eqn:zeta_d_intermediate_form}, given by
\[\zeta_{\mathcal{S}, k, d}(H, \nu_H) = \frac1{\#\mathsf{Aut}(H, \nu_H)\cdot\prod_{e \in E(H)}d(e)\,!}\cdot\sum_{\xi \in \mathcal{F}^{\textup{adm}}(H(d))}\mathsf{Cont}(\vec\rho)\cdot\underbrace{\sum_{((F, \nu_F), \phi) \in S^{-1}(\xi)}\#\mathsf{Aut}(F, \nu_F)^{-1}}_{= 1}.\]
\end{proof}

\begin{lemma}\label{lem:linear_type_sigs_equivalence}

For a signature $s$ (with $s(0)=1$), the following are equivalent:
\begin{enumerate}
    \item[a)] $\{s\}$ is of type $\mathbb{T}[\mathsf{lin}]$
    \item[b)] $s(n) = s(1)^n$ for all $n\in \mathbb{N}_{>0}$
    \item[c)] $\chi(\lambda, s)=0$ for all partitions $\lambda$ with $\mathsf{len}(\lambda) \geq 2$
\end{enumerate}
\end{lemma}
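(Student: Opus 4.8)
The statement is a three-way equivalence for a single signature $s$ with $s(0)=1$, and I would prove it as a cycle: (a) $\Rightarrow$ (b) $\Rightarrow$ (c) $\Rightarrow$ (a). The implication (c) $\Rightarrow$ (a) is essentially immediate, since $\mathbb{T}[\mathsf{lin}]$ requires $\chi(d,s)=0$ for all $d\geq 2$, and $\chi(d,s)=\chi(1^d,s)$ by the remark following \eqref{eqn:chi_generalization}, while $\mathsf{len}(1^d)=d\geq 2$. Similarly (a) $\Rightarrow$ (b): if $\{s\}$ is of type $\mathbb{T}[\mathsf{lin}]$, then by Lemma~\ref{lem:simplify_type_1_intro} (which applies because $s(0)=1$) we immediately get $s(n)=s(1)^n$ for all $n$. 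So the only substantive work is (b) $\Rightarrow$ (c).

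\textbf{The core step (b) $\Rightarrow$ (c).} Assume $s(n)=s(1)^n$ for all $n>0$ (and $s(0)=1=s(1)^0$, so in fact $s(n)=s(1)^n$ for all $n\geq 0$), and write $q:=s(1)$. Let $\lambda=\lambda_1+\dots+\lambda_\ell$ be a partition with $\ell=\mathsf{len}(\lambda)\geq 2$. Then for any set-partition $\sigma$ of $[\ell]$ and block $B\in\sigma$ we have $s\bigl(\sum_{i\in B}\lambda_i\bigr)=q^{\sum_{i\in B}\lambda_i}$, so $\prod_{B\in\sigma}s\bigl(\sum_{i\in B}\lambda_i\bigr)=q^{\sum_{i}\lambda_i}=q^{|\lambda|}$, independent of $\sigma$. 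Therefore
\[
\chi(\lambda,s)=q^{|\lambda|}\cdot\sum_{\sigma\in P(\ell)}(-1)^{|\sigma|-1}(|\sigma|-1)!\,.
\]
The alternating sum $\sum_{\sigma\in P(\ell)}(-1)^{|\sigma|-1}(|\sigma|-1)!$ equals $\sum_{\sigma\in P(\ell)}\mu(\bot_\ell,\sigma)$, which is the sum of $\mu(\bot_\ell,\cdot)$ over the entire partition lattice of an $\ell$-element set; this is a standard identity of the partition-lattice Möbius function and equals $0$ whenever $\ell\geq 2$ (and $1$ when $\ell=1$) — this is precisely the fact already invoked in the introduction via \cite[Section~3.7]{Stanley11} and reused in the proof of Corollary~\ref{cor:factor_classification_col}. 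Hence $\chi(\lambda,s)=0$, which is (c).

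\textbf{What is easy and what is the obstacle.} None of the three implications is genuinely hard: the whole lemma is a bookkeeping exercise assembling facts already established or cited in the excerpt. The only point requiring any care is making sure the degenerate cases line up — in particular that $\chi(1^d,s)$ really does coincide with the Definition~\ref{def:fingerprint_intro} fingerprint $\chi(d,s)$ (this is the ``one checks'' remark after \eqref{eqn:chi_generalization}, so I would just cite it), and that the hypothesis $s(0)=1$ is used exactly where needed (to invoke Lemma~\ref{lem:simplify_type_1_intro} in (a) $\Rightarrow$ (b), and to make $s(n)=q^n$ hold also at $n=0$ so the product-over-blocks computation is clean). If one wanted to avoid citing Lemma~\ref{lem:simplify_type_1_intro}, one could instead prove (c) $\Rightarrow$ (b) directly by induction on $n$: from $\chi(1^n,s)=0$ and $s(0)=1$, the top term of the alternating sum is $s(n)$ and all lower terms are determined by $s(1),\dots,s(n-1)$, which inductively equal $q,\dots,q^{n-1}$; solving for $s(n)$ and using the same vanishing identity $\sum_{\sigma\in P(n)}\mu(\bot,\sigma)=0$ gives $s(n)=q^n$. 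Either route works; I would pick whichever keeps the cycle shortest, and I do not anticipate any real obstacle here.
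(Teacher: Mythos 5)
Your proof is correct and follows essentially the same route as the paper: the cycle a) $\Rightarrow$ b) $\Rightarrow$ c) $\Rightarrow$ a), with b) $\Rightarrow$ c) via the factorisation $\prod_{B\in\sigma}s(\sum_{i\in B}\lambda_i)=s(1)^{|\lambda|}$ and the vanishing of $\sum_{\sigma}(-1)^{|\sigma|-1}(|\sigma|-1)!$ for $\ell\geq 2$, and c) $\Rightarrow$ a) by specialising to $\lambda=1^d$. One caution: do not prove a) $\Rightarrow$ b) by citing Lemma~\ref{lem:simplify_type_1_intro}, since that introductory statement is exactly the a) $\Leftrightarrow$ b) content of the present lemma and is stated there without proof, so the citation would be circular; your fallback induction on $n$ (solve for $s(n)$ from $\chi(n,s)=0$ using the induction hypothesis and the same M\"obius vanishing identity) is precisely the paper's argument for a) $\Rightarrow$ b) and should be the route you take.
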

\begin{proof}
We prove the implications separately:
\begin{itemize}
    \item a) $\implies$ b): This holds by induction: The claim is trivial for $n=1$. Moreover, for $n>1$, using that $\{s\}$ is of type $\mathbb{T}[\mathsf{lin}]$, we have
    \[0 = \chi(n,s)= \sum_{\sigma} (-1)^{|\sigma|-1}\cdot (|\sigma|-1)! \cdot \prod_{B\in \sigma}s(|B|)\,,\]
    where the sum is over all partitions of $[n]$. Next, setting $\top_n=\{[n]\}$ as the coarsest partition, and using the induction hypothesis, we obtain
    \[ s(n)= - \sum_{\sigma \neq \top_n} (-1)^{|\sigma|-1}\cdot (|\sigma|-1)! \cdot \prod_{B\in \sigma}s(1)^{|B|} = s(1)^{n}\cdot (- \sum_{\sigma \neq \top_n} (-1)^{|\sigma|-1}\cdot (|\sigma|-1)!) \,.\]
    Now recall that taking the sum over all evaluations of the M\"obius function over the partitions of $[n]$, for $n>1$, yields $0$ (see~\cite[Chapter 3]{Stanley11}), that is,
    \[\sum_{\sigma} (-1)^{|\sigma|-1}\cdot (|\sigma|-1)! = 0\,.\]
    Finally, observe that $(-1)^{|\top_n|-1}\cdot (|\top_n|-1)! =1$, and thus
    $(- \sum_{\sigma \neq \top_n} (-1)^{|\sigma|-1}\cdot (|\sigma|-1)!) = 1$.
    \item  b) $\implies$ c): Set $\ell=\mathsf{len}(\lambda)\geq 2$. From b) it follows that $\prod_{B \in \sigma} s(\sum_{i \in B} \lambda_i) = s(1)^{|\lambda|}$. Thus
    \[\chi(\lambda , s) = \sum_{\sigma\in \mathsf{Part}([\ell])} (-1)^{|\sigma|-1} (|\sigma|-1)! \cdot \prod_{B \in \sigma} s(\sum_{i \in B} \lambda_i) = s(1)^{|\lambda|} \cdot  \sum_{\sigma\in \mathsf{Part}([\ell])} (-1)^{|\sigma|-1} (|\sigma|-1)! = 0\,.\]
    \item c) $\implies$ a): We have to show that $\chi(d,s)=0$ for all $d\geq 2$. Set $\lambda = \underbrace{ 1+\dots+1}_{d\text{ times}}$ and observe \[\chi(d,s)=\chi(\lambda,s)=0\,.\] 
\end{itemize}
\end{proof}

For the following result, recall the simplification of the coefficient function $\zeta$ for the case of a unique signature $\mathcal{S}=\{s\}$ given in Remark~\ref{rem:simplified_single_signature}.
\begin{theorem} \label{Thm:Conj_2_false}
Let $\{s\}$ be of type $\mathbb{T}[\omega]$ and $d \geq 2$. Then for the complete graph $K_{d+1}$ on $d+1$ vertices, we have
\begin{equation} \label{eqn:zeta_complete_graph}
\zeta(K_{d+1}, d^2-1) = (-1)^{(d+1)(d-2)/2} \cdot A \cdot (s(2)-s(1)^2)^{(d+1)\cdot(d-1)} \text{ for some } A \in \mathbb{Q}_{>0}\,.
\end{equation}
In particular, $\zeta(K_{d+1}, d^2-1) \neq 0$ whenever $\{s\}$ is  of type $\mathbb{T}[\mathsf{\omega}]$. \qed
\end{theorem}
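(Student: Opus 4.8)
The strategy is to specialize the general formula for $\zeta(K_{d+1},k)$ from Theorem~\ref{thm:zeta_at_most_k_edges} (in the single-signature simplification of Remark~\ref{rem:simplified_single_signature}) to $k = d^2-1 = |E(K_{d+1})|\cdot\frac{2(d-1)}{d+1}$\,---\,wait, more carefully: $K_{d+1}$ has $\binom{d+1}{2}$ edges and we are evaluating at $k=d^2-1=(d+1)(d-1)$. Since $\{s\}$ is of type $\mathbb{T}[\omega]$, we know $\chi(e,s)=0$ for every integer $e\ge 3$, and the only surviving fingerprint values among $\chi(1,s)=1$, $\chi(2,s)=s(2)-s(1)^2$ are those of arguments $1$ and $2$. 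The key observation is that in the partition-valued sum
\[
\zeta(K_{d+1},d^2-1)=\frac{1}{\#\mathsf{Aut}(K_{d+1})}\sum_{\substack{\lambda:E(K_{d+1})\to\mathcal P\\ \sum_e|\lambda(e)|=d^2-1}}\prod_{e}\frac{\mathsf{mult}(\lambda(e))}{|\lambda(e)|!}\prod_{v}\chi(\mathsf{deg}(K_{d+1},v,\lambda),s),
\]
a term survives only if $\chi(\mathsf{deg}(K_{d+1},v,\lambda),s)\ne 0$ for \emph{every} vertex $v$, and by the generalized $\chi$ (equation~\eqref{eqn:chi_generalization}) together with the type assumption, one must show that this forces $\mathsf{deg}(K_{d+1},v,\lambda)$ to be a partition all of whose ``effective blocks'' have total size $\le 2$\,---\,i.e., each $\lambda(e)$ restricted to a vertex contributes only parts of size $1$ (plus controlled combinatorics at size $2$). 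I would first prove a lemma: for $\{s\}$ of type $\mathbb{T}[\omega]$, $\chi(\mu,s)\ne 0$ implies $\mathsf{len}(\mu)\le 2$ and, when $\mathsf{len}(\mu)=2$, that $\mu=1+1$ (so $\chi(\mu,s)=\chi(2,s)=s(2)-s(1)^2$), using the recursion/structure of $\chi$ exactly as in the proof of Lemma~\ref{lem:linear_type_sigs_equivalence}; when $\mathsf{len}(\mu)=1$, $\mu=\mu_1$ forces $\mu_1\le 2$ as well since $\chi(\mu_1,s)=\chi(\mu_1,s)$ is a single fingerprint.

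\textbf{Main combinatorial step.} Armed with this, each vertex $v$ of $K_{d+1}$ has degree $d$, so $\mathsf{deg}(K_{d+1},v,\lambda)=\bigcup_{e\ni v}\lambda(e)$ is a partition of $\sum_{e\ni v}|\lambda(e)|$ with at most $2$ parts. This is an extremely rigid constraint: summing $|\lambda(e)|$ over all $e$ gives $d^2-1$, while summing $\big(\sum_{e\ni v}|\lambda(e)|\big)$ over all $v$ double-counts each edge, giving $2(d^2-1)$; if each vertex-degree-partition has $\le 2$ parts and total $\ge$ (number of incident edges with $\lambda(e)\ne\emptyset$), a counting argument\,---\,comparing the forced total $2(d^2-1)$ against $d+1$ vertices each contributing a partition of length $\le 2$ and each $\lambda(e)$ being a nonempty partition (so $|\lambda(e)|\ge 1$) with $\binom{d+1}{2}$ edges\,---\,should pin down that every $\lambda(e)=1+1$ (length $2$, size $2$), which indeed gives $\sum_e |\lambda(e)|=2\binom{d+1}{2}=d(d+1)$\,---\,no, that is $d^2+d\ne d^2-1$. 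So the count must be subtler: the right reading is that $\sum_e|\lambda(e)|=d^2-1$ forces a \emph{near}-uniform assignment, and I would carefully show the only admissible $\lambda$ are (up to automorphism and sign bookkeeping) those where almost all edges get $\lambda(e)=1+1$ with a small, explicitly enumerable discrepancy forced by the numerics $d^2-1=2\binom{d+1}{2}-(d+1)$, i.e.\ exactly ``one fewer part per vertex than the all-$1+1$ assignment.'' Enumerating these admissible $\lambda$, each contributes $\pm\big(s(2)-s(1)^2\big)^{(d+1)(d-1)}$ times a positive rational (product of $\mathsf{mult}$'s, multinomials, $1/\#\mathsf{Aut}$), and the signs must be shown to be globally constant equal to $(-1)^{(d+1)(d-2)/2}$, so they do not cancel. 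Collecting everything gives \eqref{eqn:zeta_complete_graph} with $A\in\mathbb Q_{>0}$.

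\textbf{Nonvanishing and the final clause.} Once \eqref{eqn:zeta_complete_graph} is established with $A>0$, the final assertion ``$\zeta(K_{d+1},d^2-1)\ne 0$ whenever $\{s\}$ is not of type $\mathbb{T}[\mathsf{lin}]$'' follows by Lemma~\ref{lem:linear_type_sigs_equivalence}: $\{s\}$ not of type $\mathbb{T}[\mathsf{lin}]$ means (for $\{s\}$ of type $\mathbb{T}[\omega]$, which is the standing hypothesis of the theorem) that $\chi(2,s)=s(2)-s(1)^2\ne 0$\,---\,indeed type $\mathbb{T}[\omega]$ is \emph{defined} by $\chi(2,s)\ne 0$ and $\chi(e,s)=0$ for $e\ge 3$\,---\,so the factor $(s(2)-s(1)^2)^{(d+1)(d-1)}$ is nonzero, and the product of a nonzero power with a strictly positive rational $A$ and a sign is nonzero. (Strictly: the hypothesis ``$\{s\}$ is of type $\mathbb{T}[\omega]$'' already subsumes ``not of type $\mathbb{T}[\mathsf{lin}]$,'' so the last sentence is mostly a restatement emphasizing the contrapositive use in the classification of $\text{\sc{p-UnColHolant}}$.)

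\textbf{Expected main obstacle.} The hard part is the exact combinatorial enumeration and sign analysis in the middle paragraph: showing that the type-$\mathbb{T}[\omega]$ constraint ``every $\mathsf{deg}(K_{d+1},v,\lambda)$ has at most two parts'' together with the global edge-weight total $d^2-1$ isolates a clean, explicitly described family of admissible $\lambda$'s, computing the rational prefactor $A$ as a genuinely positive sum (no cancellation among the admissible $\lambda$), and verifying the sign is the stated $(-1)^{(d+1)(d-2)/2}$. The algebraic lemma about $\chi(\mu,s)=0$ for long $\mu$ is routine given Lemma~\ref{lem:linear_type_sigs_equivalence}; the global sign/positivity bookkeeping over the $K_{d+1}$-specific family is where the real work lies, and I would organize it by first reducing to a single ``canonical'' admissible $\lambda$ via the $\mathsf{Aut}(K_{d+1})$-action and then arguing all others in its orbit-type carry the same sign.
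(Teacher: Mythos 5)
Your plan follows the paper's general outline (specialize the formula of Theorem~\ref{thm:zeta_at_most_k_edges}, kill most $\lambda$'s using a vanishing property of the generalized $\chi$, then control sign and positivity of the survivors), but the auxiliary lemma you base everything on is false, and this derails the combinatorial core. For a single-part partition $\mu=(m)$ the generalized fingerprint is $\chi((m),s)=s(m)$, which is \emph{not} the fingerprint $\chi(m,s)=\chi(1^m,s)$; type $\mathbb{T}[\omega]$ forces the latter to vanish for $m\geq 3$ but says nothing about $s(m)$. So your claim ``$\chi(\mu,s)\neq 0$ implies $\mathsf{len}(\mu)\leq 2$ (and single parts have size $\leq 2$)'' is wrong; in fact, since every vertex of $K_{d+1}$ has degree $d$, each $\mathsf{deg}(K_{d+1},v,\lambda)$ has length at least $d\geq 3$ (for $d\geq 3$), so your criterion would annihilate \emph{every} term and give $\zeta=0$, contradicting the statement. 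You sensed the numerics failing (your aside about $d(d+1)\neq d^2-1$) and retreated to an unproved ``near-uniform assignment with small discrepancy'' picture, but that is not the right family either. The correct constraint, which the paper proves by a recursion for $\chi$ (Lemma~\ref{Lem:chi_recursion}) and an induction (Proposition~\ref{Prop:chi_vanishing_property}), is the inequality $\chi(\lambda,s)=0$ whenever $|\lambda|<2\,\mathsf{len}(\lambda)-2$, together with the boundary evaluation $\chi(\lambda,s)=a_\lambda\,(s(2)-s(1)^2)^{\mathsf{len}(\lambda)-1}$ with $a_\lambda\in\mathbb{Z}_{>0}$ when $|\lambda|=2\,\mathsf{len}(\lambda)-2$. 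Feeding this into the averaging/handshake argument (total vertex weight $2(d^2-1)$ over $d+1$ vertices, so average $2d-2$, and length at each vertex at least $d$) forces every surviving $\lambda$ to assign a \emph{single} part $\lambda(e)=(d(e))$ to each edge with weighted degree exactly $2d-2$ at every vertex — a completely different family from the ``almost all $\lambda(e)=1+1$'' one you propose.

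Two further ingredients of the actual proof are absent from your proposal and would not follow from your setup. First, the uniform sign: for single-part partitions $\mathsf{mult}((m))=(-1)^{m-1}(m-1)!$, so every surviving term carries the sign $(-1)^{\sum_e(d(e)-1)}=(-1)^{d^2-1-\binom{d+1}{2}}=(-1)^{(d+1)(d-2)/2}$ automatically — no orbit-by-orbit sign bookkeeping or cancellation analysis is needed once the correct family is identified, whereas in your framework this is flagged as the unresolved ``real work.'' Second, positivity of $A$ requires exhibiting at least one admissible weighting $d(e)$ with all weighted degrees equal to $2d-2$; the paper does this by taking a $(d-2)$-regular subgraph of $K_{d+1}$ (which exists since $(d-2)(d+1)$ is even, via Erdős–Gallai) and weighting its edges $2$ and the rest $1$. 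Your proposal never addresses non-emptiness of its (incorrectly identified) family, so even granting your lemma, the claim $A>0$ would remain unproved. The final clause (nonvanishing when $\{s\}$ is not of type $\mathbb{T}[\mathsf{lin}]$, i.e.\ $s(2)-s(1)^2\neq 0$ under the standing $\mathbb{T}[\omega]$ hypothesis) is handled correctly in your write-up, but it is downstream of the part that is broken.
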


To prove the theorem, we need some more information about the function $\chi$. The first is the following recursive formula.

\begin{lemma} \label{Lem:chi_recursion}
Let $\lambda = a + \lambda_1 + \ldots + \lambda_m$ be a partition of length $m+1$ for $a \geq 2$ and consider a splitting $a  = a_1 + a_2$ with $a_1, a_2 \geq 1$. Then we have
\begin{equation}  \label{eqn:chi_recursion}
\chi(\lambda, s) = \chi(a_1 + a_2 + \lambda_1 + \ldots + \lambda_m, s) + \sum_{S \subseteq [m]} \chi(a_1 \cup \lambda_S, s) \cdot \chi(a_2 \cup \lambda_{S^c}, s)\,,
\end{equation}
where $S^c = [m] \setminus S$ is the complement of $S$ in $[m]=\{1, \ldots, m\}$ and
\[
\lambda_S = \sum_{i \in S} \lambda_i \text{ and } \lambda_{S^c} = \sum_{j \in S^c} \lambda_j\,.
\]
\end{lemma}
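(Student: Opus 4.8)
The statement to prove is the recursive formula \eqref{eqn:chi_recursion} for $\chi(\lambda, s)$ when one part $a$ of $\lambda$ is split as $a = a_1 + a_2$. The natural approach is to work directly from the definition \eqref{eqn:chi_generalization}. Write $\lambda = a + \lambda_1 + \ldots + \lambda_m$, so $\lambda$ has length $m+1$; index its summands by the set $\{0, 1, \ldots, m\}$ where $0$ labels the part $a$. Then
\[
\chi(\lambda, s) = \sum_{\sigma \in P(\{0,1,\ldots,m\})} (-1)^{|\sigma|-1}(|\sigma|-1)! \cdot \prod_{B \in \sigma} s\!\left(\textstyle\sum_{i \in B} \lambda_i\right),
\]
with the convention $\lambda_0 = a$. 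On the right-hand side of \eqref{eqn:chi_recursion}, consider the partition $\lambda' = a_1 + a_2 + \lambda_1 + \ldots + \lambda_m$ of length $m+2$, indexed by $\{0', 0'', 1, \ldots, m\}$ with $\lambda_{0'} = a_1$, $\lambda_{0''} = a_2$. I would expand $\chi(\lambda', s)$ by its definition and split the sum over $\sigma' \in P(\{0', 0'', 1, \ldots, m\})$ according to whether $0'$ and $0''$ lie in the same block of $\sigma'$ or in different blocks.

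**The two cases.** In the first case ($0'$ and $0''$ in the same block), merging that block's contribution shows these terms sum exactly to the first term $\chi(a_1 + a_2 + \lambda_1 + \ldots + \lambda_m, s)$ — wait, more precisely, they give $\chi(\lambda, s)$ itself, since $s(\sum)$ only sees $a_1 + a_2 = a$ and the Möbius-type weights match the partition lattice of $\{0, 1, \ldots, m\}$. This is the subtle point: the "diagonal" part of $\chi(\lambda')$ reconstitutes $\chi(\lambda)$. In the second case ($0'$ and $0''$ in different blocks $B', B''$ respectively), I would reorganize the sum by choosing which of the remaining indices $S^c \subseteq [m]$ go into $B'$'s "side" versus $B''$'s side; but one must be careful because a block not containing $0'$ or $0''$ can go to either side freely, which is where an extra factor or a further decomposition appears. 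The clean way is: for each $\sigma'$ with $0' \in B'$, $0'' \in B''$, $B' \neq B''$, let $S \subseteq [m]$ be the indices $i$ with $\{i\} \cup B'\text{-part}$... actually the correct bookkeeping is to let $U \subseteq [m]$ be the union of all blocks of $\sigma'$ that, together with the block $B'$ containing $0'$, form a sub-partition, and similarly the complement. Collecting the Möbius factors via the multiplicativity of $\mu$ over blocks (Definition~\ref{def:mobius}) and the identity $(|\sigma'|-1)! = $ (sum over ways to linearly combine the two sub-partition sizes) — this last combinatorial identity, $(-1)^{p+q-1}(p+q-1)!$ expanded appropriately — should collapse the second case precisely to $\sum_{S \subseteq [m]} \chi(a_1 \cup \lambda_S, s)\cdot \chi(a_2 \cup \lambda_{S^c}, s)$. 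Here the crucial algebraic fact is the "convolution" identity for Möbius functions of partition lattices, essentially that $\chi$ behaves like a cumulant/connected correlation function, so products $\chi \cdot \chi$ arise from the two "connected components" attached to $0'$ and $0''$.

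**Main obstacle.** The hard part is the bookkeeping in the second case: correctly matching the weight $(-1)^{|\sigma'|-1}(|\sigma'|-1)!$ of a partition $\sigma'$ of the $(m+2)$-element set against the \emph{product} of two such weights for partitions of smaller sets, and verifying that each $\sigma'$ with $0', 0''$ separated is counted exactly once when we range over $S \subseteq [m]$ together with partitions of $\{0'\}\cup\lambda_S$-indices and $\{0''\}\cup\lambda_{S^c}$-indices. The key lemma needed is the factorization $\mu(\bot,\sigma') = \mu(\bot,\sigma'|_{\text{side }0'}) \cdot \mu(\bot,\sigma'|_{\text{side }0''})$ \emph{only after} we have used the relation between $(|\sigma'|-1)!$ and the product $(p-1)!(q-1)!$ — but these don't factor naively, so one instead argues at the level of $\chi$ directly: expand $\chi(\lambda', s)$, and recognize the "separated" terms as the coefficient extraction in a product of two generating series. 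An alternative, possibly cleaner, route is to use the known identity $\chi(\lambda, s) = \prod$-type expression (Lemma~\ref{lem:just_distributivity} gave $\chi(\rho, s) = \prod_i \chi(|B_i|, s)$ for \emph{set} partitions, but here we have integer partitions and the situation is genuinely the "connected" version) combined with exponential-formula / Möbius-inversion reasoning: if one defines $\widetilde{s}$ via $s(n) = \sum_{\sigma \in P([n])} \prod_{B} \widetilde{s}(|B|)$ then $\chi(\lambda, s) = \widetilde{s}(|\lambda|)$-analogue and the recursion becomes a linearity statement about $\widetilde{s}$. I would first try the direct case-split since it is the most self-contained, and fall back on the exponential-formula interpretation if the $(|\sigma'|-1)!$ manipulation becomes unwieldy.
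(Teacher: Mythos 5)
Your setup matches the paper's: expand $\chi(a_1+a_2+\lambda_1+\ldots+\lambda_m,s)$ from the definition and split the sum according to whether the two indices carrying $a_1$ and $a_2$ lie in the same block or in different blocks; the same-block terms indeed reassemble into $\chi(\lambda,s)$, exactly as you say. But for the separated case — which you yourself flag as the main obstacle — you do not actually close the argument, and the difficulty you identify is real: for an unordered partition $\sigma'$ with the two indices in different blocks, the weight $(-1)^{|\sigma'|-1}(|\sigma'|-1)!$ does not factor as a product of the two weights $(-1)^{p-1}(p-1)!\cdot(-1)^{q-1}(q-1)!$, and a block containing neither index has no canonical side, so your proposed bookkeeping (choosing $S\subseteq[m]$ and assigning free blocks to one of the two sides) would count each $\sigma'$ with an unspecified multiplicity. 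The paper resolves precisely this point with a concrete device you are missing: it rewrites the sum as a sum over \emph{ordered} set partitions $\vec\sigma=(B_1,\ldots,B_{|\sigma|})$ anchored so that the block containing the index of $a_1$ comes first, each weighted only by $(-1)^{|\sigma|-1}\prod_j s(\cdot)$; the factor $(|\sigma|-1)!$ is absorbed as the number of orderings of the remaining blocks. An ordered partition then splits \emph{uniquely} at the position of the block containing the index of $a_2$ into a prefix and a suffix, and these are exactly the ordered partitions appearing in $\chi(a_1\cup\lambda_S,s)$ and $\chi(a_2\cup\lambda_{S^c},s)$, with the sign bookkeeping $(|\vec\sigma_S|-1)+(|\vec\sigma_{S^c}|-1)=(|\vec\sigma|-1)-1$ producing the factor $-1$ that turns the minus sign in the expansion of $\chi(a_1+a_2+\ldots,s)$ into the plus sign of the claimed recursion — a sign you also leave untracked.

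Your fallback suggestions (coefficient extraction in a product of generating series, or an exponential-formula/cumulant interpretation in which $\chi$ is the connected functional of $s$) point in a legitimate direction — the identity is indeed of moment–cumulant type and could likely be proved that way — but you only name these routes without carrying out either the multiplicity count or the sign verification. As written, the proposal establishes the easy half of the case-split and leaves the decisive combinatorial step as a plan rather than a proof, so there is a genuine gap.
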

\begin{proof}
Let $\mu = a_1 + a_2 + \lambda_1 + \ldots + \lambda_m$, then by definition we have
\begin{equation} \label{eqn:chi_mu}
\chi(\mu, s) = \sum_{\sigma \in \mathsf{Part}(m+2)} (-1)^{|\sigma|-1} (|\sigma|-1)! \cdot \prod_{B \in \sigma} s(\sum_{i \in B} \mu_i)\,.
\end{equation}
For the partitions $\sigma$ there are two possibilities:
\begin{itemize}
    \item If the indices $1,2$ associated to the summands $a_1, a_2$ are in the \emph{same} block $B$ of $\sigma$, then $\sigma$ can equivalently be described via a set partition $\overline \sigma$ for the integer partition $a + \lambda_1 + \ldots + \lambda_m=\lambda$, where the indices for $a_1, a_2$ are combined in a single index for the summand $a$. The sum over all such $\sigma$ in \eqref{eqn:chi_mu} yields $\chi(\lambda, s)$.
    \item For all other $\sigma$, the indices $1,2$ are in different blocks of the partition $\sigma$. To finish the argument, we observe that we can interpret the expression \eqref{eqn:chi_mu} as the summation over \emph{ordered} set partitions $\vec \sigma = (B_1, \ldots, B_{|\sigma|})$ satisfying $1 \in B_1$, weighted by 
    \begin{equation} \label{eqn:ordered_set_part_weight}
        (-1)^{|\sigma|-1} \prod_{j=1}^{|\sigma|} s(\sum_{i \in B_j} \mu_i)\,.
    \end{equation}
    Indeed, for an unordered $\sigma$ there is precisely one block $B \in \sigma$ with $1 \in B$, which becomes $B_1$, and there are $(|\sigma|-1)!$ possibilities of ordering the other blocks. Given $\vec \sigma$, let $i_0 \geq 2$ be the unique index such that $2 \in B_{i_0}$ and let $S = (\bigcup_{i=1}^{i_0-1} B_i) \setminus \{1\}$ and $S^c = (\bigcup_{i=i_0}^{|\sigma|} B_i) \setminus \{2\}$. Then $\vec \sigma$ decomposes uniquely into
    \[
    \vec \sigma = \underbrace{(B_1, \ldots, B_{i_0-1})}_{=\vec \sigma_S} \cup \underbrace{(B_{i_0}, \ldots, B_{|\sigma|})}_{=\vec \sigma_{S^c}}\,.
    \]
    The ordered set partitions $\vec \sigma_S, \vec \sigma_{S^c}$ are exactly terms appearing in the corresponding formulas for $\chi(a_1 \cup \lambda_S, s)$ and $\chi(a_2 \cup \lambda_{S^c}, s)$. It's straightforward to check that the product of their corresponding weights \eqref{eqn:ordered_set_part_weight} gives $(-1)$ times the weight \eqref{eqn:ordered_set_part_weight}, since $(|\vec \sigma_S|-1)+(|\vec \sigma_{s^c}|-1)=(|\vec \sigma|-1)-1$. Thus summing over these remaining $\sigma$ gives the missing term in \eqref{eqn:chi_recursion}. \qedhere
\end{itemize}
\end{proof}

\begin{proposition} \label{Prop:chi_vanishing_property}
Let $\{s\}$ be of type $\mathbb{T}[\omega]$, then
we have
\begin{enumerate}
    \item[a)] $\chi(\lambda, s) = 0$ if $|\lambda| < 2\cdot  \mathsf{len}(\lambda)-2$,
    \item[b)] $\chi(\lambda, s) = a_\lambda \cdot (s(2)-s(1)^2)^{\mathsf{len}(\lambda)-1}$ for some $a_\lambda \in \mathbb{Z}_{>0}$ if $|\lambda| = 2 \cdot \mathsf{len}(\lambda)-2$.
\end{enumerate}
\end{proposition}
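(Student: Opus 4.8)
The plan is to first establish an explicit combinatorial description of $\chi(\lambda,s)$ that holds for \emph{every} signature $\{s\}$ of type $\mathbb{T}[\omega]$, and then to read off parts (a) and (b) from a short connectivity count. Throughout write $\ell=\mathsf{len}(\lambda)$, $\lambda=\lambda_1+\dots+\lambda_\ell$, and $\beta:=s_2-s_1^2=\chi(2,s)$, which is nonzero by definition of $\mathbb{T}[\omega]$.

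\smallskip

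\noindent\emph{Step 1 (diagrammatic formula for $\chi(\lambda,s)$).} Since $s(0)=1$, the numbers $\chi(d,s)$ satisfy $\log\bigl(\sum_{n\ge 0}s(n)x^n/n!\bigr)=\sum_{n\ge 1}\chi(n,s)x^n/n!$; hence $\{s\}$ being of type $\mathbb{T}[\omega]$ is equivalent to the identity $\log\bigl(\sum_{n\ge 0}s(n)x^n/n!\bigr)=s_1x+\tfrac{\beta}{2}x^2$, i.e.\ $s(n)$ is the $n$-th moment of a Gaussian random variable $X$ with mean $s_1$ and variance $\beta$. Because $(-1)^{|\sigma|-1}(|\sigma|-1)!$ is the Möbius function $\mu(\bot,\sigma)$ of the partition lattice, and $\mathbb{E}[\prod_{i\in B}X^{\lambda_i}]=\mathbb{E}\bigl[X^{\sum_{i\in B}\lambda_i}\bigr]=s\bigl(\sum_{i\in B}\lambda_i\bigr)$, the definition of $\chi(\lambda,s)$ identifies it with the joint cumulant of $X^{\lambda_1},\dots,X^{\lambda_\ell}$. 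I would then invoke the classical ``cumulants are connected moments'' (Wick/Isserlis) expansion for polynomials in a Gaussian to obtain the formula I want: $\chi(\lambda,s)$ equals the sum over all ways $D$ of partitioning the $|\lambda|$ half-edges --- of which $\lambda_i$ are attached to a vertex $i\in[\ell]$ --- into singletons and pairs, subject to the requirement that the multigraph on vertex set $[\ell]$ whose edges are the pairs joining two \emph{distinct} vertices be connected; each such $D$ contributes $s_1^{\,\mathrm{sg}(D)}\beta^{\,\mathrm{pr}(D)}$, where $\mathrm{sg}(D)$ and $\mathrm{pr}(D)$ count the singletons and the pairs of $D$. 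To keep the paper self-contained and probability-free, this identity can instead be proved by induction on $|\lambda|$: the base case $\lambda=1^{\ell}$ is $\chi(1^{\ell},s)=\chi(\ell,s)$, which equals $0$ for $\ell\ge 3$ by $\mathbb{T}[\omega]$ and $\beta$ for $\ell=2$, matching the diagram count; and the inductive step is supplied, summand for summand, by the splitting recursion of Lemma~\ref{Lem:chi_recursion} applied with $a=a_1+a_2$, which is exactly the recursion obeyed by the connected-diagram count when a vertex of degree $a$ is split into vertices of degrees $a_1,a_2$.

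\smallskip

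\noindent\emph{Step 2 (counting).} Given this description, $\mathrm{sg}(D)+2\,\mathrm{pr}(D)=|\lambda|$, and connectedness of the pair-multigraph on $[\ell]$ forces $\mathrm{pr}(D)\ge\ell-1$, so any contributing diagram satisfies $|\lambda|\ge 2(\ell-1)=2\,\mathsf{len}(\lambda)-2$; when $|\lambda|<2\,\mathsf{len}(\lambda)-2$ no such $D$ exists and $\chi(\lambda,s)=0$, which is (a). When $|\lambda|=2\,\mathsf{len}(\lambda)-2$, equality must hold throughout: $\mathrm{sg}(D)=0$, $\mathrm{pr}(D)=\ell-1$, and the $\ell-1$ pairs all join distinct vertices and form a connected graph on $[\ell]$ with $\ell-1$ edges, i.e.\ a spanning tree. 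Hence $\chi(\lambda,s)=a_\lambda\cdot\beta^{\,\ell-1}=a_\lambda\cdot(s_2-s_1^2)^{\mathsf{len}(\lambda)-1}$ with $a_\lambda\in\mathbb{Z}_{\ge 0}$ the number of such tree-diagrams, and $a_\lambda>0$ because a realizing diagram exists as soon as some tree on $\ell$ vertices has degree sequence $(\lambda_1,\dots,\lambda_\ell)$, which holds by the classical characterization of tree degree sequences since $\lambda_i\ge 1$ and $\sum_i\lambda_i=2\ell-2$.

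\smallskip

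\noindent\emph{Main obstacle.} The technical heart is making Step~1 airtight. On the probabilistic route this is the bookkeeping of the Wick expansion for cumulants of monomials in a \emph{non-centered} Gaussian: one uses multilinearity of joint cumulants together with the vanishing of any joint cumulant of order $\ge 2$ having a constant argument to discard the constant terms in $X^{\lambda_i}=\sum_j\binom{\lambda_i}{j}s_1^{\lambda_i-j}\beta^{j/2}Z^j$, reducing everything to Isserlis' theorem for the standard normal $Z$. On the inductive route it is instead the ``cumulant product formula'' bijection --- classifying connected diagrams on $\lambda$ according to how the split vertex is (or is not) a cut vertex separating its two leg-groups --- together with organizing the induction (primary on $|\lambda|$, secondary a downward induction on $\mathsf{len}(\lambda)$ with $\lambda=1^{\ell}$ innermost) so that every $\chi$-term appearing on the right-hand side of Lemma~\ref{Lem:chi_recursion} is strictly earlier in the induction order than $\chi(\lambda,s)$. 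Once Step~1 is in place, Step~2 is entirely routine.
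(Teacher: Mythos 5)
Your proposal is correct, and it reaches the statement by a genuinely different (and stronger) route than the paper. The paper proves a) and b) directly, with no closed form in hand: a double induction on $(|\lambda|,\mathsf{len}(\lambda))$ driven by the splitting recursion of Lemma~\ref{Lem:chi_recursion}, where vanishing and positivity are extracted from the bookkeeping of the quantities $\Delta_S,\Delta_{S^c}$. You instead first establish a complete ``connected Wick diagram'' expansion of $\chi(\lambda,s)$ for every $\mathbb{T}[\omega]$ signature --- interpreting $s(n)$ as moments of a formal Gaussian with mean $s_1$ and variance $\beta=s_2-s_1^2$ (harmless even for complex $\beta$, since the Isserlis-type identities are polynomial identities in $s_1,\beta$), or, probability-free, by checking that the connected-diagram count obeys exactly the recursion of Lemma~\ref{Lem:chi_recursion} (the two-component/contraction argument you sketch does work: a diagram connected after merging the two leg-groups is either connected before merging or splits into precisely two components separating them) and running the same induction order as the paper's proof of a), i.e.\ primary on $|\lambda|$ and secondary downward on $\mathsf{len}(\lambda)$ with base case $1^\ell$ given directly by the $\mathbb{T}[\omega]$ hypothesis. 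Parts a) and b) then fall out of a two-line count ($\mathsf{sg}(D)+2\,\mathsf{pr}(D)=|\lambda|$ and connectivity forcing at least $\mathsf{len}(\lambda)-1$ inter-vertex pairs), with positivity of $a_\lambda$ from the classical existence of a tree with degree sequence $\lambda$ when $\sum_i\lambda_i=2\,\mathsf{len}(\lambda)-2$ and all $\lambda_i\geq 1$. What your route buys beyond the paper's: an explicit combinatorial meaning for $a_\lambda$ as the number of spanning-tree half-edge matchings, which evaluates to $(\mathsf{len}(\lambda)-2)!\cdot\prod_i\lambda_i$ and thereby settles the formula conjectured in the Remark after \Cref{Prop:chi_vanishing_property} (with the $d$ there read as $\mathsf{len}(\lambda)$; e.g.\ $\chi(2+1+1,s)=2\beta^2$). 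What the paper's route buys: it never needs the full expansion, only vanishing and sign information, so it is shorter to make airtight. The one item you still owe in a full write-up is the summand-by-summand verification that the diagram count satisfies Lemma~\ref{Lem:chi_recursion} --- you correctly flag this as the crux, and it does go through.
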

\begin{remark}
Computer experiments suggest that the formula of $a_\lambda$ is
\begin{equation}
a_\lambda = (d-2)! \cdot \prod_{\lambda_i \in \lambda} \lambda_i\,.
\end{equation}
We would be interested in seeing a proof of this, but content ourselves in proving the weaker statement (that $a_\lambda$ is a positive integer) below.
\end{remark}
\begin{proof}[Proof of Proposition \ref{Prop:chi_vanishing_property}]
a) We prove the statement by a double-induction: the outer induction is on increasing $d=|\lambda|$, whereas the inner induction is on decreasing $\ell = \mathsf{len}(\lambda)$ starting with the maximal case of $\lambda = 1 + 1 + \ldots + 1 = 1^d$.

The cases $d=0,1,2$ are empty, since any partition satisfies $|\lambda|\geq \mathsf{len}(\lambda)$. For $d=3$ the only case is $\lambda=1+1+1$ where indeed $\chi(1+1+1,s)=0$ since $s$ is assumed to be of type $\mathbb{T}[\omega]$. This concludes the induction start on $d$.
Thus assume the statement is proven for $|\lambda|=0,1, \ldots, d-1$. The downward-induction start on $\mathsf{len}(\lambda)$ is given by observing $\chi(1+1+ \ldots + 1,s)=\chi(1^d, s)=0$, again since $s$ is of type $\mathbb{T}[\omega]$. 

Hence let $\lambda$ be a partition with $|\lambda|=d$ and $\mathsf{len}(\lambda)=\ell<d$, satisfying $d - 2 \ell +2<0$, and assume the claim is proven for all partitions of degree less than $d$, or equal to $d$ but length greater than $\ell$. We can write $\lambda = a + \lambda_1 + \ldots + \lambda_{\ell-1}$ with $a=a_1+a_2 \geq 2$, since $\mathsf{len}(\lambda)<d$ forces $\lambda$ to have at least one part greater than $1$. Then by Lemma \ref{Lem:chi_recursion} we know
\begin{equation}  \label{eqn:chi_recursion_recall}
\chi(\lambda, s) = \underbrace{\chi(a_1 + a_2 + \lambda_1 + \ldots + \lambda_{\ell-1}, s)}_{=0} + \sum_{S \subseteq [\ell-1]} \chi(a_1 \cup \lambda_S, s) \cdot \chi(a_2 \cup \lambda_{S^c}, s)\,,
\end{equation}
where the first vanishing follows by induction since 
\[
|a_1+a_2+\lambda_1+\ldots+\lambda_{\ell-1}|=d \text{ and }\mathsf{len}(a_1+a_2+\lambda_1+\ldots+\lambda_{\ell-1})=\ell+1\,.
\]
On the other hand, for any $S \subseteq [\ell-1]$ we have $|a_1 \cup \lambda_S|, |a_2 \cup \lambda_{S^c}|<d$ and
\begin{equation} \label{eqn:Delta_S_Delta_Sc}
\underbrace{|a_1 \cup \lambda_S| - 2 \cdot \mathsf{len}(a_1 \cup \lambda_S)+2}_{=:\Delta_S} + \underbrace{|a_2 \cup \lambda_{S}^c| - 2\cdot \mathsf{len}(a_2 \cup \lambda_{S^c})+2}_{=:\Delta_{S^c}} = |\lambda| - 2 \cdot \mathsf{len}(\lambda)-2+4\,,
\end{equation}
which is a negative number by the assumption $d - 2 \ell +2<0$. Thus one of the two summands $\Delta_S, \Delta_{s^c}$ must be negative, and hence the corresponding term $\chi(a_1 \cup \lambda_S, s)$ or $\chi(a_2 \cup \lambda_{S^c}, s)$ vanishes by induction. This shows the vanishing of each summand in \eqref{eqn:chi_recursion_recall} and finishes the proof.

\noindent b) We prove the statement by increasing induction on $d=|\lambda|$, with the base case $\lambda=1+1$ indeed satisfying $\chi(1+1,s)=1 \cdot (s(2)-s(1)^2)$. 

For the induction step, let $\lambda$ be a partition with $|\lambda| \geq 3$ satisfying $|\lambda|-2\cdot \mathsf{len}(\lambda)+2 = 0$. Again we can write it as $\lambda = a+\lambda_1 + \ldots + \lambda_{\ell-1}$ where we decompose $a=a_1+a_2$ for the particular choice $a_1=1$, $a_2=a-1$. Applying Lemma \ref{Lem:chi_recursion}, we obtain exactly the same equation \eqref{eqn:chi_recursion_recall}. Note that the indicated vanishing there still holds by the same argument as before and the proven case a), since $d-2(\ell+1)+2<0$. 
Repeating also the second part of the argument yields the equation \eqref{eqn:Delta_S_Delta_Sc}, which vanishes by assumption. Thus for each $S$ there are two cases
\begin{itemize}
    \item If one of the numbers $\Delta_S, \Delta_{S^c}$ is negative, we have that the corresponding term $\chi(a_1 \cup \lambda_S, s)$ or $\chi(a_2 \cup \lambda_{S^c}, s)$ vanishes by part a).
    \item If both numbers $\Delta_S, \Delta_{S^c}$ are non-negative, then they must be equal to zero since they sum up to zero. In that case, by induction we have
    \[
    \chi(a_1 \cup \lambda_S, s) = a_{a_1 \cup \lambda_S} \cdot (s(2)-s(1)^2)^{|S|} \text{ and }\chi(a_2 \cup \lambda_{S^c}, s) = a_{a_2 \cup \lambda_{S^c}} \cdot (s(2)-s(1)^2)^{|S^c|}
    \]
    for positive integers $a_{a_1 \cup \lambda_S}$ and $a_{a_2 \cup \lambda_{S^c}}$.
\end{itemize}
Since $|S|+|S^c|=\mathsf{len}(\lambda)-1$, we see that all non-zero terms in \eqref{eqn:chi_recursion_recall} are positive multiples of $(s(2)-s(1)^2)^{\mathsf{len}(\lambda)-1}$. The only missing step is showing that there is at least one such term (satisfying $\Delta_S=\Delta_{S^c}=0$). For this we just note that the condition $|\lambda|=2 \cdot \mathsf{len}(\lambda)-2$ means that $\lambda$ must have some parts smaller than $2$ (i.e. equal to $1$), e.g. $\lambda_{\ell-1}=1$. Then we choose $S=\{\ell-1\}$, so that $a_1 \cup \lambda_S = 1+1$ which satisfies
\[
\Delta_S = |1+1|-2 \cdot \mathsf{len}(1+1) +2 = 0\,.
\]
This forces $\Delta_{S^c}=0$ by the vanishing of \eqref{eqn:Delta_S_Delta_Sc} and hence there is at least one surviving term in \eqref{eqn:chi_recursion_recall}, finishing the proof.
\end{proof}

\begin{proof}[Proof of Theorem \ref{Thm:Conj_2_false}]
Recall the formula \eqref{eqn:Theorem_general} :
\begin{equation} \zeta_{\mathcal{S},k}(H,\nu_H)=\frac{1}{\#\mathsf{Aut}(H, \nu_H)} \cdot \sum_{\substack{\lambda: E(H) \to \mathcal{P}\\\sum_e |\lambda(e)| = k }}\ \  \prod_{e \in E(H)} \frac{\mathsf{mult}(\lambda(e))}{|\lambda(e)|!}  \prod_{v \in V(H)} \chi(\mathsf{deg}(H, v, \lambda), \nu_H(v))\,,
\end{equation}
where the partitions $\mathsf{deg}(H, v, \lambda)$ are defined as
\[
\mathsf{deg}(H, v, \lambda) = \bigcup_{\substack{e \in E(H):\\ e \text{ incident to }v}} \lambda(e)\,. 
\]    

Let us apply this formula for $k=d^2-1$, $\mathcal{S}=\{s\}$, $H=K_{d+1}$, and the colouring $\nu_H$ mapping all vertices of $K_{d+1}$ to $s$. Any $\lambda$ in this formula induces the partition $\mathsf{deg}(K_{d+1}, v, \lambda)$ at the vertices $v$ of $K_{d+1}$. Since the total sum of weights of the $\lambda(e)$ is $d^2-1$, we see that, by a weighted version of the Handshaking-Lemma,
\begin{equation}
\sum_{v \in V(K_{d+1})} |\mathsf{deg}(K_{d+1}, v, \lambda)| = 2 \cdot \sum_{e \in E(K_{d+1})} |\lambda(e)| = 2(d^2-1)\,.
\end{equation}
Dividing by the number $d+1$ of vertices, we see that the \emph{average} weight $|\mathsf{deg}(K_{d+1}, v, \lambda)|$ equals $2d-2$. If any vertex $v_0$ had a \emph{strictly higher} weight than that, then for balancing reasons there would have to be another vertex $v$ with $|\mathsf{deg}(K_{d+1}, v, \lambda)|<2d-2$. Then Proposition \ref{Prop:chi_vanishing_property} a) implies that the corresponding factor $\chi(\mathsf{deg}(K_{d+1}, v, \lambda),s)$ of  \eqref{eqn:Theorem_general} vanishes since
\[
|\mathsf{deg}(K_{d+1}, v, \lambda)| < 2 \cdot \underbrace{\mathsf{len}(\mathsf{deg}(K_{d+1}, v, \lambda))}_{\geq d} - 2\,.
\]
Thus in the summation \eqref{eqn:Theorem_general} we can restrict to those $\lambda$ satisfying $|\mathsf{deg}(K_{d+1}, v, \lambda)|=2d-2$ for all $v \in V(K_{d+1})$. Applying Proposition \ref{Prop:chi_vanishing_property} a) again, any contribution from such a $\lambda$ for which $\mathsf{len}(\mathsf{deg}(K_{d+1}, v, \lambda))>d$ at any vertex $v$ will also vanish. Since
\[
\mathsf{len}(\mathsf{deg}(K_{d+1}, v, \lambda)) = \sum_{\substack{e \in E(K_{d+1}):\\ e \text{ incident to }v}} \mathsf{len}(\lambda(e))\,,
\]
and each of the $d$ summands in this sum is a positive integer, we have that each partition $\lambda(e)$ must be of length exactly $1$, i.e. $\lambda(e)=(d(e))$ for some $d(e) \in \mathbb{Z}_{\geq 1}$ satisfying 
$$\sum_{e \in E(K_{d+1})} d(e) = d^2 -1\,.$$
For each such degree distribution $d$ (and induced map $\lambda$), we have
\[
\mathsf{sign}\left(\prod_{e \in E(K_{d+1})} \mathsf{mult}(\lambda(e)) \right) = \prod_{e \in E(K_{d+1})} (-1)^{d(e)-1} = (-1)^{d^2-1 - d(d+1)/2} = (-1)^{(d+1)(d-2)/2}
\]
Applying Proposition \ref{Prop:chi_vanishing_property} b) to the formula 
\eqref{eqn:Theorem_general}, we thus see that any nonzero summand there is precisely of the form \eqref{eqn:zeta_complete_graph} claimed in our theorem. 

The only remaining thing to prove is that there \emph{exists} at least one degree distribution $d(e)$ satisfying that the sum of degrees at each vertex is precisely $2d-2$. This is easy: choose a $d-2$ regular subgraph $G$ of $K_{d+1}$ and declare $d(e)=2$ for $e \in E(G)$ and $d(e)=1$ otherwise. The existence of such a subgraph follows since $(d-2)(d+1)$ is even,  e.g.\ by using a special case of the Erdős–Gallai Theorem~\cite{ErdosG60,Choudum86}.
\end{proof}

We are finally able to prove Lemma~\ref{lem:main_result_on_zeta}, which we restate for convenience.
\begin{lemma}[Lemma~\ref{lem:main_result_on_zeta}, restated]
    Let $\mathcal{S}$ be a finite set of signatures and let $\mathcal{C}$ be the class of all graphs $H$ for which there is a positive integer $k$ and a colouring $\nu:V(H)\to \mathcal{S}$ such that $\zeta_{\mathcal{S},k}(H,\nu)\neq 0$. If $\mathcal{S}$ is of type $\mathbb{T}[\mathsf{lin}]$, then all graphs in $\mathcal{C}$ are acyclic. Otherwise $\mathcal{C}$ has unbounded treewidth.
\end{lemma}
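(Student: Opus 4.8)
\textbf{Proof plan for Lemma~\ref{lem:main_result_on_zeta}.}
The plan is to split into the two cases according to the type of $\mathcal{S}$, using the explicit formula for $\zeta_{\mathcal{S},k}$ from Theorem~\ref{thm:general_coefficient} together with the combinatorial analysis of the generalised fingerprints $\chi(\lambda,s)$ developed in Proposition~\ref{Prop:chi_vanishing_property} and Theorem~\ref{Thm:Conj_2_false}. Throughout I use the reduction in Remark~\ref{rem:s_zero_equals_1} to assume $s(0)=1$ for every $s\in\mathcal{S}$, which is harmless since rescaling does not change fingerprints.

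\emph{The easy direction (type $\mathbb{T}[\mathsf{lin}]$).} First suppose $\mathcal{S}$ is of type $\mathbb{T}[\mathsf{lin}]$; I want to show every $H\in\mathcal{C}$ is acyclic. Fix $H$, $k$, and $\nu$ with $\zeta_{\mathcal{S},k}(H,\nu)\neq 0$. By Theorem~\ref{thm:general_coefficient} there must exist some $\lambda:E(H)\to\mathcal{P}$ with $\sum_e|\lambda(e)|=k$ all of whose factors are nonzero; in particular $\chi(\mathsf{deg}(H,v,\lambda),\nu_H(v))\neq 0$ for every vertex $v$. By Lemma~\ref{lem:linear_type_sigs_equivalence}, applied to each signature $s=\nu_H(v)\in\mathcal{S}$ (each $\{s\}$ is of type $\mathbb{T}[\mathsf{lin}]$ since $\mathcal{S}$ is), we have $\chi(\mu,s)=0$ for every partition $\mu$ with $\mathsf{len}(\mu)\geq 2$. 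Hence $\mathsf{len}(\mathsf{deg}(H,v,\lambda))=1$ for all $v$, which forces $\mathsf{len}(\lambda(e))=1$ for every edge $e$ incident to $v$, i.e. for every edge of $H$; writing $\lambda(e)=(d(e))$ with $d(e)\geq 1$ we get that $\mathsf{deg}(H,v,\lambda)$ has a single part of size $\sum_{e\ni v}d(e)=1$, forcing $d_H(v)=1$ and $d(e)=1$ for all $e$. So $H$ is a perfect matching, in particular acyclic. This handles the first statement.

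\emph{The hard direction (not type $\mathbb{T}[\mathsf{lin}]$).} Now suppose $\mathcal{S}$ is not of type $\mathbb{T}[\mathsf{lin}]$. There are two subcases. If $\mathcal{S}$ is of type $\mathbb{T}[\infty]$, then by Theorem~\ref{main_thm}(III) the problem $\holantprob(\mathcal{S})$ is $\#\W[1]$-complete; but we also have $\holantprob(\mathcal{S})\fptlinred\text{\sc{p-UnColHolant}}(\mathcal{S})$ by Lemma~\ref{lem:col_to_uncol}, and $\text{\sc{p-UnColHolant}}(\mathcal{S})$ reduces to counting homomorphisms from $\mathcal{C}$ by Lemma~\ref{lem:uncoloured_hombasis_general} plus complexity monotonicity (this is exactly what Lemma~\ref{lem:monotonicity_S-coloured} encapsulates). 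If $\mathcal{C}$ had bounded treewidth, then $\#\colhomsprob(\mathcal{C})$ — hence $\text{\sc{p-UnColHolant}}(\mathcal{S})$, hence $\holantprob(\mathcal{S})$ — would be fixed-parameter tractable by the standard treewidth-based homomorphism counting algorithm, contradicting $\#\W[1]$-completeness. So in the $\mathbb{T}[\infty]$ case $\mathcal{C}$ has unbounded treewidth. However, I do not want to rely only on this complexity-theoretic shortcut, because in the remaining case $\mathcal{S}$ is of type $\mathbb{T}[\omega]$ and there $\holantprob(\mathcal{S})$ \emph{is} FPT, so no such contradiction is available; moreover it is cleaner (and necessary for the $\#\W[1]$-hardness of $\text{\sc{p-UnColHolant}}$ itself) to exhibit explicitly high-treewidth members of $\mathcal{C}$. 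So the real work is: \textbf{if $\{s\}$ is not of type $\mathbb{T}[\mathsf{lin}]$ for some $s\in\mathcal{S}$, produce an infinite family of graphs in $\mathcal{C}$ of unbounded treewidth.}

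\emph{Carrying out the hard direction uniformly.} Pick any $s\in\mathcal{S}$ with $\{s\}$ not of type $\mathbb{T}[\mathsf{lin}]$ (such $s$ exists since $\mathcal{S}$ is not of type $\mathbb{T}[\mathsf{lin}]$), and let $\nu$ be the constant colouring sending every vertex to $s$. The plan is to show that for every sufficiently large $d$ the complete graph $K_{d+1}$ lies in $\mathcal{C}$; since $K_{d+1}$ has treewidth $d$, this gives unbounded treewidth. If $\{s\}$ is of type $\mathbb{T}[\infty]$, I can instead use the argument already present in Lemma~\ref{lem:hom_support_characterisation}(3) transported to the uncoloured $\zeta$: by Theorem~\ref{Thm:special_case}, $\zeta(H,k)=\#\mathsf{Aut}(H)^{-1}\prod_v\chi(d_H(v),s)$ when $|E(H)|=k$, so taking $H$ any $d$-regular graph with $\chi(d,s)\neq 0$ and $k=|E(H)|$ gives $\zeta(H,k)\neq 0$; $d$-regular graphs include expanders of unbounded treewidth, so $\mathcal{C}$ has unbounded treewidth. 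If $\{s\}$ is of type $\mathbb{T}[\omega]$ (so $\chi(2,s)\neq 0$ but $\chi(d,s)=0$ for $d\geq 3$, and $s(2)\neq s(1)^2$ since $\{s\}$ is not $\mathbb{T}[\mathsf{lin}]$), the single-signature Theorem~\ref{Thm:special_case} is useless because every $d\geq 3$-regular vertex kills the product; this is exactly why Theorem~\ref{Thm:Conj_2_false} is needed. That theorem states precisely that $\zeta(K_{d+1},d^2-1)=(-1)^{(d+1)(d-2)/2}\cdot A\cdot(s(2)-s(1)^2)^{(d+1)(d-1)}$ with $A\in\mathbb{Q}_{>0}$, which is nonzero since $s(2)\neq s(1)^2$. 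Hence $K_{d+1}\in\mathcal{C}$ for all $d\geq 2$, and $\{K_{d+1}\}_{d\geq 2}\subseteq\mathcal{C}$ witnesses unbounded treewidth.

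\emph{Main obstacle.} The only genuinely hard input is Theorem~\ref{Thm:Conj_2_false}, i.e. the non-vanishing of $\zeta$ on complete graphs for signatures of type $\mathbb{T}[\omega]$; its proof rests on the recursion Lemma~\ref{Lem:chi_recursion} and the delicate degree-balancing / weighted handshaking argument in Proposition~\ref{Prop:chi_vanishing_property}, together with the existence of a $(d-2)$-regular subgraph of $K_{d+1}$ (Erdős–Gallai). Given that theorem and Theorem~\ref{thm:general_coefficient}, Lemma~\ref{lem:linear_type_sigs_equivalence}, and Theorem~\ref{Thm:special_case}, the assembly above is routine: the $\mathbb{T}[\mathsf{lin}]$ case is a one-line consequence of Lemma~\ref{lem:linear_type_sigs_equivalence}, and the non-lin case splits cleanly into the $\mathbb{T}[\infty]$ subcase (regular expanders, via Theorem~\ref{Thm:special_case}) and the $\mathbb{T}[\omega]$ subcase (complete graphs, via Theorem~\ref{Thm:Conj_2_false}).
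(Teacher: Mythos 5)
Your proposal is correct and, in essence, follows the paper's own proof: the $\mathbb{T}[\mathsf{lin}]$ direction is exactly the paper's argument (Theorem~\ref{thm:general_coefficient} plus Lemma~\ref{lem:linear_type_sigs_equivalence} kill every summand as soon as some vertex has degree at least $2$; the paper phrases this as a contradiction, you derive degree $1$ everywhere directly — note only that your further claim $d(e)=1$ is not actually forced, since a single part of size $m$ gives $\chi((m),s)=s(m)$, but this is irrelevant because degree $1$ everywhere already yields acyclicity). The one divergence is in the non-$\mathbb{T}[\mathsf{lin}]$ direction: the paper concludes in one line by invoking the ``moreover'' part of Theorem~\ref{Thm:Conj_2_false} for every $s$ with $\{s\}$ not of type $\mathbb{T}[\mathsf{lin}]$, so that arbitrarily large cliques survive, whereas you split into the $\mathbb{T}[\infty]$ subcase (handled via Theorem~\ref{Thm:special_case} with $d$-regular expanders, mirroring Lemma~\ref{lem:hom_support_characterisation}(3)) and the $\mathbb{T}[\omega]$ subcase (cliques via Theorem~\ref{Thm:Conj_2_false}, where $\chi(2,s)=s(2)-s(1)^2\neq 0$ holds by the very definition of type $\mathbb{T}[\omega]$, not merely by non-linearity as your parenthetical suggests). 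Your split costs one extra case but only uses Theorem~\ref{Thm:Conj_2_false} under its stated hypothesis, where the displayed formula and Proposition~\ref{Prop:chi_vanishing_property} actually apply; this is arguably a cleaner assembly than the uniform appeal to clique non-vanishing for arbitrary non-linear signatures. You were also right to discard the complexity-theoretic shortcut for the $\mathbb{T}[\infty]$ subcase: it would establish the combinatorial statement only under the assumption that $\#\W[1]$-hard problems are not FPT, so it cannot substitute for the explicit exhibition of high-treewidth graphs in $\mathcal{C}$.
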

\begin{proof}
    To avoid any confusion, we recall that, by Remark~\ref{rem:s_zero_equals_1}, we can assume w.l.o.g.\ that $s(0)=1$ for all $s\in \mathcal{S}$. 

    Assume first that $\mathcal{S}$ is of type $\mathbb{T}[\mathsf{lin}]$; in particular, this implies that $\{s\}$ is of type $\mathbb{T}[\mathsf{lin}]$ for all $s\in \mathcal{S}$. By Lemma~\ref{lem:linear_type_sigs_equivalence}, it follows that $\chi(\lambda,s)=0$ whenever $\mathsf{len}(\lambda)\geq 2$ and $s\in \mathcal{S}$. Assume for contradiction that there is a graph $H \in \mathcal{C}$ containing a cycle. Then $H$ contains a vertex $u$ of degree at least $2$. Since $H\in \mathcal{C}$, there are $k$ and $\nu$ such that $\zeta_{\mathcal{S},k}(H,\nu)\neq 0$. Note that $k\geq |E(H)|$, since otherwise $\zeta_{\mathcal{S},k}(H,\nu)$ is trivially zero (see Lemma~\ref{lem:uncoloured_hombasis_general}; no graph $F$ with less than $|E(H)|$ edges can have $H$ as a quotient). Then, by Theorem~\ref{thm:general_coefficient}, we have 
    \[ \zeta_{\mathcal{S},k}(H,\nu_H)=\frac{1}{\#\mathsf{Aut}(H, \nu_H)} \cdot \sum_{\substack{\lambda: E(H) \to \mathcal{P}\\\sum_e |\lambda(e)| = k }}\ \  \prod_{e \in E(H)} \frac{\mathsf{mult}(\lambda(e))}{|\lambda(e)|!}  \prod_{v \in V(H)} \chi(\mathsf{deg}(H, v, \lambda), \nu_H(v))\,,\]
    where
    \[
\mathsf{deg}(H, v, \lambda) = \bigcup_{\substack{e \in E(H):\\ e \text{ incident to }v}} \lambda(e)\,. 
\]
However, note that $\mathsf{len}(\mathsf{deg}(H, u, \lambda))\geq 2$ since $u$ has degree at least $2$ in $H$. Therefore, by our initial assumption, we have that $\chi(\mathsf{deg}(H, u, \lambda), \nu_H(u))=0$, and hence $\zeta_{\mathcal{S},k}(H,\nu_H)=0$ since each summand then contains a factor of $0$. This yields the desired contradiction. 

Next assume that there is $\mathcal{S}$ is not of type $\mathbb{T}[\mathsf{lin}]$. In particular, this means that, for some $s\in \mathcal{S}$, we have that $\{s\}$ is not of type $\mathbb{T}[\mathsf{lin}]$. By the second part of Theorem~\ref{Thm:Conj_2_false} it then follows that arbitrary large cliques survive in $\mathcal{C}$, concluding the proof.
\end{proof}

\subsection{Extensions to Signatures Allowing $s(0) = 0$}
As in \Cref{sec:sig0}, we can lift the restriction on $s(0) \neq 0$ and establish the following dichotomy.

\begin{theorem}
Let $\mathcal{S}$ be a finite set of signatures. Let $\mathcal{S}_0 = \{s \in \mathcal{S} \mid s(0) = 0\}$. 
\begin{enumerate}
\item If $\mathcal{S}\,\backslash\, \mathcal{S}_0$ is of type $\mathbb{T}[\mathsf{Lin}]$, then $\text{\sc{p-UnColHolant}}(\mathcal{S})$ can be solved in FPT-near-linear time.
\item Otherwise $\text{\sc{p-UnColHolant}}(\mathcal{S})$ is $\#\mathrm{W}[1]$-hard. If, additionally, $\mathcal{S}\setminus \mathcal{S}_0$ is of type $\mathbb{T}[\infty]$, then $\text{\sc{p-UnColHolant}}(\mathcal{S})$ cannot be solved in time $f(k)\cdot |\Omega|^{o(k/\log k)}$, unless ETH fails.
\end{enumerate}
\end{theorem}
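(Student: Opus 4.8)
The plan is to follow the blueprint of Section~\ref{sec:sig0}: treat the hardness and the tractability directions separately, and reduce the tractable case for general signatures to a bounded restriction of the classification already proved in Theorem~\ref{thm:main_uncol_restate}. For the hardness direction, I would use the trivial reduction $\text{\sc{p-UnColHolant}}(\mathcal{S}\setminus\mathcal{S}_0) \fptlinred \text{\sc{p-UnColHolant}}(\mathcal{S})$: every instance of the left-hand problem, all of whose signatures lie in $\mathcal{S}\setminus\mathcal{S}_0\subseteq\mathcal{S}$, is already a valid instance of the right-hand problem with the same holant value, the same parameter $k$, and unchanged size, so the identity map witnesses a linear FPT Turing-reduction with a single oracle call. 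Hence, if $\mathcal{S}\setminus\mathcal{S}_0$ is not of type $\mathbb{T}[\mathsf{Lin}]$, then $\text{\sc{p-UnColHolant}}(\mathcal{S}\setminus\mathcal{S}_0)$ is $\#\mathrm{W}[1]$-hard by Theorem~\ref{thm:main_uncol_restate}~(2), and therefore so is $\text{\sc{p-UnColHolant}}(\mathcal{S})$; and since the reduction is size- and parameter-preserving, the ETH-based lower bound ruling out time $f(k)\cdot|\Omega|^{o(k/\log k)}$ transfers whenever $\mathcal{S}\setminus\mathcal{S}_0$ is additionally of type $\mathbb{T}[\infty]$.

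For the tractable direction, assume $\mathcal{S}\setminus\mathcal{S}_0$ is of type $\mathbb{T}[\mathsf{Lin}]$. The key auxiliary step is an uncoloured analogue of Lemma~\ref{lem:restrictedHolant}: for finite signature sets $\mathcal{S}^e,\mathcal{S}^h$ with $s(0)\neq 0$ for all $s\in\mathcal{S}^e\,\dot\cup\,\mathcal{S}^h$ and $\mathcal{S}^e$ of type $\mathbb{T}[\mathsf{Lin}]$, the restriction $\text{\sc{p-UnColHolant}}(\mathcal{S}^e;\mathcal{S}^h,r)$ of $\text{\sc{p-UnColHolant}}(\mathcal{S}^e\,\dot\cup\,\mathcal{S}^h)$ to instances with at most $r$ vertices carrying a signature from $\mathcal{S}^h$ can be solved in time $f(k,r)\cdot\tilde{\mathcal{O}}(|V(\Omega)|+|E(\Omega)|)$. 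After normalising so that $s(0)=1$ for every signature (Remark~\ref{rem:s_zero_equals_1}, which does not change any signature type), I would invoke the homomorphism expansion of Lemma~\ref{lem:uncoloured_hombasis_general}: the $\mathcal{S}$-coloured graphs $(H,\nu)$ that can occur with a nonzero coefficient are quotients of graphs in $\mathcal{G}_k(\mathcal{S})$, hence form a finite set computable from $k$ and $\mathcal{S}$, as do the coefficients $\zeta_{\mathcal{S},k}(H,\nu)$; so it suffices to compute $\#\homs{(H,\nu)}{\Omega}$ in FPT-near-linear time for the finitely many surviving $(H,\nu)$. (Note that here, unlike in Lemma~\ref{lem:monotonicity_S-coloured}, the surviving graphs need not be acyclic, so one cannot appeal to complexity monotonicity directly and must go term by term.)

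The crucial structural observation is that in any surviving $(H,\nu)$ every vertex $v$ coloured by a signature $s\in\mathcal{S}^e$ has degree exactly $1$ in $H$: by Theorem~\ref{thm:general_coefficient} the coefficient $\zeta_{\mathcal{S},k}(H,\nu)$ is a sum of terms each carrying the factor $\chi(\mathsf{deg}(H,v,\lambda),\nu(v))$, and since $\{s\}$ is of type $\mathbb{T}[\mathsf{Lin}]$, Lemma~\ref{lem:linear_type_sigs_equivalence}~(c) gives $\chi(\lambda',s)=0$ whenever $\mathsf{len}(\lambda')\geq 2$; as $\mathsf{len}(\mathsf{deg}(H,v,\lambda))=\sum_{e\ni v}\mathsf{len}(\lambda(e))\geq d_H(v)$, a vertex of degree $\geq 2$ forces the term to vanish. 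Consequently $H$ splits into an $\mathcal{S}^h$-coloured "core" $H[V^h]$ (on at most $2k$ vertices, since $H$ has at most $k$ edges), pendant vertices attached to that core, and isolated $K_2$-components with both endpoints $\mathcal{S}^e$-coloured. I would then compute $\#\homs{(H,\nu)}{\Omega}$ by brute-forcing the at most $r^{2k}$ images of $V^h$ among the at most $r$ vertices of $\Omega$ with a signature in $\mathcal{S}^h$, and, for each resulting partial homomorphism, counting its extensions to the pendants and the $K_2$-components using signature-restricted degree counters of $\Omega$ precomputed in near-linear time (this is exactly the uncoloured counterpart of Lemma~\ref{lem:partialHomsFPT}~(1) / Lemma~\ref{lem:listHomsLinear}); the total running time is $f(k,r)\cdot\tilde{\mathcal{O}}(|\Omega|)$. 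Since every instance of $\text{\sc{p-UnColHolant}}(\mathcal{S})$ after the construction below has at most $2k$ vertices with signatures in $\mathcal{S}_\alpha$, this restricted algorithm is all that is needed.

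With the restricted-problem lemma in hand, the tractable case follows by repeating the interpolation argument of Lemma~\ref{lem:reductionRestrictedHolant} verbatim: let $V_0$ be the vertices of $\Omega$ with signature in $\mathcal{S}_0$ and $n_0=|V_0|$; if $n_0>2k$ then every $k$-edge subset leaves some vertex of $V_0$ uncovered, so $\mathsf{UnColHolant}(\Omega,k)=0$, and otherwise $n_0\leq 2k$. Replacing each $s_v$, $v\in V_0$, by $\zsig{s_v}{\alpha}$ yields a grid $\Omega_\alpha$ for which $\mathsf{UnColHolant}(\Omega_\alpha,k)$ is a polynomial in $\alpha$ of degree at most $n_0$ whose constant coefficient is $\mathsf{UnColHolant}(\Omega,k)$; evaluating at $n_0+1$ distinct values of $\alpha$ avoiding $0$ and the finitely many numbers $s(0)$, $s\in\mathcal{S}\setminus\mathcal{S}_0$, makes each $\Omega_\alpha$ an instance of $\text{\sc{p-UnColHolant}}(\mathcal{S}\setminus\mathcal{S}_0;\mathcal{S}_\alpha,n_0)$ with $\mathcal{S}\setminus\mathcal{S}_0$ of type $\mathbb{T}[\mathsf{Lin}]$, solvable in $f(k,n_0)\cdot\tilde{\mathcal{O}}(|\Omega|)$ time by the lemma, and polynomial interpolation recovers $\mathsf{UnColHolant}(\Omega,k)$ with $\mathcal{O}(n_0^3)$ further operations; as $n_0\leq 2k$ this is FPT-near-linear in $k$. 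The main obstacle is the restricted-problem lemma — specifically confirming, via the explicit formula of Theorem~\ref{thm:general_coefficient} and Lemma~\ref{lem:linear_type_sigs_equivalence}, that all high-degree structure of a surviving $(H,\nu)$ is confined to the boundedly many $\mathcal{S}^h$-coloured vertices, so that brute force over their images suffices; the remaining steps are bookkeeping parallel to Section~\ref{sec:sig0}.
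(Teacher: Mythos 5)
Your proposal follows essentially the same route as the paper's proof: hardness is obtained by the trivial restriction $\text{\sc{p-UnColHolant}}(\mathcal{S}\setminus\mathcal{S}_0)\fptlinred\text{\sc{p-UnColHolant}}(\mathcal{S})$ together with Theorem~\ref{thm:main_uncol_restate}, and tractability by the interpolation argument of Lemma~\ref{lem:reductionRestrictedHolant}, the expansion of Lemma~\ref{lem:uncoloured_hombasis_general}, the vanishing of coefficients at degree-$\geq 2$ vertices carrying linear-type signatures (Theorem~\ref{thm:general_coefficient} with Lemma~\ref{lem:linear_type_sigs_equivalence}), brute force over the boundedly many ``hard'' vertices, and extension counting as in Lemma~\ref{lem:partialHomsFPT}, Case~1. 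Whether you package the last steps as a standalone uncoloured analogue of Lemma~\ref{lem:restrictedHolant} or inline them, as the paper does, is only a difference of organisation.

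There is, however, one subtlety you gloss over and which the paper treats explicitly. The degree-$1$ analysis rests on Lemma~\ref{lem:linear_type_sigs_equivalence}, which requires $s(0)=1$, so inside your restricted-problem lemma you rescale via Remark~\ref{rem:s_zero_equals_1}. After rescaling, a signature $s/s(0)$ with $s\in\mathcal{S}\setminus\mathcal{S}_0$ may coincide, as a function, with some $\zsig{t}{\alpha}/\alpha$ for $t\in\mathcal{S}_0$; in that case the colour classes in the expansion of Lemma~\ref{lem:uncoloured_hombasis_general} no longer separate ``easy'' from ``hard'' vertices, and your brute force over ``the at most $r$ (resp.\ $2k$) vertices of $\Omega$ with a signature in $\mathcal{S}^h$'' either undercounts (a vertex of $H$ carrying the shared colour may also map to the many rescaled easy vertices of $\Omega$) or must range over an unbounded set and is no longer FPT. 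Your constraint $\alpha\notin\{0\}\cup\{s(0)\mid s\in\mathcal{S}\setminus\mathcal{S}_0\}$ only ensures disjointness \emph{before} rescaling. The paper repairs this by discarding the constant-zero signature (so that for each $t\in\mathcal{S}_0$ there is $d_t$ with $t(d_t)\neq 0$) and additionally excluding the finitely many values $\alpha_{s,t}=s(0)\,t(d_t)/s(d_t)$, which guarantees $s/s(0)\neq \zsig{t}{\alpha}/\alpha$ for all pairs; alternatively one can observe that any colliding colour is necessarily of type $\mathbb{T}[\mathsf{Lin}]$ and may simply be treated as easy, so the genuinely hard colours remain confined to the at most $2k$ special vertices. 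Either patch is routine, but as written this step of your argument is incomplete.
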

\begin{proof}
The second claim follows directly from the second part of  \Cref{thm:main_uncol_restate}. So, we may assume that $\mathcal{S}\,\backslash\,\mathcal{S}_0$ is of type $\mathbb{T}[\mathsf{Lin}]$.

The proof of the first claim follows almost verbatim the proof of the respective claim in the colored setting (cf. \Cref{lem:reductionRestrictedHolant}). The only minor subtlety we have to deal with is the fact that we rely on \Cref{lem:linear_type_sigs_equivalence}, which only applies to signatures satisfying $s(0)=1$, hence we have to perform a rescaling. For what follows, we can assume w.l.o.g.\ that $\mathcal{S}$ does not contain the constant-zero signature (if a signature grid contains a vertex equipped with this signature, the Holant value is trivially $0$).

In a nutshell, following the proof of \Cref{lem:reductionRestrictedHolant}, computing $\mathsf{UnColHolant}(\Omega, k)$ reduces to computing $\mathsf{UnColHolant}(\Omega_{\alpha}, k)$, for at most $2k$ appropriately defined auxiliary instances $(\Omega_{\alpha}, k)$ --- we will choose the appropriate $\alpha$ momentarily. Recall that, for $\alpha>0$, we set $\mathcal{S}_\alpha = \{t_\alpha \mid t\in \mathcal{S}_0\}$, where $t_\alpha=t|_{0\mapsto \alpha}$. The vertices in each signature grid $\Omega_{\alpha}$ have been assigned signatures from $(\mathcal{S}\,\backslash\,\mathcal{S}_0)\,\cup\,\mathcal{S}_{\alpha}$, where ${\alpha}$ is defined so that that the following are guaranteed: $(1)$ $(\mathcal{S}\,\backslash\,\mathcal{S}_0) \cap \mathcal{S}_{\alpha} = \emptyset$ and $(2)$ there are at most $2k$ vertices in $V(\Omega_{\alpha})$ with signatures in $\mathcal{S}_{\alpha}$; we denote the latter set by $V_{\alpha}$. Hence, it suffices to show that computing $\mathsf{UnColHolant}(\Omega_{\alpha}, k)$ can be done in FPT-near-linear time. 

To this end, consider one such instance $\Omega_{\alpha}$. As mentioned previously, to be consistent with the analysis of this section, we need to scale all signatures $s \in (\mathcal{S} \setminus \mathcal{S}_0) \cup \mathcal{S}_{\alpha}$ and consider new signatures $s' = s / s(0)$. That way, we also obtain a new signature grid $\Omega_{\alpha}'$ where each signature $s$ has been replaced by $s'$. By \Cref{rem:s_zero_equals_1}, it then suffices to derive the desired result for $\Omega_{\alpha}'$. Recall again that the argument of the proof of \Cref{lem:reductionRestrictedHolant}, which we follow, requires $(\mathcal{S}\setminus\mathcal{S}_0) \cap \mathcal{S}_{\alpha} = \emptyset$. Hence we have to make sure that this condition holds now for the scaled signature sets, that is, for all pairs $s \in \mathcal{S}\setminus\mathcal{S}_0$ and $t \in \mathcal{S}_{0}$, we have to make sure that $s/s(0) \neq t_\alpha/t_\alpha(0)(=t_\alpha/\alpha)$. To this end, using that $t \in \mathcal{S}_0$ and the fact that $\mathcal{S}$ (and thus $\mathcal{S}_0$) do not contain the constant-zero function, there exists a constant $d_t>0$ such that $t_\alpha(d_t)=t(d_t)>0$.

We want to only consider $\alpha$ satisfying that $s(d_t)/s(0) \neq t(d_t)/\alpha$ is guaranteed for all pairs $s, t$. In particular, set
\[
\alpha_{s,t} := \begin{cases}
    s(0)\cdot t(d_{t})/s(d_{t}) & s(d_t)\neq 0\\
    1 & s(d_t) = 0\,,
\end{cases}
\]
and we observe that for any $\alpha\notin\cup_{s,t}\{\alpha_{s,t}\}$ we have that  $s(d_t)/s(0) \neq t(d_t)/\alpha$ for all pairs $s$ and $t$: If $s(d_t)=0$, then $t(d_t)/\alpha \neq 0 = s(d_t)/s(0)$, and if $s(d_t)\neq 0$, then $t(d_t)/\alpha \neq t(d_t)/\alpha_{s,t} = s(d_t)/s(0)$.

Note that $\cup_{s,t}\{\alpha_{s,t}\}$ depends only on $\mathcal{S}$, and not on the problem input. Hence, for what follows, we can assume that all our $2k$ choices of $\alpha$ are not contained in $\cup_{s,t}\{\alpha_{s,t}\}$.

Now, by \Cref{lem:uncoloured_hombasis_general}, we have,
\[ \mathsf{UnColHolant}(\Omega_{\alpha}', k) = \sum_{(H,\nu)\in \mathcal{G}(\mathcal{S}')} \zeta_{\mathcal{S}',k}(H,\nu) \cdot \#\homs{(H,\nu)}{\Omega_{\alpha}'}\,. \]
where $\mathcal{S}' = (\mathcal{S}\,\backslash\,\mathcal{S}_0)' \cup \mathcal{S}_{\alpha}' = \{s_1', \ldots, s_\ell'\}$ are the scaled signatures sets. Recall that the sum runs over vertex coloured graphs $(H, \nu)$ with at most $k$ edges where $\nu : V(H) \to (\mathcal{S}\,\backslash\,\mathcal{S}_0)'\,\cup\,\mathcal{S}_{\alpha}'$.

Let $V_1$ denote the set of vertices $v \in V(H)$ such that $\nu(v) \in (\mathcal{S}\setminus\mathcal{S}_0)'$ and $V_2$ denote the set of vertices $v \in V(H)$ such that $\nu(v) \in \mathcal{S}_{\alpha}'$. Note that, if some $v \in V_1$ has degree at least 2, then $\zeta_{\mathcal{S}',k}(H, \nu)$ is zero (see \Cref{lem:linear_type_sigs_equivalence}, Case (c)). Hence, we may assume that all vertices in $V_1$ have degree 1 and then compute $\#\homs{(H, \nu)}{\Omega_{\alpha}'}$ in FPT-near-linear time as follows. We enumerate all $\phi \in \homs{(H[V_2], \nu|_{V_2})}{\Omega_{\alpha}'}$ via brute-force in $|V_{\alpha}|^{\mathcal{O}(|V_2|)}$ time since vertices in $V_{2}$ can only be mapped to vertices in $V_{\alpha}$. Finally, for each such $\phi$, we compute the number of its extensions to $\#\homs{(H, \nu)}{\Omega_{\alpha}'}$, according to \Cref{lem:partialHomsFPT}, Case 1.
\end{proof}

We obtain, as immediate consequence, the classification of the uncoloured graph factor problem.
\begin{corollary}\label{cor:factor_classification_uncol}
    If $\mathcal{B}$ contains a set $\{0\} \subsetneq S \subsetneq \mathbb{N}$ then $\textsc{Factor}(\mathcal{B})$ is $\#\W[1]$-complete, and cannot be solved in time $f(k)\cdot n^{o(k/\log k)}$ for any function~$f$, unless the Exponential Time Hypothesis fails. Otherwise $\textsc{Factor}(\mathcal{B})$ is solvable in FPT-near-linear time.
\end{corollary}
\begin{proof}
    For the lower bounds, assume there is an $S\in \mathcal{B}$ with $\{0\} \subsetneq S \subsetneq \mathbb{N}$. We define a signature 
    \[s(x) := \begin{cases}
        1 & x \in S\\
        0 & x \notin S\,,
    \end{cases}\]
    and set $\mathcal{S}=\{s\}$. Then, clearly, $\textsc{p-UnColHolant}(\mathcal{S})\fptlinred \textsc{Factor}(\mathcal{B})$. 
    
    We show that $\{s\}$ is of type $\mathbb{T}[\infty]$. To this end, observe that $\chi(3,s)= 2s(1)^3 - 3s(1)s(2) + s(3)$, which is non-zero unless $(s(1),s(2),s(3))\in \{(0,0,0),(0,1,0),(1,1,1)\}$. We will consider these three cases separately:
\begin{itemize}[leftmargin=1.5cm]
    \item[$(0,1,0)$:] Consider $\chi(4,s)$. Since $s(1)=s(3)=0$, the only partitions contributing to $\chi(4,s)$ are $\{\{1,2\},\{3,4\}\}$, $\{\{1,3\},\{2,4\}\}$, $\{\{1,4\},\{2,3\}\}$, and $\{\{1,2,3,4\}\}$. The former three each contribute $(-1)^{2-1} (2-1)! = -1$, and the latter one contributes $s(4)$. Thus $\chi(4,s) = -3 +s(4)\neq 0$.
    \item[$(0,0,0)$:] Set $c=\min\{x>0 \mid s(x)= 1\}$. Note that $c$ must exist since $\{0\} \subsetneq S$, and that $c\geq 4$. Then, clearly, $\chi(c,s)= s(c)\neq 0$.
    \item[$(1,1,1)$:] Set $c=\min\{x>0 \mid s(x)= 0\}$. Note that $c$ must exist since $S\neq \mathbb{N}$, and that $c\geq 4$. Then
    \[\chi(c,s)= \sum_{\sigma < \top_c} (-1)^{|\sigma|-1} (|\sigma|-1)! = \sum_{\sigma} (-1)^{|\sigma|-1} (|\sigma|-1)! - \left((-1)^{|\top_c|-1} (|\top_c|-1)!\right) = 0 -1 = -1 \,,\]
    where the third equation holds again by the properties of the M\"obius function of the partition lattice (see e.g.\ \cite[Section 3.7]{Stanley11}).
\end{itemize}
This shows that $\{s\}$ is of type $\mathbb{T}[\infty]$. Given that $s(0)=1\neq 0$, the lower bounds thus follow immediately from the previous theorem.

Now, for the upper bound, assume that $\mathcal{B}=\{S_1,\dots,S_\ell\}$ for some $\ell>0$, such that none of the $S_i$ satisfies $\{0\}\subsetneq S_i \subsetneq \mathbb{N}$. Equivalently, this means that for each $i\in[\ell]$, either $0\notin S_i$, $S_i=\{0\}$, or $S_i=\mathbb{N}$. For each $i\in [\ell]$, define a signature $s_i$ by setting $s_i(x)=1$ if $x\in S_i$ and $s_i(x)=0$ otherwise. Moreover, let $\mathcal{S}=\{s_1,\dots,s_\ell\}$ and note that $\textsc{Factor}(\mathcal{B}) \fptlinred \textsc{p-UnColHolant}(\mathcal{S})$. If $S_i=\{0\}$ or $S_i=\mathbb{N}$ then, clearly, $s_i(n)=s_i(1)^n$ for all $n>0$. The claim thus follows by Lemma~\ref{lem:linear_type_sigs_equivalence} and the previous theorem.
\end{proof}

\bibliographystyle{plain}
\bibliography{references}

\begin{thebibliography}{10}

\bibitem{Abboud14-TriangleConj}
A.~{Abboud} and V.~V. {Williams}.
\newblock Popular conjectures imply strong lower bounds for dynamic problems.
\newblock In {\em Proc.\ of IEEE FOCS}, pages 434--443, 2014.

\bibitem{Backens18}
Miriam Backens.
\newblock {A Full Dichotomy for Holant$^c$, Inspired by Quantum Computation}.
\newblock {\em {SIAM} J. Comput.}, 50(6):1739--1799, 2021.

\bibitem{BeraGLSS22}
Suman~K. Bera, Lior Gishboliner, Yevgeny Levanzov, C.~Seshadhri, and Asaf Shapira.
\newblock Counting subgraphs in degenerate graphs.
\newblock {\em J. {ACM}}, 69(3):23:1--23:21, 2022.

\bibitem{BerlekampMT78}
Elwyn~R. Berlekamp, Robert~J. McEliece, and Henk C.~A. van Tilborg.
\newblock On the inherent intractability of certain coding problems (corresp.).
\newblock {\em {IEEE} Trans. Inf. Theory}, 24(3):384--386, 1978.

\bibitem{BhattacharyyaGS18}
Arnab Bhattacharyya, Suprovat Ghoshal, {Karthik {C. S.}}, and Pasin Manurangsi.
\newblock Parameterized intractability of even set and shortest vector problem from gap-eth.
\newblock In Ioannis Chatzigiannakis, Christos Kaklamanis, D{\'{a}}niel Marx, and Donald Sannella, editors, {\em 45th International Colloquium on Automata, Languages, and Programming, {ICALP} 2018, July 9-13, 2018, Prague, Czech Republic}, volume 107 of {\em LIPIcs}, pages 17:1--17:15. Schloss Dagstuhl - Leibniz-Zentrum f{\"{u}}r Informatik, 2018.

\bibitem{BlaserC12}
Markus Bl{\"{a}}ser and Radu Curticapean.
\newblock {Weighted Counting of k-Matchings Is {\#}W[1]-Hard}.
\newblock In Dimitrios~M. Thilikos and Gerhard~J. Woeginger, editors, {\em Parameterized and Exact Computation - 7th International Symposium, {IPEC} 2012, Ljubljana, Slovenia, September 12-14, 2012. Proceedings}, volume 7535 of {\em Lecture Notes in Computer Science}, pages 171--181. Springer, 2012.

\bibitem{BraultBaron13}
Johann Brault{-}Baron.
\newblock {\em De la pertinence de l'{\'{e}}num{\'{e}}ration : complexit{\'{e}} en logiques propositionnelle et du premier ordre. (The relevance of the list: propositional logic and complexity of the first order)}.
\newblock PhD thesis, University of Caen Normandy, France, 2013.

\bibitem{BLR2023stoc}
Marco Bressan, Matthias Lanzinger, and Marc Roth.
\newblock The complexity of pattern counting in directed graphs, parameterised by the outdegree.
\newblock In Barna Saha and Rocco~A. Servedio, editors, {\em Proceedings of the 55th Annual {ACM} Symposium on Theory of Computing, {STOC} 2023, Orlando, FL, USA, June 20-23, 2023}, pages 542--552. {ACM}, 2023.

\bibitem{BLR23}
Marco Bressan, Matthias Lanzinger, and Marc Roth.
\newblock The complexity of pattern counting in directed graphs, parameterised by the outdegree.
\newblock In Barna Saha and Rocco~A. Servedio, editors, {\em Proceedings of the 55th Annual {ACM} Symposium on Theory of Computing, {STOC} 2023, Orlando, FL, USA, June 20-23, 2023}, pages 542--552. {ACM}, 2023.

\bibitem{BressanR21}
Marco Bressan and Marc Roth.
\newblock Exact and approximate pattern counting in degenerate graphs: New algorithms, hardness results, and complexity dichotomies.
\newblock In {\em Proc.\ of IEEE {FOCS}}, pages 276--285, 2021.

\bibitem{CaiC07}
Jin{-}yi Cai and Vinay Choudhary.
\newblock Valiant's holant theorem and matchgate tensors.
\newblock {\em Theor. Comput. Sci.}, 384(1):22--32, 2007.

\bibitem{CaiF23}
Jin{-}Yi Cai and Zhiguo Fu.
\newblock Complexity classification of the eight-vertex model.
\newblock {\em Inf. Comput.}, 293:105064, 2023.

\bibitem{CaiFX18}
Jin{-}Yi Cai, Zhiguo Fu, and Mingji Xia.
\newblock Complexity classification of the six-vertex model.
\newblock {\em Inf. Comput.}, 259(Part):130--141, 2018.

\bibitem{CaiG21}
Jin{-}Yi Cai and Artem Govorov.
\newblock The complexity of counting edge colorings for simple graphs.
\newblock {\em Theor. Comput. Sci.}, 889:14--24, 2021.

\bibitem{CaiGW16}
Jin{-}Yi Cai, Heng Guo, and Tyson Williams.
\newblock A complete dichotomy rises from the capture of vanishing signatures.
\newblock {\em {SIAM} J. Comput.}, 45(5):1671--1728, 2016.

\bibitem{CaiL11}
Jin{-}yi Cai and Pinyan Lu.
\newblock Holographic algorithms: {F}rom art to science.
\newblock {\em J. Comput. Syst. Sci.}, 77(1):41--61, 2011.

\bibitem{CaiLX09}
Jin{-}yi Cai, Pinyan Lu, and Mingji Xia.
\newblock Holant problems and counting {CSP}.
\newblock In Michael Mitzenmacher, editor, {\em Proceedings of the 41st Annual {ACM} Symposium on Theory of Computing, {STOC} 2009, Bethesda, MD, USA, May 31 - June 2, 2009}, pages 715--724. {ACM}, 2009.

\bibitem{CaiLX14}
Jin{-}Yi Cai, Pinyan Lu, and Mingji Xia.
\newblock The complexity of complex weighted boolean {\#}csp.
\newblock {\em J. Comput. Syst. Sci.}, 80(1):217--236, 2014.

\bibitem{Chenetal05}
Jianer Chen, Benny Chor, Mike Fellows, Xiuzhen Huang, David~W. Juedes, Iyad~A. Kanj, and Ge~Xia.
\newblock Tight lower bounds for certain parameterized {N}{P}-hard problems.
\newblock {\em Inf. Comput.}, 201(2):216--231, 2005.

\bibitem{Chenetal06}
Jianer Chen, Xiuzhen Huang, Iyad~A. Kanj, and Ge~Xia.
\newblock Strong computational lower bounds via parameterized complexity.
\newblock {\em J. Comput. Syst. Sci.}, 72(8):1346--1367, 2006.

\bibitem{Choudum86}
Sheshayya~A Choudum.
\newblock A simple proof of the {Erdos-Gallai} theorem on graph sequences.
\newblock {\em Bulletin of the Australian Mathematical Society}, 33(1):67--70, 1986.

\bibitem{CreignouV15}
Nadia Creignou and Heribert Vollmer.
\newblock Parameterized complexity of weighted satisfiability problems: Decision, enumeration, counting.
\newblock {\em Fundam. Informaticae}, 136(4):297--316, 2015.

\bibitem{Curticapean13}
Radu Curticapean.
\newblock Counting matchings of size k is \#{W}[1]-hard.
\newblock In {\em Proc.\ of ICALP}, volume 7965, pages 352--363, 2013.

\bibitem{Curticapean15}
Radu Curticapean.
\newblock {\em The simple, little and slow things count: {O}n parameterized counting complexity}.
\newblock PhD thesis, Saarland University, 2015.

\bibitem{Curticapean24}
Radu Curticapean.
\newblock Count on {CFI} graphs for {\#}p-hardness.
\newblock In David~P. Woodruff, editor, {\em Proceedings of the 2024 {ACM-SIAM} Symposium on Discrete Algorithms, {SODA} 2024, Alexandria, VA, USA, January 7-10, 2024}, pages 1854--1871. {SIAM}, 2024.

\bibitem{CurticapeanDH21}
Radu Curticapean, Holger Dell, and Thore Husfeldt.
\newblock Modular counting of subgraphs: Matchings, matching-splittable graphs, and paths.
\newblock In Petra Mutzel, Rasmus Pagh, and Grzegorz Herman, editors, {\em 29th Annual European Symposium on Algorithms, {ESA} 2021, September 6-8, 2021, Lisbon, Portugal (Virtual Conference)}, volume 204 of {\em LIPIcs}, pages 34:1--34:17. Schloss Dagstuhl - Leibniz-Zentrum f{\"{u}}r Informatik, 2021.

\bibitem{CurticapeanDM17}
Radu Curticapean, Holger Dell, and D{\'{a}}niel Marx.
\newblock Homomorphisms are a good basis for counting small subgraphs.
\newblock In {\em Proc.\ of ACM STOC}, pages 210--223, 2017.

\bibitem{CurticapeanN24}
Radu Curticapean and Daniel Neuen.
\newblock Counting small induced subgraphs: Hardness via fourier analysis.
\newblock {\em CoRR}, abs/2407.07051, 2024.

\bibitem{CurticapeanX15}
Radu Curticapean and Mingji Xia.
\newblock Parameterizing the permanent: Genus, apices, minors, evaluation mod 2k.
\newblock In Venkatesan Guruswami, editor, {\em {IEEE} 56th Annual Symposium on Foundations of Computer Science, {FOCS} 2015, Berkeley, CA, USA, 17-20 October, 2015}, pages 994--1009. {IEEE} Computer Society, 2015.

\bibitem{CyganFKLMPPS15}
Marek Cygan, Fedor~V. Fomin, Lukasz Kowalik, Daniel Lokshtanov, D{\'{a}}niel Marx, Marcin Pilipczuk, Michal Pilipczuk, and Saket Saurabh.
\newblock {\em Parameterized {A}lgorithms}.
\newblock Springer, 2015.

\bibitem{DalmauJ04}
V{\'{\i}}ctor Dalmau and Peter Jonsson.
\newblock The complexity of counting homomorphisms seen from the other side.
\newblock {\em Theor. Comput. Sci.}, 329(1-3):315--323, 2004.

\bibitem{DellRW19}
Holger Dell, Marc Roth, and Philip Wellnitz.
\newblock {Counting Answers to Existential Questions}.
\newblock In {\em Proc.\ of ICALP}, volume 132, pages 113:1--113:15, 2019.

\bibitem{DoringMW24}
Simon D{\"{o}}ring, D{\'{a}}niel Marx, and Philip Wellnitz.
\newblock Counting small induced subgraphs with edge-monotone properties.
\newblock In Bojan Mohar, Igor Shinkar, and Ryan O'Donnell, editors, {\em Proceedings of the 56th Annual {ACM} Symposium on Theory of Computing, {STOC} 2024, Vancouver, BC, Canada, June 24-28, 2024}, pages 1517--1525. {ACM}, 2024.

\bibitem{DoringMW25}
Simon D{\"{o}}ring, D{\'{a}}niel Marx, and Philip Wellnitz.
\newblock From graph properties to graph parameters: Tight bounds for counting on small subgraphs.
\newblock {\em CoRR}, abs/2407.06801, 2024.

\bibitem{DowneyF92}
Rodney~G. Downey and Michael~R. Fellows.
\newblock Fixed parameter tractability and completeness.
\newblock In {\em Complexity Theory: Current Research, Dagstuhl Workshop, February 2-8, 1992}, pages 191--225, 1992.

\bibitem{DowneyFVW99}
Rodney~G. Downey, Michael~R. Fellows, Alexander Vardy, and Geoff Whittle.
\newblock The parametrized complexity of some fundamental problems in coding theory.
\newblock {\em {SIAM} J. Comput.}, 29(2):545--570, 1999.

\bibitem{ErdosG60}
P~Erdos and T~Gallai.
\newblock Graphs with given degree of vertices.
\newblock {\em Mat. Lapok}, 11:264--274, 1960.

\bibitem{FlumG04}
J{\"{o}}rg Flum and Martin Grohe.
\newblock The {P}arameterized {C}omplexity of {C}ounting {P}roblems.
\newblock {\em {SIAM} J. Comput.}, 33(4):892--922, 2004.

\bibitem{FlumG06}
J{\"{o}}rg Flum and Martin Grohe.
\newblock {\em {P}arameterized {C}omplexity {T}heory}.
\newblock Springer, 2006.

\bibitem{FockeR22}
Jacob Focke and Marc Roth.
\newblock Counting small induced subgraphs with hereditary properties.
\newblock In {\em Proc.\ of {ACM STOC}}, pages 1543--1551, 2022.

\bibitem{GishbolinerLSY23}
Lior Gishboliner, Yevgeny Levanzov, Asaf Shapira, and Raphael Yuster.
\newblock Counting homomorphic cycles in degenerate graphs.
\newblock {\em {ACM} Trans. Algorithms}, 19(1):2:1--2:22, 2023.

\bibitem{Grohe&2009expansion}
Martin Grohe and Dániel Marx.
\newblock On tree width, bramble size, and expansion.
\newblock {\em J.\ Comb.\ Theory, Ser.\ B}, 99(1):218 -- 228, 2009.

\bibitem{ParityHolant13}
Heng Guo, Pinyan Lu, and Leslie~G. Valiant.
\newblock The complexity of symmetric boolean parity holant problems.
\newblock {\em SIAM Journal on Computing}, 42(1):324--356, 2013.

\bibitem{ImpagliazzoP01}
Russell Impagliazzo and Ramamohan Paturi.
\newblock On the {C}omplexity of k-{S}{A}{T}.
\newblock {\em J. Comput. Syst. Sci.}, 62(2):367--375, 2001.

\bibitem{ImpagliazzoPZ01}
Russell Impagliazzo, Ramamohan Paturi, and Francis Zane.
\newblock Which {P}roblems {H}ave {S}trongly {E}xponential {C}omplexity?
\newblock {\em J. Comput. Syst. Sci.}, 63(4):512--530, 2001.

\bibitem{Ladner75}
Richard~E. Ladner.
\newblock On the structure of polynomial time reducibility.
\newblock {\em J. {ACM}}, 22(1):155--171, 1975.

\bibitem{LinW17}
Jiabao Lin and Hanpin Wang.
\newblock The complexity of holant problems over boolean domain with non-negative weights.
\newblock In Ioannis Chatzigiannakis, Piotr Indyk, Fabian Kuhn, and Anca Muscholl, editors, {\em 44th International Colloquium on Automata, Languages, and Programming, {ICALP} 2017, July 10-14, 2017, Warsaw, Poland}, volume~80 of {\em LIPIcs}, pages 29:1--29:14. Schloss Dagstuhl - Leibniz-Zentrum f{\"{u}}r Informatik, 2017.

\bibitem{Lovasz12}
L{\'{a}}szl{\'{o}} Lov{\'{a}}sz.
\newblock {\em Large {N}etworks and {G}raph {L}imits}, volume~60 of {\em Colloquium Publications}.
\newblock American Mathematical Society, 2012.

\bibitem{Marx05}
D{\'{a}}niel Marx.
\newblock Parameterized complexity of constraint satisfaction problems.
\newblock {\em Comput. Complex.}, 14(2):153--183, 2005.

\bibitem{Marx10}
D{\'{a}}niel Marx.
\newblock Can {Y}ou {B}eat {T}reewidth?
\newblock {\em Theory Comput.}, 6(1):85--112, 2010.

\bibitem{MarxSS21}
D{\'{a}}niel Marx, Govind~S. Sankar, and Philipp Schepper.
\newblock Degrees and gaps: Tight complexity results of general factor problems parameterized by treewidth and cutwidth.
\newblock In Nikhil Bansal, Emanuela Merelli, and James Worrell, editors, {\em 48th International Colloquium on Automata, Languages, and Programming, {ICALP} 2021, July 12-16, 2021, Glasgow, Scotland (Virtual Conference)}, volume 198 of {\em LIPIcs}, pages 95:1--95:20. Schloss Dagstuhl - Leibniz-Zentrum f{\"{u}}r Informatik, 2021.

\bibitem{MarxSS22}
D{\'{a}}niel Marx, Govind~S. Sankar, and Philipp Schepper.
\newblock Anti-factor is {FPT} parameterized by treewidth and list size (but counting is hard).
\newblock In Holger Dell and Jesper Nederlof, editors, {\em 17th International Symposium on Parameterized and Exact Computation, {IPEC} 2022, September 7-9, 2022, Potsdam, Germany}, volume 249 of {\em LIPIcs}, pages 22:1--22:23. Schloss Dagstuhl - Leibniz-Zentrum f{\"{u}}r Informatik, 2022.

\bibitem{McCartin06}
Catherine McCartin.
\newblock Parameterized counting problems.
\newblock {\em Ann. Pure Appl. Logic}, 138(1-3):147--182, 2006.

\bibitem{meng2025}
Boning Meng, Juqiu Wang, and Mingji Xia.
\newblock The $\text{FP}^\text{NP}$ versus \#{P} dichotomy for \#{EO}.
\newblock {\em CoRR}, abs/2502.02012, 2025.

\bibitem{PenaS22}
Daniel Paul{-}Pena and C.~Seshadhri.
\newblock A dichotomy theorem for linear time homomorphism orbit counting in bounded degeneracy graphs.
\newblock {\em CoRR}, abs/2211.08605, 2022.

\bibitem{PeyerimhoffRSV21}
Norbert Peyerimhoff, Marc Roth, Johannes Schmitt, Jakob Stix, and Alina Vdovina.
\newblock Parameterized (modular) counting and cayley graph expanders.
\newblock In Filippo Bonchi and Simon~J. Puglisi, editors, {\em 46th International Symposium on Mathematical Foundations of Computer Science, {MFCS} 2021, August 23-27, 2021, Tallinn, Estonia}, volume 202 of {\em LIPIcs}, pages 84:1--84:15. Schloss Dagstuhl - Leibniz-Zentrum f{\"{u}}r Informatik, 2021.

\bibitem{PeyerimhoffRSSVarxiv}
Norbert Peyerimhoff, Marc Roth, Johannes Schmitt, Jakob Stix, and Alina Vdovina.
\newblock Parameterized (modular) counting and cayley graph expanders.
\newblock {\em CoRR}, abs/2104.14596, 2021.

\bibitem{PeyerimhoffRSSVW23}
Norbert Peyerimhoff, Marc Roth, Johannes Schmitt, Jakob Stix, Alina Vdovina, and Philip Wellnitz.
\newblock Parameterized counting and cayley graph expanders.
\newblock {\em {SIAM} J. Discret. Math.}, 37(2):405--486, 2023.

\bibitem{Plummer07}
Michael~D. Plummer.
\newblock Graph factors and factorization: 1985-2003: {A} survey.
\newblock {\em Discret. Math.}, 307(7-8):791--821, 2007.

\bibitem{Roth17}
Marc Roth.
\newblock Counting {R}estricted {H}omomorphisms via {M}{\"{o}}bius {I}nversion over {M}atroid {L}attices.
\newblock In {\em Proc. {ESA} 2017}, pages 63:1--63:14, 2017.

\bibitem{Roth19}
Marc Roth.
\newblock {\em Counting {P}roblems on {Q}uantum {G}raphs: {P}arameterized and {E}xact {C}omplexity {C}lassifications}.
\newblock PhD thesis, Saarland University, 2019.

\bibitem{RothSW20}
Marc Roth, Johannes Schmitt, and Philip Wellnitz.
\newblock Counting small induced subgraphs satisfying monotone properties.
\newblock In {\em Proc.\ of IEEE FOCS}, pages 1356--1367, 2020.

\bibitem{Stanley11}
Richard~P. Stanley.
\newblock {\em Enumerative {C}ombinatorics: {V}olume 1}.
\newblock Cambridge University Press, 2011.

\bibitem{Toda91}
Seinosuke Toda.
\newblock {PP} is as {H}ard as the {P}olynomial-{T}ime {H}ierarchy.
\newblock {\em {SIAM} J. Comput.}, 20(5):865--877, 1991.

\bibitem{Valiant79}
Leslie~G. Valiant.
\newblock The {C}omplexity of {C}omputing the {P}ermanent.
\newblock {\em Theor. Comput. Sci.}, 8:189--201, 1979.

\bibitem{Valiant79b}
Leslie~G. Valiant.
\newblock The {C}omplexity of {E}numeration and {R}eliability {P}roblems.
\newblock {\em {SIAM} J. Comput.}, 8(3):410--421, 1979.

\bibitem{Valiant08}
Leslie~G. Valiant.
\newblock Holographic {A}lgorithms.
\newblock {\em {SIAM} J. Comput.}, 37(5):1565--1594, 2008.

\bibitem{WilliamsWWY15}
Virginia~Vassilevska Williams, Joshua~R. Wang, Richard~Ryan Williams, and Huacheng Yu.
\newblock Finding {F}our-{N}ode {S}ubgraphs in {T}riangle {T}ime.
\newblock In {\em Proc. {SODA} 2015}, pages 1671--1680, 2015.

\bibitem{WilliamsXXZ24}
Virginia~Vassilevska Williams, Yinzhan Xu, Zixuan Xu, and Renfei Zhou.
\newblock New bounds for matrix multiplication: from alpha to omega.
\newblock In David~P. Woodruff, editor, {\em Proceedings of the 2024 {ACM-SIAM} Symposium on Discrete Algorithms, {SODA} 2024, Alexandria, VA, USA, January 7-10, 2024}, pages 3792--3835. {SIAM}, 2024.

\end{thebibliography}

\newpage

\appendix

\section{Generating Signature Sets for each Type}

\begin{lemma}
    There are infinitely many signature sets of each type $\mathbb{T}[\mathsf{Lin}]$, $\mathbb{T}[\omega]$, and $\mathbb{T}[\infty]$. This remains true even if computation is done modulo a prime $p$, that is, if the fingerprint $\chi$ is evaluated modulo $p$, and $s(0)\neq 0 \mod p$ for all signatures.
\end{lemma}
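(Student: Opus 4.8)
The plan is to exploit that the fingerprint map $s \mapsto (\chi(d,s))_{d\geq 2}$ is ``triangular'' and can therefore be inverted term by term, which lets one build signatures with \emph{prescribed} fingerprints. The key observation is that in the defining sum for $\chi(d,s)$ (Definition~\ref{def:fingerprint_intro}) the unique one-block partition $\top_d=\{[d]\}$ contributes exactly $s(d)/s(0)$, whereas every other partition $\sigma\neq\top_d$ consists of blocks of size strictly less than $d$ and hence contributes a monomial in $s(0),\dots,s(d-1)$ only. Thus, assuming $s(0)=1$, we have $\chi(d,s)=s(d)+P_d\big(s(1),\dots,s(d-1)\big)$ for an integer-coefficient polynomial $P_d$ independent of $s(d)$. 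Consequently, for any $t\in\mathbb{Z}$ and any sequence $(\chi_d)_{d\geq 2}$ of integers that is computable uniformly in $d$, the recursion
\[ s(0):=1,\qquad s(1):=t,\qquad s(d):=\chi_d-P_d\big(s(1),\dots,s(d-1)\big)\ \ (d\geq 2) \]
defines a computable integer-valued signature $s_{t}$ with $s_{t}(0)=1\neq 0$ and $\chi(d,s_{t})=\chi_d$ for all $d\geq 2$. Since all values are integers and $s_t(0)=1$ is a unit modulo every prime $p$, reduction modulo $p$ is well-defined and still gives $\chi(d,s_{t})\equiv\chi_d \mod p$ as well as $s_t(0)\not\equiv 0\mod p$.

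With this tool, for each type we fix a target sequence and let $t$ range over $\mathbb{Z}_{\geq 0}$. For $\mathbb{T}[\mathsf{Lin}]$ take $\chi_d=0$ for all $d\geq 2$ (this recovers $s_t(n)=t^n$); for $\mathbb{T}[\omega]$ take $\chi_2=1$ and $\chi_d=0$ for all $d\geq 3$; for $\mathbb{T}[\infty]$ take $\chi_3=1$ and $\chi_d=0$ for all other $d\geq 2$. In each case the singleton signature set $\{s_t\}$ is of the claimed type directly from Definition~\ref{def:sigtype_intro}: in the first, $\chi(d,s_t)=0$ for all $d\geq 2$; in the second, $\chi(d,s_t)=0$ for $d\geq 3$ while $\chi(2,s_t)=1\neq 0$; in the third, $\chi(3,s_t)=1\neq 0$ with $3\geq 3$. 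The same reasoning applies verbatim modulo $p$: the fingerprint residues equal $0$ exactly where prescribed and $1$ (hence nonzero, since $1\not\equiv 0\mod p$ for every prime $p$) at the witnessing degree, and $s_t(0)=1\not\equiv 0\mod p$. Finally, for distinct $t$ the signatures $s_t$ take distinct values at the argument $1$, so the sets $\{s_t\}$ are pairwise distinct; this produces infinitely many signature sets of each type, in both the exact and the modulo-$p$ settings. (For $\mathbb{T}[\mathsf{Lin}]$ one may alternatively use the explicit family $s_c(n)=c^n$, $c\in\mathbb{Z}_{\geq 1}$, which is the special case $\chi_d\equiv 0$ of the above and, via Lemma~\ref{lem:linear_type_sigs_equivalence}, is visibly of type $\mathbb{T}[\mathsf{Lin}]$.)

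The argument is routine once the triangularity observation is in place; the only point needing a little care is the modulo-$p$ claim, where one must keep the constructed signatures $p$-integral so that reduction modulo $p$ is meaningful, and ensure that the fingerprint value witnessing the type does not accidentally vanish modulo $p$. Both are handled by working with integer data and by using the residue $1$, which is a nonzero unit modulo every prime. No input beyond the explicit formula for $\chi$ and the definitions of the types is needed.
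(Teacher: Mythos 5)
Your proposal is correct and follows essentially the same route as the paper: both exploit that $\top_d$ contributes $s(d)/s(0)$ while all other partitions only involve $s(1),\dots,s(d-1)$, so one can recursively prescribe the fingerprints (all zero; $\chi_2=1$; resp.\ a nonzero value at some $d\geq 3$), choose the witnessing value to be $1$ so it survives modulo every prime, and vary the free parameter $s(1)$ to obtain infinitely many singleton signature sets of each type. The only cosmetic difference is that for $\mathbb{T}[\infty]$ the paper simply sets $s(d)=0$ for $d\geq 4$ instead of forcing the later fingerprints to vanish, which is immaterial for the claim.
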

\begin{proof}
We start with a general observation. Let us  write $\top_d$ for the coarsest partition of $[d]$, that is, the partition containing only one block $B=[d]$. In particular, $|\top_d|=1$. Thus
\[ \chi(d,s) = \frac{s(d)}{s(0)} + \sum_{\sigma\neq \top_d} (-1)^{|\sigma|-1} (|\sigma|-1)! \cdot \prod_{B \in \sigma} \frac{s(|B|)}{s(0)}\,. \]
Since each block in $\sigma\neq \top_d$ is of size at most $d-1$, we can enforce $\chi(d,s)=0$ by setting
\[s(d):= -s(0)^{-1} \cdot \sum_{\sigma\neq \top_d} (-1)^{|\sigma|-1} (|\sigma|-1)! \cdot \prod_{B \in \sigma} \frac{s(|B|)}{s(0)}\,.\]
(Recall that we ensured $s(0)\neq 0 \mod p$ if computation is done modulo $p$.)
This easily enables us to generate signatures of each type as follows:
\begin{itemize}
    \item[(1)] For type $\mathbb{T}[\mathsf{Lin}]$, fix any constant $c\in \mathbb{Q}$ and define $s_c(0)=1$, and $s_c(1)=c$. Then, for $d\geq 2$ we recursively define
    \[s_c(d):= -\sum_{\sigma\neq \top_d} (-1)^{|\sigma|-1} (|\sigma|-1)! \cdot \prod_{B \in \sigma} s_c(|B|)\,.\]
    In that way, for each $c$, the signature set $\mathcal{S}=\{s_c\}$ is of type $\mathbb{T}[\mathsf{Lin}]$.
    \item[(2)] For type $\mathbb{T}[\mathsf{\omega}]$, fix any constant $c$ and define $s_c(0)=1$, and $s_c(1)=c$. Moreover, set
    \[s_c(2):= 1-\sum_{\sigma\neq \top_2} (-1)^{|\sigma|-1} (|\sigma|-1)! \cdot \prod_{B \in \sigma} s_c(|B|)\,,\]
    which guarantees $\chi(2,c)=1$, and, finally, define recursively
    \[s_c(d):= -\sum_{\sigma\neq \top_d} (-1)^{|\sigma|-1} (|\sigma|-1)! \cdot \prod_{B \in \sigma} s_c(|B|)\,,\]
    for all $d\geq 3$. Again, it is easy to see that for each $c\in \mathbb{Q}$, the signature set $\mathcal{S}=\{s_c\}$ is then of type $\mathbb{T}[\mathsf{\omega}]$.
    \item[(3)] For the last type $\mathbb{T}[\mathsf{\infty}]$, fix again any constant $c\in \mathbb{Q}$ and define $s_c(0)=1$, and $s_c(1)=s_c(2)=c$. Moreover, set
    \[s_c(3):= 1-\sum_{\sigma\neq \top_3} (-1)^{|\sigma|-1} (|\sigma|-1)! \cdot \prod_{B \in \sigma} s_c(|B|)\,,\]
    and $s_c(d)=0$ for all $d\geq 4$.
    Clearly $\mathcal{S}=\{s_c\}$ is of type $\mathbb{T}[\mathsf{\infty}]$.
\end{itemize}
We conclude the proof by pointing out that our choices for $s_c$ made sure that $s_c(2)=1$ in case (2), and $s_c(3)=1$ in case (3). Thus the lemma holds, as promised, even if the fingerprints are evaluated modulo~$p$.
\end{proof}

\section{Omitted proofs from \Cref{sec:sig0}}\label{sec:appendix_fastMM}
\subsection{Proof of \Cref{lem:listHomsMatrix}}\label{lem:AppendixListHomsMatrix}
\begin{lemma}
Let $\mathcal{H}$ be a class of graphs of treewidth at most 2.
Then $\#\listhomsprob(\mathcal{H})$ can be solved in time $f(|H|)\cdot \mathcal{O}(|V(G)|^{\omega})$ for some computable function $f$. Here, $\omega$ is the matrix multiplication exponent.
\end{lemma}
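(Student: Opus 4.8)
The plan is to reduce the counting of list homomorphisms from a treewidth-$2$ graph $H$ to the counting of (ordinary, uncoloured) homomorphisms from a related treewidth-$2$ graph, and then invoke the matrix-multiplication algorithm of Curticapean, Dell and Marx~\cite[Theorem~1.7]{CurticapeanDM17} --- or, more precisely, the coloured version encapsulated in \Cref{lem:colhom_matrix_multi}. The key observation is that list constraints can be absorbed into the host graph by pendant gadgets without increasing treewidth beyond $2$. Concretely, given $H$, $G$ and $\mathcal{L}=(L_v)_{v\in V(H)}$, I would build a new host graph $G'$ as follows: start from $G$, and for each vertex $v\in V(H)$ add a fresh apex vertex $a_v$ joined precisely to the vertices of $L_v$ in $G$. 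Then attach to each $v\in V(H)$ a new pendant vertex $p_v$ adjacent only to $v$ in $H$, obtaining $H'$. A homomorphism from $H'$ to $G'$ restricts to a homomorphism from $H$ to $G$ (since $G$ is an induced subgraph of $G'$ and $H$ an induced subgraph of $H'$, and the pendants $p_v$ can only be mapped to vertices of $G'$; we further need the map on $V(H)$ to land in $V(G)$, which can be forced by a vertex-colouring as in \Cref{lem:colhom_matrix_multi}), and the constraint $\varphi(v)\in L_v$ is exactly the condition that $\varphi(v)$ has a common neighbour with the image of $p_v$ when $p_v\mapsto a_v$. The cleanest way to phrase this is via \Cref{lem:colhom_matrix_multi} directly: colour the vertices of $G'$ so that the $V(H)$-part of any colour-preserving homomorphism is forced into $V(G)$ and each $p_v$ is forced onto $a_v$; then $\#\homs{(H',\nu_{H'})}{(G',\nu_{G'})}$ equals $\#\homs{H}{G}[\mathcal{L}]$.

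The main point to check is that $H'$ still has treewidth at most $2$. This is immediate: attaching a degree-one vertex to any graph does not increase its treewidth (take an optimal tree-decomposition of $H$, pick a bag containing $v$, and add a new leaf bag $\{v,p_v\}$), so $\tw(H')\le\max(\tw(H),1)\le 2$. Hence $H'$ lies in a class of treewidth at most $2$, and \Cref{lem:colhom_matrix_multi} applies with $|H'|=|H|+|V(H)|\le 2|H|$, giving running time $g(|H'|)\cdot\mathcal{O}(|V(G')|^\omega)$ for a computable $g$. Since $|V(G')|=|V(G)|+|V(H)|\in\mathcal{O}(|V(G)|)$ (as $|V(H)|$ is bounded by a function of the parameter) and $G'$ is constructible in time polynomial in $|V(G)|$ and a function of $|H|$, the overall running time is $f(|H|)\cdot\mathcal{O}(|V(G)|^\omega)$ for a suitable computable $f$, as required.

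I expect the only genuinely delicate step to be bookkeeping: making sure the colouring $\nu_{G'}$ really pins the pendants $p_v$ to their apex vertices $a_v$ and the vertices of $H$ away from the $a_u$'s, so that the bijection between the two homomorphism sets is exact --- in particular that distinct vertices $v\ne w$ of $H$ do not accidentally get identified with apex vertices, and that the apex vertices $a_v$ are not reachable from $V(H)$ by the restricted maps. This is handled by giving $a_v$ its own unique colour (one per vertex of $H$) not shared with any vertex of $G$, giving $p_v$ a matching unique colour, and colouring all of $V(G)$ with colours disjoint from these. One then verifies that the restriction map $\varphi\mapsto\varphi|_{V(H)}$ is the desired bijection: injectivity because the images of the $p_v$ are forced, and surjectivity because any list homomorphism $\psi$ extends uniquely by $p_v\mapsto a_v$. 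The remaining verifications (that $G'$ has $\mathcal{O}(|V(G)|)$ vertices, and that the whole construction and postprocessing cost fits the claimed bound) are routine and I would not spell them out in detail.
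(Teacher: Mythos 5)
Your reduction is correct, but it takes a genuinely different route from the paper. The paper proves this lemma directly: it takes the Curticapean--Dell--Marx tree decomposition of $H$ into bags of size $3$ with size-$2$ separators, runs dynamic programming on partial counts $h_t(v_1,v_2)$, folds the list constraints into the indicator functions checking membership in $\homs{H[\beta(t)]}{G}[\mathcal{L}|_{\beta(t)}]$, and realises the join over the third bag vertex as a product of two $|V(G)|\times|V(G)|$ matrices. You instead absorb the lists into the instance: apex vertices $a_v$ adjacent exactly to $L_v$, pendants $p_v$ attached to each $v\in V(H)$, and a colouring that pins $p_v$ to $a_v$ and keeps $V(H)$ inside $V(G)$, so that the edge $\{v,p_v\}$ enforces $\varphi(v)\in L_v$; then you invoke \Cref{lem:colhom_matrix_multi}. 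The bijection argument and the treewidth bookkeeping (pendants do not raise treewidth above $\max(\tw(H),1)$, $|V(G')|\in f(|H|)\cdot\mathcal{O}(|V(G)|)$) are all fine, including the corner cases $L_v=\emptyset$ and isolated vertices of $H$. Your approach is more modular and reusable (any algorithm for coloured homomorphisms on a treewidth-$2$ class immediately yields the list version), whereas the paper's direct DP is self-contained and handles colours and lists in one stroke.

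One caveat you should make explicit: as the paper is structured, \Cref{lem:colhom_matrix_multi} is itself justified only by the appendix proof of the very lemma you are proving (colours are the special case $L_v=\nu_G^{-1}(\nu_H(v))$ of lists), so citing it here is circular within the paper's logical organisation. Your argument is sound, but it presupposes an independent proof of the coloured case --- which exists and is exactly the routine adaptation of \cite[Theorem~1.7]{CurticapeanDM17} (the same dynamic programming with colour checks in place of list checks). So either prove the coloured lemma directly by that adaptation and then apply your gadget, or note that your gadget shows the two lemmas are equivalent up to an FPT-size transformation; as written, the dependency points the wrong way relative to the paper's own proof order.
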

\begin{proof}
Let $H \in \mathcal{H}$ and $G$ be graphs and let $\mathcal{L} = (L_v)_{v\in V(H)}$ be a collection of sets $L_v \subseteq V(G)$. For what follows, for $X \subseteq V(H)$, let $\mathcal{L}|_{X} = (L_v)_{v \in X}$. 

As shown in \cite{CurticapeanDM17}, $H$ admits a tree decomposition $(T,\beta)$ satisfying what follows.
\begin{enumerate}
    \item The root bag $r$ has size 2 and every other bag has size exactly 3.
    \item The root bag has a unique child $r'$.
    \item For every $t \in V(T)\backslash\{r\}$, we have $|\sigma(t)| = 2$. 
\end{enumerate}

Following \cite{CurticapeanDM17} we consider a numbering of the vertex set $V(H)$ setting $V(H) = \{1,\ldots, n\}$, that respects the structure of the tree decomposition, in the following sense. For $u \in V(H)$ let $t \in T$ be the topmost bag of $u$, that is, $u$ is not contained in any ancestor of $t$. Then, for any vertex $u' \in V(H)\backslash\{u\}$ with topmost bag $t' \in T$ with $t$ being is an ancestor of $t'$, it holds $u < u'$.

By construction, each bag $t\in T\backslash\{r\}$ contains exactly three vertices, which we denote by $u_1(t) < u_2(t) < u_3(t)$. The vertices in $\sigma(t)$ have topmost bags above $t$, while the single vertex in $\beta(t)\setminus \sigma(t)$ has $t$ as its topmost bag. By the numbering introduced above, it follows that $\sigma(t) = \{u_1(t), u_2(t)\}$.

For $t\in T$ and $v_1, v_2 \in V(G)$, let $h_t(v_1,v_2)$ denote the number of homomorphisms in $\homs{H[\gamma(t)]}{G}[\mathcal{L}|_{\gamma(t)}]$ that map $u_1(t)$ to $v_1$ and $u_2(t)$ to $v_2$. For the root $r$, we have $|\beta(r)| = 2$, so summing up the values $h_r(v_1,v_2)$ over all $v_1,v_2 \in V(G)$ evaluates to $\#\homs{H}{G}[\mathcal{L}]$, which can be done in $\mathcal{O}(|V(G)|^2)$ time, assuming random access to the function table $h_r$.

To compute the function tables $h_t, t \in T$ we start from the leaves and then move upwards, as explained next. For every vertex $t \in T$, and for each $ij \in \{12,13,23\}$, let $C^{t}_{ij}$ be the set of children $t'$ of $t$ that satisfy $\sigma(t') = \{u_i(t), u_j(t)\}$. Note that $C^{t}_{12}\;\dot\cup\;C^{t}_{13}\;\dot\cup\;C^{t}_{23}$ is the set of all children of $t$. Further note that after fixing the values for the homomorphisms on $\beta(t)$, the extensions to the cones $\gamma(t')$ are independent for different children $t'$. So, $h_t(v_1, v_2)$ can be computed as follows.
\begin{equation}
    h_{t}(v_1,v_2) = \sum_{v_3\in V(G)}I_{t,v_1,v_2,v_3}\prod_{t'\in C^{t}_{12}}h_{t'}(v_1,v_2)\prod_{t'\in C^{t}_{13}}h_{t'}(v_1,v_3)\prod_{t'\in C^{t}_{23}}h_{t'}(v_2,v_3),
\end{equation}
where $I_{t,v_1,v_2,v_3} \in \{0,1\}$ indicates whether the mapping $\beta(t) \rightarrow \{v_1, v_2, v_3\}$ is a homomorphism in $\homs{H[\beta(t)]}{G}[\mathcal{L}|_{\beta(t)}]$. Note that if $t$ is a leaf, then $h_t(v_1, v_2) = \sum_{v_3 \in V(G)}I_{t, v_1, v_2, v_3}$.
Next, for every $t \in V(T)$, and $ij \in \{12,13,23\}$ we consider three functions $\alpha^t_{ij} : V(G) \times V(G) \rightarrow \mathbb{N}$, as defined below.
\begin{equation}
    \alpha^{t}_{ij}(v,v') = J^{ij}_{t,v,v'}\prod_{t'\in C^{t}_{ij}}h_{t'}(v,v'),
\end{equation}
where $J^{ij}_{t,v,v'} \in \{0,1\}$ indicates whether the mapping $\{u_i(t), u_j(t)\} \rightarrow \{v, v'\}$ is a homomorphism in $\homs{H[\{u_i(t), u_j(t)\}]}{G}[\mathcal{L}|_{\{u_i(t), u_j(t)\}}]$. If $t$ is a leaf, then $\alpha_{ij}^t(v,v') = J^{ij}_{t,v,v'}$. Note that, $I_{t,v_1,v_2,v_3} = J^{12}_{t,v_1,v_2}\cdot J^{13}_{t,v_1,v_3}\cdot J^{23}_{t,v_2,v_3}$. Thus, $h_t(v_1,v_2)$ can be rewritten as follows.

\begin{equation}
    h_{t}(v_1,v_2) = \alpha^{t}_{12}(v_1,v_2)\cdot\sum_{v_3 \in V(G)}(\alpha^{t}_{13}(v_1,v_3)\cdot\alpha^{t}_{23}(v_2,v_3))
\end{equation}

Let $A_{13}, A_{23}$ be two $|V(G)|\times |V(G)|$ matrices where the rows and columns are indexed by $1, \ldots, |V(G)|$ with $(v, v')$ entries equal to $\alpha^t_{13}(v, v')$ and $a_{23}^t(v, v')$ respectively, for all $v, v' \in V(G)$. Then, $h_t(v_1,v_2)$ is equivalently given by the product of $a_{12}^t(v_1,v_2)$ with the $(v_1, v_2)$-th entry of the matrix $A_{13}\cdot A_{23}^T$. The running time is dominated by the matrix multiplication, which yields a total
running time of $f(H)\cdot|V(G)|^
\omega$. Note that the entries of the matrices are nonnegative integers not greater than $|V(G)|^{|V(H)|}$, hence the arithmetic operations are on $\mathcal{O}(|V(H)|\log{|V(G)|})$ bit integers; arithmetic
operations on such integers are in time $f(|V(H)|)$ in the standard word-RAM model with $\log{|V(G)|}$-size
words.  
\end{proof}

\section{Omitted proofs from Section~\ref{sec:uncoloured}}\label{sec:app-uncol}
For the proofs of Lemma~\ref{lem:colEmb_to_colHom_S-coloured} and Lemma~\ref{lem:monotonicity_S-coloured}, it will be very convenient to consider $\mathcal{S}$-vertex-coloured graphs and signature grids over $\mathcal{S}$ as relational structures with unary predicates: For this section, fix a finite set of signatures $\mathcal{S}=\{s_1,\dots,s_\ell\}$. 

We define a \emph{vocabulary} $\tau=(E,P_1,\dots,P_\ell)$, where $E$ is a binary relation symbol, and each $P_i$ is a unary relation symbol. A $\tau$-\emph{structure} $\mathcal{H}$ consists of a finite set of vertices $V=V(\mathcal{H})$, a binary relation $E^\mathcal{H}$ over $V$, and unary relations $P_1^\mathcal{H},\dots,P_\ell^{\mathcal{H}}$ over $V$.
A \emph{homomorphism} from a $\tau$-structure $\mathcal{H}$ to a $\tau$-structure $\mathcal{G}$ is a mapping $\varphi: V(\mathcal{H}) \to V(\mathcal{G})$ such that
\begin{itemize}
    \item for all $(u,v)\in E^\mathcal{H}$ we have $(\varphi(u),\varphi(v))\in E^\mathcal{G}$, and
    \item for all $i\in[\ell]$ and for all $v\in P_i^\mathcal{H}$ we have $\varphi(v)\in P_i^\mathcal{G}$.
\end{itemize}

An \emph{embedding} from $\mathcal{H}$ to $\mathcal{G}$ is an injective homomorphism from~$\mathcal{H}$ to~$\mathcal{G}$, and an \emph{isomorphism} from~$\mathcal{H}$ to~$\mathcal{G}$ is a bijection $\iota:V(\mathcal{H})\to V(\mathcal{G})$ such that for all $u,v\in V(\mathcal{H})$ we have $(u,v)\in E^\mathcal{H} \Leftrightarrow (\iota(u),\iota(v))\in E^\mathcal{G}$ and, for each $i\in[\ell]$, $v\in P_i^\mathcal{H}\Leftrightarrow \iota(v)\in P_i^\mathcal{G}$. We write $\mathcal{H}\cong\mathcal{G}$ if an isomorphism exists. Finally, an \emph{automorphism} of $\mathcal{H}$ is an isomorphism from $\mathcal{H}$ to itself. We write $\homs{\mathcal{H}}{\mathcal{G}}$, $\embs{\mathcal{H}}{\mathcal{G}}$, and $\auts(\mathcal{H})$ for the sets of homomorphisms and embeddings from $\mathcal{H}$ to $\mathcal{G}$, and for the set of automorphisms of $\mathcal{H}$, respectively.

Now it is easy to see that both $\mathcal{S}$-vertex-coloured graph $(H,\xi)$ and signature grids $\Omega=(G,\{s_v\}_{v\in V(G)})$ over $\mathcal{S}$ correspond to $\tau$-structures where the binary relation symbol $E$ corresponds to the edges, and a vertex equipped with/coloured by signature $s_i$ is contained in the unary relation $P_i$. Moreover, it is also easy to see that the notions of homomorphisms, embeddings, isomorphisms, and automorphisms are identical when we them as $\tau$-structures.

We are now able to prove Lemma~\ref{lem:colEmb_to_colHom_S-coloured}, which we restate for convenience.
\begin{lemma}[Lemma~\ref{lem:colEmb_to_colHom_S-coloured}, restated]
      Let $\mathcal{S}$ be a finite set of signatures, let $(H,\nu)$ be an $\mathcal{S}$-vertex-coloured graph, and let $\Omega=(G,\{s_v\}_{v\in V(G)})$ be a signature grid over $\mathcal{S}$, we have
    \[ \#\embs{(H,\nu)}{\Omega} = \sum_{\rho \in \mathsf{colPart}(H)} \mu(\bot,\rho)\cdot \#\homs{(H,\nu)/\rho}{\Omega}\,,\]
    where $\mu(\bot,\rho)=\prod_{B\in \rho}(-1)^{|B|-1}(|B|-1)!$ is the (usual) M\"obius function of partitions.
\end{lemma}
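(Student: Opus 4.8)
This is the standard Möbius-inversion identity translating embeddings into homomorphisms, adapted to the $\mathcal{S}$-vertex-coloured setting; I will prove it via the relational-structure viewpoint set up just above. First I would observe that every homomorphism $\varphi\in\homs{(H,\nu)}{\Omega}$ induces a partition $\rho_\varphi$ of $V(H)$ whose blocks are the fibres $\varphi^{-1}(w)$ for $w$ in the image; since $\varphi$ preserves the unary predicates $P_i$, all vertices in a common fibre carry the same signature, so $\rho_\varphi\in\mathsf{colPart}(H)$. Conversely, $\varphi$ factors uniquely as $(H,\nu)\to (H,\nu)/\rho_\varphi \xrightarrow{\bar\varphi}\Omega$ with $\bar\varphi$ \emph{injective}, i.e.\ $\bar\varphi$ is an embedding of $(H,\nu)/\rho_\varphi$ into $\Omega$. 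This gives, for each fixed $\rho\in\mathsf{colPart}(H)$, a bijection between $\{\varphi\in\homs{(H,\nu)}{\Omega}\mid \rho_\varphi=\rho\}$ and $\embs{(H,\nu)/\rho}{\Omega}$, hence the identity
\[
\#\homs{(H,\nu)}{\Omega} \;=\; \sum_{\rho\in\mathsf{colPart}(H)} \#\embs{(H,\nu)/\rho}{\Omega}\,.
\]
The only subtlety here is that $(H,\nu)/\rho$ may contain self-loops when $\rho$ identifies adjacent vertices; in that case $\embs{(H,\nu)/\rho}{\Omega}$ is empty (as $\Omega$ is loopless), and the corresponding term simply vanishes, consistent with the fact that no $\varphi$ has such a $\rho_\varphi$.

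Next I would upgrade this to a statement over the whole poset $\mathsf{colPart}(H)$ ordered by refinement. The same argument, applied with $(H,\nu)/\sigma$ in place of $(H,\nu)$ for a fixed colour-consistent $\sigma$, and using that quotients of $(H,\nu)/\sigma$ by colour-consistent partitions correspond exactly to colour-consistent partitions $\rho\geq\sigma$ of $V(H)$, yields
\[
\#\homs{(H,\nu)/\sigma}{\Omega} \;=\; \sum_{\substack{\rho\in\mathsf{colPart}(H)\\ \rho\geq\sigma}} \#\embs{(H,\nu)/\rho}{\Omega}\,.
\]
Applying this with $\sigma=\bot$ (the finest partition, which is trivially colour-consistent, and $(H,\nu)/\bot=(H,\nu)$) and then Möbius inversion over the poset $\mathsf{colPart}(H)$ gives
\[
\#\embs{(H,\nu)}{\Omega} \;=\; \sum_{\rho\in\mathsf{colPart}(H)} \mu(\bot,\rho)\cdot \#\homs{(H,\nu)/\rho}{\Omega}\,.
\]
Here I must justify that the interval $[\bot,\rho]$ in $\mathsf{colPart}(H)$ is isomorphic to the corresponding interval in the full partition lattice $\mathsf{Part}(V(H))$ — this holds because below a colour-consistent $\rho$ \emph{every} refinement is automatically colour-consistent, so the down-set of $\rho$ in $\mathsf{colPart}(H)$ equals its down-set in $\mathsf{Part}(V(H))$. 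Consequently the Möbius function agrees with the usual one, $\mu(\bot,\rho)=\prod_{B\in\rho}(-1)^{|B|-1}(|B|-1)!$, which is the closed form quoted in the statement.

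\textbf{Main obstacle.} There is no deep difficulty here; the statement is essentially a coloured refinement of the classical embedding-to-homomorphism inversion (as in~\cite[Chapter 5.2.3]{Lovasz12}, which the excerpt already cites). The one point requiring genuine care is the bookkeeping around colour-consistency: namely checking that (i) every $\rho_\varphi$ arising from a homomorphism is colour-consistent, (ii) quotienting by a colour-consistent partition is well-defined and commutes with further colour-consistent quotients, and (iii) the poset $\mathsf{colPart}(H)$ is a down-closed subposet of $\mathsf{Part}(V(H))$ in the strong sense that all intervals $[\bot,\rho]$ coincide, so that restricting to colour-consistent partitions does not alter the Möbius function. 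Once these three points are dispatched, the proof is a verbatim transcription of the uncoloured argument, which is why the excerpt defers it. I would therefore keep the written proof short, stating the factorisation-through-quotient lemma explicitly and then invoking Möbius inversion.
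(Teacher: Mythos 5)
Your proof is correct, but it takes a genuinely different route from the paper's. The paper's appendix proof views $(H,\nu)$ and $\Omega$ as relational structures with unary predicates, applies the classical uncoloured M\"obius inversion over the \emph{full} partition lattice of $V(H)$ (citing Lov\'asz), and then disposes of the non-colour-consistent terms in one line: if a block mixes two colours, the quotient structure has a vertex lying in two distinct unary predicates, while no vertex of $\Omega$ does, so the corresponding homomorphism count is zero and the sum collapses to $\mathsf{colPart}(H)$. You instead stay inside the coloured world from the start: you prove the fibre-partition factorisation $\#\homs{(H,\nu)/\sigma}{\Omega}=\sum_{\rho\geq\sigma}\#\embs{(H,\nu)/\rho}{\Omega}$ over colour-consistent partitions and invert over the subposet $\mathsf{colPart}(H)$, adding the (correct and necessary) observation that every refinement of a colour-consistent partition is colour-consistent, so the interval $[\bot,\rho]$ in $\mathsf{colPart}(H)$ coincides with the interval in the full partition lattice and the M\"obius values are the classical $\prod_{B\in\rho}(-1)^{|B|-1}(|B|-1)!$. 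Your handling of self-loops (empty embedding sets, terms vanish) is also fine, though for the generalised identity at a $\sigma$ that already creates a loop you should note explicitly that both sides are zero. The paper's route buys brevity by black-boxing the uncoloured identity; yours is self-contained and in fact mirrors the style the paper itself uses for the $(\ell_1,\ell_2)$-coloured setting in Lemma~\ref{lem:col_embs_to_homs}, where M\"obius inversion is performed directly over the poset of colour-consistent partitions.
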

\begin{proof}
    Let $\mathcal{H}$ and $\mathcal{G}$ be the $\tau$-structures corresponding to $(H,\nu)$ and $\Omega$, respectively.

    Let $\rho$ be a (not necessarily colour-consistent) partition of $V(H) (=V(\mathcal{H}))$. The \emph{quotient structure} $\mathcal{H}/\rho$ is defined as follows: The vertices are the blocks of $\rho$, we include $(B_1,B_2)$ in $E^{\mathcal{H}/\rho}$ if there are $u\in B_1$ and $v\in B_2$ such that $(u,v)\in E^\mathcal{H}$, and, for each $i\in[\ell]$ we include $B \in P^{\mathcal{H}/\rho}_i$ if there is a vertex $v\in B$ with $v\in P_i^\mathcal{H}$. Observe that, for colour-consistent $\rho$, the $\tau$-structure $\mathcal{H}/\rho$ corresponds precisely to $(H,\nu)/\rho$.

    Now, verbatim to the case of graphs (see~\cite[(5.18)]{Lovasz12}), we use M\"obius inversion over the lattice of (all) partitions of $V(\mathcal{H})$ and obtain:
    \[\#\embs{\mathcal{H}}{\mathcal{G}}=\sum_{\rho} \mu(\bot,\rho) \cdot \#\homs{\mathcal{H}/\rho}{\mathcal{G}} \,,\]
    where the sum is over all partitions of $V(\mathcal{H})$.

    Finally, observe that $\#\homs{\mathcal{H}/\rho}{\mathcal{G}}=0$ if $\rho$ is not colour-consistent: If there is a block $B\in \rho$ containing vertices $u,v$ with $u\in P_i$ and $v\in P_j$ (i.e., $\nu(u)=s_i$ and $\nu(v)=s_j$) for $i\neq j$, then $B$ is contained in $P_i^{\mathcal{H}/\rho}\cap P_j^{\mathcal{H}/\rho}$. However, no vertex of $\mathcal{G}$ is contained in $P_i^{\mathcal{G}}\cap P_j^{\mathcal{G}}$ since each vertex of $G$ is only assigned one signature. Thus no homomorphism can exist. This concludes the proof since, for colour-consistent $\rho$, we clearly have $\#\homs{\mathcal{H}/\rho}{\mathcal{G}}=\#\homs{(H,\nu)/\rho}{\Omega}$.
\end{proof}

We continue with proving Lemma~\ref{lem:monotonicity_S-coloured}, which we also restate for convenience.

\begin{lemma}[Lemma \ref{lem:monotonicity_S-coloured}, restated]
    Let $\mathcal{S}$ be a finite set of signatures and let $\mathcal{C}$ be the class of all graphs $H$ for which there is a positive integer $k$ and a colouring $\nu:V(H)\to \mathcal{S}$ such that $\zeta_{\mathcal{S},k}(H,\nu)\neq 0$.
    \begin{itemize}
        \item[(1)] If all graphs in $\mathcal{C}$ are acyclic, then  $\text{\sc{p-UnColHolant}}(\mathcal{S})$ can be solved in FPT-near-linear time.
        \item[(2)] If $\mathcal{C}$ has unbounded treewidth, then $\text{\sc{p-UnColHolant}}(\mathcal{S})$ is $\#\W[1]$-complete.\qed
    \end{itemize}
\end{lemma}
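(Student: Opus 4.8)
\textbf{Proof plan for Lemma~\ref{lem:monotonicity_S-coloured}.}
The plan is to reduce both statements to the corresponding known complexity monotonicity results for linear combinations of homomorphism counts, translating everything into the language of $\tau$-structures introduced above. The starting point is Lemma~\ref{lem:uncoloured_hombasis_general}, which expresses $\mathsf{UnColHolant}(\Omega,k)$ as
\[
\prod_{i\in[\ell]}s_i(0)^{n_i}\cdot\sum_{(H,\nu)\in\mathcal{G}(\mathcal{S})}\zeta_{\mathcal{S},k}(H,\nu)\cdot\#\homs{(H,\nu)}{\Omega}\,,
\]
a \emph{finite} linear combination of homomorphism counts from $\mathcal{S}$-vertex-coloured graphs, which we now view as $\tau$-structures. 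Crucially, the coefficient function $\zeta_{\mathcal{S},k}$ is determined by $\mathcal{S}$ and $k$ alone, so it can be computed in time depending only on $k$ (since $\mathcal{S}$ is fixed), and $\mathcal{C}$ is the union over all $k$ of the set of graphs underlying terms with a nonzero coefficient.

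\textbf{Upper bound (part (1)).}
First I would observe that, since $\mathcal{S}$ is fixed, for each $k$ the set of $(H,\nu)$ with $\zeta_{\mathcal{S},k}(H,\nu)\neq 0$ is finite and computable, and each such $H$ has at most $k$ edges; if every graph in $\mathcal{C}$ is acyclic, then every such underlying graph $H$ is a forest. Counting homomorphisms from a coloured forest (equivalently, from an acyclic $\tau$-structure with only unary predicates besides the edge relation) into a signature grid $\Omega$ amounts to counting answers to an acyclic, quantifier-free conjunctive query, which can be done in time $f(|H|)\cdot\tilde{\mathcal{O}}(|V(\Omega)|+|E(\Omega)|)$ by the same folklore argument underlying Fact~\ref{fact:colhoms_lintime} and Lemma~\ref{lem:listHomsLinear} (treating the unary predicates $P_i^{\Omega}$ as the sets $\{v : s_v = s_i\}$). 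Summing the at most $f'(k)$ many weighted terms then yields the claimed FPT-near-linear running time; the prefactor $\prod_i s_i(0)^{n_i}$ is computable in near-linear time since the $s_i(0)$ are constants.

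\textbf{Lower bound (part (2)).}
This is the part requiring the complexity monotonicity machinery, and it is where I expect the only real subtlety. The claim is: if $\mathcal{C}$ has unbounded treewidth then $\text{\sc{p-UnColHolant}}(\mathcal{S})$ is $\#\W[1]$-hard (containment in $\#\W[1]$ already follows from the homomorphism-basis expansion together with~\cite{DalmauJ04}, as remarked earlier). The standard route is: the linear combination $\sum_{(H,\nu)}\zeta_{\mathcal{S},k}(H,\nu)\cdot\#\homs{(H,\nu)}{\cdot}$ is, by complexity monotonicity for linear combinations of homomorphisms (in the coloured/relational setting, see~\cite{CurticapeanDM17} and the adaptations used throughout this paper), exactly as hard as computing its hardest individual term $\#\homs{(H,\nu)}{\cdot}$ with nonzero coefficient. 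Concretely, given a target $(H^\star,\nu^\star)$ with $\zeta_{\mathcal{S},k}(H^\star,\nu^\star)\neq 0$, one takes tensor products of the input $\Omega$ with suitably chosen fixed coloured graphs (playing the role of the structures $(G,\nu_G)$ in Proposition~\ref{prop:distinct}), uses multiplicativity of homomorphism counts under tensor products, and applies a Vandermonde/interpolation argument (as in the proof of Lemma~\ref{lem:main_equivalence_hard} via Dedekind interpolation, or the analogue in~\cite{CurticapeanDM17}) to isolate $\#\homs{(H^\star,\nu^\star)}{\Omega}$ from polynomially many oracle calls to $\text{\sc{p-UnColHolant}}(\mathcal{S})$. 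Since $\mathcal{C}$ has unbounded treewidth, we may pick $H^\star$ of arbitrarily large treewidth, and counting (coloured) homomorphisms from high-treewidth patterns is $\#\W[1]$-hard by~\cite{DalmauJ04, Marx10}; composing gives the reduction. The main obstacle is verifying that the relational/coloured versions of linear-combination complexity monotonicity and the distinctness lemma (Proposition~\ref{prop:distinct}) apply here — but these are exactly the tools already developed for $\mathcal{S}$-vertex-coloured graphs as $\tau$-structures in this section, so the proof is essentially a citation-and-assembly task rather than a new argument. (I note that the uniform formulation only needs $\#\W[1]$-\emph{hardness}, not the tight ETH lower bound, which is handled separately via Lemma~\ref{lem:col_to_uncol}.)
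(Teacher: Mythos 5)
Your proposal matches the paper's proof: part (1) is argued identically (each surviving term is a homomorphism count from an acyclic coloured pattern, i.e.\ an acyclic quantifier-free conjunctive query, evaluated in near-linear time and summed over an $f(k)$-size support), and part (2) follows the same route of stripping the prefactor $\prod_i s_i(0)^{n_i}$ and invoking complexity monotonicity for linear combinations of coloured homomorphism counts, which the paper handles by directly citing \cite[Lemma 3.8 and Remark 3.9]{CurticapeanDM17} rather than re-sketching the tensor-product/interpolation argument as you do. The content is the same, so no further comparison is needed.
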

\begin{proof}
   First, recall that, by Lemma~\ref{lem:uncoloured_hombasis_general}, we have
   \begin{equation}\label{eq:appendix_recall_colhombasis}
       \mathsf{UnColHolant}(\Omega,k) = \prod_{i\in \ell}s_i(0)^{n_i} \cdot \sum_{(H,\nu)\in \mathcal{G}(\mathcal{S})} \zeta_{\mathcal{S},k}(H,\nu) \cdot \#\homs{(H,\nu)}{\Omega}\,.
   \end{equation}
    We now continue by proving both cases separately.
    \begin{itemize}
        \item[(1)] If all graphs in $\mathcal{C}$ are acyclic, then each term  $\#\homs{(H,\nu)}{\Omega}$ surviving in~(\ref{eq:appendix_recall_colhombasis}) satisfies that $H$ is acyclic, in which case the cardinality $\#\homs{(H,\nu)}{\Omega}$ can be computed in FPT-near-linear time in $|\Omega|$ (w.r.t.\ parameter $|(H,\nu)|$ which only depends on $k$). This follows from the fact that we can equivalently express $\#\homs{(H,\nu)}{\Omega}$ as $\#\homs{\mathcal{H}}{\mathcal{G}}$ for the $\tau$-structures $\mathcal{H}$ and $\mathcal{G}$ corresponding, respectively, to $(H,\nu)$ and $\Omega$. This is an instance of the problem of counting answers to acyclic conjunctive without quantified variables, which is well-known to be solvable in (FPT)-near-linear time (see e.g.\ \cite[Theorem 12]{BraultBaron13}). Finally, since we can compute each surviving term in FPT-near-linear time, and the number of terms only depends on $k$, we can evaluate the entire linear combination in FPT-near-linear time.
        \item[(2)] Note that the factor $\prod_{i\in \ell}s_i(0)^{n_i}$ in~\eqref{eq:appendix_recall_colhombasis} can trivially be computed in FPT time. Thus, $\text{\sc{p-UnColHolant}}(\mathcal{S})$ is interreducible to the parameterised problem that, on input $\Omega$ and $k$, outputs
        \[\sum_{(H,\nu)\in \mathcal{G}(\mathcal{S})} \zeta_{\mathcal{S},k}(H,\nu) \cdot \#\homs{(H,\nu)}{\Omega}\,.\]
        In other words, this problem is equivalent to computing linear combinations of homomorphism counts between vertex-coloured graphs. As shown by Curticapean, Dell, and Marx~\cite[Lemma 3.8 and Remark 3.9]{CurticapeanDM17}, this problem is $\#\W[1]$-hard whenever there is no constant upper bound on the treewidth of graphs $(H,\nu)$ with coefficient $\zeta_{\mathcal{S},k}(H,\nu)\neq 0$. Since $\mathcal{C}$ has unbounded treewidth, our proof is concluded.
    \end{itemize}
\end{proof}

\end{document}